\documentclass[12 pts]{article}
\usepackage[a4paper, total={6in, 10in}]{geometry}
\usepackage[utf8]{inputenc}
\usepackage{amsmath}
\usepackage{amssymb}
\usepackage{amsthm}
\usepackage{bbm}
\usepackage{subcaption}
\usepackage{comment}

\usepackage{mathrsfs}
\usepackage{bbm}
\usepackage[utf8]{inputenc}
\usepackage[T1]{fontenc}
\usepackage{color}
\usepackage{graphicx}
\usepackage{natbib}
\usepackage{enumitem}

\usepackage{multirow}

\usepackage{hyperref}
\usepackage{theoremref}

\numberwithin{equation}{section}
\theoremstyle{plain}
\newtheorem{theorem}{Theorem}[section]
\newtheorem{lemma}{Lemma}[section]
\newtheorem{corollary}{Corollary}[section]

\newtheorem{assumption}{Assumption}
\newtheorem{remark}{Remark}[section]
\usepackage{amsthm}
 
\definecolor{blue-violet}{rgb}{0,0,0}
\definecolor{Red}{rgb}{0.00, 0.00, 0.00}
\newcommand{\Red}{\color{Red}}
\definecolor{DR}{rgb}{0,0,0}
\newcommand{\DR}{\color{DR}}
\definecolor{Blue}{rgb}{0,0,0}
\newcommand{\Blue}{\color{Blue}}
\definecolor{Green}{rgb}{0,0,0}

\definecolor{DG}{rgb}{0,0,0}
\newcommand{\DG}{\color{DG}}

\definecolor{Grey}{rgb}{0.5,0.5,0.5}

\title{Estimation of Integrated Volatility Functionals with Kernel Spot Volatility Estimators}
\author{Jos\'e E. Figueroa-L\'opez\thanks{{Department of Statistics and Data Science, Washington University in St. Louis, St. Louis, MO 63130, USA ({\tt figueroa-lopez@wustl.edu}). Research supported in part by the NSF grant: DMS-2413557.}} \and Jincheng Pang\thanks{{Department of Statistics and Data Science, Washington University in St. Louis, St. Louis, MO 63130, USA ({\tt jinchengp@wustl.edu}).}} \and Bei Wu\thanks{{Upstart, 2950 S Delaware St \#410, San Mateo, CA 94403, USA ({\tt Bei.wu@upstart.com}).}}}
\date{August, 2025}

\begin{document}
\maketitle
\begin{abstract}
For a multidimensional It\^o semimartingale, we consider the problem of estimating integrated volatility functionals. \cite{jacod2013quarticity} studied a plug-in type of estimator based on a Riemann sum approximation of the integrated functional and a spot volatility estimator with a forward uniform kernel. Motivated by recent results that show that spot volatility estimators with general {two-sided} kernels of unbounded support are more accurate, in this paper an estimator using a general kernel spot volatility estimator as the plug-in is considered. A biased central limit theorem for estimating the integrated functional is established with an optimal convergence rate. Central limit theorems for properly de-biased estimators are also obtained both at the optimal convergence regime for the bandwidth and when applying undersmoothing. Our results show that one can significantly reduce the estimator's bias by adopting a general kernel instead of the standard uniform kernel. Our proposed bias-corrected estimators are found to maintain remarkable robustness against bandwidth selection in a variety of sampling frequencies and functions.
\smallskip
\end{abstract} \hspace{10pt}

\section{Introduction}
In this work, we are concerned with the  estimation of integrated volatility functionals of the form  $V_{T}(g):=\int_{0}^{T}g(c_s)ds$, where $g$ is a smooth function and $c_t:= \sigma_t\sigma_t^*$ is the spot {\Blue variance-covariance} process of a $d$-dimensional It\^o semimartingale given by
\begin{equation}
dX_t = \mu_t dt + \sigma_t d W_t +dJ_t.
\end{equation}
Here, $\{W_{t}\}_{t\geq{}0}$ is $d$-dimensional standard Brownian Motion (BM) and $\{J_{t}\}_{t\geq{}0}$ is a pure-jump process of bounded variation.
Such functionals have many applications in financial econometrics. In the simplest case, when $d=1$ and $g(x) =x$, this is the well-studied integrated volatility (IV) or variance $IV_T=\int_0^{T}\sigma^2_sds$, which measures the overall variability or {\Blue uncertainty} latent in the process $X$ during $[0,T]$.
Again, when $d=1$ and $g(x) = x^2$, $V_T(g)$ corresponds to the integrated quarticity, which appears in the limiting distribution of the realized variance estimator $\widehat{IV}_T=\sum_{i=1}^n (\Delta_i^nX)^2$ and, hence, it is crucial to construct feasible confidence intervals for IV. Another function appearing in the literature  is $g(x)=\log(x)$, when researching the statistical features of a factor $Y$ driving the volatility in a model where $\sigma$ is assumed to be the exponential of $Y$. Volatility functionals have also been used in principal component analysis by \cite{ait2019principal}  and in estimating integrated moments for option pricing purposes by \cite{li2016generalized}. 

One of the fundamental methods to estimate $V_{T}(g)=\int_{0}^{T}g(c_s)ds$ consists of approximating the integral by a Riemman sum and {\Blue plugging} in a suitable spot volatility estimator $\hat{c}$ in place of $c$:
\begin{equation} \label{eq:V_jacod}
\widehat{V}_{T}(g)=\Delta_{n} \sum_{i=1}^{n-k_{n}+1} g\left(\hat{c}_{i}^{n}\right).
\end{equation}
This method was pioneered by \cite{jacod2013quarticity}, where,  as the spot volatility estimator, a local rolling window estimator with uniform weights was used: 
 \begin{equation} \label{eq:V_jacodb}
	\hat{c}_{i}^{n,unif}=\frac{1}{k_{n} \Delta_{n}} \sum_{j=0}^{k_{n}-1}\left( \Delta^n_{i+j}X\right)^{2}.
\end{equation}
Two interesting facts emerged from their theory. {\Red For the ease of exposition, we focus on the $d=1$-case in the introduction, though our main results are stated for general dimension $d\geq 1$.} First of all, the rate of convergence of the estimation error is unexpectedly of order $\Delta_n^{\frac{1}{2}}$ even though $\hat{c}_{i}^{n}$ can only attain the rate $\Delta_n^{\frac{1}{4}}$\footnote{{\Red  Intuitively, by {\Blue Taylor} expansion of the estimator $g(\hat{c})$, we can write $\Delta_n\sum_{i=1}^n g(\hat{c}_i^n) - \Delta_n\sum_{i=1}^{n}g(c_{t_{i-1}})= \Delta_n\sum_{i=1}^{n}g'(c_{t_{i-1}}) \left(\hat{c}_{i}^n - c_{t_{i-1}}\right) + \frac{1}{2}\Delta_n\sum_{i=1}^{n}g''(c_{t_{i-1}}) \left(\hat{c}_{i}^n - c_{t_{i-1}}\right)^2+\text{higher order terms}$. The estimation errors in the first term of the right-hand side are ``averaged out" so that the convergence rate changes from $n^{-1/4}$ to $n^{-1/2}$.}}. Secondly, when $k_n \sim \theta/\sqrt{\Delta_n} $ for some $\theta\in(0,\infty)$, which yields the best convergence rate for the spot volatility estimator $\hat{c}_{i}^{n}$, $\frac{1}{\sqrt{\Delta_n}}\left( \widehat{V}_{T}(g) - V_{T}(g) \right) $ converges to a mixed Gaussian distribution with four bias terms:
\begin{enumerate}
\item A term of the form $-\frac{\theta}{2}\left(g(c_0) + g(c_T) \right) $ due to border effects;
\item A {\Blue second} term of the form $\frac{1}{\theta} \int_0^T g^{\prime \prime}(c_s) c_s^2ds $ due to the nonlinearity of $g$;
\item A third term of the form $-\frac{\theta}{12} \int_0^T g^{\prime \prime}(\tilde{c}_s) $, where $\tilde{c} $ is the volatility of volatility, due to target error $c_i - \bar{c}_{i}^n$, where $\bar{c}_{i}^{n}=\frac{1}{k_{n} \Delta_{n}} \sum_{j=0}^{k_{n}-1}c_{i+j}$;
\item A fourth term involving the jump component of the volatility process $\theta \sum_{s\le t} G(c_{s-}, c_s) $ for a suitable function of $G$ depending on g. This error disappears when $c$ is continuous. 
\end{enumerate}

In principle, the bias terms can be estimated and eliminated from $\widehat{V}_{T}(g)$ to obtain a feasible CLT with centered Gaussian limiting distribution. This strategy was explored by \cite{jacod2015estimation}.
An alternative solution, however, was adopted in \cite{jacod2013quarticity}. 
Specifically, the window length $k_n$ was made to converge to $\infty$ at a slower rate (under-smoothing) so that $k_n \sqrt{\Delta_n}\to{}0$. This corresponds to the case $\theta= 0 $ and, hence, the first, third, and fourth bias terms would vanish, while the second term becomes dominant. To keep the convergence rate $\sqrt{\Delta_n} $ of the estimator $\widehat{V}_T(g) $, this second term needs to be estimated and the estimator needs to be de-biased. The resulting estimator takes the form 
\begin{equation}
\Delta_n \sum_{i=1}^{n - k_n + 1}\left( g\left(\hat{c}_{i}^{n}\right) - \frac{1}{k_n} g^{\prime \prime }(\hat{c}_i)\hat{c}_i^2 \right),
\end{equation}
which indeed achieves the rate $\Delta_n^{\frac{1}{2}}$ with semi-parametrically optimal asymptotic variance. {\Red A similar approach to deal with {\Blue edge} effects and biases was also considered in \cite{kristensen2010nonparametric}, who also proposed kernel estimators and obtained a central limit theorem (though the model considered therein lacks leverage effects).} Alternatively, as mentioned before, in the optimal case $\theta \in (0,\infty)$ and employing forward uniform kernels, estimators for all the bias terms were proposed in \cite{jacod2015estimation}, where an unbiased estimator with centered limiting distribution was developed.

\cite{li2019efficient} took a different approach towards bias correction. They proposed a type of `jackknife' estimators, which are formed as a linear combination of a few uncorrected estimators of the form \eqref{eq:V_jacodb} associated with different windows $k_n$. 
Since these estimators have similar bias terms but different coefficients depending on the {\Blue window} sizes, the biases can be eliminated by a proper linear combination. Both the `two-scale' jackknife (i.e., when combining only two uncorrected {\Blue estimators}) and \cite{jacod2013quarticity}'s estimator 
use under-smoothing, so that the biases due to boundary effects, volatility of volatility, and volatility jumps are rendered asymptotically negligible. One drawback of this approach is that one needs to determine the degree of undersmoothing. In other words, if, for instance, we take $k_n \sim \beta\Delta_n^{\alpha}$ with $\alpha>-1/2$, then one needs to tune both parameters $\beta$ and $\alpha$, which is nontrivial.
In contrast, a multiscale jackknife estimator, formed as a linear combination of three (or more) estimators, can in principle cancel all the biases regardless of the speed of convergence of the window size. However, again, this involves more tuning parameters that need to be carefully calibrated.  

Other related work on this topic includes \cite{mykland2009inference}, who proposed using the same method as in \cite{jacod2013quarticity}, but {\Red keeping $k_n = k$ constant}; for the quarticity or more generally for estimating $\int_0^T g(c_s) ds  $ with  a power functional $g(x) = x^r$, they obtained a CLT with rate $1/\sqrt{\Delta_n}$ and an asymptotic variance that is not efficient, {\Red but can arbitrarily approach the optimal variance} when $k$ is large.  {\Red \cite{Glotter} established a Locally Asymptotically
Mixed Normality (LAMN) condition for the estimation of integrated volatility functionals in a continuous diffusion process where the volatility is driven by an independent It\^o process. The result therein shows that the optimal rate of convergence is $n^{-1/2}$. Another related result in the same direction is \cite{Renault}}. {\Red More recently, \cite{LiLiu2021} generalized \cite{jacod2013quarticity} 's result to allow a long-memory component such as {\Blue an} fBM with Hurst parameter $H>1/2$. They also established a semiparametric
efficiency lower bound, generalizing \cite{Renault}'s results.} \cite{chen2018inference} developed an estimator for $\widehat{V}_{T}(g)$ in the presence of microstructure noise,  based on a forward finite difference approximation of the standard pre-averaging estimator of the integrated volatility.

Our work aims to apply a general kernel spot volatility estimator to the plug-in estimator of integrated volatility functionals. The general kernel estimator (\cite{fan2008spot}, \cite{kristensen2010nonparametric}) is defined as 
\begin{align}\label{MDKNN00}
	{\tilde{c}^{n}_{i}} :=\sum_{j=1}^{n} K_{{k}_{n}\Delta_n}\left(t_{j-1}-t_i\right)\left(\Delta_{j}^{n} X\right)^{2},
\end{align}
where $K_{{b}}(x):=K(x/{b})/{b}$ {\Red for some bandwidth $b>0$}, and $k_n$ is an appropriate sequence that controls the bandwidth $b_n=k_n\Delta_n$ and that needs to be calibrated.
Of course, one can see \eqref{eq:V_jacodb} as a particular case of \eqref{MDKNN00} with $K(x)={\bf 1}_{[0,1]}(x)$. {\Red \cite{kristensen2010nonparametric} developed an asymptotic theory for \eqref{MDKNN00} when $t_i$ is a fixed time $\tau\in (0,T)$ and showed asymptotic normality to the spot variance $c_{\tau}$ under certain path-wise H\"older continuity and non-leverage conditions on $c$. See also \cite{fan2008spot} and \cite{mancini2015estimation} for other related results. All these results applied undersmoothing, which allowed them to neglect the ``target error" coming from approximating the spot volatility by a kernel weighted integrated volatility.}

In this paper, we consider the following version: 
\begin{align}\label{MDKNNa0}
	{\hat{c}^{n}_{i}} :=\frac{\sum_{j=1}^{n} K_{{k}_{n}\Delta_n}\left(t_{j-1}-t_i\right)\left(\Delta_{j}^{n} X\right)^2}{\bar{K}_{n}\left(t_i\right)}, 
\end{align}
with $\bar{K}_{n}\left(t_i\right):=\Delta_{n}\sum_{j=1}^{n} K_{{k}_{n}\Delta_n}\left(t_{j-1}-t_i\right)$,
which is known to be more robust than \eqref{MDKNN00} {\Blue against} edge effects {\Red (see also \cite{kristensen2010nonparametric} for other edge-robust alternatives)}. As a result, we can, in principle, consider the whole range $i=1,\dots, n$ in \eqref{eq:V_jacod}. Furthermore, we find it more convenient for our proofs to consider the estimator 
\begin{equation} \label{eq:V_jacodBis}
\widehat{V}_{T}(g)=\Delta_{n} \sum_{i=1}^{n} \bar{K}_n(t_i)g\left(\hat{c}_{i}^{n}\right).
\end{equation}
The coefficient  $\bar{K}_n(t_i)$ attenuates the contribution of $g\left(\hat{c}_{i}^{n}\right)$ for $t_i$ near $0$ and $T$ and, thus, serves as an additional edge effect correction. {\Red \cite{kristensen2010nonparametric} also considered a kernel-based estimator similar to \eqref{eq:V_jacodBis}, but applied an edge effect correction, similar to that in \cite{jacod2015estimation}, by eliminating the terms corresponding to $i$'s close to $0$ and $n$,  and did not consider the adjustments $\bar{K}_n(t_i)$. Their asymptotic theory also only considered the undersmoothing asymptotic regime mentioned above.}

The motivation for considering general kernels comes from \cite{FigLi} and \cite{FigWu}, where it has been shown, both theoretically and by Monte Carlo experiments, that a general two-sided kernel with unbounded support could significantly reduce the variance of the spot volatility estimator $\hat{c}$. More specifically, \cite{FigLi} studied the infill asymptotic behavior of the mean-square error (MSE) of \eqref{MDKNN00} {for continuous It\^o semimartingales} and showed that, when setting $k_n$ `optimally' (i.e., in order to minimize the leading order terms of the MSE), the double exponential kernel $K(x)=.5 e^{-|x|}$ minimizes the resulting MSE over all kernels. This result was established under the absence of leverage effects. \cite{FigWu} extended this result, by first proving a CLT for a general {two-sided} kernel, with optimal convergence rate and in the presence of jumps and leverage {effects}. It was then shown that exponential kernels minimize the asymptotic variance of the CLT. It is worth noting that, even when applying undersmoothing, the asymptotic variance could be reduced when choosing a kernel of unbounded support. Specifically, when applying undersmoothing, the asymptotic variance of the spot volatility estimator is proportional to $\|K\|^2=\int_{-\infty}^{\infty} K^2(x) dx$. When $K$ is constrained to have support on $[0,1]$, the optimal kernel is indeed the uniform kernel $K_{unif}(x)={\bf 1}_{[0,1]}(x)$ with $\|K_{unif}\|^2=1$\footnote{{\Red Indeed, by Jensen's inequality, $K_{unif}(x)={\bf 1}_{[0,1]}(x)$ satisfies $\int K^2_{unif}(x)dx=1=\left(\int K(x)dx\right)^2\leq \int K^2(x)dx$, which shows that a uniform kernel $K_{unif}(x)$ minimizes the asymptotic variance when applying undersmoothing.}}. However, picking the kernels $.5e^{-|x|}$ {\Red and $e^{-x}{\bf 1}_{x>0}$ will result in estimators whose asymptotic variances are respectively a quarter and a half of} that obtained by a uniform kernel $K_{unif}(x)={\bf 1}_{[0,1]}(x)$. {\Red Similar comments apply when considering only backward looking kernels such as ${\bf 1}_{[-1,0)}(x)$ and $e^{-|x|}{\bf 1}_{x<0}$.}

A natural question is whether the estimation accuracy gained when estimating the spot volatility by a general kernel transfers into gains in estimating integrated functionals. Intuitively, by the Taylor expansion of the estimator $g(\hat{c})$, we can write $g(\hat{c}) - g(c)= g'(c) \left(\hat{c} - c\right) + \frac{1}{2}g''(c)\left(\hat{c} - c\right)^2 + O_p\left(|\hat{c} - c|^3 \right)$. After taking expectations, the variance of the spot volatility estimator, which appears in the second term,  becomes a bias term of the integrated volatility estimator. Therefore, the bias of the estimator $\widehat{V}_{T}(g)$ could in principle be reduced by adopting a general kernel. 

As it turns out, the CLT of \eqref{MDKNN00} again exhibits several biases, which depend on a rather intricate and nontrivial manner on the kernel. The bias terms compared to the uniform kernel case involves a highly nontrivial transformation of the kernel function. The second bias term, involving $\frac{1}{\theta} \int_0^T g^{\prime \prime}(c_s) c_s^2ds$, contains the $L_2$ norm of the kernel and, thus, can be reduced because certain kernels of unbounded support {\Blue possess} smaller $L_2$ norm. We then propose bias-corrected estimator similar to those in \cite{jacod2015estimation}, and derive the corresponding CLT. We show by Monte Carlo experiments that our general kernel estimator typically has superior performance compared with the jackknife estimators of \cite{li2019efficient}. Of course, our estimator is simpler as it does not involve as many tuning parameters, which are hard to tune-up. In fact, our Monte Carlo experiments suggest our estimators are significantly more stable in the sense that they exhibit better performance for most values of the bandwidth, while the jackknife estimator requires accurate tuning of the `bandwidth' to match our estimator's performance. We also observe that even without bias corrections, under certain conditions, the plug-in estimator with general kernel spot volatility estimator has good performance, which could be a result of the increased accuracy of {\Blue the} spot volatility estimator and/or the reduction in bias attained by using a general kernel. 

While working on the final stages of writing the present manuscript, we became aware of a recent work by \cite{benvenutifunctionals}, who also studied a type of plug-in estimator with a general two-sided kernel spot volatility estimator. However, unlike our work, only a suboptimal bandwidth asymptotic regime (i.e., assuming under-smoothing) was considered and only consistency was established. 

The rest of this paper is organized as follows. Section \ref{setting} introduces the framework, assumptions, and  main results.  Section \ref{SimulationSect} illustrates the performance of our method via Monte Carlo simulations and compares it to the jackknife estimator of \cite{li2019efficient}.  The proofs are deferred to an appendix section.

\textbf{Notation:} We shall use the following notation:
\begin{itemize}
\item The set of (nonnegative definite) $d\times{}d$ real matrices $x=[x^{\ell m}]_{\ell,m=1}^{d}$ is denoted ($\mathcal{M}_+^d$) $\mathbb{R}^{d\times d}$;
\item For a $C^1$ function $g:\mathbb{R}^{d\times d}\to\mathbb{R}$, its gradient {\Red $\nabla g(x)$} is the $d\times d$ matrix with $(\ell,m)$ entry $\partial_{\ell m}g(x):=\frac{\partial g(x)}{\partial x^{\ell m}}$;
\item For $x,y\in\mathbb{R}^{d\times d}$, its inner product is defined as $x\cdot y=\sum_{\ell,m}x^{\ell m}y^{\ell m}$. Also, $x^{*}$ is the transpose of $x$ and $\|x\|=\sqrt{\sum_{\ell,m=1}^n (x^{\ell m})^2}$ is the Euclidian norm. 

\end{itemize}

\section{The Setting, Estimator, and Main Results} \label{setting}
Throughout, we consider a $d$-dimensional It\^o semimartingale $X=[X^1,\dots,X^d]^*\in\mathbb{R}^{d\times{}1}$ of the form:
\begin{equation} \label{eq:X}
\begin{split} 
X_{t}=&X_0 + \int_0^t\mu_{s} ds +\int_0^t\sigma_{s} d W_{s}\\
&+\int_0^t \int_E \delta(s, z) \mathbbm{1}_{\{|\delta(s, z)| \leq 1\}}(\mathfrak{p}-\mathfrak{q})(d s, d z)+\int_0^t \int_E \delta(s, z) \mathbbm{1}_{\{|\delta(s, z)|>1\}} \mathfrak{p}(d s, d z),
\end{split}
\end{equation} 
where all stochastic processes ($\mu :=\left\{\mu_{t}\right\}_{t \geq 0}$, $\sigma :=\left\{\sigma_{t}\right\}_{t \geq 0}$,  $W :=\left\{W_{t}\right\}_{t \geq 0}$, $\delta:=\left\{\delta(t, z):t\geq 0, z\in E\right\}$, $\mathfrak{p}:=\{\mathfrak{p}(B):B\in\mathcal{B}(\mathbb{R}_{+}\times E)\}$, etc.) are defined on a complete filtered probability space  $\left(\Omega, \mathcal{F}, \mathbb{F}, \mathbb{P}\right)$  with filtration $ \mathbb{F}=(\mathcal{F}_{t})_{t \geq 0}$. Here, {\DG $W=[W^1,\dots,W^d]^*$} is a $d$-dimensional standard Brownian Motion (BM)  adapted to the filtration \(\mathbb{F}\), $\delta$ is a predictable $\mathbb{R}^d$-valued function on $\Omega \times \mathbb{R}_{+} \times E$, and $\mathfrak{p}$ is a Poisson random measure on $\mathbb{R}_{+} \times E$ for some arbitrary Polish space $E$ with compensator $\mathfrak{q}(\mathrm{d} u, \mathrm{~d} x)=\mathrm{d} u \otimes \lambda(\mathrm{d} x)$, where $\lambda$ is a $\sigma-$ finite measure on $E$ having no atom. 
We also denote the spot variance-covariance process as $c_t := \sigma_t \sigma_t^*$, which takes values in the set $\mathcal{M}_d^+$ of nonnegative definite {\DG $d\times d$} {\DR symmetric} matrices. {\Blue As common} in the literature, $c$ is called the spot volatility of the process. We assume $c$ is also an It\^o semimartingale following the dynamics:
\begin{equation} \label{eq:sigma}
{\DG c_{t}^{\ell m}=c_0^{\ell m} + {\int_0^t{\tilde{\mu}^{\ell m}_{s}} \mathrm{d} s+\int_{0}^{t}\tilde{\sigma}^{\ell m}_{s} \mathrm{d}W_{s} }, \quad 1\leq \ell, m\leq d,}
\end{equation}
where {$W:=\{W_t\}_{t\geq0}$} is the same $d$-dimensional Brownian Motion driving the dynamics of $X$. {\Red The assumption \eqref{eq:sigma} is common in the literature (see, e.g., the monographs \cite{JacodProtter,jacodaitsahalia}), and was also assumed in the work of \cite{jacod2013quarticity}{\Blue )}.} {\DG For each $1\leq \ell, m\leq d$, $\{\tilde{\mu}^{\ell m}_t\}_{t\geq{}0}$ is adapted locally bounded {\DR and}  $\{\tilde{\sigma}^{\ell m}_t\}_{t\geq{}0}$ is adapted c\`adl\`ag.}

We now state the main assumption on the process $X$:
\begin{assumption} \label{process} 
The process $X$ follows the dynamics (\ref{eq:X}) with $c_t=\sigma_t\sigma_t^*$ satisfying (\ref{eq:sigma}) and, for some $r \in[0,1)$, a measurable function $\Gamma_m: E \rightarrow \mathbb{R}_{+}$, a constant $C_m<\infty$, and a localizing sequence of stopping times $\left(\tau_m\right)_{m \geq 1}$ such that $\tau_m \rightarrow \infty$, we have
$$
t \in\left[0, \tau_m\right] \Longrightarrow\left\{\begin{array}{l}
\left|\mu_t\right|+\left|\sigma_t\right|+\left|\tilde{\mu}_t\right|+\left|\tilde{\sigma}_t\right| \leq C_m, \\
|\delta(t, z)| \wedge 1 \leq \Gamma_m(z), \quad \text { where } \int \Gamma_m(z)^r \lambda(d z)<\infty.
\end{array}\right.
$$
\end{assumption}
The parameter $r$ determines the jump activity of the process: the larger $r$ is, the more active or frequent are the small jumps of the process. When $r=0$, the process {\Blue exhibits} {\DR finitely} many jumps in any bounded time interval (in that case, we say that the jumps are of finite activity). When $r<1$, the jump component of the process is of bounded variation, which is a standard {\Blue constraint} for the truncated realized quadratic variation to achieve efficiency (\cite{JacodProtter}).

Next, we give the conditions on the kernel function $K$ {\DG for which we need to define the function} 
\[
	{\DG L(t) := \int_{t}^{\infty} K(u) d u \mathbf{1}_{\{t>0\}}-\int_{-\infty}^{t} K(u) d u \mathbf{1}_{\{t \leq 0\}}.}
\]
As explained in the introduction, we aim to consider two-sided kernels of unbounded support.  It is important to remark that this type of {\Blue kernel} presents some challenges to our analysis (see comments before \eqref{eq:left}).
\begin{assumption} \label{kernel} 
The kernel function $K : \mathbb{R} \rightarrow \mathbb{R}$ is bounded such that
\begin{enumerate}
\item[{\rm a.}] $\int K(x) d x=1$;
\item[{\rm b.}] $K$ is Lipschitz and piecewise $C^2$  on its support $(A,B)$,  where $A < B$, $A \in [-\infty, 0]$, and $B \in [0, \infty]$; 
\item[{\rm c.}] (i) $\int|K(x)  x| d x<\infty$; (ii) $K(x) |x|^{2+\epsilon} \rightarrow 0$, $K^{\prime}(x) |x|^{2+\epsilon} \rightarrow 0$, with $\epsilon>1/2 $ as $|x| \rightarrow \infty$ ; (iii) $\int |K^{(j)}(x)|dx < \infty$, for $j=0,1,2,3$; (iv) $\int_0^\infty\int_{w}^\infty |K'(x)|dxdw<\infty$; (v) $\int K^{2}(x)dx<\infty$; {\DG (vi) $\int_{-\infty}^{\infty}|L(u)u|du<\infty$};
\item[{\rm d.}] (i) $V_{-\infty}^{\infty}(K^{(j)}):=\lim_{m\to\infty}V_{-m}^{m}(K^{(j)})<\infty$, where $V_{-m}^{m}(K^{(j)})$ is the total variation of $K^{(j)}$ on the interval $[-m,m]$, $j=0,1,2,3$; (ii) $V_{-\infty}^{\infty}(K^2)<\infty$;
\item[{\rm e.}] $K^{\prime}$ and $K^{\prime\prime}$ exist piecewise on $(A,B)\subset\mathbb{R}$ and are absolutely continuous on {\Red their} corresponding domain $\operatorname{dom}(K^{\prime})$ and $\operatorname{dom}(K^{\prime\prime})\subset(A,B)$, respectively;
\item[{\rm f.}] $K^{\prime\prime\prime}$ exists piecewise on $(A,B)\subset\mathbb{R}$.
\end{enumerate}
\end{assumption}

\begin{remark}
	{\Red Even though the list of assumptions above seems long, it is important to remark that all kernels commonly used in practice satisfy these conditions. The assumptions on the regularity of the high-order derivatives $K^{(j)}$ for $j=2,3$ are only used to show the following limit:
	\begin{equation}\label{eq:v5}
    \sqrt{\Delta_{n}} \sum_{i=1}^n\left[\bar{K}_{n}\left(t_i\right)-1\right] g\left(c_i^n\right)\longrightarrow \theta \int_{-\infty}^{0}L(u)du g\left(c_{0}\right)-\theta \int_{0}^{\infty}L(u)du g\left(c_{T}\right),
\end{equation}
where recall that $\bar{K}_{n}\left(t_i\right):=\Delta_{n}\sum_{j=1}^{n} K_{{k}_{n}\Delta_n}\left(t_{j-1}-t_i\right)$. This limit is easy to deal with in the case of a uniform kernel because, in that case, $\bar{K}_{n}\left(t_i\right)\equiv 1$ for almost all $i$ except if $i$ is close to $0$ or $n$.}
\end{remark}

For an arbitrary process $\{U_{t}\}_{t\geq{}0}$,  a filtration  $\mathbb{F}=(\mathcal{F}_{t})_{t \geq 0}$,  and  a given time span $\Delta_{n}>0$, we shall use the notation 
\[
	U_{i}^{n}:=U_{i\Delta_{n}},\qquad 
	\Delta_{i}^{n} U:=U_{i}^{n}-U_{i-1}^{n}, \qquad \mathcal{F}^n_{i} := \mathcal{F}_{i \Delta_n}.
\]
Stable convergence in law is denoted by \( \stackrel{st}{\longrightarrow}\). See (2.2.4) in \cite{JacodProtter} for the definition of this type of convergence. As usual, $a_{n}\sim c_{n}$  means that $a_{n}/c_{n}\to{}1$ as $n\to\infty$.

Throughout, we assume that we sample the process $X$ in (\ref{eq:X}) at discrete times   $t_{i} :=t_{i, n} :=i \Delta_{n}$,  where $\Delta_{n} :=T / n$ and $T\in(0,\infty)$ is a given fixed time horizon. 
To estimate the spot volatility $c_{t_i}$, we adopt a Nadaraya-Watson type of kernel estimator of the form:
\begin{align}\label{MDKNN}
	{\hat{c}^{n,X}_{i}} :=\frac{\sum_{j=1}^{n} K_{{k}_{n}\Delta_n}\left(t_{j-1}-t_i\right)\left(\Delta_{j}^{n} X\right)\left(\Delta_{j}^{n} X\right)^*}{\Delta_{n}\sum_{j=1}^{n} K_{{k}_{n}\Delta_n}\left(t_{j-1}-t_i\right)},
\end{align}
where \(K_{{b}}(x):=K(x / b) / b\), $k_{n}\in\mathbb{N}$, and ${b_n}:={k}_{n} \Delta_n$ is the bandwidth of the kernel function. This is a variation of the estimator
\begin{align}\label{MDKNNb}
	{\tilde{c}^{n,X}_{\tau}} :=\sum_{j=1}^{n} K_{{k}_{n}\Delta_n}\left(t_{j-1}-\tau\right)\left(\Delta_{j}^{n} X\right)\left(\Delta_{j}^{n} X\right)^*,
\end{align}
first proposed in \cite{fan2008spot} and \cite{kristensen2010nonparametric},  and studied in recent papers such as \cite{FigLi}, \cite{FigWu}. The denominator 
\[
	\bar{K}_{n}\left(t_i\right):=\Delta_{n}\sum_{j=1}^{n} K_{{k}_{n}\Delta_n}\left(t_{j-1}-t_i\right),
\]
{\DR in \eqref{MDKNN} is known to} improve the performance \eqref{MDKNNb} for values of $\tau=t_i$ near the edge of the estimation interval $[0,T]$. The variation \eqref{MDKNN} was also considered in \cite{kristensen2010nonparametric} and \cite{FigLi} {\DR when $t_i$ is {replaced} with a fix time $\tau\in(0,T)$}. In that case, $\bar{K}_{n}\left(\tau\right):=\Delta_{n}\sum_{j=1}^{n} K_{{k}_{n}\Delta_n}\left(t_{j-1}-\tau\right)\to1$ and the asymptotic behavior of ${\tilde{c}^{n,X}_{\tau}}$ and 
\[
	{\hat{c}^{n,X}_{\tau}} :=\sum_{j=1}^{n} K_{{k}_{n}\Delta_n}\left(t_{j-1}-\tau\right)\left(\Delta_{j}^{n} X\right)\left(\Delta_{j}^{n} X\right)^*/\bar{K}_{n}\left(\tau\right)
\] 
are equivalent. However, for our estimator (see \eqref{Target0a} below), we need to consider $t_i$ for all $i=1,\dots, n$ and, thus, it is not direct that we can simply omit $\bar{K}_{n}\left(t_i\right)$ (in fact, its presence helps with the analysis).

To handle jumps in the process $X$, we also consider a truncated version {\DR of \eqref{MDKNN}:}
\begin{equation}\label{estivoltrun}
\hat{c}_i^{n,X,v_n}:=\frac{\sum_{j=1}^n K_{k_n \Delta_n}\left(t_{j-1}-t_{i}\right)\left(\Delta_j X\right)\left(\Delta_j X\right)^*\mathbbm{1}_{\left\{\left\|\Delta_j X\right\| \leq v_n\right\}}}{\Delta_n \sum_{j=1}^n K_{k_{n}\Delta_n}\left(t_{j-1}-t_{i}\right)}{,}
\end{equation}
with a suitable sequence of truncation levels $v_n\in(0, \infty]$. {\DR More specifically, we consider two types of conditions on $v_n$. In the case that $X$ is continuous, we take
    \begin{equation}\label{moreassum1}
        v_n=\alpha \Delta_n^{\varpi}, \quad \text { for some arbitrary fixed } \alpha\in(0,\infty]\text { and } \varpi < \frac{1}{2},
    \end{equation}
while if $X$ is discontinuous, we take     
\begin{equation}\label{moreassum}
	v_n=\alpha \Delta_n^{\varpi}, \quad \text { for some fixed } \alpha\in(0,\infty)\text{ and } \varpi \in\left[\frac{2{\DG \ell}-1}{2(2{\DG \ell}-r)}, \frac{1}{2}\right),
\end{equation}
where $\ell\geq{}4$ is a positive integer that will be specified below. Note that when $\alpha=\infty$ in \eqref{moreassum1}, we recover the non-truncated version (\ref{MDKNN}) and, in that case, the value of $\varpi$ is irrelevant.}
For simplicity, we {\DR often omit the dependence on $X$ and $v_n$ on \eqref{MDKNN} and \eqref{estivoltrun} and simply write $\hat{c}_{i}^{n}$}.

{Our estimation target is the integrated volatility functional, defined as
\begin{equation}\label{Target0a}
V(g)_{t}:=\int_{0}^{t} g\left(c_{s}\right) d s,
\end{equation} 
for a continuous function $g:\mathcal{M}_d^+\to\mathbb{R}$. 
As proposed in \cite{jacod2013quarticity}, a natural estimator for this integral is given by {\DR $\Delta_{n} \sum_{i=1}^{\left[t / \Delta_{n}\right]} g\left(\hat{c}_{i}^{n}\right)$, which can be interpreted as a Riemann sum approximation of \eqref{Target0a}.} However, as we shall see, it will be easier to derive the asymptotic properties of the estimator
\begin{equation} \label{eq:simple_estimator}
V(g)^{n}_{t}:=\Delta_{n} \sum_{i=1}^{\left[t / \Delta_{n}\right]} g\left(\hat{c}_{i}^{n}\right){\bar{K}_n\left(t_i\right)}.
\end{equation}
Furthermore, we can see \eqref{eq:simple_estimator} as a weighted average version of $\Delta_{n} \sum_{i=1}^{\left[t / \Delta_{n}\right]} g\left(\hat{c}_{i}^{n}\right)$, which gives less weight to $g\left(\hat{c}_{i}^{n}\right)$ for values of $t_i$ near $0$ and $T$, where the estimator $\hat{c}_{i}^{n}$ is less accurate due to edge effects.}

{Our first result characterizes the asymptotic behavior of the estimation error of $V(g)^{n}_{t}$} when the bandwidth converges to $0$ at {an} "optimal" rate {\DR (c.f. Theorem \eqref{clt_suboptimal} below and the remarks before).}

\begin{theorem} \label{clt}
Let $g$ be a $C^{3}$ function such that, for all $x \in \mathbb{R}^{d \times d}$,
\begin{equation} \label{eq:g_prime_bound}
|g(x)| \leq C\left(1+\|x\|^{\DG \ell}\right),\quad\left|  \partial _{p_1 q_1,\dots,p_j q_j}g(x)\right| \leq C\left(1+\|x\|^{{\DG \ell}-j}\right), \quad j=1,2,3{,}
\end{equation} for $p_1, q_1,\dots,p_j, q_j\in\{1, \ldots, d\}$ and for some constants $C > 0$ and ${\DG \ell} \ge 4$. Assume $k_n$ satisfies 
\begin{equation} \label{eq: k_n}
k_{n} \sim \frac{\theta}{\sqrt{\Delta_{n}}} \qquad (i.e.,\;\; \text{the bandwidth}\ {{b}_n}:=k_n\Delta_n\sim \theta \sqrt{\Delta_n}),
\end{equation}
for some $\theta \in(0, \infty)$. Then, under (\ref{eq:X}) and Assumptions \ref{process} and \ref{kernel}, the following assertions hold true:
\begin{itemize}
    \item[a)] Suppose that {\DR $X$ is continuous} (i.e., $\delta\equiv 0$ in (\ref{eq:X})). Then, the estimator \eqref{eq:simple_estimator} of the integrated volatility functional $V(g)_T$ with {\DR $\hat{c}_{i}^{n}$ given by (\ref{estivoltrun}), {under} the condition \eqref{moreassum1}\footnote{In particular, by taking $\alpha=\infty$, we recover the untruncated estimator \eqref{MDKNN}.},}
    satisfies the following stable convergence in law, as $n \rightarrow \infty$:
\begin{equation}\label{eq:convergence_in_law}
{\DR \frac{1}{\sqrt{\Delta_{n}}}\Big(V(g)_T^{n}-V(g)_T\Big)
\stackrel{st}{\longrightarrow} A^{1} + A^{2}+A^{3}+Z},
\end{equation}
where $Z$ is {\Blue an} r.v. defined on an extension $\left(\widetilde{\Omega}, \widetilde{\mathcal{F}},(\widetilde{\mathcal{F}}_{t})_{t \geq 0}, \widetilde{\mathbb{P}}\right)$ of $\left(\Omega, \mathcal{F}, (\mathcal{F}_{t})_{t \geq 0}, \mathbb{P}\right)$, which, conditionally on $\mathcal{F}$, is a centered Gaussian variable with variance
\begin{equation}\label{VarForNL}
{\widetilde{\mathbb{E}}\left(Z^{2} \mid \mathcal{F}\right) =\sum_{j,k,l,m} \int_0^T\partial_{jk}g(c_s)\partial_{lm}g(c_s)(c_s^{jl}c_s^{km}+c_s^{jm}c_s^{kl})ds,}
\end{equation}
{and
\begin{equation}\label{MDfnOfAs}
\begin{split}
A^{1}&:=-\theta \int_{0}^{\infty}L(u)dug\left(c_{0}\right)+\theta \int_{-\infty}^{0}L(u)dug\left(c_{T}\right), \\
A^{2}&:=\frac{1}{2\theta} \int_0^T \sum_{p,q,u,v=1}^{d}\partial^{2}_{p q, u v} g\left(c_s\right) \check{c}_s^{p q, u v} d s \int_{-\infty}^{\infty} K^2(u) d u,\\
A^{3}&:=\frac{\theta}{2} \int_0^T \sum_{p,q,u,v=1}^{d}\partial^{2}_{p q, u v} g\left(c_s\right)\tilde{c}_s^{p q, u v}  d s\left[\int_{-\infty}^{\infty} L(u)^2 d u+\left(\int_{-\infty}^{0}-\int_{0}^{\infty}\right) L(u) d u\right]{\DR ,}
\end{split}
\end{equation}
where} $\check{c}_{s}^{pq,uv}:=c_{s}^{pu}c_{s}^{qv} +c_{s}^{pv}c_{s}^{qu}$ and $\tilde{c}^{pq,uv}_{s}:=\sum_{r=1}^d\tilde{\sigma}_s^{p q, r} \tilde{\sigma}_s^{u v, r}$.
\item[b)] 
{\DR When $X$ is discontinuous, we {\DR still} have (\ref{eq:convergence_in_law}) for the estimator $V\left(g\right)^{n}_{T}$ with the truncated version (\ref{estivoltrun}) provided that the stronger condition \eqref{moreassum} is satisfied}.
\end{itemize}
\end{theorem}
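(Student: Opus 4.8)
The plan is to adapt the Jacod--Rosenbaum scheme (\cite{jacod2013quarticity}, \cite{jacod2015estimation}) to the two-sided, unbounded-support kernel and to isolate the three kernel-dependent biases. I would begin with the usual localization: using the stopping times of Assumption \ref{process}, replace the standing hypotheses by their strengthened, globally bounded versions, so that $\mu,\sigma,\tilde\mu,\tilde\sigma$ are bounded and $\delta$ is dominated by a fixed $\Gamma$ with $\int\Gamma^r\lambda<\infty$. For part b) the next reduction is to remove the jumps: using the truncated estimator \eqref{estivoltrun} with the level in \eqref{moreassum}, I would show that replacing $X$ by its continuous part changes $\tfrac{1}{\sqrt{\Delta_n}}(V(g)^n_T-V(g)_T)$ by a $o_p(1)$ term. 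This is exactly where the interplay between the jump-activity index $r$, the polynomial-growth order $\ell$ of the derivatives of $g$ in \eqref{eq:g_prime_bound}, and the truncation exponent $\varpi$ enters; the constraint $\varpi\ge\frac{2\ell-1}{2(2\ell-r)}$ is what forces the jump-contaminated increments, weighted by the $\ell$-polynomial bound on $g$, to be negligible at rate $\sqrt{\Delta_n}$. After these two steps, part b) reduces to part a) applied to the continuous part.

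For the continuous case I would Taylor-expand to second order,
\begin{equation*}
g(\hat c_i^n)-g(c_{t_i})=\nabla g(c_{t_i})\cdot(\hat c_i^n-c_{t_i})+\tfrac12\,\nabla^2 g(c_{t_i})\big[\hat c_i^n-c_{t_i},\,\hat c_i^n-c_{t_i}\big]+R_i,
\end{equation*}
where the cubic remainder $R_i$ is controlled using \eqref{eq:g_prime_bound} and the moment estimate $\E\|\hat c_i^n-c_{t_i}\|^p\lesssim\Delta_n^{p/4}$; after weighting by $\Delta_n\bar K_n(t_i)$, summing, and dividing by $\sqrt{\Delta_n}$, the remainder is $O_p(\Delta_n^{1/4})\to0$. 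The organizing device is the splitting $\hat c_i^n-c_{t_i}=(\hat c_i^n-\bar c_i^n)+(\bar c_i^n-c_{t_i})$, where $\bar c_i^n$ denotes the kernel average built from the true (integrated) spot volatility rather than from the squared increments; the first piece is the estimation (sampling) error and the second is the target (smoothing) error.

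I would then attribute the four limits as follows. In the linear term, the estimation-error part $\Delta_n\sum_i\bar K_n(t_i)\nabla g(c_{t_i})\cdot(\hat c_i^n-\bar c_i^n)$ is reorganized by summing over the underlying increments: because $\int K=1$, the inner kernel sum telescopes to $\nabla g(c_{t_m})$ at leading order, turning the expression into a realized-volatility-type functional whose stable martingale CLT (cf. \cite{JacodProtter}, and the pointwise analysis of \cite{FigWu}) yields the mixed Gaussian $Z$, with the \emph{kernel-independent} conditional variance \eqref{VarForNL}. The quadratic term carries two biases: taking $\mathcal{F}$-conditional expectations, the estimation-error variance of the kernel estimator, of order $\int K^2/k_n$ times the quartic covariation $\check c_s$, produces $A^2$ (hence the factors $\int K^2$ and $1/\theta$), while the target-error variance, of order $b_n$ times the vol-of-vol covariance $\tilde c_s$ integrated against the kernel, produces $A^3$ (hence the factor $\theta$ and the $L$-functionals). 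The boundary term $A^1$ comes from the edges, where $\bar K_n(t_i)$ deviates from $1$ and the target-error windows are truncated against $[0,T]$, concentrating the contribution at $g(c_0)$ and $g(c_T)$ with weights given by the tail integrals of $L$.

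The main obstacle will be the exact identification of the kernel functionals in $A^1$ and $A^3$ for a two-sided kernel of infinite support. The target error $\bar c_i^n-c_{t_i}$ must be expanded through the dynamics \eqref{eq:sigma} and analyzed by Abel summation, which is precisely what makes the antiderivative $L$ appear; establishing that the resulting double sums converge to $\int L^2$ and $(\int_{-\infty}^0-\int_0^\infty)L$ requires the full strength of the decay and integrability conditions (c) and (d) of Assumption \ref{kernel} to tame the kernel tails and justify the continuum limits. A second delicate point, specific to the general kernel, is to show that the \emph{first-order} (linear) target-error martingale does not contaminate the limiting variance: its leading interior contribution is proportional to $\big(\int u K(u)\,du\big)\int_0^T\nabla g(c_s)\cdot\tilde\sigma_s\,dW_s$, so one must use the moment/symmetry structure of $K$ together with the boundary bookkeeping to confirm that only the clean variance \eqref{VarForNL} and the deterministic boundary term $A^1$ survive. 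Establishing tightness and the negligibility of the off-diagonal correlations in the martingale CLT, uniformly over the unbounded support, is where conditions (c)(ii), (c)(vi) and the total-variation bounds (d) do the heavy lifting.
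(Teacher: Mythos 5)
Your overall architecture --- localization, jump/truncation removal under \eqref{moreassum}, the split of $\hat c_i^n-c_i^n$ into a martingale estimation error (the $\alpha_j^n$'s) and a smoothing/target error (the $(c_{j-1}^n-c_i^n)\Delta_n$'s), a martingale CLT for the linear estimation part with the two-sided adaptedness issue handled by moving mass between ``left'' and ``right'' weight sums, the identification of $A^2$ and of the $\int L^2$ piece of $A^3$ from the quadratic term, and an edge term from $\bar K_n(t_i)-1$ --- matches the paper's proof (the decomposition \eqref{eq:V1234} into $V^n_1,\dots,V^n_5$). The genuine gap is in your treatment of the \emph{linear} target-error term (the paper's $V^n_3$). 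You correctly observe that its interior contribution is proportional to $\bigl(\int uK(u)\,du\bigr)\int_0^T\nabla g(c_s)\cdot\tilde\sigma_s\,dW_s$, but you then propose to show, ``using the moment/symmetry structure of $K$'', that it does not survive. That step fails: the theorem covers general, possibly one-sided kernels, for which $\int uK(u)\,du=\int_{-\infty}^{\infty}L(u)\,du\neq 0$ (e.g.\ $K=\mathbbm{1}_{[0,1)}$ gives $1/2$), so this term converges \emph{in probability} to the nonzero stochastic-integral bias $\theta\int_0^T\nabla g(c_s)\cdot dc_s\int L(u)\,du$ and cannot be argued away. The paper's resolution is to compute this limit exactly (see \eqref{eq:v4}) and then \emph{eliminate the stochastic integral by It\^o's formula}, $\int_0^T\sum_{\ell,m}\partial_{\ell m}g(c_s)\,dc_s^{\ell m}=g(c_T)-g(c_0)-\tfrac12\int_0^T\sum_{p,q,u,v}\partial^2_{pq,uv}g(c_s)\tilde c_s^{pq,uv}\,ds$; it is precisely this conversion that turns the raw biases into the stated $A^1$ (with weights $-\theta\int_0^{\infty}L\,du$ on $g(c_0)$ and $\theta\int_{-\infty}^{0}L\,du$ on $g(c_T)$, rather than pure edge-effect weights) and into part of $A^3$. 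Without this It\^o step your boundary bookkeeping cannot reproduce the stated constants.

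Relatedly, even for symmetric kernels your attribution of $A^3$ is incomplete. In the paper, $V^n_3$ also produces the leverage-type \emph{bulk} term $\theta\sum_{p,q}\int_0^T\partial^2_{pq,\ell m}g(c_s)\,\tilde c_s^{\ell m,pq}\,ds\int_{-\infty}^0L(u)\,du$ (see \eqref{eq:left4}), which arises from the correlation between the weights $\nabla g(c_i^n)$ with $t_i>t_{j-1}$ and the volatility increments --- an effect present only for two-sided kernels and absent for the forward uniform kernel of Jacod--Rosenbaum. Combined with the quadratic-term contribution $\tfrac{\theta}{2}\int L^2$ and the It\^o correction $-\tfrac{\theta}{2}\int_{-\infty}^{\infty}L$, it yields exactly the combination $\tfrac{\theta}{2}\bigl[\int_{-\infty}^{\infty}L(u)^2du+\bigl(\int_{-\infty}^{0}-\int_{0}^{\infty}\bigr)L(u)\,du\bigr]$ in \eqref{MDfnOfAs}. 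Since your plan describes $A^3$ purely as the target-error variance (the $\int L^2$ piece), this cross term --- and hence the correct form of $A^3$ --- would be missed.
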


\begin{remark} 
Note that the bias terms depend on the kernel in a rather intricate manner (especially, the term $A^3$). The {\Blue signs} of the coefficients are always the same regardless of $K$. For instance, $\int_{-\infty}^{\infty} L(u)^2 d u+\left(\int_{-\infty}^{0}-\int_{0}^{\infty}\right) L(u) d u$  is negative because $|L(u)|\leq 1$ for all $u$, and $L(u)\leq{}0$ ($L(u)\geq 0$) for any $u<0$ ($u>0$). Thus, $A^3$ goes in {\Blue the} opposite direction to $A^2$.
\end{remark}
\begin{remark}
If we use the uniform {\DR right-sided} kernel $K(x)={\bf 1}_{[0,1)}(x)$ in our estimators, {as} in \cite{jacod2013quarticity,jacod2015estimation}, then
\begin{equation}\label{eq:K_bar_equal_1}
\bar{K}_n\left(t_{i}\right)=\Delta_{n}\sum_{j=1}^{n}{K_{b_n}}\left(t_{j-1}-t_{i}\right)=\Delta_{n}\sum_{j=i+1}^{{k_n}+i}\frac{1}{{b_n}}=1, \quad \text{for} \ 0\leq i\leq n-k_{n}.
\end{equation}
{\DR In that case,} the estimator in \cite{jacod2015estimation}, defined as 
\[
	V\left(g\right)^{n,JR}_{T}:=\Delta_{n}\sum_{i=1}^{n-k_{n}-1}g\left(\hat{c}_{i}^{n,JR}\right),
\]
where
\begin{equation}\label{vola_JR}
    \hat{c}_i^{n,{JR}}:=\frac{1}{k_n \Delta_n} \sum_{j=0}^{k_n-1} \left(\Delta_{i+j}^n X\right)\left(\Delta_{i+j}^n X\right)^* \mathbbm{1}_{\left\{\left\|\Delta_{i+j}^n X\right\| \leq v_n\right\}},
\end{equation}
can be written {\DR in terms of our estimator \eqref{eq:simple_estimator} as follows:}
\begin{equation}\label{eq:right_V_estimator}
    V\left(g\right)^{n,JR}_{T}=V\left(g\right)^{n}_{T}-\Delta_{n}\sum_{i=n-k_{n}-1}^{n}g\left(\hat{c}_{i}^{n}\right)\bar{K}\left(t_{i}\right)+\Delta_n g\left(\hat{c}_{0}^{n}\right).
\end{equation}
{The formula \eqref{eq:right_V_estimator} holds because} (\ref{eq:K_bar_equal_1}) ensures that $\hat{c}_{i}^{n}=\hat{c}_{i+1}^{n,JR}$, for $0\leq i\leq n-k_{n}$. {It is easy to see that, for {the uniform kernel $K(x)={\bf 1}_{[0,1)}(x)$,} $\Delta_{n}^{1/2}\sum_{i=n-k_{n}-1}^{n}g\left(\hat{c}_{i}^{n}\right)\bar{K}\left(t_{i}\right)\to \frac{\theta}{2}g\left(c_{T}\right)$ and the bias terms of Theorem \ref{clt} reduce to
\begin{equation}
\begin{array}{l}
A^{1}:=-\frac{\theta}{2}g\left(c_{0}\right){,} \\
A^{2}:=\frac{1}{2\theta} \int_0^T \sum_{p,q,u,v=1}^{d}\partial^{2}_{p q, u v} g\left(c_s\right) \check{c}_s^{p q, u v} d s,\\
A^{3}:=-\frac{\theta}{12} \int_0^T \sum_{p,q,u,v=1}^{d}\partial^{2}_{p q, u v} g\left(c_s\right)\tilde{c}_s^{p q, u v}  d s.
\end{array}
\end{equation}
Thus, Theorem \ref{clt} implies that 
\begin{equation} \label{eq:clt_right}
\frac{1}{\sqrt{\Delta_{n}}}\left(V(g)_T^{n,JR}-V(g)_T\right) \stackrel{st}{\longrightarrow}\tilde{A}^{1} +  A^{2}+A^{3}+Z,
\end{equation} where $\tilde{A}^{1}=-\frac{\theta}{2}\left[g\left(c_{0}\right)+g\left(c_{T}\right)\right]$. We then recover the same result as \cite{jacod2015estimation}.}
\end{remark}

%

To make this CLT ``feasible'' {\Blue in practical work}, the bias terms need to be estimated. 
{For $A^1$, we use that
\begin{equation} \label{eq:2side_edge_bias}
k_n \sqrt{\Delta_n}g\left(\hat{c}^n_1 \right) \stackrel{\mathbb{P}}{\to} \theta g(c_0), \quad k_n \sqrt{\Delta_n}g\left(\hat{c}^n_n \right) \stackrel{\mathbb{P}}{\to} \theta g(c_T).
\end{equation}
For $A^2$, we can apply Theorem \ref{clt} to the functional 
$$
    h(x):=\sum_{p, q, u, v=1}^d \partial_{p q, u v}^2 g(x)\left(x^{pu}x^{qv}+x^{pv}x^{qu}\right),
$$
for any $x\in \mathbb{R}^{d\times d}$, to get
\begin{equation}\label{eq:h}
V\left(h\right)^{n}_{T}=\Delta_{n}\sum_{i=1}^{n}h\left(\hat{c}_{i}^{n}\right)\bar{K}\left(t_{i}\right)\stackrel{\mathbb{P}}{\longrightarrow}  \int_0^T \sum_{p,q,u,v=1}^{d}\partial^{2}_{p q, u v} g\left(c_s\right) \check{c}_s^{p q, u v} d s.
\end{equation}
The term $A^3$ is the hardest to estimate since it involves the {\Blue `volvol'} $\tilde{c}$. In the case of a uniform {\DR right-sided} kernel, {\DR $K(x)={\bf 1}_{[0,1)}(\chi)$}, \cite{jacod2015estimation} proposed the estimator
\begin{equation}\label{An3}
    {\DR \widehat{A}^{n, 3}}:=\frac{\sqrt{\Delta_n}}{8} \sum_{i=1}^{n-2 k_n+1} \sum_{j, k, l, m} \partial_{j k, l m}^2 g(\hat{c}_i^n)\left(\hat{c}_{i+k_n}^{n,j k}-\hat{c}_i^{n,j k}\right)\left(\hat{c}_{i+k_n}^{n,l m}-\hat{c}_i^{n,l m}\right),
\end{equation}
with $\hat{c}$ given by (\ref{vola_JR}) and showed that  ${\DR \widehat{A}^{n, 3}}$ converges to a linear combination of the bias terms $A^{2}$ and $A^3$. The following theorem shows the corresponding result for a general kernel function. The proof of Theorem \ref{thm2.2} is {given} in Appendix \ref{proofthm2.2}.}
\begin{theorem}\label{thm2.2}
    Under the notations {\DR and conditions} of Theorem \ref{clt}, we have, {as $n\to\infty$},
\begin{equation}\label{eq:bias_estimates}
    \begin{aligned}
    {\DR \widehat{A}^{n, 3}}\stackrel{\mathbb{P}}{\longrightarrow}&\,\frac{1}{4\theta} \int_0^T \sum_{j, k, l, m}\partial^2_{jk, lm} g\left(c_s\right) \check{c}_s^{jk, lm} d s
    {\int^{\infty}_{-\infty} K(z)(K(z)-K(z-1))d z}\\
    &+\frac{\theta}{4} \int_0^T \sum_{j, k, l, m}\partial^2_{jk, lm} g\left(c_s\right) \tilde{c}_s^{jk, lm} d s\\
    &\quad\times \Bigg({\int_{-\infty}^{\infty} L(z)(L(z)-L(z-1))d z+\frac{1}{2}-\int_{-\infty}^{\infty}K(z)(|z|\wedge{}1)dz\Bigg),}
    \end{aligned}
\end{equation}
when $\hat{c}$ is given by (\ref{estivoltrun}) {\DR with the condition \eqref{moreassum}}. Furthermore, {\DR when $X$ is continuous (i.e., $\delta\equiv 0$ in (\ref{eq:X})),} \eqref{eq:bias_estimates} still holds {\DR under the weaker condition \eqref{moreassum1}} {(in particular, we can use the untruncated estimator \eqref{MDKNNb} by taking $\alpha=\infty$)}.
\end{theorem}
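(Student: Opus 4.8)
The plan is to identify the probability limit of $\widehat A^{n,3}$ by a direct second--moment analysis of the increments $\xi_i^{jk}:=\hat c_{i+k_n}^{n,jk}-\hat c_i^{n,jk}$. Since the convergence asserted is only in probability, it suffices to compute the conditional mean of $\widehat A^{n,3}$ given $\mathcal F$ and to establish a weak law of large numbers for the underlying triangular array, which is considerably lighter than a full stable CLT. First I would run the localization used in the proof of Theorem \ref{clt}, reducing to the case of bounded $\mu,\sigma,\tilde\mu,\tilde\sigma$, and dispose of the jumps: under \eqref{moreassum} the truncation at level $v_n$ in \eqref{estivoltrun} replaces $\Delta_j^nX$ by increments of the continuous part up to a negligible error, controlled through the jump--activity index $r<1$ and the admissible range of $\varpi$; in the continuous case the weaker \eqref{moreassum1} suffices and the untruncated estimator \eqref{MDKNNb} ($\alpha=\infty$) is included.

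The heart of the argument is the decomposition $\xi_i=(\eta_{i+k_n}-\eta_i)+(\bar c_{i+k_n}-\bar c_i)$, where $\bar c_i^{n,jk}=\bar K_n(t_i)^{-1}\sum_p K_{b_n}(t_{p-1}-t_i)\int_{t_{p-1}}^{t_p}c_s^{jk}ds$ is the smoothed spot volatility and $\eta_i=\hat c_i-\bar c_i$ collects the martingale fluctuations $\zeta_p^{jk}:=\Delta_p^nX^j\,\Delta_p^nX^k-\int_{t_{p-1}}^{t_p}c_s^{jk}ds$. The key structural fact, and the essential departure from the forward uniform kernel of \cite{jacod2015estimation} where the two windows are disjoint, is that, because $t_{i+k_n}-t_i=b_n$, the two weight functions of these overlapping estimators differ by a \emph{unit shift} in the variable $z=(t_{p-1}-t_i)/b_n$, namely $K(z-1)$ against $K(z)$. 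Consequently the \emph{noise} difference $\eta_{i+k_n}-\eta_i$ carries the effective kernel $K(z)-K(z-1)$, while, after a summation by parts that exploits $\sum_p(\omega_p^{i+k_n}-\omega_p^i)\Delta_n=0$ together with $\int_v^\infty K(u)\,du=L(v)+\mathbf 1_{\{v\le 0\}}$, the \emph{signal} difference $\bar c_{i+k_n}-\bar c_i$ admits an It\^o representation against $\tilde\sigma\,dW$ with effective weight $\Psi(v)=\int_{v-1}^{v}K(u)\,du$.

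Next I would compute the conditional second moments. Using $\mathrm{Cov}(\zeta_p^{jk},\zeta_p^{lm}\mid\mathcal F_{t_{p-1}})\approx\check c_{t_{p-1}}^{jk,lm}\Delta_n^2$ and a Riemann--sum approximation of the weight sums (spacing $\Delta_n/b_n=1/k_n$), the noise--squared block contributes $\frac{\sqrt{\Delta_n}}{\theta}\big(\int[K(z)-K(z-1)]^2dz\big)\check c_{t_i}^{jk,lm}$, the signal--squared block contributes $b_n\big(\int\Psi(v)^2dv\big)\tilde c_{t_i}^{jk,lm}$, and the noise$\times$signal cross terms vanish at leading order by orthogonality of the first and second Wiener chaos (the residual leverage--type contributions are of strictly lower order and do not match the $\tilde c$ structure). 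Multiplying by $\partial^2_{jk,lm}g(\hat c_i^n)$, summing against $\sqrt{\Delta_n}/8$, replacing $\partial^2 g(\hat c_i^n)$ by $\partial^2 g(c_{t_i})$ via consistency and the polynomial growth \eqref{eq:g_prime_bound}, and passing $\sum_i\Delta_n\to\int_0^T$, yields after verifying that the conditional variance of $\widehat A^{n,3}$ about its conditional mean is $o_p(1)$,
$$\widehat A^{n,3}\stackrel{\mathbb P}{\to}\frac{1}{8\theta}\Big(\int[K(z)-K(z-1)]^2dz\Big)\int_0^T\sum_{j,k,l,m}\partial^2_{jk,lm}g(c_s)\check c_s^{jk,lm}ds+\frac{\theta}{8}\Big(\int\Psi^2dv\Big)\int_0^T\sum_{j,k,l,m}\partial^2_{jk,lm}g(c_s)\tilde c_s^{jk,lm}ds.$$

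It then remains to match this with the stated form. For the first coefficient, $\int[K(z)-K(z-1)]^2dz=2\int K(z)(K(z)-K(z-1))dz$ gives $\tfrac{1}{4\theta}\int K(z)(K(z)-K(z-1))dz$. For the second, writing $\Psi(v)=\int_{v-1}^{v}K$ one gets $\int\Psi^2=2\int L(z)(L(z)-L(z-1))dz+\big(2\int_0^1\Psi(v)dv-1\big)$, and an integration by parts yields the identity $\int_0^1\Psi(v)dv+\int K(z)(|z|\wedge 1)dz=1$, whence $2\int_0^1\Psi-1=1-2\int K(z)(|z|\wedge 1)dz$; substituting reproduces exactly the bracketed factor $\int L(L-L(\cdot-1))+\tfrac12-\int K(|z|\wedge 1)$. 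The main obstacle is the overlapping--window dependence: every increment feeds \emph{both} estimators, so the difference--kernels $K(z)-K(z-1)$ and $\Psi$ must be tracked carefully and their tails controlled through the decay and total--variation bounds in Assumption \ref{kernel}; the delicate technical points are the uniform--in--$i$ Riemann--sum approximations of the weight sums despite the unbounded support, the vanishing of the cross (and leverage) terms, and the truncation--error bookkeeping in the jump regime.
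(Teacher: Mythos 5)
Your proposal is correct in substance but organizes the computation quite differently from the paper. The paper centers the increment $\hat c_{i+k_n}-\hat c_i$ at the \emph{true} spot volatility, writing $\gamma_i^n=\beta_{i+k_n}^n-\beta_i^n+c_{i+k_n}^n-c_i^n$ with $\beta_i^n=\hat c_i^{\prime n}-c_i^n$ as in \eqref{Dfnalphabeta0}, which produces the nine-block decomposition \eqref{eq:decomgamma}; it then computes nine separate limits ($\widehat A^{n,3}_{\lambda_1,\lambda_2}$, $\widehat A_1^{n,3},\dots,\widehat A_5^{n,3}$), several of which are individually nontrivial because $\beta_i^n$ mixes martingale noise with smoothing bias, so the cross blocks $\widehat A_2^{n,3},\dots,\widehat A_5^{n,3}$ do \emph{not} vanish and contribute terms like $\frac{\theta}{8}\int_0^T\partial^2 g\,\tilde c\,ds\,(\int_1^\infty K+\int_0^1 Kz\,dz)$. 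You instead center at the kernel-\emph{smoothed} volatility $\bar c_i$, so that the bias is absorbed into the signal block and the decomposition collapses to two effective difference kernels: $K(z)-K(z-1)$ for the noise and $\Psi(v)=\int_{v-1}^{v}K$ for the signal, with a purely martingale noise$\times$signal cross. I checked your kernel identities and they are exact: with $G(v)=\int_v^\infty K$ one has $\Psi(v)=L(v-1)-L(v)+\mathbf 1_{\{0<v\le1\}}$, whence $\int_0^1\Psi=1-\int K(z)(|z|\wedge 1)dz$ and $\int\Psi^2=2\int L(z)\bigl(L(z)-L(z-1)\bigr)dz+1-2\int K(z)(|z|\wedge1)dz$, so $\frac{\theta}{8}\int\Psi^2$ reproduces the paper's bracketed $\tilde c$-coefficient obtained there only after summing five distinct limits, and likewise $\frac{1}{8\theta}\int[K-K(\cdot-1)]^2=\frac{1}{4\theta}\int K(K-K(\cdot-1))$ matches the $\check c$-coefficient. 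Your route buys a conceptually cleaner derivation and explains \emph{why} the unit-shifted kernels appear; the paper's route buys reusability, since the nine block limits are exactly the objects recycled in the proof of Corollary \ref{BiasCorrectedCLT0} and Theorem \ref{clt_suboptimal}.

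Two points where your sketch is lighter than what a full proof requires, though neither is a wrong turn. First, ``orthogonality of the first and second Wiener chaos'' does not literally apply under leverage, since $\sigma$ and $\tilde\sigma$ are adapted functionals of the same $W$; killing the noise$\times$signal cross rigorously requires conditional-moment estimates of the type \eqref{eq:better_bound_alpha} and \eqref{eq:A.84} (e.g., $|\mathbb E[\alpha_i^{n,jk}\Delta_i^n c^{lm}\mid\mathcal F_{i-1}^n]|\le C\Delta_n^{3/2}(\sqrt{\Delta_n}+\eta_{i-1}^n)$ together with the c\`adl\`ag argument that makes $\eta$ small), which is exactly how the paper disposes of its analogous terms. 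Second, because windows of length $\sim 2k_n$ overlap, the summands over $i$ are strongly dependent, so your ``conditional variance about the conditional mean is $o_p(1)$'' step needs the reindexing-by-increment device (summing over the increment index $j$ first so that one obtains martingale differences, as in the paper's treatment of the $\zeta_j^n$ arrays); a naive sum-of-conditional-variances bound over $i$ would fail. You flag both issues yourself, so I regard them as acknowledged technical debts rather than gaps in the approach.
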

Under the Assumption (\ref{eq: k_n}), based on  (\ref{eq:2side_edge_bias}), (\ref{eq:h}), and (\ref{eq:bias_estimates}), we {\DR then} propose {a bias-corrected} estimator of the form:
\begin{equation} \label{eq:unbiased_V}
\begin{split}
{\widetilde{V}(g)^{n}_T} &:= V\left(g\right)^{n}_{T} +k_n \Delta_n g\left(\hat{c}_1^n\right) {\int_0^{\infty} L(u)d u}-k_n \Delta_n g\left(\hat{c}_n^n\right){\int_{-\infty}^0 L(u) d u}\\
&\quad-\frac{\Delta_n}{4} \sum_{i=1}^{n-2k_n+1}\sum_{p, q, u, v=1}^d \partial_{p q, u v}^2 g\left(\hat{c}_i^n\right)\left(\hat{c}_{i+k_n}^{n, p q}-\hat{c}_i^{n, p q}\right)\left(\hat{c}_{i+k_n}^{n, u v}-\hat{c}_i^{n, u v}\right)C_{K,1}\\
&\quad+\frac{1}{2k_n}V\left(h\right)^{n}_{T}\left[C_{K,2}-\int K^2(u) du\right]{\DR ,}
\end{split}
\end{equation}
where
$$
C_{K,1}:=\frac{\int_{-\infty}^{\infty} L(z)^2 d z+\left(\int_{-\infty}^{0}-\int_{0}^{\infty}\right) L(z) d z}{{\int_{-\infty}^{\infty} L(z)(L(z)-L(z-1))d z+\frac{1}{2}-\int_{-\infty}^{\infty}K(z)(|z|\wedge{}1)dz}}{},
$$
and
$$
C_{K,2}:={\int^{\infty}_{-\infty} K(z)(K(z)-K(z-1))d zC_{K,1}}.
$$
Next, we show that indeed ${\widetilde{V}(g)^{n}_T}$ enjoys a centered CLT. Its simple proof is given in Section \ref{SmplCorUnb}. 
\begin{corollary}\label{BiasCorrectedCLT0}
Under the assumptions {\DR and notation} of Theorem \ref{clt},  {as $n\to\infty$}, we have \begin{equation}\label{UnBsCLTb}
\frac{1}{\sqrt{\Delta_n}}\left( {\widetilde{V}(g)^{n}_T} - V(g)\right) \stackrel{st}{\to} Z,
\end{equation}
{\DR where ${\widetilde{V}(g)^{n}_T}$ is defined as in \eqref{eq:unbiased_V} with $\hat{c}$ given by (\ref{estivoltrun}) {\DR with the} condition \eqref{moreassum}. Furthermore, {\DR when $X$ is continuous, \eqref{UnBsCLTb} holds under the weaker condition {\eqref{moreassum1}}.}} 
\end{corollary}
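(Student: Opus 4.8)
The plan is to express \eqref{eq:unbiased_V} as $\widetilde{V}(g)^{n}_T = V(g)^{n}_T + E_n + J_n + H_n$, where $E_n$ collects the two edge terms, $J_n$ is the double-sum term carrying the factor $C_{K,1}$, and $H_n$ is the $V(h)^{n}_T$ term, and then to multiply through by $\Delta_n^{-1/2}$. Theorem \ref{clt} already gives $\Delta_n^{-1/2}\big(V(g)^{n}_T - V(g)_T\big)\stackrel{st}{\longrightarrow} A^{1}+A^{2}+A^{3}+Z$, so it suffices to show $\Delta_n^{-1/2}(E_n+J_n+H_n)\stackrel{\mathbb{P}}{\longrightarrow} -(A^{1}+A^{2}+A^{3})$; since this probability limit is $\mathcal{F}$-measurable, the standard rule that adding a term converging in probability to an $\mathcal{F}$-measurable limit preserves stable convergence yields $\Delta_n^{-1/2}(\widetilde{V}(g)^{n}_T - V(g))\stackrel{st}{\longrightarrow} (A^{1}+A^{2}+A^{3}+Z)-(A^{1}+A^{2}+A^{3}) = Z$, which is \eqref{UnBsCLTb}.

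For the edge group, $\Delta_n^{-1/2}E_n = k_n\sqrt{\Delta_n}\,g(\hat c^n_1)\int_0^{\infty}L(u)\,du - k_n\sqrt{\Delta_n}\,g(\hat c^n_n)\int_{-\infty}^{0}L(u)\,du$, and \eqref{eq:2side_edge_bias} gives the limit $\theta g(c_0)\int_0^{\infty}L(u)\,du - \theta g(c_T)\int_{-\infty}^{0}L(u)\,du = -A^{1}$ by comparison with \eqref{MDfnOfAs}. For the last group, since $k_n\sqrt{\Delta_n}\to\theta$ by \eqref{eq: k_n} and $V(h)^{n}_T\stackrel{\mathbb{P}}{\longrightarrow}\int_0^T h(c_s)\,ds = \int_0^T\sum_{p,q,u,v}\partial^2_{pq,uv}g(c_s)\check c^{pq,uv}_s\,ds$ by \eqref{eq:h}, I obtain $\Delta_n^{-1/2}H_n\stackrel{\mathbb{P}}{\longrightarrow}\frac{1}{2\theta}\big(\int_0^T h(c_s)\,ds\big)\big(C_{K,2}-\int K^2(u)\,du\big)$, which by the form of $A^{2}$ equals $\frac{1}{2\theta}\big(\int_0^T h(c_s)\,ds\big)C_{K,2}-A^{2}$.

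The crux is the $J_n$ group. A direct comparison with \eqref{An3} shows $\Delta_n^{-1/2}J_n = -2C_{K,1}\widehat{A}^{n,3}$, so Theorem \ref{thm2.2} gives its limit as $-2C_{K,1}$ times the right-hand side of \eqref{eq:bias_estimates}. Writing $D_1:=\int K(z)(K(z)-K(z-1))\,dz$ and $D_2:=\int L(z)(L(z)-L(z-1))\,dz+\frac12-\int K(z)(|z|\wedge 1)\,dz$ for the two kernel constants multiplying the $\check c$- and $\tilde c$-integrals there, the definitions of $C_{K,1}$ and $C_{K,2}$ give $C_{K,1}D_2 = \int L^2(z)\,dz+(\int_{-\infty}^0-\int_0^\infty)L(z)\,dz$ and $C_{K,1}D_1 = C_{K,2}$. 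Hence the $\tilde c$-part collapses to $-\frac{\theta}{2}(C_{K,1}D_2)\int_0^T\sum\partial^2 g(c_s)\tilde c_s\,ds = -A^{3}$, while the $\check c$-part becomes $-\frac{1}{2\theta}C_{K,2}\int_0^T h(c_s)\,ds$. Adding the three limits, the two spurious $\check c$-contributions $\pm\frac{1}{2\theta}C_{K,2}\int_0^T h(c_s)\,ds$ (one from $J_n$, one from $H_n$) cancel — which is exactly what the constants $C_{K,1},C_{K,2}$ are engineered to achieve — leaving $-(A^{1}+A^{2}+A^{3})$.

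The only non-algebraic point, and hence the main step to state carefully, is the passage from joint convergence of a stably convergent sequence and a sequence converging in probability to the stable convergence of their sum; this is routine once one observes that all three probability limits $-A^{1},-A^{2},-A^{3}$ are $\mathcal{F}$-measurable functionals of $c_0,c_T$ and the paths of $c$ and $\tilde\sigma$, so that the genuinely new randomness resides only in $Z$ and is left untouched. Finally, the discontinuous case under \eqref{moreassum} and the continuous case under \eqref{moreassum1} require no separate treatment, since every ingredient invoked above (Theorems \ref{clt} and \ref{thm2.2}, together with \eqref{eq:2side_edge_bias} and \eqref{eq:h}) is already asserted under precisely these two regimes.
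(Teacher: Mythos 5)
Your proposal is correct and follows essentially the same route as the paper's proof: decompose $\widetilde{V}(g)^{n}_T$ into $V(g)^{n}_T$ plus the three correction groups, apply Theorem \ref{clt} together with \eqref{eq:2side_edge_bias}, \eqref{eq:h}, and Theorem \ref{thm2.2} (noting $\Delta_n^{-1/2}J_n=-2C_{K,1}\widehat{A}^{n,3}$), and check that the constants $C_{K,1},C_{K,2}$ make the limits cancel $A^1+A^2+A^3$. Your explicit remark on preserving stable convergence when adding a sequence converging in probability to an $\mathcal{F}$-measurable limit is left implicit in the paper but is the correct justification.
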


\begin{remark}
\cite{jacod2015estimation} proposed the following bias-corrected estimator:
\begin{align}\label{CRJEN}
    \widetilde{V}(g)^{n,{JR}}_T&:=V(g)^{n,{JR}}_T+\frac{k_n \Delta_n}{2}\left(g\left(\hat{c}_1^{n,{JR}}\right)+g\left(\hat{c}^{n,{JR}}_{n-k_n+1}\right)\right)-\frac{3}{4k_n}V(h)_T^{n,{JR}}\\
    &\quad+\frac{\Delta_n}{8} \sum_{i=1}^{n-2k_n+1}\sum_{p, q, u, v=1}^d \partial_{p q, u v}^2 g\left(\hat{c}_i^{n,{JR}}\right)\left(\hat{c}_{i+k_n}^{n,{JR}, p q}-\hat{c}_i^{n,{JR}, p q}\right)\left(\hat{c}_{i+k_n}^{n,{JR}, u v}-\hat{c}_i^{n,{JR}, u v}\right),
    \nonumber
\end{align}
where $\hat{c}_i^{n,{JR}}$ and     $V\left(g\right)^{n,JR}_{T}$
are defined as in (\ref{vola_JR}) and (\ref{eq:right_V_estimator}), respectively. {\DR There is a connection between \eqref{CRJEN} and our estimator \eqref{eq:unbiased_V} taking a right-sided uniform kernel $K(x)={\bf 1}_{[0,1)}(x)$. Indeed, note that, with that kernel,} (\ref{eq:unbiased_V}) reduces to
\begin{equation}
\begin{split}
{\widetilde{V}(g)^{n}_T} &:= V\left(g\right)^{n}_{T} +\frac{k_n \Delta_n}{2} g\left(\hat{c}_1^n\right)-\frac{3}{4k_n}V\left(h\right)^{n}_{T}\\
&\quad+\frac{\Delta_n}{8} \sum_{i=1}^{n-2k_n+1}\sum_{p, q, u, v=1}^d \partial_{p q, u v}^2 g\left(\hat{c}_i^n\right)\left(\hat{c}_{i+k_n}^{n, p q}-\hat{c}_i^{n, p q}\right)\left(\hat{c}_{i+k_n}^{n, u v}-\hat{c}_i^{n, u v}\right).
\end{split}
\end{equation}
Then, by {\DR \eqref{eq:K_bar_equal_1} and} (\ref{eq:right_V_estimator}), we {\DR conclude that} 
\begin{equation}\label{relationtoJR}
\begin{split}
{ \widetilde{V}(g)^{n}_T} &=\widetilde{V}(g)^{n,{JR}}_T +\Delta_{n}\sum_{i=n-k_{n}-1}^{n}g\left(\hat{c}_{i}^{n}\right)\bar{K}\left(t_{i}\right)-\Delta_n g\left(\hat{c}_{0}^{n}\right)-\frac{k_n \Delta_n}{2} g\left(\hat{c}^{n,{JR}}_{n-k_n+1}\right)\\
&\quad-\frac{3}{4k_n}\Delta_n\sum_{i=n-k_{n}-1}^{n}h\left(\hat{c}_{i}^{n}\right)\bar{K}\left(t_{i}\right)+\frac{3\Delta_n}{4k_n} h\left(\hat{c}_{0}^{n}\right)\\
&\quad+\frac{\Delta_n}{8}\sum_{p, q, u, v=1}^d \partial_{p q, u v}^2 g\left(\hat{c}_{n-2k_n+1}^{n}\right)\left(\hat{c}_{n-k_n+1}^{n, p q}-\hat{c}_{n-2k_n+1}^{n, p q}\right)\left(\hat{c}_{n-k_n+1}^{n, u v}-\hat{c}_{n-2k_n+1}^{n, u v}\right)\\
&\quad-\frac{\Delta_n}{8}\sum_{p, q, u, v=1}^d \partial_{p q, u v}^2 g\left(\hat{c}_{0}^{n}\right)\left(\hat{c}_{k_n}^{n, p q}-\hat{c}_0^{n, p q}\right)\left(\hat{c}_{k_n}^{n, u v}-\hat{c}_0^{n, u v}\right).
\end{split}
\end{equation}
Since
$$
\sqrt{\Delta_n}\sum_{i=n-k_{n}-1}^{n}g\left(\hat{c}_{i}^{n}\right)\bar{K}\left(t_{i}\right)\stackrel{\mathbb{P}}{\longrightarrow} \frac{\theta}{2}g\left(c_T\right),\qquad 
\frac{k_n \sqrt{\Delta_n}}{2} g\left(\hat{c}^{n,{JR}}_{n-k_n+1}\right)\stackrel{\mathbb{P}}{\longrightarrow} \frac{\theta}{2}g\left(c_T\right),
$$
we know that $\Delta_n\sum_{i=n-k_{n}-1}^{n}g\left(\hat{c}_{i}^{n}\right)\bar{K}\left(t_{i}\right)-\frac{k_n \Delta_n}{2} g\left(\hat{c}^{n,{JR}}_{n-k_n+1}\right)=o_{p}\left(\sqrt{\Delta_n}\right)$. Also note that 
$$\sqrt{\Delta_n}\sum_{i=n-k_{n}-1}^{n}h\left(\hat{c}_{i}^{n}\right)\bar{K}\left(t_{i}\right)\stackrel{\mathbb{P}}{\longrightarrow} \frac{\theta}{2}h\left(c_T\right),
$$ 
which implies that $\frac{3}{4k_n}\Delta_n\sum_{i=n-k_{n}-1}^{n}h\left(\hat{c}_{i}^{n}\right)\bar{K}\left(t_{i}\right)=O_{p}\left(\frac{\sqrt{\Delta_n}}{k_n}\right)$. The remaining terms in (\ref{relationtoJR}) can {\DR directly} be proved {\DR to be} $o_{p}\left(\sqrt{\Delta_n}\right)$. We then conclude that
$$
\frac{1}{\sqrt{\Delta_n}}\left(\widetilde{V}(g)^{n}_T -\widetilde{V}(g)^{n,{JR}}_T\right)\stackrel{\mathbb{P}}{\longrightarrow} 0.
$$
Hence, by Corollary \ref{BiasCorrectedCLT0}, we can recover the {\DR stable} convergence in law for the bias-corrected estimator in \cite{jacod2015estimation}:
 \begin{equation}
\frac{1}{\sqrt{\Delta_n}}\left( \widetilde{V}(g)^{n,{JR}}_T - V(g)\right) \stackrel{st}{\to} Z.
\end{equation} 
\end{remark}

As mentioned above, the bias term $A^3 $ contains the `volvol' $\tilde{c} $ and the estimator of $A^3$ in (\ref{eq:bias_estimates}) could introduce extra variance to the {asymptotically} unbiased estimator (\ref{eq:unbiased_V}). {An alternative} way to eliminate {the} bias is {undersmoothing, i.e., through the selection of a bandwidth sequence $b_n$ converging to $0$ at a faster rate than $\sqrt{\Delta_n}$ ($b_n \ll \sqrt{\Delta_n}$) or, equivalently, picking $k_n$ such that $\sqrt{\Delta_n}k_n  \stackrel{n \to \infty}{\longrightarrow} 0$ (meaning that $\theta = 0$ in (\ref{eq: k_n}))}. This is the approach put forward in \cite{jacod2013quarticity}.
{In that case, it is expected that the bias terms $A^1$ and $A^3$ will vanish and the bias term $A^2 $ will dominate}. {{\DR Following the same idea,} we can devise an asymptotically unbiased {\DR estimator as follows}:
\begin{equation}\label{v_bar}
    \bar{V}(g)^n_T := \Delta_n \sum_{i= 1}^{n} \left( g(\hat{c}^n_i) - \frac{1}{2k_n} h(\hat{c}^n_i) \int K^2(u) du\right)\bar{K}\left(t_{i}\right).
\end{equation}
The following result establishes the asymptotic behavior of $\bar{V}(g)^n_T $. {Its proof is {given} in Appendix \ref{PrfSubs23}.}
\begin{theorem} \label{clt_suboptimal}
Assume $k_n$ satisfies \begin{equation} \label{eq:k_n_suboptimal}
k_{n}^2 \Delta_n \to 0,\quad  k_n^3 \Delta_n \to \infty{.}
\end{equation}
Then, under (\ref{eq:X}) and Assumptions \ref{process} and \ref{kernel}, we have the following stable convergence in law
\begin{equation} \label{eq:convergence_in_law_unbiased}
\frac{1}{\sqrt{\Delta_{n}}}\left(\bar{V}(g)_T^{n}-V(g)_T\right) \stackrel{st}{\longrightarrow} Z,
\end{equation}where $Z$ is as in Theorem \ref{clt} {\DR and ${\bar{V}(g)^{n}_T}$ is defined as in \eqref{v_bar} with $\hat{c}$ given by (\ref{estivoltrun}) under condition \eqref{moreassum}. Furthermore, {\DR when $X$ is continuous, \eqref{eq:convergence_in_law_unbiased} holds under the weaker condition \eqref{moreassum1}.}} 

\end{theorem}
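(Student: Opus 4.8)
The plan is to mirror the proof of Theorem \ref{clt}, tracking the fact that undersmoothing corresponds to the degenerate case $\theta=0$, and to show that the explicit de-biasing term in \eqref{v_bar} is precisely engineered to absorb the $A^2$-type contribution, which would otherwise diverge after scaling by $\Delta_n^{-1/2}$. After the usual localization argument (Assumption \ref{process} may be strengthened to bounded coefficients and controlled jumps), I would first perform a fourth-order Taylor expansion of $g(\hat c_i^n)$ around $c_{t_i}$,
\[
g(\hat c_i^n)-g(c_{t_i})=\sum_{p,q}\partial_{pq}g(c_{t_i})(\hat c_i^{n,pq}-c_{t_i}^{pq})+\tfrac12\sum_{p,q,u,v}\partial^2_{pq,uv}g(c_{t_i})(\hat c_i^{n,pq}-c_{t_i}^{pq})(\hat c_i^{n,uv}-c_{t_i}^{uv})+R_i^n,
\]
and decompose the estimation error $\hat c_i^n-c_{t_i}$ into a martingale-type ``noise'' part $\zeta_i^n$ with conditional covariance $\approx k_n^{-1}\check c_{t_i}^{pq,uv}\int K^2$ and a ``target'' (kernel-smoothing) error, exactly as in the proof of Theorem \ref{clt}.

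For the linear term, I would invoke the same stable CLT machinery used for Theorem \ref{clt}: after multiplying by $\Delta_n\bar K_n(t_i)$, summing, and scaling by $\Delta_n^{-1/2}$, the martingale part converges stably to the Gaussian $Z$ with the conditional variance \eqref{VarForNL}. The crucial difference from the optimal regime is that the two bias contributions attached to the linear term, namely the boundary term $A^1$ and the volvol/target term $A^3$, carry a factor of order $k_n\sqrt{\Delta_n}=\sqrt{k_n^2\Delta_n}$, which vanishes by the first condition in \eqref{eq:k_n_suboptimal}; hence only $Z$ survives. I would also verify that the Riemann-sum/edge discrepancy $\Delta_n\sum_i g(c_{t_i})\bar K_n(t_i)-\int_0^T g(c_s)\,ds$ is $o_p(\sqrt{\Delta_n})$.

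The substantive work is in the quadratic term. I would split it into its conditional mean and a centered fluctuation. The conditional mean equals $\tfrac{1}{2k_n}h(c_{t_i})\int K^2$ up to lower-order terms (this is where the function $h$ enters), so that $\Delta_n\sum_i\tfrac{1}{2k_n}h(c_{t_i})\int K^2\,\bar K_n(t_i)$ reproduces the $A^2$ bias; using the consistency of $\hat c_i^n$ and $k_n\sqrt{\Delta_n}\to0$, this is cancelled by the de-biasing term $-\tfrac{\Delta_n}{2k_n}\int K^2\sum_i h(\hat c_i^n)\bar K_n(t_i)$ up to $o_p(\sqrt{\Delta_n})$. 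For the remainder $R_i^n$ and the centered fluctuation of the quadratic term, I would use the bounds $\E\|\hat c_i^n-c_{t_i}\|^3=O(k_n^{-3/2})$ and a variance estimate for the (overlapping) quadratic fluctuations; after scaling by $\Delta_n^{-1/2}$, the remainder is of order $(\Delta_n k_n^3)^{-1/2}$, which vanishes precisely under the second condition $k_n^3\Delta_n\to\infty$ in \eqref{eq:k_n_suboptimal}. Combining the pieces, only $Z$ remains.

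I expect the main obstacle to be controlling the centered fluctuation of the quadratic term, because the kernel has width of order $k_n$ and therefore the noises $\zeta_i^n$ and $\zeta_j^n$ are strongly correlated for $|i-j|\lesssim k_n$; bounding the variance of $\Delta_n\sum_i\partial^2 g(c_{t_i})(\zeta_i^{n,pq}\zeta_i^{n,uv}-\E[\zeta_i^{n,pq}\zeta_i^{n,uv}\mid\mathcal{F}])\bar K_n(t_i)$ thus requires careful accounting of this overlapping dependence, together with the separate treatment of the cross terms between the noise and the target error (which must be shown negligible under undersmoothing). Since Theorem \ref{clt} already supplies these estimates in the optimal regime, the proof is largely a matter of re-deriving the relevant rates under \eqref{eq:k_n_suboptimal} and confirming the exact algebraic match between the de-biasing coefficient and the $A^2$ term.
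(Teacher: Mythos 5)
Your proposal follows essentially the same route as the paper's proof: the decomposition of Theorem \ref{clt} with $\theta=0$ (so that $k_n^2\Delta_n\to 0$ kills the $A^1$- and $A^3$-type biases and the edge terms), the conditional mean of the quadratic term cancelled by the $\tfrac{1}{2k_n}h(\hat c^n_i)\int K^2(u)\,du$ correction, and the cubic Taylor remainder of order $(k_n^3\Delta_n)^{-1/2}$ controlled by the second condition in \eqref{eq:k_n_suboptimal}, with the centered quadratic fluctuations and noise--target cross terms handled by the estimates already established for Theorem \ref{clt}. This matches the paper's argument in both structure and the attribution of each bandwidth condition, so no gap to report.
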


\section{Simulation Study}\label{SimulationSect}
In this section we analyze the performance of {\DR our} kernel-based estimators. To isolate the effect of the bandwidth on the estimators' performance, we only consider continuous processes (i.e., $\delta\equiv 0$ in \eqref{eq:X} and, thus, $X\equiv X'$) and focus on the untruncated estimator \eqref{MDKNN} and the corresponding estimators $V(g)^{n}_{t}$ of \eqref{eq:simple_estimator} and $\widetilde{V}(g)^{n}_T$ of \eqref{eq:unbiased_V}.

\subsection{Simulation design}
We consider the data generating model in \cite{li2019efficient}:
\begin{equation} \label{eq:expOU_model}
\left\{\begin{array}{l}
d X_t = \sigma_t d W_t  \\
\sigma_t = exp(-1.6+F_t), \quad d F_t = -5 F_t dt + 2dB_t,
\end{array}\right.
\end{equation} with $\mathbb{E}\left[d W_t d B_t\right]  = -0.75 dt $. Throughout, we use the function $g(c) = c^2$, which corresponds to the integrated quarticity{,} and also the function $g(c) = \log(c)$.  We simulate data for  $T= 5$ days and  $T = 21$ days, and assume the process $X$ is observed once {\Red every 1 minute or every 5 minutes, with 6.5 trading hours per day, for all the results of Subsection \ref{Sec33}, but not for Subsection  \ref{Sec32}, where we assume a frequency of 1-second}. 

\subsection{Validity of the asymptotic theory}\label{Sec32}
We first show the finite sample behavior of the estimator is consistent with the Central Limit Theorem of Theorem \ref{clt}. Under the notation of Theorem \ref{clt}, we choose $\DR{ \theta = 0.2}$, $g(x) = x^2 $, and estimate the functional volatility using the exponential kernel $K(x) = \frac{1}{2}e^{-|x|}$. The choice of $\theta$ was obtained from the formula $b^*=\theta^*\sqrt{\Delta_n}$, where $b^*$ was chosen using the iterative method in \cite{FigLi}. The sample frequency of the data is 1 second. Based on 5,000 simulated paths from the model (\ref{eq:expOU_model}), we {calculate the 
standardized estimation error for each path $i=1,\dots, 5000$} according to Theorem \ref{clt}: \begin{equation}
z_i = \frac{\left(\Delta_n\sum_{j=1}^n g(\hat{c}^{n,i}_j)\bar{K}\left(t_j\right) - \int_0^T g(c^{i}_s) ds\right)/\sqrt{\Delta_n} - (A^{1,i}+A^{2,i} +A^{3,i})}{\sqrt{{\rm Var}(Z)^{i}}},
\end{equation}
where $c^i_s = (\sigma^i_s)^2 $, $\{\hat{c}^{n,i}_j\}_j $ are the kernel spot volatility estimates of the $i$th simulated path, and the variance ${\rm Var}(Z)^{i}=2 \int_0^T(g''(c^i_s)c^i_s)^2ds$ and the biases $A^{\ell,i}$ ($\ell=1,2,3$), as defined in \eqref{MDfnOfAs}, are computed via Riemann sum approximations  using the true simulated volatility $\{c^{n,i}_j\}_{j=0,\dots,n}$ and volvol observations $\{\tilde{c}^{n,i}_j\}_{j=0,\dots,n}$.
We then compare the histogram of the standardized estimates $z_i$ with a standard normal distribution (see Figure \ref{fig:hist_exp}). As shown therein, the distribution of the  estimation error is consistent with the asymptotic result in Theorem \ref{clt}.\\
\begin{figure}[h]
    \centering
    \includegraphics[scale = 0.6]{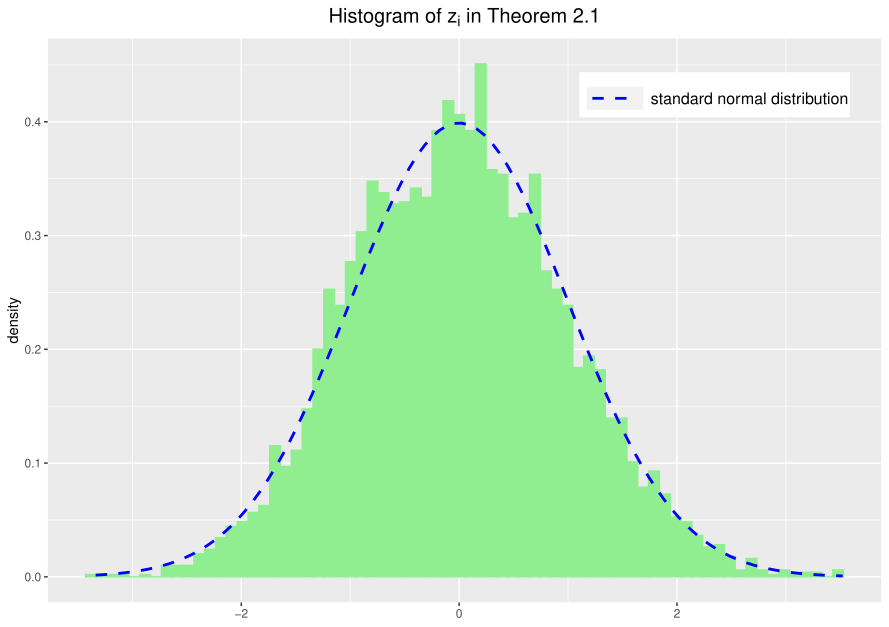}
    \caption{Histogram of $z_i$ and the density of standard normal distribution in Theorem \ref{clt}.  }
    \label{fig:hist_exp}
\end{figure} 
Similarly, we verify the behavior of the bias-corrected estimator proposed in Corollary \ref{BiasCorrectedCLT0}. With the same value of $\theta$, $g(x)$, exponential kernel, and simulated data, the standardized bias-corrected estimator takes the form:
\begin{equation}
    z_i = \frac{\frac{1}{\sqrt{\Delta_n}}\left(\widetilde{V}(g)^{n,i}_T - \int_0^T g(c^{i}_s) ds\right)}{\sqrt{Var(Z^i)}}.
\end{equation}
The corresponding histogram is shown in Figure \ref{fig:2}, which again suggest the CLT stated in Corollary \ref{BiasCorrectedCLT0} is valid.
\begin{figure}[h]
    \centering
    \includegraphics[scale = 0.6]{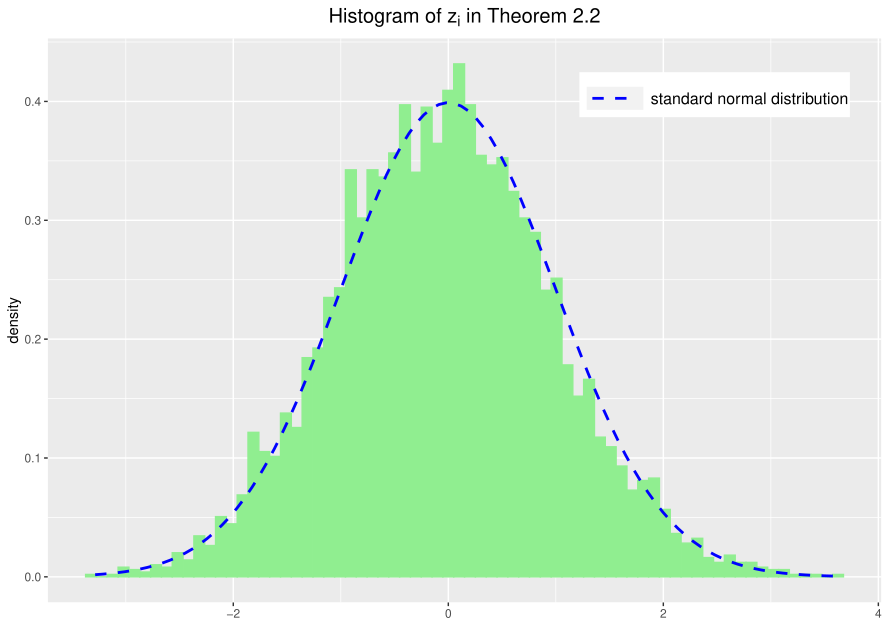}
    \caption{Histogram of $z_i$ and the density of standard normal distribution in Corollary \ref{BiasCorrectedCLT0}.  }
    \label{fig:2}
\end{figure} 
\begin{table}[h!]
\setlength\tabcolsep{3pt}
\begin{center} 
\begin{tabular}{  c c c c c c c c c }
\hline
& \multicolumn{4}{c}{$T=21$ days } &\multicolumn{4}{c}{$T=5$ days }\\
\hline
  & \multicolumn{2}{c}{$\Delta_n = 5 $ minutes } &\multicolumn{2}{c}{$\Delta_n = 1$ minute }& \multicolumn{2}{c}{$\Delta_n = 5 $ minutes } &\multicolumn{2}{c}{$\Delta_n = 1$ minute }\\
  & BIAS(\%) & RMSE(\%) & BIAS(\%) & RMSE(\%) & BIAS(\%) & RMSE(\%) & BIAS(\%) & RMSE(\%)\\
\hline 
\multicolumn{9}{c}{exponential kernel}\\
$V ^n_T$ & 5.884 & 9.890 & 2.345 & 3.941 & 13.291 & 17.627 & 6.718 &8.537 \\
$\widetilde{V}^{n}_T$ & 1.557 & 8.214 & 0.042 & 3.336 & 1.531 & 16.096 & 0.949 &6.861 \\
\multicolumn{9}{c}{two-sided uniform kernel}\\
$V ^n_T$ & 4.824 & 9.497 & 1.821 & 3.729 & 11.332 & 16.557 & 5.783 & 7.903\\
$\widetilde{V}^{n}_T$ & 1.804 & 8.357 & 0.084 & 3.352 & 2.288 & 15.843 & 1.183 & 6.874\\
\multicolumn{9}{c}{right-sided uniform kernel}\\
$V ^{n}_T$ & 1.727 & 8.237 & 0.208 & 3.384 & 8.380 & 16.339 & 4.638 & 7.086\\
$\widetilde{V}^{n}_T$ & 1.083 & 8.334 & -0.267 & 3.361 & 1.393 & 16.509 & 0.665 & 6.965\\
\multicolumn{9}{c}{Jacod \& Rosenbaum's Estimators}\\
$V ^{n,JR}_T$ & 5.096 & 11.127 & 1.656 & 4.115 & 19.714 & 23.313 & 8.574 & 10.459\\
$\widetilde{V}^{n,JR}_T$ & 1.795 & 8.469 & 0.061 & 3.370 & 2.002 & 16.562 & 1.139 &7.067 \\
\multicolumn{9}{c}{Jackknife Estimators}\\
$TS_n$ & 4.365 & 8.223 & 2.195 & 3.435 & 3.802 & 14.834 & 2.991 &6.779 \\
$TS_n'$ & 1.323 & 8.162 & 0.213 & 3.308 & -8.019 & 20.895 & -3.855 &8.714 \\
$MS_n$ & 1.989 & 8.220 & 0.098 & 3.347 & 1.618 & 16.641 & 1.041 &6.930 \\
\hline 
\end{tabular}
\caption{$g(c) = c^2$.}
\label{table:Jackknife}
\end{center}
\end{table}
\begin{table}[h!]
\setlength\tabcolsep{3pt}
\begin{center} 
\begin{tabular}{  c c c c c c c c c }
\hline
& \multicolumn{4}{c}{$T=21$ days } &\multicolumn{4}{c}{$T=5$ days }\\
\hline
  & \multicolumn{2}{c}{$\Delta_n = 5 $ minutes } &\multicolumn{2}{c}{$\Delta_n = 1$ minute }& \multicolumn{2}{c}{$\Delta_n = 5 $ minutes } &\multicolumn{2}{c}{$\Delta_n = 1$ minute }\\
  & BIAS(\%) & RMSE(\%) & BIAS(\%) & RMSE(\%) & BIAS(\%) & RMSE(\%) & BIAS(\%) & RMSE(\%)\\
\hline 
\multicolumn{9}{c}{exponential kernel}\\
$V ^n_T$ & 3.922 & 4.716 & 1.819 & 2.192 & 11.166 & 11.597 & 5.035 & 5.335\\
$\widetilde{V}^{n}_T$ & 0.486 & 2.218 & 0.120 & 0.908 & 1.007 & 4.427 & 0.169 & 2.111\\
\multicolumn{9}{c}{two-sided uniform kernel}\\
$V ^n_T$ & 3.313 & 4.016 & 1.463 & 1.818 & 9.620 & 10.219 & 4.309 & 4.673\\
$\widetilde{V}^{n}_T$ & 0.300 & 2.790 & -0.025 & 0.993 & 1.127 & 4.515 & 0.190 & 2.122\\
\multicolumn{9}{c}{right-sided uniform kernel}\\
$V ^{n}_T$ & 1.224 & 2.373 & 0.485 & 1.045 & 7.605 & 8.281 & 3.379 & 3.800\\
$\widetilde{V}^{n}_T$ & -0.753 & 3.440 & -0.432 & 1.277 & 0.091 & 4.463 & -0.216 & 2.122\\
\multicolumn{9}{c}{Jacod \& Rosenbaum's Estimators}\\
$V ^{n,{JR}}_T$ & 4.239 & 4.565 & 1.790 & 2.018 & 17.397 & 17.148 & 7.763 &7.785 \\
$\widetilde{V}^{n,JR}_T$ & -0.059 & 3.228 & -0.126 & 1.118 & 0.820 & 4.478 & 0.135 & 2.124\\
\multicolumn{9}{c}{Jackknife Estimators}\\
$TS_n$ & 2.269 & 3.962 & 1.296 & 2.152 & 1.913 & 5.099 & 0.957 &2.472 \\
$TS_n'$ & -0.425 & 2.931 & -0.142 & 1.472 & -9.518 & 10.686 & -5.169 &5.547 \\
$MS_n$ & 0.681 & 2.487 & 0.282 & 1.064 & 0.988 & 4.540 & 0.202 & 2.143\\
\hline 
\end{tabular}
\caption{$g(c) = log(c)$.}
\label{table:Jackknife_log}
\end{center}
\end{table}
\subsection{Comparison with Jackknife estimator}\label{Sec33}
We now compare our bias-corrected estimator $\widetilde{V}(g)_T^{n}$ to the Jackknife estimator of \cite{li2019efficient}. For the Jackknife method, we consider the two-scale estimator $TS_n$, its boundary-adjusted version $TS_n'${,} and the multiscale estimator $MS_n$ as defined in \cite{li2019efficient}, Section 3.1. The parameters for the Jackknife estimators are chosen according to \cite{li2019efficient}. {\DR By running numerous simulations, we determine that the values chosen by \cite{li2019efficient} broadly optimize their estimator's performance}. For the multiscale estimator $MS_n$, $\left(\psi_1, \psi_2, \psi_3\right)=(-2.5,8,-4.5)$, and $\left(k_{1, n}, k_{2, n}, k_{3, n}\right)=(15,30,45)$ for data sampled every 5 minutes, or $(40,80,120)$ for data sampled every minute. For the two-scale estimator, $\left(\psi_1, \psi_2\right)=(-1,2)$ along with the same $\left(k_{1, n}, k_{2, n}\right)$ as above. For our kernel estimator, we use 3 kernels: a two-sided exponential kernel $K(x)=\frac{1}{2}e^{-|x|}$, a two-sided uniform kernel $K(x)=\frac{1}{2}{\bf 1}_{[-1,1]}(x)$, and a right-sided uniform kernel $K(x)={\bf 1}_{[0,1]}(x)$. We analyze the performance of the simple biased estimator $V(g)^n_T$ in (\ref{eq:simple_estimator}) and the bias-corrected estimator $\widetilde{V}(g)^{n}_T$ in (\ref{eq:unbiased_V}). We also consider the simple biased estimator $V(g)^{n,{JR}}_T$ and the bias-corrected estimator $\tilde{V}(g)^{n,{JR}}_T$ of \eqref{eq:right_V_estimator} and \eqref{CRJEN}, respectively, which were proposed in \cite{jacod2015estimation}.

The bandwidths of all the kernel estimators are chosen using the iterative method in \cite{FigLi}, Section 5. We iterate the method only once. For a two-sided kernel, the integrated volatility of volatility used when updating the bandwidth is estimated by the Two-time Scale Realized Volatility (TSRV) estimator proposed in \cite{FigLi}. For the right-sided uniform kernel,  the TSRV estimator no longer applies. We then use the realized variance estimator on the estimated volatility path as a replacement. 
As finite-sample corrections, in (\ref{eq:unbiased_V}), at {each} $t_i$, we replace $\int_{-\infty}^{\infty}K^2(u) du $ with $
\sum_{j=1}^n K^2_h(t_i - t_j) \Delta_n h$, and replace $\int_{-\infty}^{\infty}L(s) ds  $ and $\int_{-\infty}^{\infty}L^2(s)ds $ with $
 \frac{\Delta_n}{b}\sum_{j=1}^n L\left(\frac{t_{j-1}-t_i}{b} \right)$ and $\frac{\Delta_n}{b}\sum_{j=1}^n L^2\left(\frac{t_j-t_i}{b} \right)$, respectively, where 
 $ L\left(\frac{t_j - t_i}{b}  \right) = 
 -\sum_{l=1}^{j}K_{{b}}(t_{l-1}-t_i) $ if $j \leq i$ and $\sum_{l=j+1}^{n}K_{{b}}(t_{l-1}-t_i)$ if $ j > i$.

In Table \ref{table:Jackknife}, we report the relative bias and relative root-mean-square-errors in percentage unit for the considered estimators. Our results of the Jackknife estimators are consistent with the ones in \cite{li2019efficient}, Table 1 therein. As shown in Table \ref{table:Jackknife}, the kernel estimators {\DR exhibit a similar performance to} the Jackknife estimators. However, the Jackknife estimators have more tuning parameters, while ours requires only one tuning parameter (namely, $\theta$). 
As we will show in Section \ref{sensitivity}, the {\Red most important} advantage of our estimators {\Red lies} on their stability relative to the choice of the bandwidth. That is, our estimators seem to exhibit satisfactory performance for a large range of the tuning parameters, while other estimators are more sensitive to choosing `good' values for their tuning parameters. 

{\DR Returning to} the results of Table \ref{table:Jackknife},} we also observe that in some cases the kernel estimators have smaller relative root mean square error (RMSE) compared with the Jackknife estimators, especially when the  time span $T$ is short ($T = 5$ days). Intuitively, the superior performance of the spot volatility estimators with exponential  kernel (shown in \cite{FigLi}, Section 7) {\DR is expected to be more evident} when the time span is relatively short, e.g. 5 days.  For reference, we include the tables for the function $g(c) = log(c)$ in Table \ref{table:Jackknife_log}. These tables confirm our previous conclusion.
It is worth noting that in general, the bias-corrected kernel estimator $\widetilde{V}^{n}_T$ has {\DR significantly} smaller RMSE compared with the simple biased estimator $V(g)^{n}_T$, which also happens among the Jacod \& Rosenbaum’s estimators, $V(g)^{n,{JR}}_T$ and $\widetilde{V}^{n,JR}_T$. 

\subsubsection{{\DR  Bandwidth Sensitivity}\label{sensitivity}}
In this subsection, we investigate how sensitive the estimators are relative to the bandwidth parameter. For the Jackknife estimators, we focus on the boundary adjusted multi-scale estimator $\text{MS}$ (which, as shown in Tables \ref{table:Jackknife} and \ref{table:Jackknife_log}, typically has the best performance among all {\Red the considered} Jackknife estimators), and for the kernel estimators, we {\DR first consider an} exponential kernel. We {\DR try} a sequence of bandwidths $k_n$ for the kernel estimator, and a sequence of triplets {$(k_{1,n}, k_{2,n}, k_{3,n}) = ([\frac{k_n}{2}], k_n, [\frac{3k_n}{2}])$} and $(\psi_1, \psi_2, \psi_3) = (-2.5,8,-4.5)$ for $MS$, which include the parameters chosen in \cite{li2019efficient} {\DR when $k_n=30$ (see Table 1 therein)}. The results are shown in the Figures \ref{fig:bdw_square_21}-\ref{fig:bdw_log_5}, where $\text{S0 exp}$ denotes the simple biased exponential kernel estimator in (\ref{eq:simple_estimator})  and $\text{S3 exp}$ denotes the bias-corrected estimators with exponential kernel in (\ref{eq:unbiased_V}). {\DR We can conclude that the estimator $\text{S3 exp}$ exhibits the most stable or consistent performance for} either 5 or 21 days data with 1 or 5 minutes sampling frequency, and both $g(c)=c^2$ or $\log (c)$. {\DR In other words, the estimator's values are {\DG in general} relatively invariant to changes in the bandwidth, {\DR especially} when working with low frequency and $g(c)=c^2$. Under those conditions, the others estimators require more careful tuning of the bandwidth in order to show good performance. 
We can also observe that the bias-corrected exponential kernel estimator almost always outperforms the boundary-adjusted multi-scale Jackknife estimator $\text{MS}$ and the simple biased exponential kernel estimator $\text{S0 exp}$.}

{\DR Finally,} in Figure \ref{fig:mixed}, we analyze the bandwidth stability of three estimators: the bias-corrected estimators \eqref{eq:unbiased_V} with exponential and right-sided uniform kernels and the bias-corrected estimator \eqref{CRJEN} proposed in \cite{jacod2015estimation}. Again, since we are not introducing jumps in the process $X$, the estimator $\hat{c}_i^{n}$ used for \eqref{eq:unbiased_V} and the estimator $\hat{c}_i^{n,{JR}}$ used for \eqref{CRJEN} do not have any truncation (i.e., we fix $v_n=\infty$ in \eqref{estivoltrun} and \eqref{vola_JR}). Figure \ref{fig:mixed} shows that our proposed estimator $\widetilde{V}^{n}_T$ with either exponential or right-sided uniform kernel almost always outperforms $\widetilde{V}^{n, JR}_T$. The exponential kernel appears to be less sensitive to the choice of $k_n$ than the right-sided uniform kernel. The estimator $\widetilde{V}^{n, JR}_T$ also seems to require a careful tuning of the bandwidth in order to achieve a good performance.

\begin{figure}
\centering
\begin{subfigure}{.5\textwidth}
  \centering
  \includegraphics[width=1\linewidth]{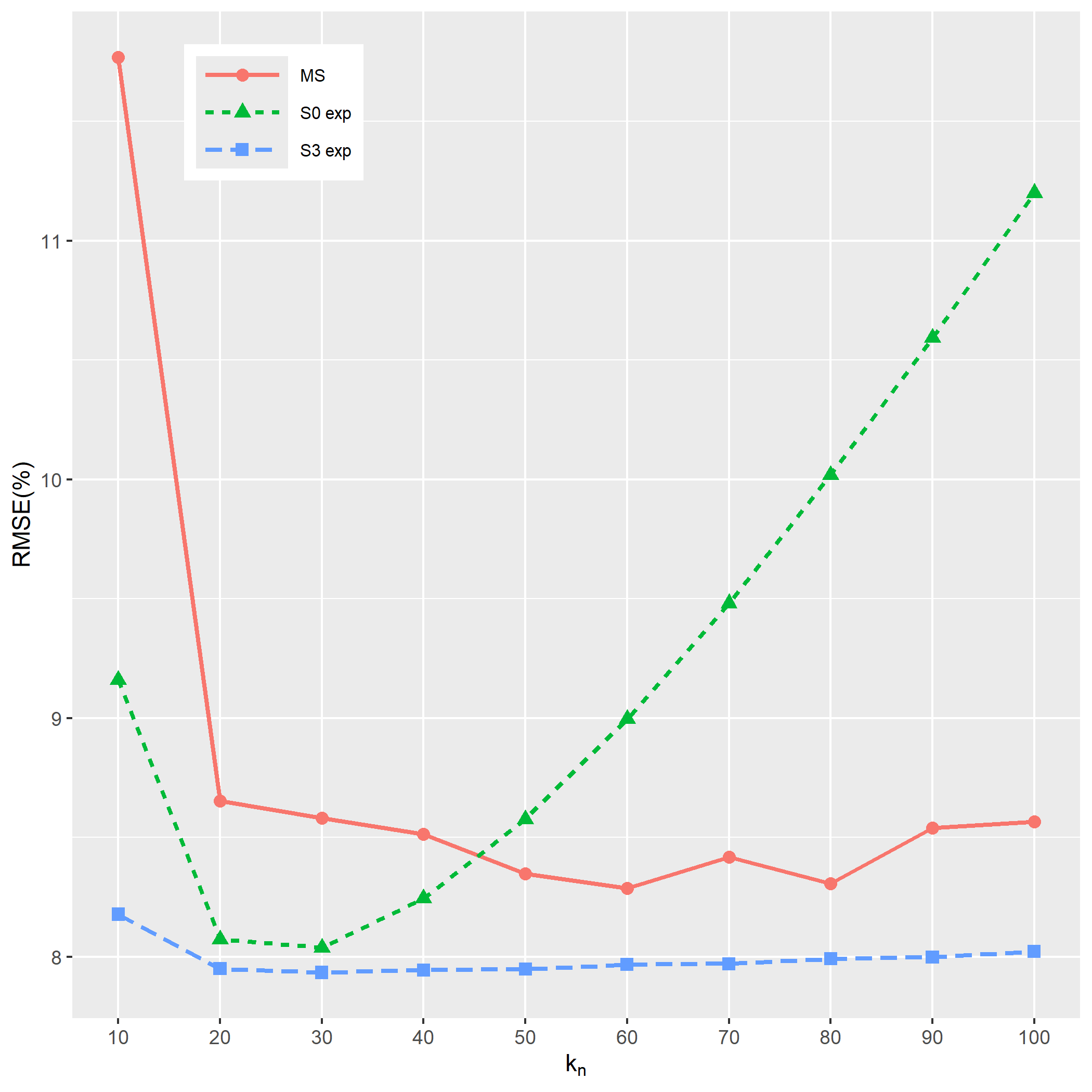}
  \caption{data with 5 minute sample frequency. }
  \label{fig:sub1}
\end{subfigure}%
\begin{subfigure}{.5\textwidth}
  \centering
  \includegraphics[width=1\linewidth]{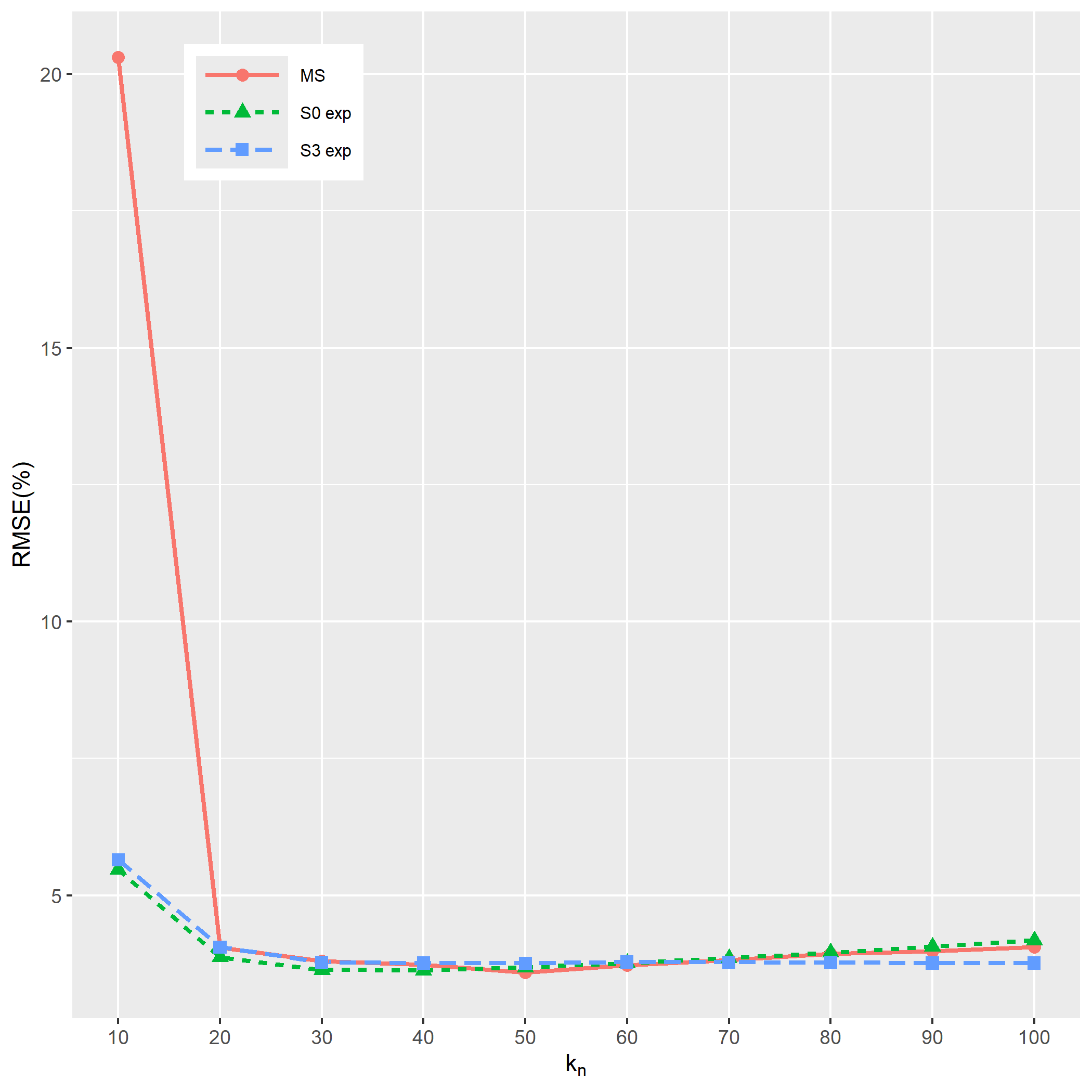}
  \caption{data with 1 minute sample frequency.}
  \label{fig:sub2}
\end{subfigure}
\caption{ RMSE(\%) v.s. the bandwidth $k_n$, for 21 days data, $g(c) = c^2$ and 1000 simulated path. $S0 $ exp denotes the simple biased estimator with exponential kernel and $S3$ exp denotes the unbiased estimator with exponential kernel. }
\label{fig:bdw_square_21}
\end{figure}

\begin{figure}
\centering
\begin{subfigure}{.5\textwidth}
  \centering
  \includegraphics[width=1\linewidth]{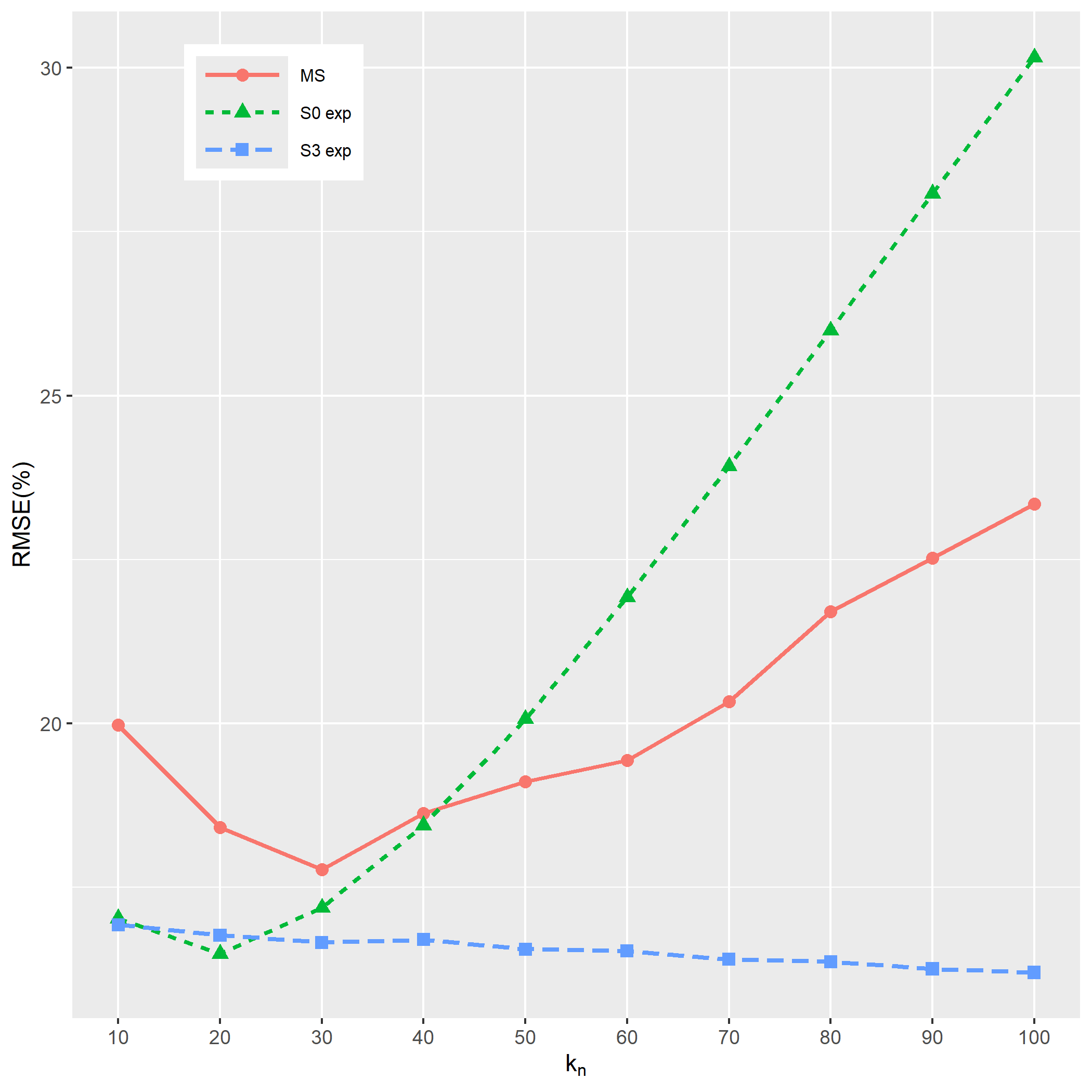}
  \caption{data with 5 minute sample frequency. }
  \label{fig:sub1}
\end{subfigure}%
\begin{subfigure}{.5\textwidth}
  \centering
  \includegraphics[width=1\linewidth]{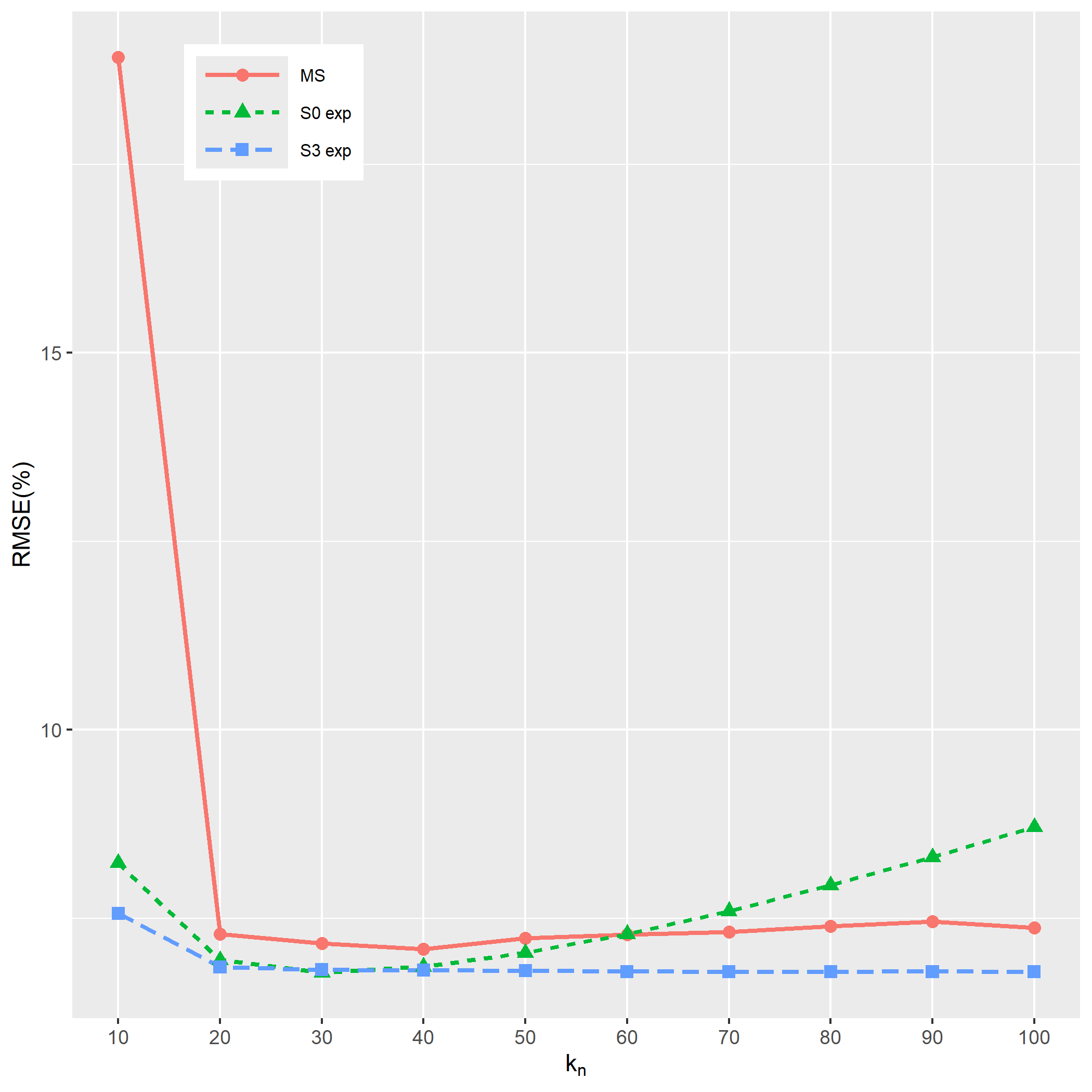}
  \caption{data with 1 minute sample frequency.}
  \label{fig:sub2}
\end{subfigure}
\caption{ RMSE(\%) v.s. the bandwidth $k_n$, for 5 days data, $g(c) = c^2$ and 1000 simulated path. $S0 $ exp denotes the simple biased estimator with exponential kernel and $S3$ exp denotes the unbiased estimator with exponential kernel. }
\label{fig:bdw_square_5}
\end{figure}

\begin{figure}
\centering
\begin{subfigure}{.5\textwidth}
  \centering
  \includegraphics[width=1\linewidth]{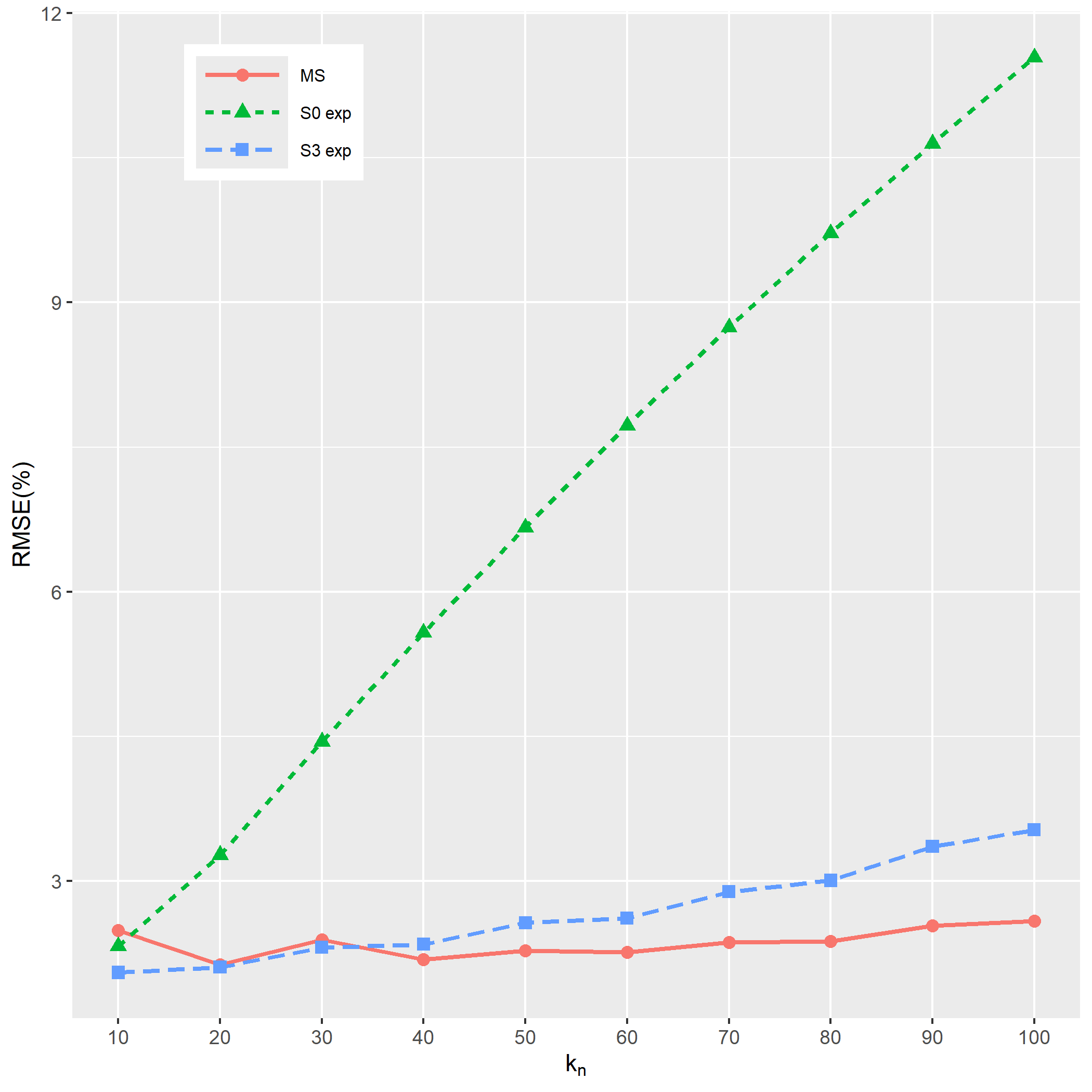}
  \caption{data with 5 minute sample frequency. }
  \label{fig:sub1}
\end{subfigure}%
\begin{subfigure}{.5\textwidth}
  \centering
  \includegraphics[width=1\linewidth]{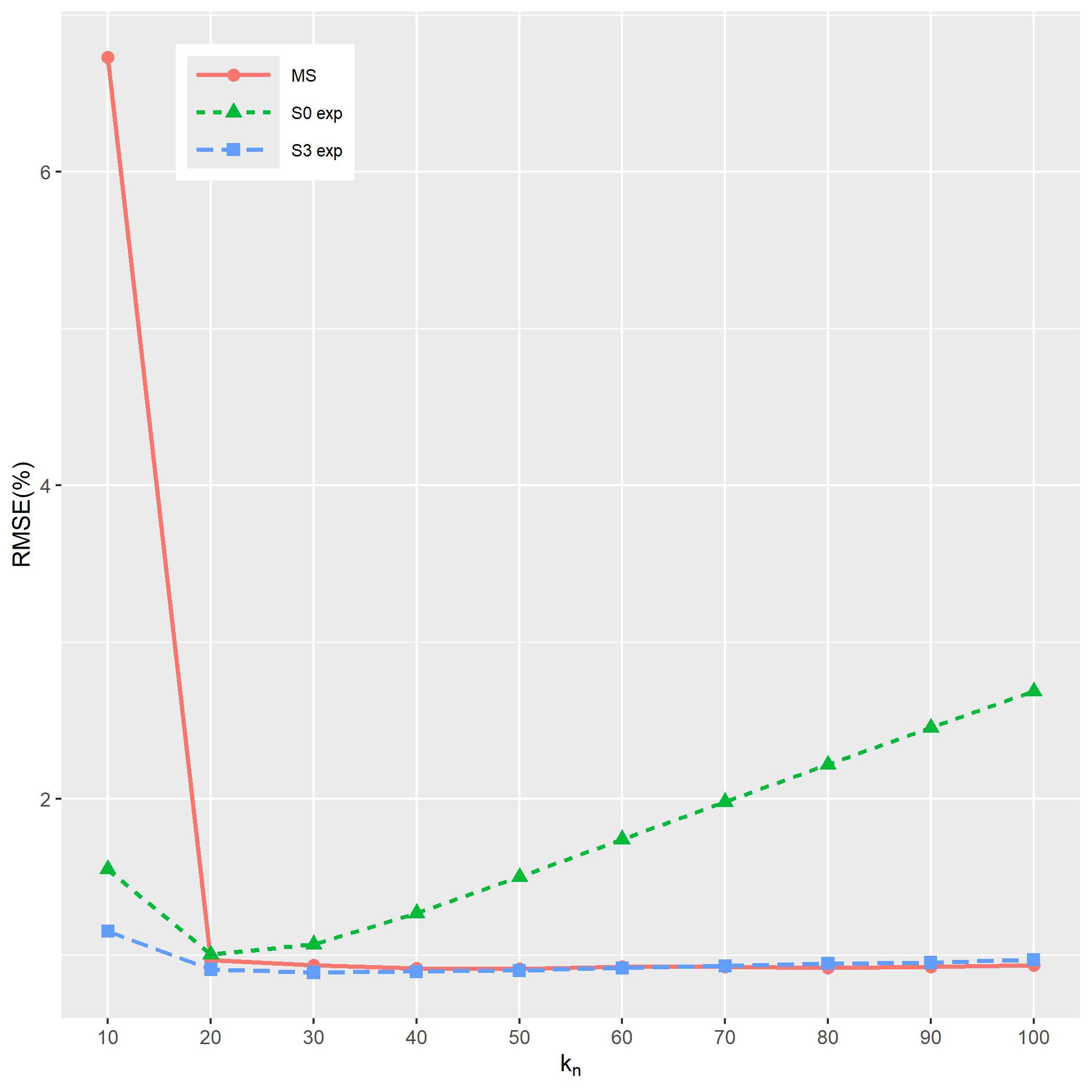}
  \caption{data with 1 minute sample frequency.}
  \label{fig:sub2}
\end{subfigure}
\caption{ RMSE(\%) v.s. the bandwidth $k_n$, for 21 days data, $g(c) = log(c)$ and 1000 simulated path. $S0$   exp denotes the simple biased estimator with exponential kernel and $S3 $ exp denotes the unbiased estimator with exponential kernel. }
\label{fig:bdw_log_21}
\end{figure}
\begin{figure}
\centering
\begin{subfigure}{.5\textwidth}
  \centering
  \includegraphics[width=1\linewidth]{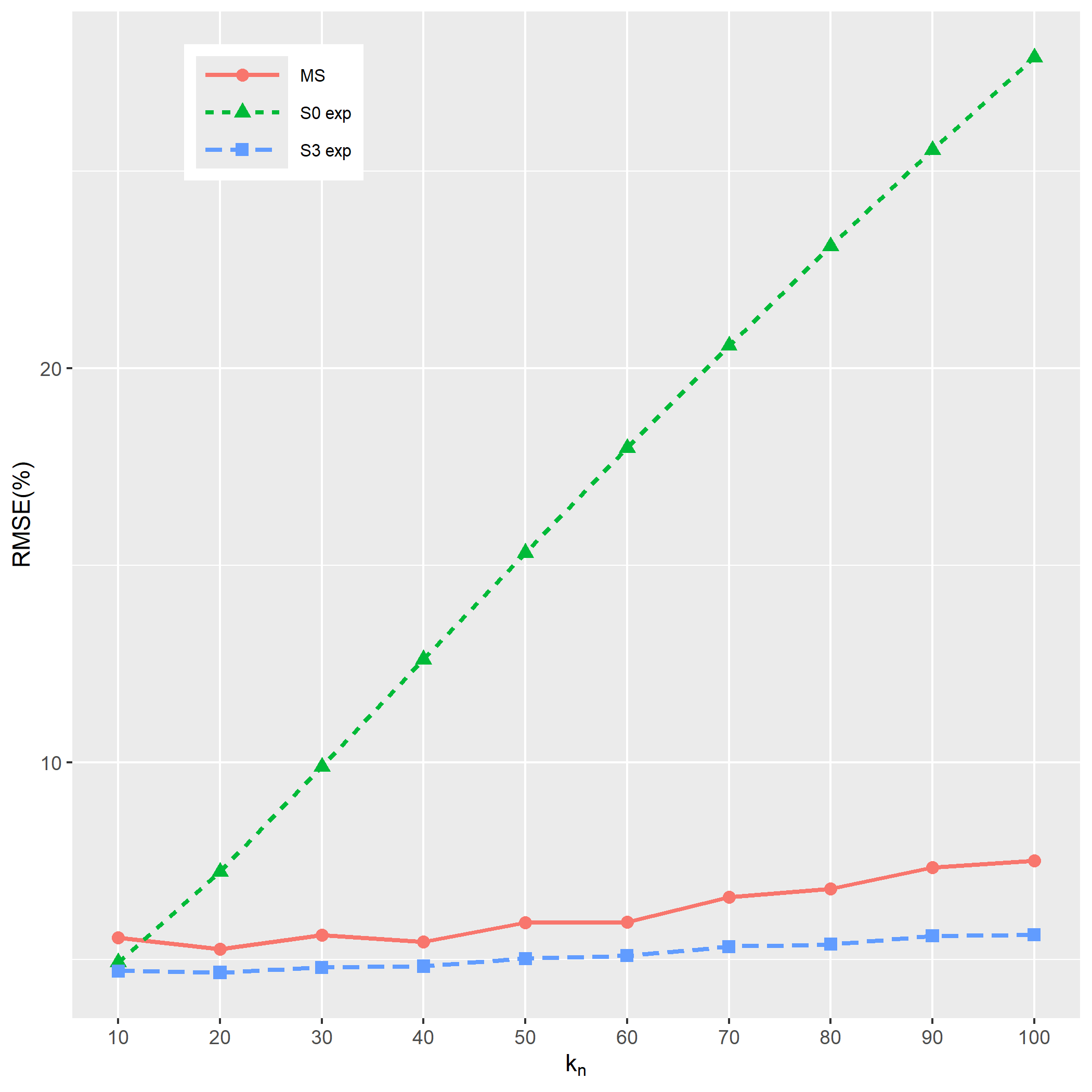}
  \caption{data with 5 minute sample frequency. }
  \label{fig:sub1}
\end{subfigure}%
\begin{subfigure}{.5\textwidth}
  \centering
  \includegraphics[width=1\linewidth]{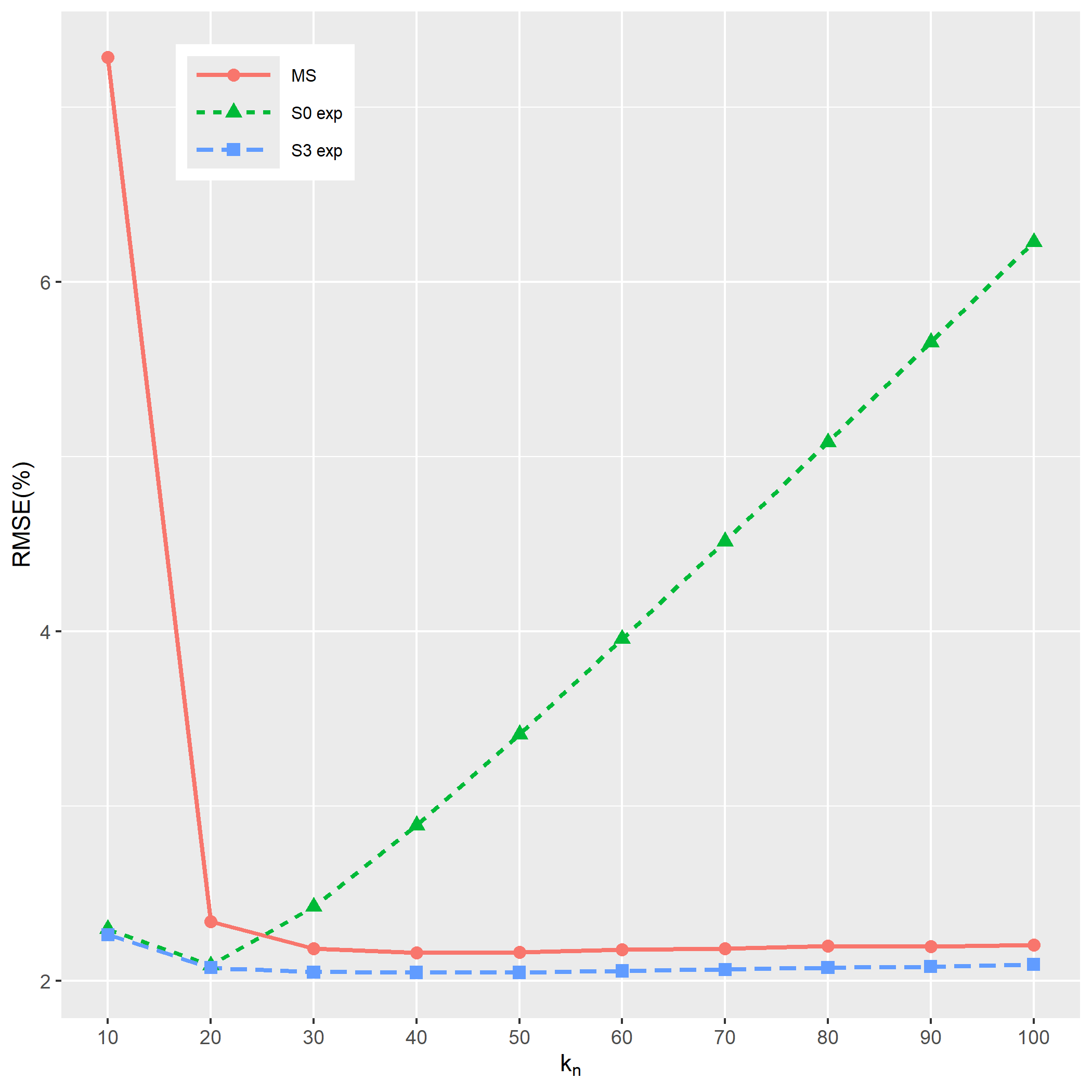}
  \caption{data with 1 minute sample frequency.}
  \label{fig:sub2}
\end{subfigure}
\caption{ RMSE(\%) v.s. the bandwidth $k_n$, for 5 days data, $g(c) = log(c)$ and 1000 simulated path. $S0$   exp denotes the simple biased estimator with exponential kernel and $S3 $ exp denotes the unbiased estimator with exponential kernel. }
\label{fig:bdw_log_5}
\end{figure}

\begin{figure}
\centering
\begin{subfigure}{.5\textwidth}
  \centering
  \includegraphics[width=1\linewidth]{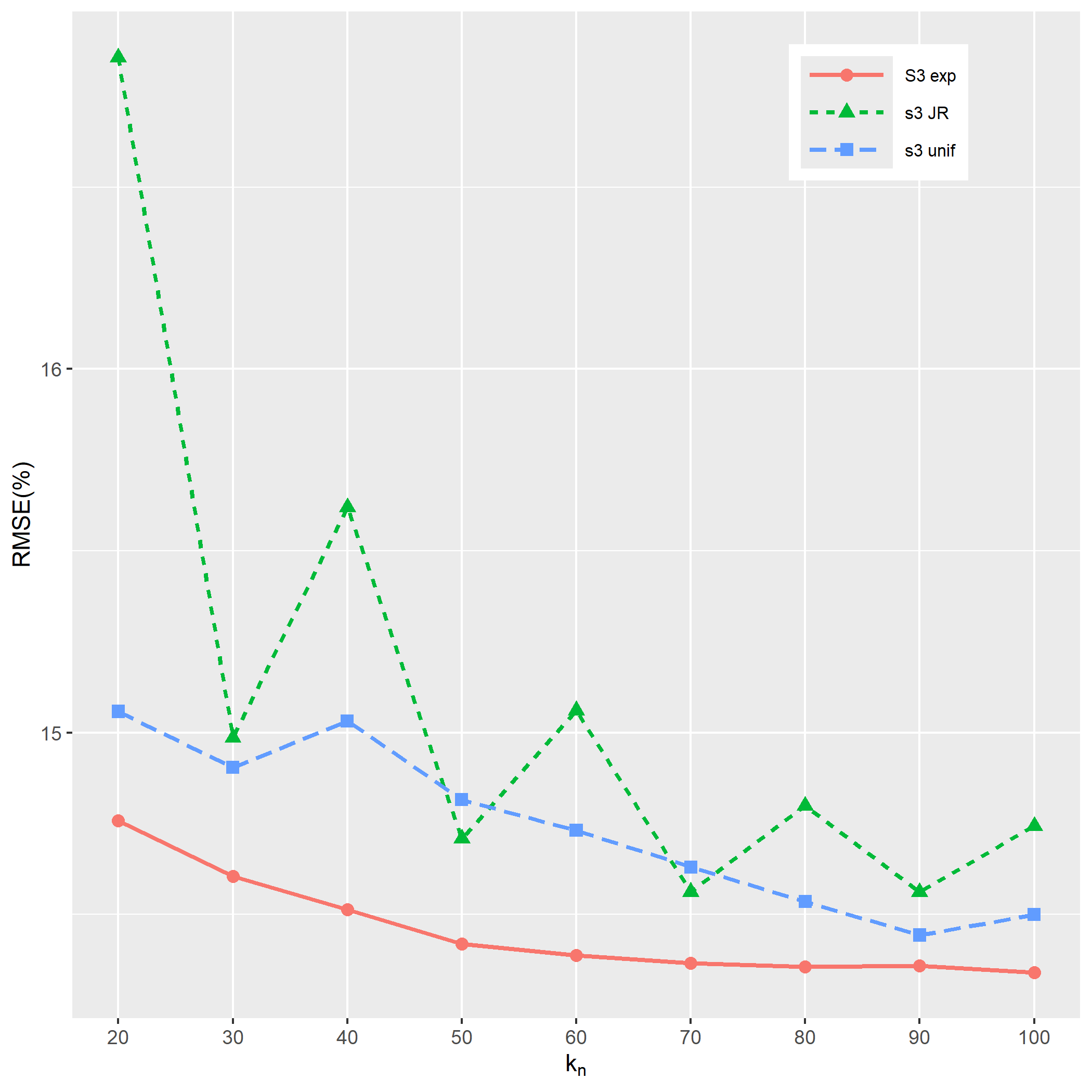}
  \caption{data with 5 minute sample frequency. }
  \label{fig:sub1}
\end{subfigure}%
\begin{subfigure}{.5\textwidth}
  \centering
  \includegraphics[width=1\linewidth]{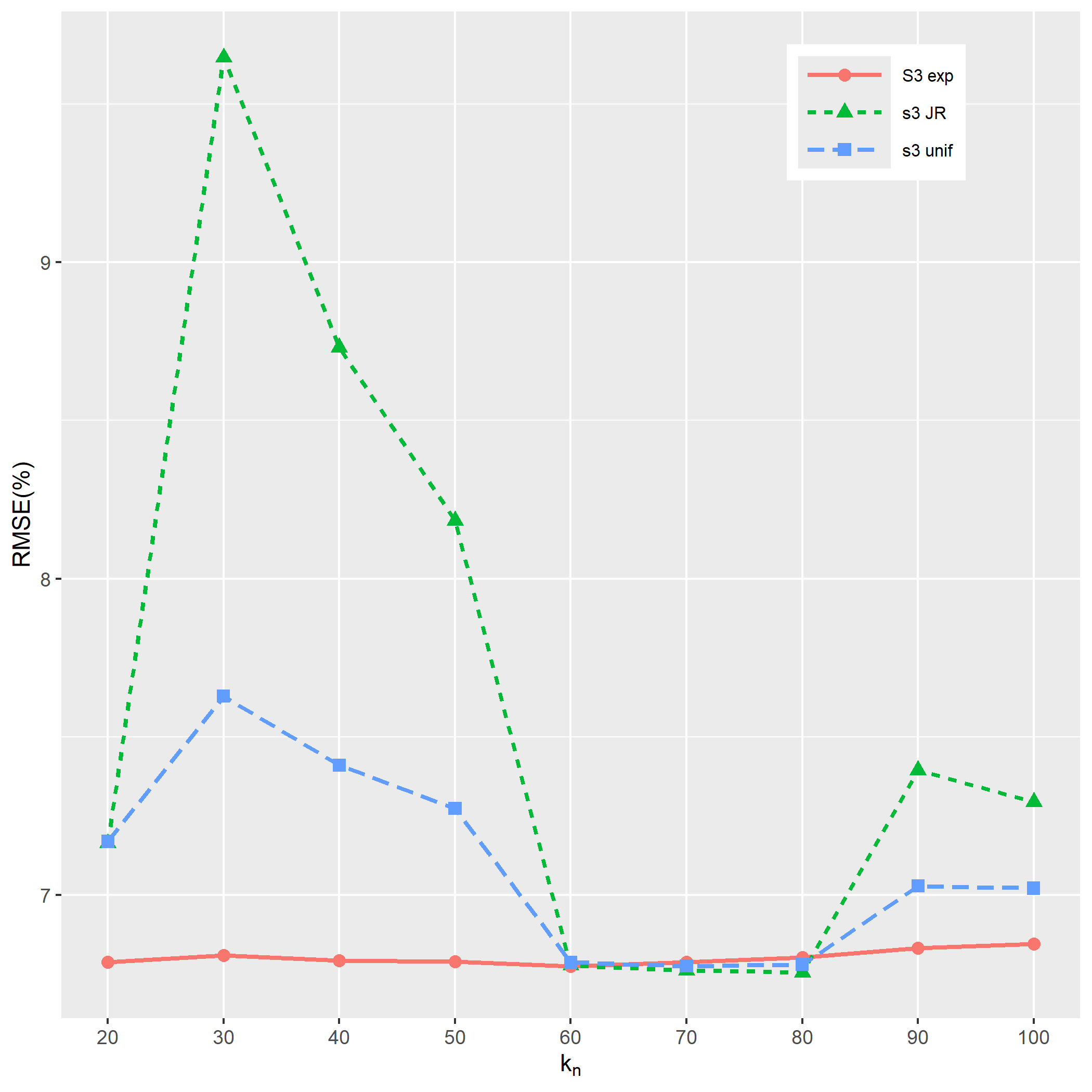}
  \caption{data with 1 minute sample frequency.}
  \label{fig:sub2}
\end{subfigure}
\caption{ RMSE(\%) v.s. the bandwidth $k_n$, for 5 days data, $g(c) = c^2$ and 1000 simulated path. $S3$ exp and $S3$ unif denotes the unbiased estimator with exponential kernel and right-sided uniform kernel, respectively. $S3$ JR corresponds to the bias-corrected estimator (\ref{CRJEN}) in \cite{jacod2015estimation}.}
\label{fig:mixed}
\end{figure}

\newpage
\appendix \label{appendix}

\section{Proof of Main Results }
By virtue of localization, without loss of generality, we assume throughout the {\Blue proofs} that, for some constant $A$,
\begin{equation} \label{eq:localization}
\Gamma(z)+\int \Gamma(z)^r \lambda(d z)+\left|\mu_{t}\right|+\left|\sigma_{t}\right|+\left|X_{t}\right| +\left|\mu_t^{\prime}\right| + \left|\tilde{\sigma}_{t}\right| + \left|\tilde{\mu}_{t}\right|\leq A, \text{ for any } t\le T,
 \end{equation} 
 (see Section 4.4.1 and (6.2.1) in \cite{JacodProtter} and Appendix A.5 in \cite{jacodaitsahalia} for details). 
 
 We use \(C\) to represent a generic constant that may change from line to line.
We first need some notations:
\begin{equation}\label{Dfnalphabeta0}
\begin{split}
\alpha_{i}^{n}&:=\left(\Delta_{i}^{n} X^{\prime}\right)\left(\Delta_{i}^{n} X^{\prime}\right)^* -c_{i-1}^{n} \Delta_{n}\in\mathbb{R}^{d\times d},\\
\beta_{i}^{n}:=\hat{c}_{i}^{\prime n}-c_{i}^{n}&=\frac{\sum_{j=1}^{n} K_{{k}_{n}\Delta_n}\left(t_{j-1}-t_i\right)\left(\Delta_{j}^{n} X^{\prime}\right)\left(\Delta_{j}^{n} X^{\prime}\right)^*}{\Delta_{n}\sum_{j=1}^{n} K_{{k}_{n}\Delta_n}\left(t_{j-1}-t_i\right)}-c_{i}^n\\
&=\frac{1}{\bar{K}\left(t_i\right)}\sum_{j=1}^n K_{{b}}\left(t_{j-1}-t_i\right)\left[\alpha_j^n+\left(c_{j-1}^n-c_i^n\right) \Delta_n\right] ,
\end{split}
\end{equation} 
and $\tilde{\beta}_i^n:=\beta_i^n\bar{K}\left(t_i\right)$. The nice decomposition of the estimation error $\beta_{i}^{n}=\hat{c}_{i}^{\prime n}-c_{i}^{n}$ in terms of the $\alpha_j^n$ and $c_{j-1}^n-c_i$ was possible because of the denominator $\bar{K}(t_i)$ in the estimator $\hat{c}_{i}^{\prime n}$ and this is one of the reasons why this was added. For an uniform kernel, $\bar{K}(t_i)$ equal to $1$ for almost all $i$ and is not needed if one considers the truncated version \eqref{eq:V_jacod} as in \cite{jacod2013quarticity}. However, it seems more natural to include the ``standardized'' term $\bar{K}(t_i)$.

For any process $Z$, we set:
\begin{equation} \label{eq:eta}
\begin{split}
\eta(Z)_{t, s} & =\sqrt{\mathbb{E}\left(\sup _{v \in(t, t+s]}\left\|Z_{{v}}-Z_{t}\right\|^{2} \Big| \mathcal{F}_{t}\right)},\\
\eta(Z)^{n}_{i,j} & =\eta\left(Z\right)_{i \Delta_{n},j\Delta_n},\\
\eta(Z)^n_{i-1 }& = \eta(Z)^n_{(i-1),1 }.
\end{split}
\end{equation} 
By Lemma 16.5.15 in \cite{JacodProtter}, if $U$ is a c\`adl\`ag bounded process, we have for all $q > 0$, \begin{align} \label{eq:AN}
\Delta_{n} \mathbb{E}\left(\sum_{i=1}^{\left[t / \Delta_{n}\right]} \left|\eta(U)^n_{i-1 }\right|^q\right) \rightarrow 0.
\end{align}
{\DG For future reference, we hereafter denote the continuous version of the process \eqref{eq:X} as
\begin{equation}\label{eq:X_conti}
    X^{\prime}_t:=X_0+\int_0^t \mu^{\prime}_s d s+\int_0^t \sigma_s d W_s,
\end{equation}
where $\mu_t^{\prime}=\mu_t-\int_{\{z:\|\delta(t, z)\| \leq 1\}} \delta(t, z) \lambda(d z)$. We also let $\hat{c}_{i}^{\prime n}:=\hat{c}_{i}^{n,X^{\prime}}$ be the Nadaraya-Watson kernel estimator \eqref{MDKNN} based on the continuous process $X'$ defined in (\ref{eq:X_conti}). Finally, we define the estimators of the integrated volatility functionals based on the continuous process as:}
\begin{equation}\label{CntVerThE}
{\DG V(g)_t^{\prime n}:=\Delta_n \sum_{i=1}^{\left[t / \Delta_{n}\right]} g\left(\hat{c}_{i}^{\prime n}\right) \bar{K}\left(t_i\right)}.
\end{equation}

\subsection{Preliminary Result}\label{SectionA.1}
{\DG Let} us recall the following estimates:
\begin{enumerate}
\item By (3.10) in \cite{jacod2015estimation}, under  (\ref{eq:localization}),  for any $s, t \geq 0$ and $q \geq 0$ :  \begin{equation} \label{eq:X_c_bound}
\begin{array}{ll}
{\rm (i)}\;\mathbb{E}\left(\sup _{w \in[0, s]}\left\|X^{\prime}_{t+w}-X^{\prime}_{t}\right\|^{q} \mid \mathcal{F}_{t}\right) \leq C_{q} s^{q / 2}, & {\rm (ii)}\;\left\|\mathbb{E}\left(X^{\prime}_{t+s}-X^{\prime}_{t} \mid \mathcal{F}_{t}\right)\right\| \leq C s, \\
{\rm (iii)}\;\mathbb{E}\left(\sup _{w \in[0, s]}\left\|c_{t+w}-c_{t}\right\|^{q} \mid \mathcal{F}_{t}\right) \leq C_{q} s^{1 \wedge(q / 2)}, & {\rm (iv)}\;\left\|\mathbb{E}\left(c_{t+s}-c_{t} \mid \mathcal{F}_{t}\right)\right\| \leq C s,
\end{array}
\end{equation}
where recall that above $\|x\|$ is the Euclidian norm of a $d\times d$-matrix $x$, considered as a $\mathbb{R}^{d\times d}$ vector.
\item From \cite{jacod2015estimation} (3.20) and (3.24), or BDG and H\"{o}lder inequalities. We have  
\begin{equation} \label{eq:bounds_alpha}
\begin{array}{l}
{\rm (i)}\;\mathbb{E}\left(\left\|\alpha_{i}^{n}\right\|^{q} \mid \mathcal{F}_{i-1}^{n }\right) \leq C_{q} \Delta_{n}^{q}, \quad
{\rm (ii)}\;
\left\|\mathbb{E}\left(\alpha_{i}^{n} \mid \mathcal{F}_{i-1}^{n}\right)\right\| \leq C \Delta_{n}^{3/2},\quad{\rm (iii)}\;
\mathbb{E}\left(\left\|\hat{c}_{i}^{\prime n}\right\|^{q} \mid \mathcal{F}_{i-1}^{n}\right) \leq C_{q}.
\end{array}
\end{equation} 
\item From (3.22) and (3.24) in \cite{jacod2015estimation} and notation (\ref{eq:eta}), we have  
\begin{equation} \label{eq:better_bound_alpha}
\begin{array}{l}
\left|\mathbb{E}\left(\alpha_{i}^{n,\ell m} \mid \mathcal{F}_{i-1}^{n}\right)\right| \leq C \Delta_{n}^{3 / 2}\left(\sqrt{\Delta_{n}}+\eta_{i-1}^{n}\right)\\
\left|\mathbb{E}\left(\alpha_{i}^{n,jk}\alpha_{i}^{n,lm}\mid \mathcal{F}_{i-1}^{n}\right)-\left(c_{i-1}^{n,jl}c_{i-1}^{n,km} +c_{i-1}^{n,jm}c_{i-1}^{n,kl}\right)\Delta_{n}^{2}\right| \leq C\Delta_{n}^{5 / 2}\\
\left|\mathbb{E}\left(\alpha_{i}^{n,jk} \Delta_{i}^{n} c^{lm} \mid \mathcal{F}_{i-1}^{n}\right)\right| \leq C \Delta_{n}^{3 / 2}\left(\sqrt{\Delta_{n}}+\eta_{i-1}^{n}\right)\\
\left|\mathbb{E}\left(\alpha_{i}^{n,jk} \Delta_{i}^{n} W^{l} \mid \mathcal{F}_{i-1}^{n}\right)\right| \leq C \Delta_{n}^{3 / 2}\left(\sqrt{\Delta_{n}}+\eta_{i-1}^{n}\right), 
\end{array}
\end{equation}
where 
\begin{equation}\label{etadef}
\eta_{i-1}^{n}=\max \left(\eta\left(Y\right)^n_{i-1 }: Y=\mu, \widetilde{\mu}, c, \tilde{c}, \widehat{c}\right).
\end{equation}
\item By (\ref{eq:bounds_alpha}) and (\ref{eq:better_bound_alpha}), {we have for $q\geq 2,$} 
\begin{equation} \label{eq:beta_bound}
    \mathbb{E}\left[\left\|\beta^n_i \right\|^q \right] ={\mathbb{E}\left[\left\|\hat{c}_{i}^{\prime n}-c_{i}^{n}\right\|^q \right]}\leq C_q \left(k_n^{-q/2} + (k_n \Delta_n)^{q/2}\right).
\end{equation}
The proof is included in Appendix \ref{TechnicalA_1} {\DG for completeness}.
\item {We shall use often the following asymptotic approximation:}
\begin{equation} \label{eq:K_discrete}
   \Delta_n \sum_{i=1}^n K_{{b}}\left(t_{i-1} - \tau \right) = \int K(u) du + O(\Delta_n/{b})=1+O(\Delta_n/{b}),
\end{equation}
where the error is uniform on $\tau\in {\DG B}$, for any compact set ${\DG B}\subseteq (0,T)$. Furthermore, we can show that there exists $\delta>0$ such that for any $n$ and $t_{i}$,
\begin{equation}\label{lowerboundofdelta}
    \delta<\left|\bar{K}_{n}\left(t_i\right)\right|=\left|\Delta_{n}\sum_{j=1}^{n} K_{{k}_{n}\Delta_n}\left(t_{j-1}-t_i\right)\right|<\frac{1}{\delta}.
\end{equation}
The proofs of (\ref{eq:K_discrete}) and (\ref{lowerboundofdelta}) are given in Appendix \ref{TechnicalA_1}.

\end{enumerate}
The following result will be used several times. Its proof is similar to that of Lemma 3.1  in \cite{FigLiSumplement1} and is given in Appendix \ref{TechnicalA_1} for completeness.
 \begin{lemma} \label{K_discrete} 
 Let $f: \mathcal{M}_d^+ \rightarrow \mathbb{R}$ and $ K: \mathbb{R} \rightarrow \mathbb{R}$  be such that 
    \begin{enumerate}
\item $f$ is $C^1$ and $|f(x)| \leq C (1+ \|x\|^p) $ for some $p\ge 3$;
\item $K$ is absolutely continuous and $K^{\prime}$ exists piecewise on $(A,B)$ (in particular, recall that its support is $(A, B) $ for some $A< B$ such that $A \in [-\infty,0] $ and $B \in [0,\infty]$); $\int \left|K(x)\right|dx<\infty$; $\int \left|K^{\prime}(x)\right|dx<\infty$; $V_{-\infty}^{\infty}(K^{\prime})<\infty$.
    \end{enumerate}
Suppose also that {\DG $\Delta_n,b_n\to0$} such that {\DG $\Delta_n/{b}_n\to{}0$}. Then, the following assertions hold true:
\begin{enumerate}[label=(\roman*)]
\item Let 
\begin{equation*} 
D_{1}(f):=\sum_{i=1}^{n} K_{{b}}\left(t_{i-1}-\tau\right) \int_{t_{i-1}}^{t_{i}}  f\left(c_s\right) d s
-\int_0^T K_{{b}}\left(s-\tau\right) f\left(c_s\right) d s.
\end{equation*}
Then, for each $\tau \in(0, T)$, 
\begin{equation}\label{eq:D1f}
D_{1}(f)=\frac{1}{2} f(c_\tau) \left( K\left(A^{+}\right)-K\left(B^{-}\right) \right)\frac{\Delta_n}{b}+o\left(\frac{\Delta_n}{b}\right).
\end{equation}
Furthermore, for $\tau=0$ or $T$, we have $D_{1}(f)=O_p\left(\frac{\Delta_n}{b}\right)$.
\item Let 
\begin{equation*} 
D_{2}(f):=\sum_{i=1}^{n} K_{{b}}\left(t_{i-1}-\tau\right) f\left(c_i^n\right) \Delta_n 
-\int_0^T K_{{b}}\left(s-\tau \right) f\left(c_s\right) d s.
\end{equation*} 
Then, for each $\tau \in [0,T]$,
\begin{equation} \label{eq:D2}
D_2(f) = D_1(f) + O_p(\Delta_n^{1/2}).
\end{equation}
\end{enumerate}
\end{lemma}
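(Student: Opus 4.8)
The plan is to treat both parts as statements about the error of a left–endpoint quadrature rule applied to the integrand $s\mapsto K_b(s-\tau)f(c_s)$, isolating a deterministic kernel–quadrature error from a stochastic remainder that is controlled by the path regularity of $c$. Writing
\[
D_1(f)=\sum_{i=1}^n\int_{t_{i-1}}^{t_i}\bigl[K_b(t_{i-1}-\tau)-K_b(s-\tau)\bigr]f(c_s)\,ds,
\]
I would first peel off the fluctuation of $f(c_s)$ within each cell, i.e. replace $f(c_s)$ by $f(c_{t_{i-1}})$. The resulting remainder $R=\sum_i\int_{t_{i-1}}^{t_i}[K_b(t_{i-1}-\tau)-K_b(s-\tau)][f(c_s)-f(c_{t_{i-1}})]\,ds$ is negligible: on every cell not meeting a support edge, absolute continuity of $K$ gives $|K_b(t_{i-1}-\tau)-K_b(s-\tau)|\le\int_{t_{i-1}}^{t_i}|K_b'(w-\tau)|\,dw$, while the local Lipschitz property of $f$ under the localization bound together with \eqref{eq:X_c_bound}(iii) gives $\mathbb{E}\sup_{s\in[t_{i-1},t_i]}|f(c_s)-f(c_{t_{i-1}})|\le C\sqrt{\Delta_n}$. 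Summing and using $\int_0^T|K_b'(w-\tau)|\,dw=b^{-1}\int|K'(u)|\,du=O(1/b)$ from Assumption \ref{kernel} yields $\mathbb{E}|R|\le C\Delta_n^{3/2}/b=o(\Delta_n/b)$; the at most two cells straddling the support endpoints $\tau+bA,\tau+bB$ carry an extra jump of size $O(1/b)$ but contribute only $O(\Delta_n^{3/2}/b)$ as well.

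The leading term then comes from the deterministic, $f(c_{t_{i-1}})$–weighted quadrature error $M=\sum_i f(c_{t_{i-1}})\bigl[\Delta_n K_b(t_{i-1}-\tau)-\int_{t_{i-1}}^{t_i}K_b(s-\tau)\,ds\bigr]$. Since $K_b(\cdot-\tau)$ concentrates on an $O(b)$–neighborhood of $\tau$, I would factor $f(c_\tau)$ out (the replacement cost being again $o(\Delta_n/b)$ by \eqref{eq:X_c_bound}(iii)), reducing matters to the purely analytic quantity $\mathcal{E}_n(\tau)=\Delta_n\sum_i K_b(t_{i-1}-\tau)-\int_0^T K_b(s-\tau)\,ds$. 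This is an Euler–Maclaurin/summation–by–parts computation, conveniently organized through the primitive $L$ (whose one–sided derivatives are $\mp K$): the smooth interior pieces produce only $O(\Delta_n^2\int|K_b''|)=O(\Delta_n^2/b^2)=o(\Delta_n/b)$ once the total–variation bounds on $K',K''$ in Assumption \ref{kernel} are invoked, whereas the jumps of $K_b(\cdot-\tau)$ at the support edges $\tau+bA$ and $\tau+bB$ (of sizes $K(A^+)/b$ and $K(B^-)/b$) each generate a genuine first–order $\Delta_n/b$ contribution. Collecting these, and using the integer alignment $b=k_n\Delta_n$ together with the position of $\tau$ on the grid, gives $\mathcal{E}_n(\tau)=\tfrac12(K(A^+)-K(B^-))\Delta_n/b+o(\Delta_n/b)$, hence \eqref{eq:D1f}; for $\tau=0$ or $T$ only half the support lies in $[0,T]$, the edge cancellation is lost, and the same estimates give merely $D_1=O_p(\Delta_n/b)$.

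For part (ii), I would note $D_2(f)-D_1(f)=\sum_i K_b(t_{i-1}-\tau)\int_{t_{i-1}}^{t_i}[f(c_i^n)-f(c_s)]\,ds$ and bound it in $L^1$: by local Lipschitzness of $f$ and \eqref{eq:X_c_bound}(iii), $\mathbb{E}\sup_{s\in[t_{i-1},t_i]}|f(c_i^n)-f(c_s)|\le C\sqrt{\Delta_n}$, so $\mathbb{E}|D_2-D_1|\le C\sqrt{\Delta_n}\,\Delta_n\sum_i|K_b(t_{i-1}-\tau)|$, and $\Delta_n\sum_i|K_b(t_{i-1}-\tau)|=O(1)$ by \eqref{eq:K_discrete} and \eqref{lowerboundofdelta}; this yields $D_2-D_1=O_p(\sqrt{\Delta_n})$. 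The main obstacle is the middle step: pinning down the exact leading constant in $\mathcal{E}_n(\tau)$, that is, correctly accounting for the interaction of the support–edge discontinuities of $K$ with the sampling grid (through the alignment $b=k_n\Delta_n$) so that the interior Euler–Maclaurin remainders are truly $o(\Delta_n/b)$ and only the boundary values $K(A^+),K(B^-)$ survive. The stochastic reductions, by contrast, are comparatively routine consequences of \eqref{eq:X_c_bound} and the integrability/variation hypotheses on $K$.
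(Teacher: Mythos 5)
Your stochastic reductions are sound and essentially coincide with the paper's: the replacement of $f(c_s)$ by grid values at cost $O(\sqrt{\Delta_n})$ in $L^1$ via \eqref{eq:X_c_bound}(iii) is exactly how the paper proves \eqref{eq:D2}, and your near-/far-from-$\tau$ localization of $f(c_\cdot)$ parallels the paper's $D_{111}/D_{112}$ split. The genuine gap is in the middle analytic step --- the one you yourself flag --- and it is not merely a matter of care: your Euler--Maclaurin bookkeeping assigns the leading term to the wrong source. For the left-endpoint rule applied to $g=K_b(\cdot-\tau)$, on each interval where $g$ is absolutely continuous one has $\Delta_n\sum_i g(t_{i-1})-\int g(s)\,ds=-\frac{\Delta_n}{2}\int g'(w)\,dw+R$ with $|R|\le C\Delta_n^2\,V(g')$, since $\int_{t_{i-1}}^{t_i}(t_i-w)\,dw=\Delta_n^2/2$. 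Hence the smooth part contributes the \emph{first-order} term $-\frac{\Delta_n}{2}\int_{(A,B)}K_b'(s-\tau)\,ds=\frac{\Delta_n}{2b}\left(K(A^+)-K(B^-)\right)$, plus $O(\Delta_n^2 V_{-\infty}^{\infty}(K')/b^2)=o(\Delta_n/b)$; it is not $O(\Delta_n^2\int|K_b''|)$ as you claim. (Note also that the lemma's hypotheses control only $V_{-\infty}^{\infty}(K')$, not $K''$, so your $\int|K_b''|$ bound is not even available; the paper's bound \eqref{lemmaa.1.2} uses exactly $V_{-\infty}^{\infty}(K')$.) This smooth-part term is precisely where the paper gets the factor $1/2$ and the constant: after the mean-value step in \eqref{lemmaa.1.1}, the factor $\int_{t_{i-1}}^{t_i}(t_{i-1}-t)\,dt=-\Delta_n^2/2$ yields the $1/2$, and the Riemann sum $\frac{\Delta_n}{b}\sum_i K'\bigl(\frac{t_{i-1}-\tau}{b}\bigr)\to\int_A^B K'(x)\,dx=K(B^-)-K(A^+)$ is established in \eqref{eq:K_B_A} from the total-variation hypothesis.

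By contrast, a genuine jump of $K_b(\cdot-\tau)$ at a finite support endpoint $x_0=\tau+bA$ contributes to your $\mathcal{E}_n(\tau)$ exactly the jump size $\frac{K(A^+)}{b}$ times $\bigl(x_0-\Delta_n\lceil x_0/\Delta_n\rceil\bigr)\in(-\Delta_n,0]$, a quantity determined by the fractional position of $x_0$ on the grid: it is $0$ when $x_0$ is a grid point and oscillates in $n$ otherwise. Since $\tau\in(0,T)$ is an arbitrary fixed time in part (i), ``the position of $\tau$ on the grid'' is not at your disposal, and in any case the jump term can never produce the universal factor $1/2$: take $K(x)=(1-x)\mathbf{1}_{[0,1)}$, $f\equiv1$, and $\tau$ a grid point; then the jump contribution vanishes while a direct computation gives $D_1=\frac{\Delta_n}{2b}$, coming entirely from the smooth part --- the opposite of your attribution. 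So your step ``interior $=o(\Delta_n/b)$, edge jumps $=\frac12(K(A^+)-K(B^-))\frac{\Delta_n}{b}$'' would fail exactly in those cases where the leading constant is nonzero. The repair is to revert to the paper's accounting: extract the first-order term from the absolutely continuous part (equivalently, apply the mean-value step cell by cell as in \eqref{lemmaa.1.1}), identify the limit of the resulting Riemann sum of $K'$ with $K(B^-)-K(A^+)$ via $V_{-\infty}^{\infty}(K')$, and use the $\delta$-localization you already have to replace $f(c_\cdot)$ by $f(c_\tau)$; the boundary cases $\tau=0,T$ then give only the $O(\Delta_n/b)$ bound because the one-sided sum of $K'$ no longer telescopes to the full difference of edge values.
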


\begin{remark}\label{SpecLemmaA1}
In Lemma \ref{K_discrete}-(i), $\tau$ is a constant time in $(0,T)$. However, {\DR in the definition of our estimator \eqref{estivoltrun} and} in some of the proofs below, we need to take $\tau=t_{j}$ for some $j=0,\dots,n$. In that case, following the same proof as that of Lemma \ref{K_discrete}-(i), one can show that for a constant $C\in (0,\infty)$,
\[
	\limsup_{n\to\infty}\frac{b_n}{\Delta_n}|D_1(f)|\leq C|f(c_{\tau})|.
\]
\end{remark}
\subsection{Proof of Theorem \ref{clt}}
\subsubsection{Elimination of the jumps and the truncation}

\begin{lemma}\label{jump_1}
    Under (\ref{eq:X}),  {\DG (\ref{eq:g_prime_bound}),} and Assumptions \ref{process} \& \ref{kernel}, {\DG the following assertions hold:}
    \begin{itemize}
        \item[a)] {\DR For any} $q\geq 1$, $q^{\prime}> 0$, and a sequence $a_n$ going to 0, {\DG we have:} 
\begin{equation}\label{boundofdiffc}
    \mathbb{E}\left[\left\|\hat{c}_i^n-\hat{c}_i^{\prime n}\right\|^q\right] \leq C a_n \Delta_n^{(2 q-r) \varpi+1-q}+C \Delta_n^{q^{\prime}(1-2 \varpi)},
\end{equation}
where $\hat{c}$ is defined as in (\ref{estivoltrun}) with $v_n=\alpha \Delta_n^{\varpi}$ {\DG and $0<\varpi<1/2$}, and $\hat{c}_{i}^{\prime n}:=\hat{c}_{i}^{n,X^{\prime}}$ is the Nadaraya-Watson kernel estimator \eqref{MDKNN} based on the continuous process $X'$ defined in (\ref{eq:X_conti}). Based on (\ref{boundofdiffc}), under the condition (\ref{moreassum}), we have the following:
\begin{equation}\label{boundofdiffv}
    \frac{1}{\sqrt{\Delta_n}}\left[V(g)_T^{n}-V(g)_T^{\prime n}\right]\xrightarrow{\mathbb{P}} 0,
\end{equation}
where recall that $V(g)_T^{\prime n}$ is defined as in \eqref{CntVerThE}.
\item[b)] {\DG Furthermore,} if $X$ is continuous (i.e., $\delta\equiv 0$ in (\ref{eq:X})), then for any $q^{\prime}>0$, (\ref{boundofdiffc}) becomes
\begin{equation}\label{boundofdiffc1}
    \mathbb{E}\left[\left\|\hat{c}_i^n-\hat{c}_i^{\prime n}\right\|^q\right] \leq C \Delta_n^{q^{\prime}(1-2 \varpi)},
\end{equation}
based on which, under (\ref{moreassum1}), (\ref{boundofdiffv}) holds true as well.
    \end{itemize} 
\end{lemma}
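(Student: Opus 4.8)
The plan is to first prove the spot-level moment bound \eqref{boundofdiffc} and then transfer it to the functionals \eqref{CntVerThE} through a Taylor expansion of $g$. Since
$\hat c_i^n-\hat c_i^{\prime n}=\Delta_n^{-1}\sum_{j=1}^n w_j\,\kappa_j^n$, where $w_j:=K_{b}(t_{j-1}-t_i)\Delta_n/\bar K_n(t_i)$ and $\kappa_j^n:=(\Delta_j^n X)(\Delta_j^n X)^*\mathbbm{1}_{\{\|\Delta_j^n X\|\le v_n\}}-(\Delta_j^n X')(\Delta_j^n X')^*$, and since $\Delta_n\sum_j|K_{b}(t_{j-1}-t_i)|$ is bounded by Assumption \ref{kernel}(c) while $\bar K_n(t_i)$ is bounded away from $0$ by \eqref{lowerboundofdelta}, the total weight $\sum_j|w_j|$ is bounded uniformly in $i,n$. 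A power-mean (convexity) inequality then reduces \eqref{boundofdiffc} to the uniform per-increment estimate $\mathbb{E}\|\kappa_j^n\|^q\le C a_n\Delta_n^{1+(2q-r)\varpi}+C\Delta_n^{q+q'(1-2\varpi)}$. Crucially, this reduction is insensitive to the unbounded support of $K$, since it uses only the $L^1$-boundedness of the kernel weights.

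Writing $\Delta_j^n J:=\Delta_j^n X-\Delta_j^n X'=\sum_{(j-1)\Delta_n<s\le j\Delta_n}\Delta X_s$ for the increment of the pure-jump part, I would split $\kappa_j^n=\kappa_j^{n,J}-\kappa_j^{n,c}$, where $\kappa_j^{n,c}=(\Delta_j^n X')(\Delta_j^n X')^*\mathbbm{1}_{\{\|\Delta_j^n X\|>v_n\}}$ is the truncation error of the continuous part and $\kappa_j^{n,J}=\big[\Delta_j^n X'(\Delta_j^n J)^*+\Delta_j^n J(\Delta_j^n X')^*+\Delta_j^n J(\Delta_j^n J)^*\big]\mathbbm{1}_{\{\|\Delta_j^n X\|\le v_n\}}$ gathers the jump contributions on the truncation event. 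For $\kappa_j^{n,c}$, on the event that interval $j$ carries no jump the indicator reduces to $\mathbbm{1}_{\{\|\Delta_j^n X'\|>v_n\}}$, and a Markov bound of order $m$ together with \eqref{eq:X_c_bound}(i) gives $\mathbb{E}[\|\Delta_j^n X'\|^{2q}\mathbbm{1}_{\{\|\Delta_j^n X'\|>v_n\}}]\le C\Delta_n^{q+(m/2)(1-2\varpi)}$, which is $\le C\Delta_n^{q+q'(1-2\varpi)}$ once $m\ge 2q'$; this is the source of the second term in \eqref{boundofdiffc}. For $\kappa_j^{n,J}$, the dominant quadratic piece is controlled, via the standard truncation scheme of \cite{JacodProtter,jacod2015estimation}, by $C\Delta_n v_n^{2q-r}\int\Gamma_m(z)^r\lambda(dz)$, using $\|\delta\|^{2q}\le v_n^{2q-r}\|\delta\|^r$ on $\{\|\delta\|\le v_n\}$ (valid as $2q-r>0$); since $v_n=\alpha\Delta_n^\varpi$ this equals $C\,a_n\Delta_n^{1+(2q-r)\varpi}$ with $a_n:=\int_{\{\|\delta\|\le Cv_n\}}\Gamma_m^r\lambda\to0$ by dominated convergence, and the cross terms are of the same or smaller order by Cauchy--Schwarz against \eqref{eq:X_c_bound}(i). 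This yields the first term of \eqref{boundofdiffc}.

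To pass to \eqref{boundofdiffv}, a first-order Taylor expansion gives $g(\hat c_i^n)-g(\hat c_i^{\prime n})=\nabla g(\xi_i^n)\cdot(\hat c_i^n-\hat c_i^{\prime n})$, and the polynomial growth \eqref{eq:g_prime_bound} bounds $\|\nabla g(\xi_i^n)\|$ by $C(1+\|\hat c_i^n\|^{\ell-1}+\|\hat c_i^{\prime n}\|^{\ell-1})$; applying Hölder's inequality together with the uniform moment bound \eqref{eq:bounds_alpha}(iii) (and its truncated analogue) renders the growth factor $O(1)$, so that $\mathbb{E}|g(\hat c_i^n)-g(\hat c_i^{\prime n})|$ is controlled by a power of \eqref{boundofdiffc}. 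Since $V(g)_T^n-V(g)_T^{\prime n}=\Delta_n\sum_i[g(\hat c_i^n)-g(\hat c_i^{\prime n})]\bar K(t_i)$ has $\Delta_n\cdot n=T$ terms, after dividing by $\sqrt{\Delta_n}$ it suffices that the per-term bound be $o(\sqrt{\Delta_n})$, which is precisely what the lower constraint on $\varpi$ in \eqref{moreassum} guarantees, with $\ell\ge4$ serving as the growth exponent of $g$. Part b) is the degenerate case $J\equiv0$: then $\kappa_j^{n,J}=0$, only $\kappa_j^{n,c}$ survives, \eqref{boundofdiffc1} follows with the single continuous-truncation term, and the weaker requirement $\varpi<1/2$ of \eqref{moreassum1} already suffices.

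The main obstacle I anticipate is the $\kappa_j^{n,J}$ estimate: obtaining the sharp exponent $1+(2q-r)\varpi$ requires the big/small jump decomposition to be calibrated to the truncation level $v_n$ and the cross terms to be shown subdominant, and then the exponent bookkeeping in the aggregation step must be checked carefully to confirm that the constraint in \eqref{moreassum} makes the error genuinely $o(\sqrt{\Delta_n})$. The continuous truncation term, by contrast, is routine once the high-moment Markov bound is in place.
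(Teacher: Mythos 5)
Your overall architecture coincides with the paper's: you reduce \eqref{boundofdiffc} to a per-increment bound using the $L^1$-boundedness of the kernel weights together with \eqref{lowerboundofdelta} (the paper's opening Minkowski/Jensen chain is exactly this bookkeeping), you control the jump contribution by appealing to the standard truncation machinery (the paper invokes Lemma 13.2.6 of \cite{JacodProtter}, which is precisely the ``standard truncation scheme'' you defer to and which yields \eqref{qthmoment} with the $a_n$ factor), you handle the continuous over-truncation by a high-moment Markov bound via \eqref{eq:X_c_bound}, and you transfer to \eqref{boundofdiffv} by the mean value theorem with the polynomial growth \eqref{eq:g_prime_bound}, taking $q=1$ and $q'$ large; your exponent arithmetic under \eqref{moreassum} (that $(2-r)\varpi\geq 1/2$ when $\varpi\geq\frac{2\ell-1}{2(2\ell-r)}$ and $\ell\geq 4$) and your part-(b) reduction are both the paper's.

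However, one step would fail as written: your treatment of $\kappa_j^{n,c}=(\Delta_j^n X')(\Delta_j^n X')^*\mathbbm{1}_{\{\|\Delta_j^n X\|>v_n\}}$. You argue only ``on the event that interval $j$ carries no jump,'' where the indicator becomes $\mathbbm{1}_{\{\|\Delta_j^n X'\|>v_n\}}$; but under Assumption \ref{process} with $r>0$ the jumps may have infinite activity, so that event is null, and even for $r=0$ the complementary event is left unestimated. The natural patch $\mathbbm{1}_{\{\|\Delta_j^n X\|>v_n\}}\leq \mathbbm{1}_{\{\|\Delta_j^n X'\|>v_n/2\}}+\mathbbm{1}_{\{\|\Delta_j^n J\|>v_n/2\}}$ followed by Cauchy--Schwarz on the second event gives only
\begin{equation*}
\mathbb{E}\left[\|\Delta_j^n X'\|^{2q}\mathbbm{1}_{\{\|\Delta_j^n J\|>v_n/2\}}\right]\leq C\Delta_n^{q}\left(\Delta_n v_n^{-r}\right)^{1/2}=C\Delta_n^{q+\frac{1}{2}-\frac{r\varpi}{2}},
\end{equation*}
and the exponent $q+\frac12-\frac{r\varpi}{2}$ falls below the required $1+(2q-r)\varpi$ whenever $q(1-2\varpi)<\frac12-\frac{r\varpi}{2}$ --- which happens precisely in the regime forced by \eqref{moreassum}, since $\varpi\geq\frac{2\ell-1}{2(2\ell-r)}\geq 7/16$ for $\ell\geq 4$ (e.g.\ $q=1$, $\varpi=7/16$, $r=0$ gives $1/8<1/2$). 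So that route does not close. You must either replace Cauchy--Schwarz by H\"older with conjugate exponent close to $1$, using $\mathbb{P}(\|\Delta_j^n J\|>v_n/2)\leq C\Delta_n v_n^{-r}$ raised to a power near $1$, which restores an exponent $q+1-r\varpi-\varepsilon>1+(2q-r)\varpi$; or --- simpler, and what the paper actually does --- change the decomposition: add and subtract $(\Delta_j^n X')(\Delta_j^n X')^*\mathbbm{1}_{\{\|\Delta_j^n X'\|\leq v_n\}}$, so that the continuous remainder carries the indicator $\mathbbm{1}_{\{\|\Delta_j^n X'\|>v_n\}}$ in terms of $X'$ alone (your Markov bound then applies verbatim), while the difference of the two truncated quantities is exactly the object covered by Lemma 13.2.6 of \cite{JacodProtter}, yielding \eqref{qthmoment}. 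With that adjustment your argument matches the paper's proof.
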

\begin{proof}
We consider the case of discontinuous $X$ first. By {\DG Minkowski's} and Jensen's inequality, for any $q\geq 1$,
\begin{align*}
    &\mathbb{E}\left[\left\|\hat{c}_i^n-\hat{c}_i^{\prime n}\right\|^q\right]\\
    &=\mathbb{E}\left[\left\|\bar{K}_{n}\left(t_i\right)^{-1}\sum_{j=1}^n K_{k_{n} \Delta_n}\left(t_{j-1}-t_{i}\right)\left[\left(\Delta_j X\right)\left(\Delta_j X\right)^* \mathbbm{1}_{\left\{\left\|\Delta_j X\right\| \leq v_n\right\}}-\left(\Delta_j X^{\prime}\right)\left(\Delta_j X^{\prime}\right)^*\right]\right\|^q\right]\\
    &\leq C_q\mathbb{E}\left[\left\|\sum_{j=1}^n K_{k_{n} \Delta_n}\left(t_{j-1}-t_{i}\right)\left(\Delta_j X^{\prime}\right)\left(\Delta_j X^{\prime}\right)^* \mathbbm{1}_{\left\{\left\|\Delta_j X^{\prime}\right\| > v_n\right\}} \right\|^q\right]\\
    &+C_q\mathbb{E}\left[\left\|\sum_{j=1}^n K_{k_{n} \Delta_n}\left(t_{j-1}-t_{i}\right)\left[\left(\Delta_j X\right)\left(\Delta_j X\right)^* \mathbbm{1}_{\left\{\left\|\Delta_j X\right\| \leq v_n\right\}}-\left(\Delta_j X^{\prime}\right)\left(\Delta_j X^{\prime}\right)^*\mathbbm{1}_{\left\{\left\|\Delta_j X^{\prime}\right\| \leq v_n\right\}}\right] \right\|^q\right]\\
    &\leq C_q \left[\sum_{j=1}^n\left|K_{k_n \Delta_n}\left(t_{j-1}-t_{i-1}\right)\right|\right]^{q-1} \sum_{j=1}^n\left|K_{k_n \Delta_n}\left(t_{j-1}-t_{i-1}\right)\right| \mathbb{E}\left[\left\|\left(\Delta_j X^{\prime}\right)\right\|^{2 q} \mathbbm{1}_{\left\{\left\|\Delta_j X^{\prime}\right\|>v_n\right\}}\right]\\
    &+C_q \left[\sum_{j=1}^n\left|K_{k_n \Delta_n}\left(t_{j-1}-t_{i-1}\right)\right|\right]^{q-1} \sum_{j=1}^n\left|K_{k_n \Delta_n}\left(t_{j-1}-t_{i-1}\right)\right|\\
    &\quad\quad\quad\quad\quad\quad\quad\quad\quad\quad\quad\quad\mathbb{E}\left[\left\|\left(\Delta_j X\right)\left(\Delta_j X\right)^* \mathbbm{1}_{\left\{\left\|\Delta_j X\right\| \leq v_n\right\}}-\left(\Delta_j X^{\prime}\right)\left(\Delta_j X^{\prime}\right)^* \mathbbm{1}_{\left\{\left\|\Delta_j X^{\prime}\right\| \leq v_n\right\}}\right\|^q\right].
\end{align*}
Furthermore, we have
$$
\mathbb{E}\left[\left\|\left(\Delta_j X^{\prime}\right)\right\|^{2 q} \mathbbm{1}_{\left\{\left\|\Delta_j X^{\prime}\right\|>v_n\right\}}\right]\leq \left\{\mathbb{E}\left[\left\|\left(\Delta_j X^{\prime}\right)\right\|^{4q}\right]\right\}^{1 / 2}\left\{\mathbb{E}\left[\mathbbm{1}_{\left\{\left\|\Delta_j X^{\prime}\right\|>v_n\right\}}\right]\right\}^{1 / 2},
$$
where{\DG , by Markov's inequality and (\ref{eq:X_c_bound})},
$$
\mathbb{E}\left[\mathbbm{1}_{\left\{\left\|\Delta_j X^{\prime}\right\|>v_n\right\}}\right]\leq \frac{\mathbb{E}\left[\left\|\left(\Delta_j X^{\prime}\right)\right\|^{4 q^{\prime}}\right]}{v_n^{4 q^{\prime}}}\leq \frac{K_{q} \Delta_n^{2 q^{\prime}}}{\Delta_{n}^{4q^{\prime}\varpi}},
$$
for any $q^{\prime}>0$ by Markov inequality. Thus,
$$
\mathbb{E}\left[\left\|\left(\Delta_j X^{\prime}\right)\right\|^{2q} \mathbbm{1}_{\left\{\left\|\Delta_j X^{\prime}\right\|>v_n\right\}}\right]\leq C\Delta_n^{q+q^{\prime}(1-2 \varpi)}.
$$
Meanwhile, by Lemma 13.2.6 of \cite{JacodProtter}, with $F(x)=xx^*, m=q, p^{\prime}=s^{\prime}=2,s=1,k=1,j=0$, we have for some sequence $a_n$ going to 0,
\begin{equation}\label{qthmoment}
\mathbb{E}\left(\left\|\left(\Delta_i^n X \right) \left(\Delta_i^n X^*\right) \mathbbm{1}_{\left\{\left\|\Delta_i^n X\right\| \leq v_n\right\}}-\left(\Delta_i^n X^{\prime} \right) \left(\Delta_i^n X^{\prime *}\right) \mathbbm{1}_{\left\{\left\|\Delta_i^n X^{\prime}\right\| \leq v_n\right\}}\right\|^q\right)\leq C a_n \Delta_n^{(2q-r) \varpi+1},
\end{equation}
where, {\DG from Assumption \ref{process}, $r< 1$}. Hence, the difference between $\hat{c}_i^n$ and $\hat{c}_i^{\prime n}$ can be estimated by
$$
\mathbb{E}\left[\left\|\hat{c}_i^n-\hat{c}_i^{\prime n}\right\|^q\right]\leq C a_n \Delta_n^{(2q-r) \varpi+1-q}+C\Delta_n^{q^{\prime}(1-2 \varpi)},
$$
{\DG and} (\ref{boundofdiffc}) is proved. Note that
$$
V(g)_T^{n}-V(g)_T^{\prime n}=\Delta \sum_{i=1}^n\left[g\left(\hat{c}_i^n\right)-g\left(\hat{c}_i^{\prime n}\right)\right] \bar{K}\left(t_i\right),
$$
and, by (\ref{eq:g_prime_bound}) and {\DG the} mean value theorem, for some $\xi\in[0,1]$,
\begin{equation}\label{eq:A.18}
    \begin{aligned}
    &\left|g\left(\hat{c}_i^n\right)-g\left(\hat{c}_i^{\prime n}\right)\right|\leq \left\|\nabla g\left(\xi\hat{c}_i^n+\left(1-\xi\right)\hat{c}_i^{\prime n}\right)\right\|\left\|\hat{c}_i^n-\hat{c}_i^{\prime n}\right\|\\
    &\leq C\left(1+\left\|\xi\hat{c}_i^n+\left(1-\xi\right)\hat{c}_i^{\prime n}\right\|^{p-1}\right)\left\|\hat{c}_i^n-\hat{c}_i^{\prime n}\right\|.
\end{aligned}
\end{equation}
By (\ref{boundofdiffc}), if we let $q=1$ and $q^{\prime}>\frac{1}{1-2 \varpi}$, since $\varpi \in\left[\frac{2p-1}{2(2p-r)}, \frac{1}{2}\right)$, we have
$$
\mathbb{E}\left[\left\|\hat{c}_i^n-\hat{c}_i^{\prime n}\right\|\right] \leq C a_n \Delta_n^{\frac{(2-r)(2p-1)}{2(2p-r)}}+C \Delta_n,
$$
which converges to zero at the rate of $o(\sqrt{\Delta_n})$. {\Blue Plugging} it back in (\ref{eq:A.18}), we have
\begin{equation}\label{diffgprime}
    \mathbb{E}\left[\left|g\left(\hat{c}_i^n\right)-g\left(\hat{c}_i^{\prime n}\right)\right|\right]=o(\sqrt{\Delta_n}),
\end{equation}
and (\ref{boundofdiffv}) is obtained. 

{\DG We now proceed to show (b). In that case, $X$ is continuous (i.e., $X=X^{\prime}$) and clearly the $q$th moment in (\ref{qthmoment}) is zero}. Thus, (\ref{boundofdiffc}) simplifies {as \eqref{boundofdiffc1}.
Let $q^{\prime}>\frac{1}{1-2\varpi}$ and} (\ref{boundofdiffv}) follows by (\ref{eq:A.18}), (\ref{boundofdiffc1}), and (\ref{moreassum1}).
\end{proof}
With Lemma \ref{jump_1} and the previous arguments, it remains to prove {\DG Theorem} \ref{clt} in the continuous case with $\hat{c}$ as (\ref{MDKNN}) under the following assumption:
\begin{assumption}\label{assum3}
    We have (\ref{eq:X}) with $X$ continuous, {\DG $c_t := \sigma_t \sigma_t^*$} satisfies (\ref{eq:sigma}), the processes $\mu, \tilde{\mu}, \sigma, \tilde{\sigma}$ are bounded.
\end{assumption}
We shall exploit the following decomposition \begin{equation} \label{eq:decomposition}
\frac{1}{\sqrt{\Delta_{n}}}\left(V(g)_T^{n}-V(g)_T\right)=\sum_{j=1}^{5} V^{n}_{j},
\end{equation} where
\begin{equation} \label{eq:V1234}
\begin{array}{l}
V^{n}_1=\frac{1}{\sqrt{\Delta_{n}}} \sum_{i=1}^{n} \int_{t_{i-1}}^{t_{i}}\left(g\left(c_{i}^{n}\right)-g\left(c_{s}\right)\right) d s, \\
V^{n}_2=\sqrt{\Delta_{n}} \sum_{i=1}^{n}  \sum_{j=1}^{n} K_{{b}}\left(t_{j-1} - t_{i} \right) \nabla g\left(c_{i}^{n}\right)\cdot\alpha_{j}^{n}, \\
V^{n}_3=\sqrt{\Delta_{n}}\sum_{i=1}^{n}  \sum_{j=1}^{n}K_{{b}}\left(t_{j-1} - t_{i} \right) \nabla g\left(c_{i}^{n}\right)\cdot \left(c_{j-1}^{n}-c_{i}^{n}\right)\Delta_n, \\
V^{n}_4=\sqrt{\Delta_{n}} \sum_{i=1}^{n}\left[(g\left(c_{i}^{n}+\beta_{i}^{n}\right)-g\left(c_{i}^{n}\right)-\nabla g(c_i^n)\cdot\beta_{i}^{n}\right]{\DG \bar{K}_n\left(t_i\right)},\\
V^{n}_{5}=\sqrt{\Delta_{n}} \sum_{i=1}^n\left[\bar{K}_{n}\left(t_i\right)-1\right] g\left(c_i^n\right).
\end{array}
\end{equation}
{Above,} recall that $x\cdot y=\sum_{\ell,m}x^{\ell m}y^{\ell m}$ for $x,y\in\mathbb{R}^{d\times d}$ (thus, e.g., $\nabla g(c_i^n)\cdot \beta_i^n:=\sum_{\ell, m}\partial_{\ell m} g(c_i^n)\beta_{i}^{n,\ell m}$).

The first term $V^{n}_{1}$ {\Blue converges} to 0 since $g$ is $C^2$ and $c_t$ is an It\^o semimartingale with bounded characteristics (see, e.g., (5.3.24) in \cite{JacodProtter}).
{In the subsequence subsections, we show} the convergence for each of the other terms in the decomposition.

\subsubsection{The leading term $V^{n}_{2}$} \label{leading_section}
In this section, we show  
\begin{equation}\label{eq:leading}
V^{n}_{2} \stackrel{st}{\Longrightarrow} Z,
\end{equation} where $Z$ is defined as in the statement of Theorem \ref{clt}.
Denoting $w_j= \Delta_n \sum_{i = 1}^{n}K_{{b}}(t_{j-1} - t_i) \nabla g(c^n_i)  \in\mathbb{R}^{d\times d}$, we can rewrite $V^{n}_{2}$ as 
\begin{equation}
\begin{split}
V^{n}_{2}  &= \frac{1}{\sqrt{\Delta_n}} \sum_{j = 1}^{n} w_j \cdot \alpha^n_j=\frac{1}{\sqrt{\Delta_n}} \sum_{j = 1}^{n} w^{R}_j\cdot \alpha^n_j +  \frac{1}{\sqrt{\Delta_n}} \sum_{j = 1}^{n} w^{L}_j\cdot \alpha^n_j,
\end{split}
\end{equation} 
where $w^{R}_j = \Delta_n\sum_{i = 1}^{j-1} K_{{b}}\left(t_{j-1} - t_i\right)\nabla g\left(c_i^n \right)\in \mathbb{R}^{d\times d}$ and $w^{L}_j = \Delta_n\sum_{i = j}^{n} K_{{b}}\left(t_{j-1} - t_i\right)\nabla g\left(c_i^n \right)\in\mathbb{R}^{d\times d} $.

In contrast to the $w_j$ in \cite{jacod2015estimation}, Section 3.6, $w_j$ here is not measurable with respect to $\mathcal{F}_{j}^n:=\mathcal{F}_{t_{j}} $ since the kernel is in general \emph{two-sided}. Therefore, we cannot directly apply {\DG the triangular array CLT of  \cite{JacodProtter} (Theorem 2.2.14 therein)}. To solve this problem,  
we aim to show the following:
\begin{align}
{\rm (i)}\;\frac{1}{\sqrt{\Delta_n}} \sum_{j = 1}^{n} 	\bar{w}^{R}_j\cdot \alpha^n_j \stackrel{st}{\Longrightarrow}  Z,   \label{eq:left}    
\qquad{\rm (ii)}\;\frac{1}{\sqrt{\Delta_n}} \sum_{j = 1}^{n} \bar{w}^{L}_j\cdot    \alpha^n_j\longrightarrow 0,    
\end{align}
where, with the notation $\bar{K}_j^n(t):=\sum_{i = j}^n \Delta_n K_{{b}}\left( t- t_i \right)$, 
\begin{align*}
	\bar{w}^{R}_j&:=w^{R}_j  + \bar{K}_j^n(t_{j-1})\nabla g\left(c^n_{j-1} \right)=\Delta_n\sum_{i = 1}^{j-1} K_{{b}}\left(t_{j-1} - t_i\right)\nabla g\left(c_i^n \right)+ \bar{K}_j^n(t_{j-1})\nabla g\left(c^n_{j-1} \right),\\
	\bar{w}^{L}_j&:=w^{L}_j  - \bar{K}_j^n(t_{j-1})\nabla g\left(c^n_{j-1} \right)=\Delta_n\sum_{i = j}^{n} K_{{b}}\left(t_{j-1} - t_i\right)\nabla g\left(c_i^n \right)- \bar{K}_j^n(t_{j-1})\nabla g\left(c^n_{j-1} \right).
\end{align*}

For (\ref{eq:left}-i), since $w^{R}_j, \nabla g\left(c^n_{{j-1}} \right) \in \mathcal{F}_{j-1}^n$  and $\alpha^n_{j} \in \mathcal{F}_{j}^n  $, we can follow arguments similar to those in  \cite{jacod2015estimation} {\DG (Section 3.6) to apply the triangular array CLT of \cite{JacodProtter}}. In particular, it suffices to show that, for any $\ell,m,p,q\in\{1,\dots,m\}$, \begin{align} \label{eq:limit_variance}
&\Delta_{n} \sum_{j=1}^{n} \bar{w}^{R,\ell m}_j{\DG \bar{w}^{R,pq}_j}(c_{j-1}^{n,\ell p}c_{j-1}^{n,mq}+c_{j-1}^{n,\ell q}c_{j-1}^{n,mp}) \stackrel{\mathbb{P}}{\longrightarrow} \int_{0}^{T}    \partial_{\ell m}g\left(c_{s} \right)\partial_{pq}g\left(c_{s} \right)  \left(c_{s}^{\ell p}c_{s}^{m q} +c_{s}^{\ell q}c_{s}^{mp}\right)d s.
\end{align}
{To this end, first note that, for $ |s - t_{j-1}| \leq \Delta_n $, $c^n_{j-1} \to c_s $ a.s.. If we also have   $\bar{w}^{R,\ell m}_j   \to \partial_{\ell m}g(c_s)$ a.s., whenever $ |s - t_{j-1}| \leq \Delta_n $, then (\ref{eq:limit_variance}) would follow from dominated convergence theorem. 
For $\left|s - t_{j-1} \right| \leq \Delta_n $, note that \begin{equation}
\begin{split}
 \bar{w}^{R,\ell m}_j- \partial_{\ell m}g(c_s)&=  \Delta_n \sum_{i = 1}^{j-1} \left(\partial_{\ell m}g\left(c^n_i \right) - \partial_{\ell m}g\left(c_s \right) \right) K_{{b}}\left( t_{j-1} - t_i \right)\\
&\quad  + \left(\partial_{\ell m}g\left( c^n_{j-1}\right) - \partial_{\ell m}g(c_s) \right)\Delta_n \sum_{i = j}^n K_{{b}}\left( t_{j-1} - t_i \right)\\
& \quad+ \Delta_n\sum_{i=1}^n  K_{{b}}\left( t_{j-1} - t_i \right) \partial_{\ell m}g(c_s) -  \partial_{\ell m}g(c_s) \\
& =:  P_1+ P_2 + O(\Delta_n/{b}),
\end{split}
\end{equation}
where the order $O(\Delta_n/{b})$ follows from (\ref{eq:K_discrete}).
Since, for $\left|s - t_{j-1} \right| \leq \Delta_n $, $\partial_{\ell m}g\left( c^n_{j-1}\right) \to \partial_{\ell m}g(c_s) $ a.s and $\Delta_n \sum_{i = j}^n K_{{b}} \left( t_{j-1} -t_i\right) $ is bounded,  $P_2 \stackrel{\mathbb{P}}{\to} 0$, as $n\to\infty$. We now deal with $P_1$. Since $c$ is continuous and $\partial_{\ell m}g$ is locally continuous, for an arbitrary, but fixed, $\varepsilon>0$, there exists a $\delta=\delta(\varepsilon)\in(0, \min(T-s, s)) $ such that 
$|\partial_{\ell m}g(c_u)-\partial_{\ell m}g(c_s)|<\varepsilon$, whenever $|u-s|<\delta$.  Next, consider the decomposition 
\begin{equation}
P_1 = \Delta_n \left( \sum_{0<s- t_i  < \delta}  + \sum_{T>s-t_i > \delta}\right) \left(\partial_{\ell m}g\left(c^n_i \right) - \partial_{\ell m}g\left(c_s \right) \right) K_{{b}}\left( t_{j-1} - t_i \right) =: P_{11} + P_{12}.
\end{equation} 
Since $\partial_{\ell m}g$ is locally bounded and $c$ is bounded,  for some $s_{i-1} \in\left(t_{i-1}, t_{i}\right)$, {\DG we have:}
\begin{align} \nonumber
\left|P_{12} \right| &
\leq  C \Delta_n \sum_{i:T>s-t_i \ge \delta} \left|K_{{b}}\left( t_{j-1} - t_i \right)\right|\\
\nonumber
&\leq  C \sum_{i:T>s-t_{i} \geq \delta} \int^{\left(t_{j-1} - t_{i-1}\right) / {\DG b}}_{\left(t_{j-1} - t_{i}\right) / {\DG b}}\left|K(t)\right| d t+\frac{\Delta_n}{b} \sum_{i:T> s - t_{i}\geq \delta}\left|K\left(\frac{t_{j-1}-t_i}{b}\right)-K\left(\frac{t_{j-1}-s_{i-1}}{b}\right)\right|\\
&\leq C\int_{\frac{\delta-\Delta_n}{b}}^{+\infty}\left|K(t)\right| d t+\frac{\Delta_n}{b} V_{-\infty}^{\infty}\left(K\right)\stackrel{n\to\infty}{\longrightarrow}0,\label{eq:farfrom_tau}
\end{align}
where we used $\int^{\left(t_{j-1} - t_{i-1}\right) / h}_{\left(t_{j-1} - t_{i}\right) / h}\left|K(t)\right| d t =\frac{\Delta_n}{b} |K\left(\frac{t_{j-1}-s_{i-1}}{b}\right)| $ for  some $s_{i-1} \in\left(t_{i-1}, t_{i}\right)$, and $V_{-\infty}^{\infty}(K)  $ is the total variation of $K$.
For $P_{11}$, 
 \begin{equation*} 
 \begin{split}
 \left|P_{11} \right|&\leq \Delta_n \sum_{{\DG i}:0<s- t_i  < \delta}  \left|\partial_{\ell m}g\left(c^n_i \right) - \partial_{\ell m}g\left(c_s \right) \right|\left| K_{{b}}\left( t_{j-1} - t_i \right) \right| 
  \leq \varepsilon \sum_{r=0}^\infty\frac{\Delta_n}{b}\left| K\left(\frac{\Delta_n r}{b}\right)\right|\stackrel{n\to\infty}{\longrightarrow} \varepsilon.
 \end{split}
 \end{equation*}
Therefore, $\limsup_{n\to\infty}|P_{1}|\leq{}\varepsilon$ and, since $\varepsilon$ is arbitrary, we conclude that $\bar{w}^{R,\ell m}_j   \to \partial_{\ell m}g(c_s)$ and, thus, \eqref{eq:limit_variance} and (\ref{eq:left}-i).

Next, we prove (\ref{eq:left}-ii). Note the expression therein can be written as \begin{equation}
\begin{split}
 \frac{1}{\sqrt{\Delta_n}} \sum_{j = 1}^{n} \bar{w}^{L}_j\alpha_{j}^n& =\frac{1}{\sqrt{\Delta_n}}\sum_{j=1}^n \sum_{i=j}^n \Delta_n K_{{b}}\left( t_{j-1} - t_i\right) \sum_{u = j}^i \left( \nabla g(c_u) - \nabla g(c_{u-1})\right)\cdot \alpha^n_j\\
 &=\sum_{\ell, m} \sum_{u = 1}^n \left( \partial_{\ell m} g(c_u) - \partial_{\ell m} g(c_{u-1})\right) \left[ \sum_{j=1}^u\sum_{i = u}^n \sqrt{\Delta_n} K_{{b}}( t_{j-1}-t_i) \alpha^{n,\ell m}_j  \right]\\
&=\sum_{\ell,m} \int_0^T A^{n,\ell m}_s d \upsilon_{s},
\end{split}
\end{equation} 
where $s\to \upsilon_s:=\partial_{\ell m} g(c_s)$ is a real-valued function and $A^{n,\ell m}_s :=  \sum_{j=1}^u\sum_{i = u}^n \sqrt{\Delta_n} K_{{b}}( t_{j-1}-t_i) \alpha^{n,\ell m}_j $ for $ s \in (t_{u-1}, t_u) $. 
We now proceed to show that $A^{n,\ell m}_s \to 0$, for all $s\in(0,T)$. Note that $A^{n,\ell m}_s $ is a triangular array. By (\ref{eq:bounds_alpha}), (\ref{eq:better_bound_alpha}), (\ref{eq:AN}), and (\ref{eq:K_discrete}), we have 
\begin{equation}
\begin{split}
& \mathbb{E}\left|\sum_{j=1}^u \sum_{i=u}^n\sqrt{\Delta_n}K_{{b}}(t_{j-1}-t_i)\mathbb{E}\left[\alpha^{n,\ell m}_j \Big|\mathcal{F}_{j-1} \right]\right|\\
&\leq  C\sum_{j=1}^u \sum_{i=u}^n\sqrt{\Delta_n}|K_{{b}}(t_{j-1}-t_i)|\Delta_n^{3/2}\left(\sqrt{\Delta_n} + \mathbb{E}\eta^n_{j-1} \right)\\
&\leq C\sum_{j=1}^u \sum_{i=u}^n\Delta_n|K_{{b}}(t_{j-1}-t_i)|\Delta_n^{3/2}+C\sum_{j=1}^u  \mathbb{E}\eta^n_{j-1} \Delta_n\to  0.
\end{split}
\end{equation} 
Furthermore, recalling that $s \in (t_{u-1}, t_u) $ and using (\ref{eq:bounds_alpha}), 
\begin{equation}
\begin{split}
& \sum_{j=1}^u \left(\sum_{i=u}^n\sqrt{\Delta_n}K_{{b}}(t_{j-1}-t_i)\right)^2 \mathbb{E}\left[\left(\alpha^{n,\ell m}_j\right)^2 \Big| \mathcal{F}_{j-1} \right]\\
&\leq C\sum_{j=1}^u \left(\sum_{i=u}^n\sqrt{\Delta_n}K_{{b}}(t_{j-1}-t_i)\right)^2 \Delta_n^2\\
&\leq  C\int_0^s \left(\int_s^T K_{{b}}(t - v) dv \right)^2 dt \\
&\leq  C {\DG b}\int_{-\infty}^0 \left(\int_{-\infty}^{x} |K(w)| dw \right)^2 d x\to  0,
\end{split}
\end{equation}
where we used that $\int_{-\infty}^0 \left(\int_{-\infty}^{x} |K(w)| dw \right)^2 d x\leq\int_{-\infty}^0 \int_{-\infty}^{x} |K(w)| dw d x=\int_{-\infty}^{0}|K(w)|(-w)dw<\infty$ by Assumption \ref{kernel}-c. 
Therefore, $A^n_s \stackrel{\mathbb{P}}{\longrightarrow} 0$ and $ \int_0^T A^n_s d g'(c_s) \to \int_0^T A_s d g'(c_s)  =0$. This finishes the proof of (\ref{eq:left}-(ii).

\subsubsection{The bias term $V^{n}_{3}$} \label{main_bias_section}
We proceed to prove that
\begin{equation} \label{eq:v4}
\begin{split}
V^{n,\ell m}_{3}&:=\Delta_{n}^{3/2}\sum_{i=1}^{n} \partial_{\ell m}g\left(c_{i}^{n}\right) \sum_{j=1}^{n}K_{{b}}\left(t_{j-1} - t_{i} \right) \left(c_{j-1}^{n,\ell m}-c_{i}^{n,\ell m}\right) \\
&  \stackrel{\mathbb{P}}{\longrightarrow}\, \theta \int_0^T  \partial_{\ell m} g\left(c_s\right) d c^{lm}_s \int_{-\infty}^{\infty} L(u) d u +
\theta\sum_{p,q}\int_0^T \partial_{pq,\ell m}g(c_{s})\tilde{c}_{s}^{\ell m,pq} ds \int_{-\infty}^0 L(u) du.
\end{split}
\end{equation} 
Define 
\begin{equation}
\widetilde{V}^{n,\ell m}_{3} =\Delta_{n}^{3/2} \sum_{j=1}^{n} \sum_{i=j}^{n} K_{{\DG b}}\left(t_{j-1}-t_{i}\right) \partial_{\ell m}g\left(c_{j-1}^{n}\right)\left[c_{j-1}^{n,\ell m}-c_{i}^{n,\ell m}\right].
\end{equation}
Similarly to the proof of the leading term $V^{n}_{2}$, we write $V^{n,\ell m}_{3}$ as \begin{equation}
\begin{aligned}
V^{n,\ell m}_{3}&=
\Delta_{n}^{3/2} \sum_{j=1}^{n} \left(\sum_{i=j}^{n} +\sum_{i=1}^{j-1} \right)K_{{b}}\left(t_{j-1}-t_{i}\right) \partial_{\ell m}g\left(c_{i}^{n}\right)\left[c_{j-1}^{n,\ell m}-c_{i}^{n,\ell m}\right] \\
&=: L + R,
\end{aligned}
\end{equation}and aim to prove that \begin{align} \label{eq:right4}
& R + \widetilde{V}^{n,\ell m}_{3}\to \theta \int_0^T \partial_{\ell m}g\left(c_t\right) dc^{lm}_t \int_{-\infty}^{\infty} L(u) d u,\\
&L - \widetilde{V}^{n,\ell m}_{3}\to \theta\sum_{p,q}\int_0^T \partial_{pq,\ell m}g(c_{s})\tilde{c}_{s}^{\ell m,pq} ds \int_{-\infty}^0 L(u) du.\label{eq:left4}
\end{align}
We first show (\ref{eq:right4}). Writing $c^{n,\ell m}_{j-1} - c^{n,\ell m}_i = \sum_{u = i+1}^{j-1}\Delta^n_u c^{\ell m} $, when $j-1\ge i $, and $c^{n,\ell m}_{j-1} - {\DG c^{n,\ell m}_i} = -\sum_{u = j}^i \Delta^n_u c^{\ell m}$, when $j-1 < i$, and using the usual convention that $\sum_{j=n+1}^{n}=\sum_{i=1}^{0}=0$, we have
\begin{align}
\nonumber
R + \widetilde{V}^{n,\ell m}_{3}
&=\Delta_{n}^{3/2}\sum_{j=1}^{n} \sum_{i=1}^{j-1} K_{{b}}\left(t_{j-1}-t_{i}\right) \partial_{\ell m}g\left(c_{i}^{n}\right)\left[c_{j-1}^{n,\ell m}-c_{i}^{n,\ell m}\right]\\
\nonumber
&\quad+
\Delta_{n}^{3/2} \sum_{j=1}^{n} \sum_{i=j}^{n} K_{b}\left(t_{j-1}-t_{i}\right) \partial_{\ell m}g\left(c_{j-1}^{n}\right)\left[c_{j-1}^{n,\ell m}-c_{i}^{n,\ell m}\right]  \\
\nonumber
&=\Delta_{n}^{3/2}\sum_{j=2}^{n} \sum_{i=1}^{j-2} K_{{b}}\left(t_{j-1}-t_{i}\right) \partial_{\ell m}g\left(c_{i}^{n}\right)\sum_{u = i+1}^{j-1}\Delta^n_u c^{\ell m} \\
\nonumber
&\quad-
\Delta_{n}^{3/2} \sum_{j=1}^{n} \sum_{i=j}^{n} K_{{b}}\left(t_{j-1}-t_{i}\right) \partial_{\ell m}g\left(c_{j-1}^{n}\right)\sum_{u = j}^i \Delta^n_u c^{\ell m} \\
\nonumber
&= \Delta_{n}^{3/2} \sum_{u=1}^{n} \Delta_{u}^{n} c^{\ell m} \left[\sum_{j=u+1}^{n} \sum_{i=1}^{u-1} K_{{b}}\left(t_{j-1}-t_{i}\right) \partial_{\ell m}g\left(c_{i}^{n}\right) - \sum_{j=1}^{u} \sum_{i=u}^{n} K_{{b}}\left(t_{j-1}-t_{i}\right)  \partial_{\ell m}g\left(c_{j-1}^{n}\right) \right]\\
\label{DcmpRtldV3}
& =: \int_0^T H(n)_t d c^{\ell m}_t,
\end{align}
where, 
for $t \in \left ((u-1)\Delta_n, u\Delta_n \right) $,  
\begin{equation} \label{eq:H_n(t)}
H(n)_t := \Delta_n^{3/2} \left[\sum_{j=u+1}^{n} \sum_{i=1}^{u-1} K_{{b}}\left(t_{j-1}-t_{i}\right) \partial_{\ell m}g\left(c_{i}^{n}\right) - \sum_{j=1}^{u} \sum_{i=u}^{n} K_{{b}}\left(t_{j-1}-t_{i}\right)  \partial_{\ell m}g\left(c_{j-1}^{n}\right) \right].
\end{equation} 
Note $H(n)_t$ is adapted to $\mathcal{F}_t$. We first consider the first term in (\ref{eq:H_n(t)}). 
{\DG It can be shown (see Appendix \ref{Technical2_1}) that}
\begin{align}\label{ApprmRSWInt}
	\Delta_n^{\frac{3}{2}}\sum_{j=u+1}^{n} \sum_{i=1}^{u-1} K_{{b}}\left(t_{j-1}-t_{i}\right) \partial_{\ell m}g\left(c_{i}^{n}\right) - \Delta_n^{-\frac{1}{2}}\int_{t_{u+1}}^{T}\int_0^{t_u}K_{{b}}(s-\varpi) \partial_{\ell m}g(c_\varpi) d\varpi ds=o_p(1).
\end{align}
{\DG Next, we} show that $\Delta_n^{-1/2}\int_{t_{u+1}}^T \int_0^{t_u} K_{{b}}(s-\varpi)\partial_{\ell m}g(c_\varpi) d\varpi ds$ converges to $\theta \partial_{\ell m}g(c_{t})\int_0^{\infty} L(v) dv $.  {\DG Indeed, since} 
\[
	\Delta_n^{-\frac{1}{2}}\int_{t_{u}}^{t_{u+1}} \int_0^{t_u} |K_{{b}}(s-\varpi)\partial_{\ell m}g(c_\varpi)| d\varpi ds\leq
	C\Delta_n^{-\frac{1}{2}}\int_{t_{u}}^{t_{u+1}} \int_{\frac{s-t_{u}}{b}}^{\frac{s}{b}} |K(v)| dvds=O(\Delta_n^{\frac{1}{2}}),
\]
we only need to consider $\Delta_n^{-1/2}\int_{t_{u}}^T \int_0^{t_u} K_{{b}}(s-\varpi)\partial_{\ell m}g(c_\varpi) d\varpi ds$. Let us write $\tau = t_{u} $ for simplicity. We exploit the decomposition \begin{equation*}
\Delta_n^{-\frac{1}{2}}\int_{\tau}^T \int_0^{\tau} K_{{b}}(s-\varpi)\partial_{\ell m}g(c_\varpi) d\varpi ds= \Delta_n^{-\frac{1}{2}} \left( \iint_{S_{\delta}} + \iint_{S_{\delta}^c}\right) K_{{b}}(s-\varpi)\partial_{\ell m}g(c_\varpi) d\varpi ds,
\end{equation*}
where $S_{\delta} = \{(s,\varpi): s-\varpi<\delta, s\in (\tau,T), \varpi \in(0,\tau)  \} $ for $\delta>0$ and $S_{\delta}^c = [\tau, T]\times [0,\tau] - S_{\delta}$. Letting $w = \frac{s-\varpi}{b}$ first and then $v = \frac{s}{b}$, we can write:
\begin{equation}
\begin{split}
&\Delta_n^{-1/2} \left|\iint_{S_{\delta}^c} K_{{b}}(s-\varpi)\partial_{\ell m}g(c_\varpi) d\varpi ds\right|\\
& =\Delta_n^{-1/2} \left|\int_{\tau}^{\tau+\delta}\int_{0}^{s-\delta}K_{{b}}(s-\varpi)\partial_{\ell m}g(c_\varpi) d\varpi ds + \int_{\tau+\delta}^{T}\int_{0}^{\tau}K_{{b}}(s-\varpi) \partial_{\ell m}g(c_\varpi) d\varpi ds  \right|\\
&\leq C \Delta_n^{-1/2}\left[\int_{\tau}^{\tau+\delta} \left(\int_{\delta/{b}}^{s/{b}} |K(w)| dw\right) ds  + \int_{\tau+\delta}^{T}\left( \int_{(s-\tau)/{b}}^{s/{b}}|K(w)|dw\right)ds\right]\\
&\leq C \Delta_n^{-1/2} {\DG b} \left[\int_{\tau/{b}}^{(\tau+\delta)/{b}} \left(\int_{\delta/{b}}^{v} |K(w)| dw\right) dv  + \int_{(\tau+\delta)/{b}}^{T/{b}}\left( \int_{v-\tau/{b}}^{v}|K(w)|dw\right)dv\right]\\
 & \leq C \left[\bar{L}(\delta/{b})\frac{\delta}{b}+ \bar{L}(\delta/{b})\frac{T}{b}\right]\to  0 , 
\end{split}
\end{equation}
for any $\delta > 0$, where $\bar{L}(x) = \int_x^{\infty}\left|K(u) \right| d u$ and the last equation uses $\bar{L}(x)x\stackrel{x\to\infty}{\longrightarrow} 0$ derived from $K(x)x^{2+\epsilon} \to 0$, $\epsilon>1/2$ as $|x| \to \infty $.

Next, we consider the integral over $S_{\delta}. $ Clearly, for any $\epsilon>0$, there {\Blue exists} $\delta>0$ such that $\left| \partial_{\ell m}g(c_\varpi) - \partial_{\ell m} g(c_{\tau})\right|< \epsilon $, for $|\varpi-\tau|<\delta $. Letting $w = \frac{s-\varpi}{b}$ first and then $v = \frac{s-\tau}{b}$, we can write
\begin{equation}
\begin{split}
& \Delta_n^{-1/2}\int_{S_{\delta}} K_{{b}}(s-\varpi)\partial_{\ell m}g(c_\varpi) d\varpi ds \\
&= \Delta_n^{-1/2}\int_{\tau}^{\tau+\delta}\int_{s-\delta}^{\tau}K_{{b}}(s-\varpi)\partial_{\ell m}g(c_\varpi) d\varpi ds\\
& \leq \Delta_n^{-1/2} \int_{\tau}^{\tau + \delta} \left( \int_{(s-\tau)/{b}}^{\delta/{b}} K(w)\left( \partial_{\ell m}g(c_{\tau}) + \epsilon\right)\mathrm{1}_{K(w)>0} dw\right)  ds \\
& \quad +\Delta_n^{-1/2} \int_{\tau}^{\tau + \delta} \left( \int_{(s-\tau)/{b}}^{\delta/{b}} K(w)\left( \partial_{\ell m}g(c_{\tau}) - \epsilon\right)\mathrm{1}_{K(w)<0} dw\right)  ds \\
& \leq {\DG b}\Delta_n^{-1/2}\left[\int_{0}^{\delta/{b}}\int_{v}^{\delta/{b}}K(w) \partial_{\ell m}g(c_{\tau})  dw dv + \epsilon\int_{0}^{\infty} \int_v^{\infty}|K(w)|dw dv  \right].
\end{split}
\end{equation} Similarly, we can find a lower bound of the form:
\begin{equation}
{\DG b}\Delta_n^{-1/2}  \left[\int_{0}^{\delta/{b}}\int_{v}^{\delta/{b}}K(w) \partial_{\ell m}g(c_{\tau})  dw dv - \epsilon\int_{0}^{\infty} \int_v^{\infty}|K(w)|dw dv  \right].
\end{equation} 
Recall ${\DG b}\Delta_n^{-1/2} \sim \theta$ and $c_{\tau}=c_{t_u}\stackrel{n\to\infty}{\longrightarrow} c_{t}$. Taking $ n \to \infty$ and then  $\epsilon \to 0$, we have 
\begin{equation}\label{LmINTR}
\Delta_n^{-1/2}\int_{\tau}^T \int_0^{\tau} K_{{b}}(s-\varpi)\partial_{\ell m}g(c_\varpi) d\varpi ds \,\stackrel{n\to\infty}{\longrightarrow}\, \theta \partial_{\ell m}g(c_{t}) \int_0^{\infty}L(v) dv.
\end{equation} 

For the second term in (\ref{eq:H_n(t)}), {\DG similar to \eqref{ApprmRSWInt}, we can show that}
\begin{align}\label{ApprmRSWIntT2}
	\Delta_n^{\frac{3}{2}}\sum_{j=1}^{u} \sum_{i=u}^{n} K_{{b}}\left(t_{j-1}-t_{i}\right) \partial_{\ell m}g\left(c_{j-1}^{n}\right) - \Delta_n^{-\frac{1}{2}}\int_{0}^{t_{u}}\int_{t_u}^{T}K_{{b}}(\varpi-s) \partial_{\ell m}g(c_\varpi) ds d\varpi=o_P(1).
\end{align}
Note that $\Delta_n^{-\frac{1}{2}}\int_{0}^{t_{u}}\int_{t_u}^{T}K_{{b}}(\varpi-s) \partial_{\ell m}g(c_\varpi) ds d\varpi$ is the same as the left-hand side in (\ref{LmINTR}) but with $K(-w)$ instead of $K(w)$. Therefore, 
\[
 \Delta_n^{-\frac{1}{2}}\int_{0}^{t_{u}}\int_{t_u}^{T}K_{{b}}(\varpi-s) \partial_{\ell m}g(c_\varpi) ds d\varpi
\, \stackrel{n\to\infty}{\longrightarrow}\,\theta \partial_{\ell m}g(c_{t}) \int_0^{\infty}\int_{v}^{\infty}K(-w)dw dv.
 \]
We conclude that 
 \begin{equation}
H(n)_t \to \theta \partial_{\ell m}g(c_{t})\int_{-\infty}^{\infty} L(v) dv,
\end{equation}
{\DG where we recall that \(L(t) := \int_{t}^{\infty} K(u) d u \mathbf{1}_{\{t>0\}}-\int_{-\infty}^{t} K(u) d u \mathbf{1}_{\{t \leq 0\}}\).} 
Since $c $ is a.s. continuous, (\ref{eq:right4}) follows by dominated convergence theorem for stochastic integrals.

{{Next, we {\DG proceed} to show (\ref{eq:left4}). Note that 
\begin{align}\label{TOTrmNd0}
L-\widetilde{V}^{n,\ell m}_{3} &=\Delta_{n}^{3/2} \sum_{j=1}^{n} \sum_{i=j}^{n} K_{{b}}\left(t_{j-1}-t_{i}\right) \left(\partial_{\ell m}g\left(c_{i}^{n}\right)-\partial_{\ell m}g\left(c_{j-1}^{n}\right)\right)\left[c_{j-1}^{n,\ell m}-c_{i}^{n,\ell m}\right].
\end{align}
Let us start by rewriting the inner summation in (\ref{TOTrmNd0}). Applying the conclusion in Remark \ref{SpecLemmaA1} with $\tau = t_{j-1} $,  $f(x) = \left(\partial_{\ell m}g(c^n_{j-1}) - \partial_{\ell m}g(x)\right) \left( c^{n,\ell m}_{j-1}  - x^{\ell m}\right)$, and $K(-x) \mathrm{1}_{x>0}$ instead of $K(x)$, we have $f(c_{j-1}^n) = 0 $ and}}
\begin{align}\nonumber
&\sum_{i=j}^{n}K_{{b}}\left(t_{j-1}-t_{i}\right) \left(\partial_{\ell m}g\left(c_{j-1}^{n}\right) -\partial_{\ell m}g\left(c_{i}^{n}\right) \right)\left[\left(c_{j-1}^{n,\ell m}-c_{i}^{n,\ell m}\right) \Delta_{n}\right]\\
\nonumber
&\quad =\sum_{i=j+1}^{n+1}K_{{b}}\left(t_{j-1}-t_{i-1}\right) f(c^n_{i-1})\Delta_n\\
\nonumber
&\quad =\int_{t_{j-1}}^T K_{{b}}\left(t_{j-1}-s\right) f(c_s) ds + D_1(f)  + \sum_{i=j+1}^{n+1}K_{{b}}\left(t_{j-1}-t_{i-1}\right)  \int_{t_{i-1}}^{t_i} \left( f(c^n_{i-1}) - f(c_s)\right)   ds\\
\label{SmplHA}
&\quad= \int_{t_{j-1}}^T K_{{b}}\left(t_{j-1}-s\right) f(c_s) ds
+ o_p\left(\frac{\Delta_n}{b}\right) + O_p\left( \sqrt{\Delta_n {\DG b}}\right),
\end{align} 
where the order $O_p(\sqrt{\Delta_n {\DG b}})$ can be justified since, for $s \in ((i-1)\Delta_n, i \Delta_n] $ and $i>j $,
\begin{equation*}
\begin{split}
\mathbb{E}\left| f(c^n_{i-1}) - f(c_s)\right|&\leq \sqrt{\mathbb{E}(c^{n,\ell m}_{j-1} - c^{\ell m}_s)^2  \mathbb{E} (\partial_{\ell m}g(c^n_{i-1}) - \partial_{\ell m}g(c_{s})  )^2 }\\
&\quad+ \sqrt{\mathbb{E}(\partial_{\ell m}g(c^n_{j-1}) - \partial_{\ell m}g(c^n_{i-1}))^2  \mathbb{E} (c^{n,\ell m}_{i-1} - c^{\ell m}_{s}  )^2 }\\
&\leq C\sqrt{\Delta_n (t_{i}-t_{j-1} )},
\end{split}
\end{equation*} 
for some constant $C$, and, therefore, for some possibly different constant $C$,
\begin{equation*}
\begin{split}
\sum_{i=j+1}^{n+1}|K_{{b}}\left(t_{j-1}-t_{i-1}\right) | \sqrt{\Delta_n|t_{j-1} - t_{i}|}\Delta_n 
&\leq  C\int_{t_{j+1}}^{T}|K_{{b}}\left(t_{j-1}-s\right) | \sqrt{\Delta_n|t_{j-1} - s|}ds \\
& \leq C \int_{-\infty}^0|K\left(u\right) | \sqrt{\Delta_n |u| {\DG b}}ds =  O_p\left(\sqrt{\Delta_n{\DG b}} \right).
\end{split}
\end{equation*}
Plugging (\ref{SmplHA}) back into (\ref{TOTrmNd0}), recalling that $f(x) = \left(\partial_{\ell m}g(c^n_{j-1}) - \partial_{\ell m}g(x)\right) \left( c^{n,\ell m}_{j-1}  - x^{\ell m}\right) $, and applying integration by parts, we get
\begin{align}
\nonumber
 L -{\DG \widetilde{V}^{n,\ell m}_{3}}
&=-\sqrt{\Delta}_{n} \sum_{j=1}^{n}\left[\int_{t_{j-1}}^T K_{{b}}\left(t_{j-1}-t\right) \left(\partial_{\ell m}g\left(c_{j-1}^{n}\right) - \partial_{\ell m}g\left({\DG c_s}\right) \right) \left(c_{j-1}^{n,\ell m}-{\DG c_s}^{\ell m}\right) ds \right] + o_p(1) \\
\nonumber
&=-\sqrt{\Delta}_{n} \sum_{j=1}^{n}L\left(\frac{t_{j-1}- T}{b} \right)
 \left(\partial_{\ell m}g\left(c_{j-1}^{n}\right) - \partial_{\ell m}g\left(c_T\right) \right) \left(c_{j-1}^{n,\ell m}-c_T^{\ell m}\right)  \\
 \nonumber
& +\sqrt{\Delta}_{n} \sum_{j=1}^{n}\int_{t_{j-1}}^T L\left(\frac{t_{j-1}- t}{b} \right) b(j)_s dW_s \\
\nonumber
& +\sqrt{\Delta}_{n} \sum_{j=1}^{n} \int_{t_{j-1}}^T L\left(\frac{t_{j-1}- t}{b} \right) a(j)_s ds + o_p(1), \\
\label{SndTrmN}
 &=: P_1 + P_2 +P_3 +  o_p(1),
\end{align}	
where, recalling the notation $\tilde{c}_{s}^{uv,pq}=\sum_{j=1}^{d}\tilde{\sigma}^{uv,j}\tilde{\sigma}^{pq,j}$, 
\begin{align*}
	b(j)_s&:=\sum_{pq}(c^{\ell m}_s-c_{j-1}^{n,pq})
	\partial_{pq,\ell m}^2g(c_s)\tilde{\sigma}_s^{pq}+
	(\partial_{\ell m}g(c_s)-\partial_{\ell m}g(c_{j-1}^{n}))\tilde{\sigma}_s^{\ell m}\\
	a(j)_s&:=\sum_{pq}(c^{\ell m}_s-c_{j-1}^{n,pq})
	\partial_{pq,\ell m}^2g(c_s)\tilde{\mu}_s^{pq}+
	(\partial_{\ell m}g(c_s)-\partial_{\ell m}g(c_{j-1}^{n}))\tilde{\mu}_s^{\ell m}\\
	&\quad+\frac{1}{2}\sum_{uv,pq}(c_{s}^{\ell m}-c_{j-1}^{n,\ell m})\partial^3_{uv,pq,\ell m}g(c_s)\tilde{c}_s^{uv,pq}+\sum_{pq}\partial_{pq,\ell m}g(c_s)\tilde{c}_s^{\ell m,pq}.
\end{align*}
For the first term,  for some $\xi$ in between $c_T$ and $c_{j-1}^n$, we have:
\begin{equation*}
\begin{split}
\mathbb{E}\left|P_1 \right| &\leq \sqrt{\Delta_n} \sum_{j=1}^n \left|L\left(\frac{t_{j-1}- T}{b} \right)  \right|\mathbb{E}\left[\left\|\nabla \partial_{\ell m}g\left(\xi\right)\right\| \left\|c_{j-1}^{n}-c_T^{n}\right\|^2\right]\\
&\leq C\sqrt{\Delta_n} \sum_{j=1}^n \left|L\left(\frac{t_{j-1}- T}{b} \right) \right| \left(T- t_{j-1} \right)\\
\leq & C \frac{{b}^{2}}{\sqrt{\Delta_n}}\int_0^T| L(u)|u du \to   0,
\end{split}
\end{equation*}
where we used (\ref{eq:X_c_bound}-iii) in the second inequality above. 

For $P_2$, let $\zeta^n_j = \sqrt{\Delta}_{n} \int_{t_{j-1}}^T L\left(\frac{t_{j-1}- t}{b} \right) b(j)_s d{\DG W_s}  $ {\DG so that} $P_2 = \sum_{j=1}^n \zeta^n_j$. Observe that for $j'<j$, $\mathbb{E}\left[ \zeta^n_{j'}\zeta^n_j\right] = \mathbb{E}\left[\zeta^n_{j'}\mathbb{E}\left[ \zeta^n_j\mid \mathcal{F}^n_{j-1} \right] \right] =0 $ because $\mathbb{E}\left[ \zeta^n_j\mid \mathcal{F}^n_{j-1} \right] = 0 $ and, thus,
 \begin{equation} \label{eq:b_t_bound}
\begin{split}
 \mathbb{E}\left[P_2^2\right]&=\sum_{j=1 }^n \mathbb{E}\left[\left(\zeta^n_j \right)^2\right]\\
&= \Delta_n\sum_{j=1}^n\int_{t_{j-1}}^T L^2\left(\frac{t_{j-1}-s}{b} \right)\mathbb{E}\left(\|b(j)_s\|^2 \right)d s\\
& \leq C\Delta_n\sum_{j=1}^n\int_{t_{j-1}}^T L^2\left(\frac{t_{j-1}-s}{b} \right)\left(s-t_{j-1} \right) d s\\
& \leq C{b}^{2}\Delta_n\sum_{j=1}^n\int_{-\infty}^0 L^2\left(u\right)(-u)d u \to 0,
\end{split}
\end{equation}
where in the first inequality we used that $\mathbb{E}\left\|b(j)_s \right\|^2 \leq C \left|s-t_{j-1}\right|$, which follows from the definition of $b(j)_s$ and (\ref{eq:X_c_bound}).

For $P_3$, recall ${\DG b} \sim \theta\sqrt{\Delta_n} $ and note that $a(j)_s=O_p(\sqrt{\left|s-t_{j-1}\right|})+\sum_{pq}\partial_{pq,\ell m}g(c_s)\tilde{c}_s^{\ell m,pq}$ 
due to (\ref{eq:X_c_bound}). We then have 
\begin{equation}
\begin{split}
P_3 & = \sqrt{\Delta}_{n} \sum_{j=1}^{n} \int_{t_{j-1}}^T  L\left(\frac{t_{j-1}- s}{b} \right) a(j)_s ds\\
&= o_p(1) + \sqrt{\Delta}_{n} \sum_{pq}\sum_{j=1}^{n} \int_{t_{j-1}}^T  L\left(\frac{t_{j-1}- s}{b} \right)\partial_{pq,\ell m}g(c_s)\tilde{c}_s^{\ell m,pq}ds\\
&= o_p(1) + \sqrt{\Delta}_{n} {\DG b}\sum_{pq}\sum_{j=1}^{n} 
\partial_{pq,\ell m}g(c_{t_{j-1}})\tilde{c}_{t_{j-1}}^{\ell m,pq}\int_{\frac{t_{j-1} - T}{b}}^0  L\left(u \right) du\\
&\stackrel{n\to\infty}{\longrightarrow}  \theta\sum_{pq}\int_0^T \partial_{pq,\ell m}g(c_{s})\tilde{c}_{s}^{\ell m,pq} ds \int_{-\infty}^0 L(u) du,
\end{split}
\end{equation} 
since
\[
	\sqrt{\Delta}\sum_{j=1}^n\int_{t_{j-1}}^T |L\left(\frac{t_{j-1}-s}{b} \right)|\sqrt{s-t_{j-1}} d s \leq C{\DG b^{\frac{3}{2}}}\Delta^{-\frac{1}{2}}\int_{-\infty}^0 |L\left(u\right)|\sqrt{-u}d u=O(\Delta^{\frac{1}{4}}),
\]
and, applying again  (\ref{eq:X_c_bound}),   
\begin{equation}
\begin{split}
&\mathbb{E}\left|\int_{t_{j-1}}^T  L\left(\frac{t_{j-1}- s}{b} \right)\partial_{pq,\ell m}g(c_s)\tilde{c}_s^{\ell m,pq} ds- \partial_{pq,\ell m}g(c_{t_{j-1}})\tilde{c}_{t_{j-1}}^{\ell m,pq} h\int_{\frac{t_{j-1} - T}{b}}^0  L\left(u \right) du\right|\\
&\leq  \int_{t_{j-1}}^T  \left|L\left(\frac{t_{j-1}- s}{b} \right)\right| \mathbb{E}\left|\partial_{pq,\ell m}g(c_s)\tilde{c}_s^{\ell m,pq} - \partial_{pq,\ell m}g(c_{t_{j-1}})\tilde{c}_{t_{j-1}}^{\ell m,pq}\right| ds\\
&\leq  C\int_{t_{j-1}}^T  \left|L\left(\frac{t_{j-1}- s}{b} \right)\right|\sqrt{s - t_{j-1}} ds\\
 &\leq C {\DG b}\int_{-\infty}^0 |L(u)| \sqrt{-u{\DG b}} du = O({\DG b^{3/2}}),
\end{split}
\end{equation}
where $\int_{-\infty}^0 |L(u)|\sqrt{-u} du < \infty$   can be shown using the condition $K(x)=O(|x|^{-2-\epsilon})$, as $|x| \to \infty $, for some $\epsilon>1/2$. Indeed, for some $M<0$ and $B<\infty$,
\begin{equation*}
\begin{split}
 \int_{-\infty}^0 |L(u)| \sqrt{-u} du &\leq \int_M^0 |L(u)|\sqrt{-u} du + \int_{-\infty}^{M}\sqrt{-u}\int_{-\infty}^{u}|K(x)|dxdu\\
 &\leq  \int_M^0 |L(u)| \sqrt{-u} du + B\int_{-\infty}^{M}\sqrt{-u}\int_{-\infty}^{u}|x|^{-2-\epsilon}dxdu< \infty.
\end{split}
\end{equation*}
This concludes $L - {\DG \widetilde{V}^{n,\ell m}_{3}} \to \theta\sum_{pq}\int_0^T \partial_{pq,\ell m}g(c_{s})\tilde{c}_{s}^{\ell m,pq} ds \int_{-\infty}^0 L(u) du$ and, hence,  (\ref{eq:v4}).}
\subsubsection{The bias term {\DG $V^{n}_4$}} \label{last_bias_section}

We now show that
\begin{equation}\label{eq:last_bias}
    \begin{aligned} 
{\DG V^{n}_{4}} &\stackrel{\mathbb{P}}{\longrightarrow}\, {\frac{1}{2\theta} \sum_{p,q,u,v} \int_0^{T}   \partial_{pq,uv}g(c_{s}) \check{c}_s^{pq,uv} ds\int_{-\infty}^{\infty} K^2\left(u  \right) du}\\
&\qquad
+{\frac{\theta}{2}\sum_{p,q,u,v}\sum_{r=1}^{d}\int_0^{T}\partial_{pq,uv}g(c_{s})\,\tilde{\sigma}^{pq,r}_{s}\tilde{\sigma}^{uv,r}_{s} ds\left(\int_{-\infty}^{\infty} L\left(u\right)^2 du\right)}.
\end{aligned}
\end{equation}
We write ${\DG V^{n}_{4}} =\sum_{i=1}^{n} v_{i}^{n}\bar{K}_n\left(t_i\right)$, where $v_{i}^{n}=\sqrt{\Delta_{n}}\left(g\left(c_{i}^{n}+\beta_{i}^{n}\right)-g\left(c_{i}^{n}\right)-\nabla g\left(c_{i}^{n}\right) \cdot\beta_{i}^{n }\right).$
By Taylor's expansion and (\ref{eq:g_prime_bound}), for some constants ${\ell} \ge 3$, $v_{i}^{n}=v(1)_{i}^{n}+v(2)_{i}^{n}$, where 
\begin{align*}
	v(1)_i^n&=\frac{\sqrt{\Delta_n}}{2}\sum_{p,q,u,v}\partial_{pq,uv}g(c_i^n)\beta_i^{n,pq}\beta_i^{n,uv},\\
	|v(2)_i^n|&\leq{}C\sqrt{\Delta_n}(1+\|\beta_i^n\|^{{\ell}-3})\|\beta_{i}^{n}\|^3.
\end{align*}
For $v(2)^n_i $, applying (\ref{eq:beta_bound}) and recalling that $k_n \sim \frac{\theta}{\sqrt{\Delta_n} }$, 
\begin{equation}
\mathbb{E}\left[\sum_{i=1}^n |v(2)^n_i|\right] \leq \sum_{i=1}^n C\sqrt{\Delta_n}  \left( k_n^{-3/2} + (k_n \Delta_n)^{3/2}\right) \leq \sum_{i=1}^n C\sqrt{\Delta_n}\Delta_n^{3/4} \to 0.
\end{equation}
Thus,  we only need to consider $v(1)^n_i $. Recalling (\ref{Dfnalphabeta0}), write 
\begin{equation}\label{eq:beta2_decomp}
\beta_i^{n,pq}\beta_i^{n,uv}=\bar{K}\left(t_i\right)^{-2}\sum_{r=1}^{6}\xi^{n}_{{r
	},i},
\end{equation}
where 
\begin{align*}
\xi_{1,i}^{n}&:=\sum_{j=1}^{n} K_{{b}}^{2}\left(t_{j-1}-t_{i}\right)\alpha_{j}^{n,pq}\alpha_{j}^{n,uv},\\
\xi^{n}_{2,i} & = \Delta_{n}^2\left(\sum_{j=1}^{n} K_{{b}}\left(t_{j-1}-t_{i}\right) \left(c_{j-1}^{n,pq}-c_{i}^{n,pq}\right)\right)
\left(\sum_{j=1}^{n} K_{{b}}\left(t_{j-1}-t_{i}\right) \left(c_{j-1}^{n,uv}-c_{i}^{n,uv}\right)\right)\\
\xi^{n}_{3,i}  &=\sum_{j=1}^n\sum_{l=j+1}^{n} 
K_{{b}}\left(t_{j-1}-t_{i}\right) K_{{b}}\left(t_{l-1}-t_{i}\right) \alpha_{j}^{n,uv} \alpha_{l}^{n,pq},\\
\xi^{n}_{4,i}  &=\sum_{l=1}^{n-1}\sum_{j=l+1}^{n} 
K_{{b}}\left(t_{j-1}-t_{i}\right) K_{{b}}\left(t_{l-1}-t_{i}\right) \alpha_{j}^{n,uv} \alpha_{l}^{n,pq},\\
\xi^{n}_{5,i}  & =\sum_{j=1}^{n} \sum_{l=1}^{n} K_{{b}}\left(t_{j-1}-t_{i}\right) K_{{b}}\left(t_{l-1}-t_{i}\right) \alpha_{j}^{n,uv} \left(c_{l-1}^{n,pq}-c_{i}^{n,pq}\right) \Delta_{n},\\
\xi^{n}_{6,i}  & =\sum_{j=1}^{n} \sum_{l=1}^{n} K_{{b}}\left(t_{j-1}-t_{i}\right) K_{{b}}\left(t_{l-1}-t_{i}\right) \alpha_{j}^{n,pq} \left(c_{l-1}^{n,uv}-c_{i}^{n,uv}\right) \Delta_{n}.
\end{align*}
Therefore,  we only need to  consider the convergence of \begin{equation} \label{eq:part5_decomposition}
	S_{n,r}:=S_{n,r}^{(p,q,u,v)}:=\sum_{i=1}^n \frac{\sqrt{\Delta_n}}{2}\partial_{pq,uv}g(c_i^n)\bar{K}\left(t_i\right)^{-1}\xi^n_{r,i} ,   \quad r = 1, \dots, 6,
\end{equation} 
for each $p,q,u,v$, since
$$
{\DG V^{n}_{4}}=\sum_{r=1}^{6}\sum_{p,q,u,v}S_{n,r}^{(p,q,u,v)}+o_{p}(1).
$$
When ${r=1}$, 
\begin{equation}\label{eq:A.50}
\begin{split}
{S_{n,1}}
&=\sum_{j=1}^{n}\sum_{i=1}^{j-1} \frac{\sqrt{\Delta_{n}}}{2} \partial_{pq,uv}g(c_i^n) \bar{K}\left(t_i\right)^{-1} K_{{b}}^{2}\left(t_{j-1}-t_{i}\right) \alpha_{j}^{n,pq}\alpha_{j}^{n,uv}\\
&\quad+ 
\sum_{j=1}^{n} \sum_{i=j}^{n}\frac{\sqrt{\Delta_{n}}}{2} \partial_{pq,uv}g(c_i^n) \bar{K}\left(t_i\right)^{-1} K_{{b}}^{2}\left(t_{j-1}-t_{i}\right) \alpha_{j}^{n,pq}\alpha_{j}^{n,uv}\\
& =:R + L.
\end{split}
\end{equation} 
Similar to the idea {behind} (\ref{eq:left}), we define 
\begin{equation*}
M = \frac{\sqrt{\Delta_n}}{2} \sum_{j=1}^{n} \sum_{i=j}^{n} \bar{K}\left(t_i\right)^{-1} K_{{b}}^{2}\left(t_{j-1}-t_{i}\right) 
\partial_{pq,uv}g(c_{j-1}^n)  \alpha_{j}^{n,pq}\alpha_{j}^{n,uv}, 
\end{equation*} 
and aim to show that 
\begin{align}
&R +M \to {\frac{1}{2\theta}  \int_0^{T}   \partial_{pq,uv}g(c_{s}) \check{c}_s^{pq,uv} ds\int_{0}^{\infty} K^2\left(u  \right) du},
\label{eq:right5}\\
& L -M\to 0. \label{eq:left5}
\end{align}
For (\ref{eq:left5}), due to (\ref{eq:X_c_bound}), (\ref{eq:bounds_alpha}), and (\ref{eq:g_prime_bound}), we have  
\begin{align} \label{eq:L_M_v4}
\nonumber
&\mathbb{E}\left|L - M \right| = \frac{\sqrt{\Delta_n}}{2} \mathbb{E}\left| \sum_{j=1}^{n} \sum_{i=j}^{n} \bar{K}\left(t_i\right)^{-1}K_{{b}}^{2}\left(t_{j-1}-t_{i}\right) \left(\partial_{pq,uv}g(c_{i}^n)-\partial_{pq,uv}g(c_{j-1}^n)\right) \alpha_{j}^{n,pq}\alpha_{j}^{n,uv}\right|\\
\nonumber
& \leq \frac{\sqrt{\Delta_n}}{2} \sum_{j=1}^{n} \sum_{i=j}^{n} |\bar{K}\left(t_i\right)|^{-1}K_{{b}}^{2}\left(t_{j-1}-t_{i}\right) \sqrt{\mathbb{E}\left[ \left(\partial_{pq,uv}g(c_{i}^n)-\partial_{pq,uv}g(c_{j-1}^n)\right)^2\right]  \mathbb{E}\left[\left\| \alpha_{j}^{n}\right\|^4 \right]}\\
\nonumber
& \leq \frac{\sqrt{\Delta_n}}{2} \sum_{j=1}^{n} \sum_{i=j}^{n} |\bar{K}\left(t_i\right)|^{-1}K_{{b}}^{2}\left(t_{j-1}-t_{i}\right) \sqrt{\mathbb{E}\left[ \left\|\nabla \partial_{pq,uv}g\left(\xi\right) \right\|^2\left\|c_{i}^{n}-c_{j-1}^{n}\right\|^2  \right]\Delta_n^4 }\\
\nonumber
& \leq C\sqrt{\Delta_n} \sum_{j=1}^{n} \sum_{i=j}^{n} |\bar{K}\left(t_i\right)|
^{-1}K_{{b}}^{2}\left(t_{j-1}-t_{i}\right) \sqrt{\left|t_{j-1} - t_{i}\right|\Delta_n^4 }\\
& \leq \frac{C}{\delta} \sqrt{\Delta_n}\sum_{j=1}^n\Delta_n \int_{-\infty}^0 K^2(u) \sqrt{-u{b}}\, du\frac{1}{{b}} \leq \frac{C}{\delta}\sqrt{\frac{\Delta_n}{{b}}} \to 0,
\end{align} 
given $\left|\bar{K}\left(t_i\right)\right|>\delta$.
For (\ref{eq:right5}), {we write it as
\begin{align} 
R + M  
\label{eq:RandM}
& =: \frac{\sqrt{\Delta_n}}{2}\sum_{j=1}^{n} \zeta^n_{j} \alpha_{j}^{n,pq}\alpha_{j}^{n,uv},
\end{align}
where
\[
	\zeta^n_{j}:= \sum_{i=1}^{j-1} \partial_{pq,uv}g(c_{i}^n)\bar{K}\left(t_i\right)^{-1}  K_{{b}}^{2}\left(t_{j-1}-t_{i}\right) + \sum_{i=j}^{n}  \partial_{pq,uv}g(c_{j-1}^n)\bar{K}\left(t_i\right)^{-1} K_{{b}}^{2}\left(t_{j-1}-t_{i}\right)=:\zeta_{j}^{n,1}+\zeta_{j}^{n,2}.
\]
In particular, note that $R+M$ is an adapted} triangular array, since $ \zeta^n_j$ and $\alpha^n_j$ are measurable w.r.t. $\mathcal{F}_{j-1} $ and $\mathcal{F}_{j} $, respectively. By (\ref{eq:better_bound_alpha}), we have 
\begin{equation}\label{eq:K2_discrete00}
\begin{split}
&\sum_{j=1}^n \mathbb{E}\left[ \zeta^n_{j} \alpha_{j}^{n,pq}\alpha_{j}^{n,uv} \big| \mathcal{F}_{j-1}  \right]=  \sum_{j=1}^n \zeta^n_{j}\left(\check{c}_{j-1}^{n,pq,uv}\Delta_{n}^{2} + O_p\left(\Delta_n^{5/2}\right) \right),
\end{split}
\end{equation}
{\DG where recall that} $\check{c}_{s}^{pq,uv}=\left(c_{s}^{pu}c_{s}^{qv} +c_{s}^{pv}c_{s}^{qu}\right)$.
{We now consider the limits of 
\begin{equation}\label{SlEqN}
	A_{n}:=\sqrt{\Delta_n}\sum_{j=1}^{n}\zeta_{j}^{n,1}\check{c}_{j-1}^{n,pq,uv}\Delta_{n}^{2},
	\quad 
	B_{n}:=\sqrt{\Delta_n}\sum_{j=1}^{n}\zeta_{j}^{n,2}\check{c}_{j-1}^{n,pq,uv}\Delta_{n}^{2}.
\end{equation}
For the first term above, let us first note that we expect that}
\begin{align}\nonumber
\zeta_{j}^{n,1}\Delta_n&=\sum_{i=1}^{j-1}  K_{{b}}^{2}\left(t_{j-1}-t_{i}\right)\bar{K}\left(t_{i}\right)^{-1} \partial_{pq,uv}g(c_{i}^n) \Delta_n\\
&=\int_0^{t_{j-1}} K_{{b}}^2 \left(t_{j-1}-s\right)\bar{K}\left(s\right)^{-1}\partial_{pq,uv}g(c_{s})  ds  +O\left(\frac{\Delta_n}{b_n^2} \right) + O_p\left(\frac{\sqrt{\Delta_n}}{b_n}\right).
 \label{eq:K2_discrete}
\end{align}
The proof of (\ref{eq:K2_discrete}) can be found in Appendix \ref{Technical2_1}.
Therefore, recalling that $\frac{\sqrt{\Delta_n}}{{b_n}}\to \frac{1}{\theta} $, 
\begin{align} \label{eq:leftP1_V4}
\nonumber
A_{n}&= \sqrt{\Delta_n} \sum_{j=1}^{n} \check{c}^{n,pq,uv}_{j-1}\int_0^{t_{j-1}} K_{{b}}^2 \left(t_{j-1}-s\right)\bar{K}\left(s\right)^{-1}\partial_{pq,uv}g(c_{s})  ds\Delta_n + O_p(\sqrt{\Delta_n})\\
\nonumber
&= {\DR\sqrt{\Delta_n} \sum_{j=1}^{n} \Delta_n\check{c}^{n,pq,uv}_{j-1}\partial_{pq,uv}g(c_{j-1}^n) \int_0^{{\frac{t_{j-1}}{b}}}K^2(u)\left[\frac{\Delta_{n}}{b}\sum_{l=1}^{n}K\left(u+\frac{t_{l-1}-t_{j-1}}{b}\right)\right]^{-1}du \frac{1}{b}} \\
\nonumber
&\qquad+ O_p(\sqrt{\Delta_n})+O_p\left(\frac{\sqrt{\Delta_n}}{\sqrt{b_n}}\right)\\
\nonumber
&= {\frac{\sqrt{\Delta_n}}{{b}}}\int_0^{T}  \partial_{pq,uv}g(c_{s}) \check{c}_s^{pq,uv} \int_{0}^{\frac{s}{{b}}} K^2\left(u  \right)\left[\int_{u-\frac{{s}}{{b}}}^{u+\frac{T-{s}}{{b}}}K(v)dv\right]^{-1}du  ds +o_p(1) \\
&\quad\stackrel{n\to\infty}{\longrightarrow} \frac{1}{\theta}  \int_0^{T}   \partial_{pq,uv}g(c_{s}) \check{c}_s^{pq,uv} ds\int_{0}^{\infty} K^2\left(u  \right) du,
\end{align}
where in the second equality we used that, for $j= 1,2, \cdots, n$,
\begin{equation}\label{secndEq}
\begin{split}
   & \mathbb{E}\left|\int_0^{t_{j-1}} K_{{b}}^2 \left(t_{j-1}-s\right)\bar{K}(s)^{-1}\partial_{pq,uv}g(c_{s})ds - 
   \partial_{pq,uv}g(c_{j-1}^n) \int_0^{t_{j-1}} K_{{b}}^2 \left(t_{j-1}-s\right)\bar{K}(s)^{-1}ds \right|\\
    &\quad\leq  \int_0^{t_{j-1}} K_{{b}}^2 \left(t_{j-1}-s\right) \bar{K}(s)^{-1}\mathbb{E} \left|\partial_{pq,uv}g(c_{s})- \partial_{pq,uv}g(c_{j-1}^n) \right|ds \\
    &\quad\leq C  \int_0^{t_{j-1}} K_{{b}}^2\left( t_{j-1}-s\right)\bar{K}(s)^{-1}\sqrt{\left|s - t_{j-1}\right|} ds  \\
    &\quad \leq \frac{C}{\delta} \frac{1}{b}\sqrt{b} \int_0^{\infty} K^2(u) \sqrt{u} du  =  O\left(\frac{1}{\sqrt{b}}\right),
\end{split}
\end{equation}
where we used \eqref{eq:X_c_bound} and \eqref{eq:g_prime_bound}. 
{For the second term in (\ref{SlEqN}),} similar to (\ref{eq:K2_discrete}), we have 
\[
\sum_{i=j}^{n}  K_{{b}}^{2}\left(t_{j-1}-t_{i}\right)  \bar{K}\left(t_i\right)^{-1}\Delta_n -\int_{t_{j-1}}^{T} K_{{b}}^2 \left(t_{j-1}-t\right)\bar{K}\left(t\right)^{-1}  dt = O_p(1),
\] 
and, thus,
\begin{align}\label{eq:rightP1_V4}
\nonumber
{B_n}
&= \sqrt{\Delta_n} \sum_{j=1}^{n} \partial_{pq,uv}g(c_{j-1}^n)  \check{c}^{n,pq,uv}_{j-1} \int_{t_{j-1}}^{T} K_{{b}}^2 \left(t_{j-1}-s\right)\bar{K}\left(s\right)^{-1}  ds\Delta_n + o_p(1)\\
\nonumber
 &= \sqrt{\Delta_n} \sum_{j=1}^{n}  \partial_{pq,uv}g(c_{j-1}^n)  \check{c}^{n,pq,uv}_{j-1} \int_{\frac{t_{j-1}-T}{b}}^{0} K^2\left(u\right)\left[\frac{\Delta_n}{b} \sum_{l=1}^n K\left(u+\frac{t_{l-1}-t_{j-1}}{b}\right)\right]^{-1}  du\frac{\Delta_n}{b} + o_p(1)\\
 \nonumber
 &=\frac{\sqrt{\Delta_n}}{b}\int_0^{T}  \partial_{pq,uv}g(c_{s}) \check{c}_s^{pq,uv}\int_{\frac{s-T}{b}}^{0}K^2\left(u  \right)\left[\int_{u-\frac{s}{b}}^{u+\frac{T-s}{b}}K(v)dv\right]^{-1}du  ds +o_p(1)\\
 &\quad\stackrel{n\to\infty}{\longrightarrow} \frac{1}{\theta}  \int_0^{T}    \partial_{pq,uv}g(c_{s})  \check{c}^{pq,uv}_{s} ds\int_{-\infty}^{0} K^2\left(u  \right) du.
\end{align}
For the contribution of the second term on the right-hand side of (\ref{eq:K2_discrete00}), 
note that 
\begin{align*}
	|\zeta^n_j| \leq C\sum_{i=1}^{n}  K_{{b}}^{2}\left(t_{j-1}-t_{i}\right)\bar{K}\left(t_i\right)^{-1}&\leq \frac{C}{\delta}
	\frac{1}{b_n\Delta_{n}}\int_{-\infty}^{\infty} K^{2}\left(u\right)du.
\end{align*} 
Therefore, 
\begin{equation}\label{eq:A.62}
 \sqrt{\Delta_n}O\left(\Delta_n^{5/2}\right)\sum_{j=1}^n |\zeta^n_{j}|  \leq \sqrt{\Delta_n}O\left(\Delta_n^{5/2}\right)\frac{1}{b_n\Delta_n^2}\leq O_p\left( \frac{\Delta_n}{b_n}\right)=o_p(1).
\end{equation} 
Finally, again using $|\zeta_{j}^n|=O_p(b_n^{-1}\Delta_n^{-1})$ and (\ref{eq:bounds_alpha}), we can easily show that 
\begin{align}\label{ScndMntH}
	\Delta_n\sum_{j=1}^n \mathbb{E}\left[ (\zeta^n_{j} \alpha_{j}^{n,pq}\alpha_{j}^{n,uv})^2 \big| \mathcal{F}_{j-1}  \right]&\leq {C}
	\Delta_n|\zeta^n_{j}|^2\sum_{j=1}^n \mathbb{E}\left[ \|\alpha_{j}^{n}\|^4 \big| \mathcal{F}_{j-1}  \right]=O_p\left(\frac{\Delta_n^5}{b^2 \Delta_n^3}\right)=o_p(1).
\end{align}
Combining with (\ref{eq:L_M_v4}), (\ref{eq:RandM}), (\ref{eq:leftP1_V4}), (\ref{eq:rightP1_V4}), and (\ref{ScndMntH}), we conclude that
\begin{equation}\label{eq:A.64}
    S_{n,1}=\sum_{i=1}^n \frac{\sqrt{\Delta_n}}{2}\partial_{pq,uv}g(c_i^n)\bar{K}\left(t_i\right)^{-1} \xi^n_{1,i}
    \stackrel{n\to\infty}{\longrightarrow}  \frac{1}{2\theta}  \int_0^{T}   \partial_{pq,uv}g(c_{s}) \check{c}_s^{pq,uv} ds
    \int_{-\infty}^{\infty} K^2\left(u  \right) du.
\end{equation}

For ${r=2}$ in (\ref{eq:part5_decomposition}), first note that, by Lemma \ref{K_discrete},
\begin{align}\label{RFEANa}\nonumber
    &\sum_{j=1}^n K_b\left(t_{j-1}-t_i\right)\left(c_{j-1}^{n, p q}-c_i^{n, p q}\right) \Delta_n-\int_0^T K_b\left(s-t_i\right)\left(c_s^{n, p q}-c_i^{n, p q}\right) d s\\
    &=\frac{1}{2}\left(c_{i-1}^{n,pq}-c_i^{n, pq}\right){\DG \Delta K}\frac{\Delta_n}{b}+o\left(\frac{\Delta_n}{b}\right) + O_p\left(\Delta_n^{1 / 2}\right),
\end{align}
{\DG where $\Delta K:=\left(K\left(A^{+}\right)-K\left(B^{-}\right)\right)$}.
Thus,
\begin{align}
\nonumber
  S_{n,2}
\nonumber
    &=\sum_{i=1}^n \frac{\Delta_n^{\frac{5}{2}}}{2} \partial^2_{p q, u v} g\left(c_i^n\right) \bar{K}\left(t_i\right)^{-1}\sum_{j=1}^n K_b\left(t_{j-1}-t_i\right)\left(c_{j-1}^{n, p q}-c_i^{n, p q}\right)\sum_{j=1}^n K_b\left(t_{j-1}-t_i\right)\left(c_{j-1}^{n, uv}-c_i^{n, uv}\right)\\
    \nonumber
    &=\sum_{i=1}^n \frac{\sqrt{\Delta_n}}{2} \partial^2_{p q, u v} g\left(c_i^n\right) \bar{K}\left(t_i\right)^{-1}\\
    \nonumber
    &\qquad\times\Bigg[\int_0^T K_b\left(s-t_i\right)\left(c_s^{{p q}}-c_i^{n, p q}\right) d s+\frac{1}{2}\left(c_{i-1}^{n,pq}-c_i^{n, pq}\right){\DG \Delta K}\frac{\Delta_n}{b}+o\left(\frac{\Delta_n}{b}\right) + O_p\left(\Delta_n^{1 / 2}\right)\Bigg]\\
    \nonumber
    &\quad\times\left[\int_0^T K_b\left(s-t_i\right)\left(c_s^{{uv}}-c_i^{n, uv}\right) d s+\frac{1}{2}\left(c_{i-1}^{n,uv}-c_i^{n, uv}\right){\DG \Delta K} \frac{\Delta_n}{b}+o\left(\frac{\Delta_n}{b}\right) + O_p\left(\Delta_n^{1 / 2}\right)\right]\\
    \nonumber
    &= \sum_{i = 1}^n \frac{\sqrt{\Delta_n}}{2} \partial_{pq,uv}g(c_i^n)\bar{K}\left(t_i\right)^{-1} \left(\int_{0}^T K_{{b}}\left(t - t_i \right) \left(c_t^{pq} - c_i^{n,pq} \right)dt \right)\left(\int_{0}^T K_{{b}}\left(t - t_i \right) \left(c_t^{uv} - c_i^{n,uv} \right)dt \right) \\
    \nonumber
    & + \sum_{i=1}^n \Delta_n \int_0^T K_b\left(t-t_i\right)\left(c_t^{p q}-c_i^{n, p q}\right) d t \frac{\sqrt{\Delta_n}}{4b} \partial^2_{p q, u v} g\left(c_i^n\right) \bar{K}\left(t_i\right)^{-1}\left(c_{i-1}^{n, u v}-c_i^{n, u v}\right)
    {\DG \Delta K} \\
    \nonumber
     & + \sum_{i=1}^n \Delta_n \int_0^T K_b\left(t-t_i\right)\left(c_t^{uv}-c_i^{n, uv}\right) d t \frac{\sqrt{\Delta_n}}{4b} \partial^2_{p q, u v} g\left(c_i^n\right) \bar{K}\left(t_i\right)^{-1}\left(c_{i-1}^{n, pq}-c_i^{n, pq}\right){\DG \Delta K}\\
    \nonumber
    &+o_{p}(1)\\
    \label{RFEAN}
&=: P_1 + P_2+P_3+ o_p(1),
\end{align}
where
\begin{equation}\label{p2p3}
\begin{split}
    P_2&=O\left(\left|\sum_{i = 1}^n \Delta_n\int_{0}^T K_{{b}}\left(t - t_i \right) \left(c_t^{pq} - c_i^{n,pq} \right)dt  \right|\right),\\
    P_3&=O\left(\left|\sum_{i = 1}^n \Delta_n\int_{0}^T K_{{b}}\left(t - t_i \right) \left(c_t^{uv} - c_i^{n,uv} \right)dt  \right|\right).
\end{split}
\end{equation}
The above estimate follows from the boundedness of $K$, (\ref{eq:g_prime_bound}), (\ref{eq:localization}), (\ref{lowerboundofdelta}), and the convergence of $\frac{\sqrt{\Delta_n}}{b}$, which is $\frac{1}{\theta}$ according to (\ref{eq: k_n}). By integration by parts, 
\begin{align} \nonumber
& \int_{0}^T K_{{b}}\left(t - t_i \right) \left(c_t^{pq} - c_i^{n,pq} \right)dt\\
\nonumber
&\quad= -L\left(\frac{T-t_i}{{b}}\right)\left(c_T^{pq} - c_{t_i}^{pq} \right) + L\left(\frac{-t_i}{{b}}\right)\left(c_0^{pq} - c_{t_i}^{pq} \right) + \int_0^T L\left(\frac{t-t_i}{{b}} \right)\tilde{\sigma}^{pq}_t dW_t + o_p({b}^{\frac{1}{2}})\\
&\quad=:  R^{pq}_{1,i} + R^{pq}_{2,i} +\int_0^T L\left(\frac{t-t_i}{b} \right)\tilde{\sigma}^{pq}_t dW_t+ o_p({b}^{\frac{1}{2}}).
\label{eq:int_by_parts_decomp}
\end{align}
Now we consider each term in $P_1$. By (\ref{eq:X_c_bound}),
\begin{equation} \label{eq:R1}
\begin{split}
&\mathbb{E}\left|\sum_{i=1}^n \frac{\sqrt{\Delta_n}}{2} \partial_{pq,uv}g(c_i^n)\bar{K}\left(t_i\right)^{-1} R_{1,i}^{pq} R_{1,i}^{uv}  \right|\leq  \frac{C}{\delta}\sum_{i=1}^n \sqrt{\Delta_n}\sqrt{\mathbb{E}[(R_{1,i}^{pq})^2]\mathbb{E}[(R_{1,i}^{uv})^2]} \\
&\quad \leq \frac{C}{\delta}\sum_{i=1}^n \sqrt{\Delta_n}L\left(\frac{T-t_i}{b}\right)^2\left(T - {t_i} \right)\leq  \frac{C}{\delta}\frac{{b}^{2}}{\sqrt{\Delta_n}}\int_0^{\infty}L(u)^2 udu \stackrel{n\to\infty}{\longrightarrow}  0,
\end{split}
\end{equation}
where above we used that $\int_0^{\infty} L(u)^2 u du < \infty$, which can be deduced from the condition $K(x)x^{2+\epsilon} \to 0$, as $|x| \to \infty$, for $\epsilon>1/2$.
In the same way, we can show that 
\begin{equation} \label{eq:R2}
\mathbb{E}\left|\sum_{i=1}^n \frac{\sqrt{\Delta_n}}{2} \partial_{pq,uv}g(c_i^n)\bar{K}\left(t_i\right)^{-1} R_{2,i}^{pq} R_{2,i}^{uv}  \right| 
\leq 
\frac{C}{\delta}\frac{{b}^{2}}{\sqrt{\Delta_n}}\int_{-\infty}^0 L(u)^2 |u|du \stackrel{n\to\infty}{\longrightarrow}  0.
\end{equation}
For the mixed term $|R_{1,i}^{pq}R_{2,i}^{uv}|$, we simply upper bound it by $2^{-1}(|R_{1,i}^{pq}|^2+|R_{2,i}^{uv}|^2)$ and proceed as above. 
{For the term of $P_1$ corresponding to the third term in \eqref{eq:int_by_parts_decomp}, note that,}  using that $\tilde{\sigma}$ and $c$ are bounded, (\ref{eq:g_prime_bound}), and BDG inequality, for  $t_i <\sqrt{b} $,
\begin{equation*}
\begin{split}
&\mathbb{E}\left|\sum_{t_i < \sqrt{b}} \frac{\sqrt{\Delta_n}}{2} \partial_{pq,uv}g(c_i^n)\bar{K}\left(t_i\right)^{-1}  \left(\int_0^T L\left(\frac{t-t_i}{b} \right)\tilde{\sigma}^{pq}_t dW_t\right)\left(\int_0^T L\left(\frac{t-t_i}{b} \right)\tilde{\sigma}^{uv}_t dW_t\right)\right|\\  
&\leq \frac{C}{\delta} \sqrt{\Delta_n}\sum_{t_i < \sqrt{b}} \sqrt{\mathbb{E}\left(\int_0^T L\left(\frac{t-t_i}{b} \right)\tilde{\sigma}^{pq}_t d W_t \right)^2\mathbb{E}\left(\int_0^T L\left(\frac{t-t_i}{b} \right)\tilde{\sigma}^{pq}_t d W_t \right)^2}\\
& \leq \frac{C}{\delta} \sum_{t_i < \sqrt{b}} \sqrt{\Delta_n}\int_0^T L\left(\frac{t-t_i}{b} \right)^2 dt\\
&\leq  \frac{C}{\delta} \sum_{t_i < \sqrt{b}} \sqrt{\Delta_n}\int L\left(u\right)^2 du b\leq \frac{C}{\delta}\frac{b\sqrt{\Delta_n}\sqrt{b}}{\Delta_n}\leq\frac{C}{\delta}\sqrt{b} \to 0.
\end{split}
\end{equation*}
For $t_i > \sqrt{b}$, similar to the proof of Theorem 6.2 in \cite{FigLi}, $\int_0^T L\left(\frac{t-t_i}{b} \right)\tilde{\sigma}^{pq}_t dW_t=\sum_{r=1}^{d}\tilde{\sigma}^{pq,r}_{t_i - \sqrt{b}}\int_{t_i - \sqrt{b}}^T L\left(\frac{t-t_i}{b} \right) dW^{r}_t + o_p({b}^{\frac{1}{2}})$ and, thus, for any $r_1,r_2\in\{1,\dots,d\}$, it suffices to consider the convergence of  
\begin{align} \nonumber
&\sum_{i:t_i > \sqrt{b}} \frac{\sqrt{\Delta_n}}{2} \partial_{pq,uv}g(c_i^n)\bar{K}\left(t_i\right)^{-1}\tilde{\sigma}^{pq,r_1}_{t_i - \sqrt{b}}\,\tilde{\sigma}^{uv,r_2}_{t_i - \sqrt{b}}\int_{t_i - \sqrt{b}}^T L\left(\frac{t-t_i}{b} \right) dW^{r_1}_t
\int_{t_i - \sqrt{b}}^T L\left(\frac{t-t_i}{b} \right) dW^{r_2}_t\\  
&\quad =:
\sum_{i:t_i > \sqrt{b}} \zeta^{n,r_1,r_2}_i+o_{p}(1),
\label{eq:main_p2_v4}
\end{align}
where 
\[
	\zeta^{n,r_1,r_2}_i:=\frac{\sqrt{\Delta_n}}{2}  \partial_{pq,uv}g(c_{t_i-\sqrt{b}})\bar{K}\left(t_i\right)^{-1}\,\tilde{\sigma}^{pq,r_1}_{t_i - \sqrt{b}}\tilde{\sigma}^{uv,r_2}_{t_i - \sqrt{b}}\int_{t_i - \sqrt{b}}^T L\left(\frac{t-t_i}{b} \right) dW^{r_1}_t
\int_{t_i - \sqrt{b}}^T L\left(\frac{t-t_i}{b} \right) dW^{r_2}_t.
\]
For simplicity, we write $\zeta^{n}_i$ instead of $\zeta^{n,r_1,r_2}_i$. For the convergence of (\ref{eq:main_p2_v4}), note that $\mathbb{E}\left[\left.\zeta^n_i\right| \mathcal{F}_{t_i - \sqrt{b}}\right]=0$ {if}  $r_1\neq r_2$, {but, when $r_2=r_1$,} 
\begin{equation*}
\begin{split}
&\sum_{t_i > \sqrt{b}} \mathbb{E}\left[\left.\zeta^n_i\right| \mathcal{F}_{t_i - \sqrt{b}}\right]= \sum_{t_i > \sqrt{b}}
\frac{\sqrt{\Delta_n}}{2}  \partial_{pq,uv}g(c_{t_i-\sqrt{b}})\bar{K}\left(t_i\right)^{-1}\,\tilde{\sigma}^{pq,r_1}_{t_i - \sqrt{b}}\tilde{\sigma}^{uv,r_1}_{t_i - \sqrt{b}} \int_{t_i-\sqrt{b}}^{T} L\left(\frac{t-t_i}{b} \right)^2 dt \\
 &= \sum_{t_i > \sqrt{b}} \frac{\sqrt{\Delta_n}}{2}  \partial_{pq,uv}g(c_{t_i-\sqrt{b}})\bar{K}\left(t_i\right)^{-1}\,\tilde{\sigma}^{pq,r_1}_{t_i - \sqrt{b}}\tilde{\sigma}^{uv,r_1}_{t_i - \sqrt{b}} \int_{-\frac{1}{\sqrt{b}}}^{\frac{T-t_i}{{b}}} L\left(u\right)^2 du{b} \\
 &= \int_0^{T-\sqrt{b}}\frac{\theta}{2}\partial_{pq,uv}g(c_{s})\left(\int_{-\frac{s+\sqrt{b}}{{b}}}^{\frac{T-s-\sqrt{b}}{{b}}}K(v)dv\right)^{-1}\,\tilde{\sigma}^{pq,r_1}_{s}\tilde{\sigma}^{uv,r_1}_{s} \left(\int_{-\frac{1}{\sqrt{b}}}^{\frac{T-s-\sqrt{b}}{{b}}} L\left(u\right)^2 du\right) ds + o_p(1)\\
& \stackrel{n\to\infty}{\longrightarrow}\, \frac{\theta}{2}\int_0^{T}\partial_{pq,uv}g(c_{s})\,\tilde{\sigma}^{pq,r_1}_{s}\tilde{\sigma}^{uv,r_1}_{s} ds\left(\int_{-\infty}^{\infty} L\left(u\right)^2 du\right),
\end{split}
\end{equation*}
and
\begin{equation*}
\begin{split}
\mathbb{E} \left( \sum_{t_i > \sqrt{b}} \left(\zeta^n_i- \mathbb{E}\left[\zeta^n_i\mid \mathcal{F}_{t_i - \sqrt{b}}\right]  \right)\right)^2&=  \sum_{t_i > \sqrt{b}} \mathbb{E} \left(\zeta^n_i - \mathbb{E}\left[\zeta^n_i \mid \mathcal{F}_{t_i - \sqrt{b}}\right] \right)^2\\
 &\quad  \leq C\sum_{t_i > \sqrt{b}} \mathbb{E} \left[\mathbb{E}\left[\left(\zeta^n_i\right)^2 \mid \mathcal{F}_{t_i - \sqrt{b}}\right] \right]\\
&\quad \leq C \sum_{t_i > \sqrt{b}}\Delta_n \left(\int_{t_i - \sqrt{b}}^T L\left(\frac{t-t_i}{b} \right)^2 dt\right)^2\\
&\quad \leq  C \sum_{t_i > \sqrt{b}}\Delta_n \left(\int_{-\infty}^{\infty} L\left(u\right)^2 du\right)^2 {b}^{2}\stackrel{n\to\infty}{\longrightarrow}  0.
\end{split}
\end{equation*}
Therefore, we obtain that, as $n\to\infty$, 
\begin{equation}
\sum_{t_i > \sqrt{b}}\zeta^{n,r_1,r_2}_i\stackrel{\mathbb{P}}{\longrightarrow}\frac{\theta}{2}\int_0^{T}\partial_{pq,uv}g(c_{s})\,\tilde{\sigma}^{pq,r_1}_{s}\tilde{\sigma}^{uv,r_1}_{s} ds\left(\int_{-\infty}^{\infty} L\left(u\right)^2 du\right) {\bf 1}_{r_1=r_2}.
\end{equation} 
Therefore, combining the previous inequalities, we obtain \begin{equation}
P_1\stackrel{\mathbb{P}}{\longrightarrow}\,\frac{\theta}{2}\sum_{r=1}^{d}\int_0^{T}\partial_{pq,uv}g(c_{s})\,\tilde{\sigma}^{pq,r}_{s}\tilde{\sigma}^{uv,r}_{s} ds\left(\int_{-\infty}^{\infty} L\left(u\right)^2 du\right).
\end{equation}
As for $P_2$ and $P_3$, by (\ref{p2p3}) and (\ref{eq:int_by_parts_decomp}), since
\begin{align*}
    \mathbb{E}\left|\sum_{i=1}^n \Delta_n R_{1, i}^{p q}\right| &\leq \sum_{i=1}^n \Delta_n \mathbb{E}\left|L\left(\frac{T-t_i}{b}\right)\left(c_T^{p q}-c_{t_i}^{p q}\right)\right| \leq \sum_{i=1}^n \Delta_n\left|L\left(\frac{T-t_i}{b}\right)\right|\left(T-t_i\right)^{1/2}\\
    &=\int_0^{\infty}|L(u)| b \sqrt{b u} d u=b^{3 / 2} \int_0^{\infty}|L (u)| \sqrt{u} d u \rightarrow 0,
\end{align*}
and
\begin{align*}
    &\mathbb{E}\left|\sum_{i=1}^n \Delta_n \int_0^T L\left(\frac{t-t_i}{b}\right) \tilde{\sigma}_t^{p q} d W_t\right| \leq \sum_{i=1}^n \Delta_n \sqrt{\mathbb{E}\left(\int_0^T L\left(\frac{t-t_i}{b}\right) \tilde{\sigma}_t^{p q} d W_t\right)^2} \\
    &\leq \sum_{i=1}^n \Delta_n \sqrt{\int_0^T L\left(\frac{t-t_i}{b}\right)^2 d t}\leq \sum_{i=1}^n \Delta_n \sqrt{\int_{-\infty}^{\infty} L(u)^2 d u b} \rightarrow 0,
\end{align*}
we have $P_2 \to 0$. Along the same steps, $P_3 \to 0$ as well. Thus, as $n\to\infty$,
\begin{equation}\label{eq:A.73}
S_{n,2}\stackrel{\mathbb{P}}{\longrightarrow}\,\frac{\theta}{2}\sum_{r=1}^{d}\int_0^{T}\partial_{pq,uv}g(c_{s})\,\tilde{\sigma}^{pq,r}_{s}\tilde{\sigma}^{uv,r}_{s} ds\left(\int_{-\infty}^{\infty} L\left(u\right)^2 du\right).
\end{equation}

For {$r = 3$} in (\ref{eq:part5_decomposition}), we consider the decomposition 
\begin{equation}\label{TrmRLCrs}
\begin{split}
  S_{n,3}
   & =\sum_{i=1}^n \frac{\sqrt{\Delta_n}}{2}\partial_{pq,uv}g(c_i^n)\bar{K}\left(t_i\right)^{-1}\sum_{j=i+1}^n\sum_{l=j+1}^n K_{{b}}\left(t_{j-1} - t_{i} \right)K_{{b}}\left(t_{l-1} - t_{i} \right)\alpha^{n,uv}_j \alpha^{n,pq}_l\\
   & \quad+  \sum_{i=1}^n \frac{\sqrt{\Delta_n}}{2}\partial_{pq,uv}g(c_i^n)\bar{K}\left(t_i\right)^{-1}\sum_{j=1}^{i}\sum_{l=j+1}^{i} K_{{b}}\left(t_{j-1} - t_{i} \right)K_{{b}}\left(t_{l-1} - t_{i} \right)\alpha^{n,uv}_j \alpha^{n,pq}_l\\
   & \quad+ \sum_{i=1}^n \frac{\sqrt{\Delta_n}}{2}\partial_{pq,uv}g(c_i^n)\bar{K}\left(t_i\right)^{-1}\sum_{j=1}^{i}\sum_{l=i+1}^{n} K_{{b}}\left(t_{j-1} - t_{i} \right)K_{{b}}\left(t_{l-1} - t_{i} \right)\alpha^{n,uv}_j \alpha^{n,pq}_l\\
&  =:  R + L + Cross.
\end{split}
\end{equation}
We can write 
\begin{equation*}
\begin{split}
    R  =  \sum_{l=3}^n\alpha^{n,pq}_l\sum_{j = 2}^{l-1}\sum_{i=1}^{j-1} \frac{\sqrt{\Delta_n}}{2}\partial_{pq,uv}g(c_i^n)\bar{K}\left(t_i\right)^{-1}K_{{b}}\left(t_{j-1} - t_{i} \right)K_{{b}}\left(t_{l-1} - t_{i} \right)\alpha^{n,uv}_j=:  \sum_{l= 3}^n \alpha^{n,pq}_l \zeta^n_l,
\end{split}
\end{equation*}
which is now an adapted triangular array where $ \zeta^n_l$ is $\mathcal{F}^n_{l-1} $ measurable. 
Then, by (\ref{eq:bounds_alpha}), (\ref{eq:better_bound_alpha}), and (\ref{eq:AN}),
\begin{equation*}
\begin{split}
& \mathbb{E}\left|   \sum_{l=3}^n \mathbb{E}\left[\alpha^{n,pq}_l \zeta^{n}_l\mid \mathcal{F}^n_{l-1} \right] \right| \leq C\sum_{l=3}^n  \mathbb{E}\left[|\zeta^n_l|  \Delta_n^{3/2}\left(\sqrt{\Delta_n} + \eta^n_{l-1}\right)\right]\\
  &   \leq \frac{C}{\delta} \sum_{l=3}^n \Delta_n^{3/2}\sum_{j = 2}^{l-1}\sum_{i=1}^{j-1}\sqrt{\Delta_n}\left|K_{{b}}\left(t_{j-1} - t_{i} \right)K_{{b}}\left(t_{l-1} - t_{i} \right)\right|\sqrt{\mathbb{E}\left(\alpha^{n,uv}_j\right)^2 \mathbb{E}\left(\eta^n_{l-1}\right)^2 }\\
 &    \leq \frac{C}{\delta} \Delta_n \sum_{l=3}^n\sqrt{\mathbb{E}\left(\eta^n_{l-1}\right)^2} \left(\int_0^{\infty}|K(x)|dx\right)^2\\
 &   \leq  \frac{C}{\delta} \sqrt{\sum_{l=3}^n\Delta_n\mathbb{E}\left(\eta^n_{l-1}\right)^2 } \left(\int_0^{\infty}|K(x)|dx\right)^2\stackrel{n\to\infty}{\longrightarrow}0.
\end{split}
\end{equation*}
We also have, for $j < j'$, 
\begin{equation}\label{alphajjprime}
\begin{split}
\left|\mathbb{E}\left(\alpha^{n,uv}_j\alpha^{n,uv}_{j'}\right)\right| &= \left|\mathbb{E}\left[ \alpha^{n,uv}_j\mathbb{E}\left(\alpha^{n,uv}_{j'}\mid \mathcal{F}^n_{j'-1}\right)\right] \right|
\leq  \sqrt{\mathbb{E}\left(\alpha^{n,uv}_j\right)^2\mathbb{E}\left( \mathbb{E}\left(\alpha^{n,uv}_{j'}\mid \mathcal{F}^n_{j'-1}\right)^2\right)}\\
 &\leq C \sqrt{\Delta_n^2 \Delta_n^3\mathbb{E}\left(\sqrt{\Delta_n} + \eta^n_{j'-1} \right)^2}\leq  C  \Delta_n^{5/2}.
\end{split}
\end{equation} 
Using this inequality and (\ref{eq:bounds_alpha}), we obtain \begin{equation} \label{eq:alpha_cross_R}
    \begin{split}
& \mathbb{E}\left|   \sum_{l=3}^n \mathbb{E}\left[\left(\alpha^{n,pq}_l \zeta^n_l\right)^2\mid \mathcal{F}^n_{l-1} \right] \right|
\leq   C\Delta_n^2  \sum_{l=3}^n  \mathbb{E}[\left(\zeta^n_l\right)^2]\\
&\leq  C\Delta_n^2  \sum_{l=3}^n \sum_{j=2}^{l-1}\left (\sum_{i=1}^{j-1}\sqrt{\Delta_n}\left|K_{{b}}\left(t_{j-1} - t_{i} \right)K_{{b}}\left(t_{l-1} - t_{i} \right)\right|\right)^2 \mathbb{E}\left[\left(\alpha^{n,uv}_j\right)^2\right] \\
&\quad + C\Delta_n^2  \sum_{l=3}^n \left (\sum_{j=2}^{l-1}\sum_{i=1}^{j-1}\sqrt{\Delta_n}\left|K_{{b}}\left(t_{j-1} - t_{i} \right)K_{{b}}\left(t_{l-1} - t_{i} \right)\right|\right)^2\max_{j,j': j\neq j'} \left| \mathbb{E}\left(\alpha^{n,uv}_j\alpha^{n,uv}_{j'}\right)\right|\\
 & \leq C\Delta_n b\sum_{l=3}^n \int_0^{\infty}\left( \int_0^{\infty}\left|K(u) K(v+u)  \right|du\right)^2 dv\\
& \quad + C \sum_{l=3}^n\Delta_n^{3/2}\left( \int_0^{\infty} \int_0^{\infty}\left|K(u) K(v+u)\right|dudv  \right)^2\stackrel{n\to\infty}{\longrightarrow}0,
\end{split}
\end{equation}
where the finiteness of the integrals above can be shown using the condition $K(x)x^{2+\epsilon} \to 0$,  for $\epsilon> 1/2$ , as $|x| \to \infty$ and boundedness of $K$. 
Hence, by Lemma 2.2.12 in \cite{JacodProtter}, $R=  \sum_{l= 2}^n \alpha^{n,pq}_l \zeta^n_l \stackrel{\mathbb{P}}{\longrightarrow} 0$.
Similarly, we can write the term $L$ in (\ref{TrmRLCrs}) as
\begin{equation*}
\begin{split}
L &=  \sum_{j = 1}^{n}\sum_{l=j+1}^{n} \sum_{i = l}^{n}\sqrt{\Delta_n}  \partial^2_{pq,uv}g(c_i^n)\bar{K}\left(t_i\right)^{-1}K_{{b}}\left(t_{j-1} - t_{i} \right)K_{{b}}\left(t_{l-1} - t_{i} \right)\alpha^{n,uv}_j \alpha^{n,pq}_l =: \sum_{j=1}^{n} \zeta^n_j.
\end{split}
\end{equation*} 
Note that (see proof in Appendix \ref{Technical2_1})
\begin{equation} \label{eq:alphapartialg}
\left|\mathbb{E}\left[\left.\alpha_l^{n,pq} \partial_{p q, u v} g\left(c_i^n\right) \right| \mathcal{F}_{l-1}^n\right]\right| \leq C \Delta_n \eta_{l-1, i-l+1}^n\left(t_i-t_{l-1}\right), \text{ for } i \geq l,
\end{equation} 
with $\eta_{l-1, i-l+1}^n=\max \left(\eta(Y)_{l-1, i-l+1}^n , Y=\mu, \tilde{\mu}, c, \tilde{c}, \hat{c}\right).$
Now consider  
\begin{equation} \label{eq:eta_sum}
\begin{split}
&\mathbb{E}\left|\sum_{j=1}^n \mathbb{E}\left[\left.\zeta_j^n \right| \mathcal{F}_{j-1}^n\right]\right|\\
    &=\mathbb{E}\left|\sum_{j=1}^n \sum_{l=j+1}^n \sum_{i=l}^n \sqrt{\Delta_n} {\DG \bar{K}\left(t_i\right)^{-1}} K_b\left(t_{j-1}-t_i\right) K_b\left(t_{l-1}-t_i\right) \mathbb{E}\left[\left.\alpha_j^{n, u v} \mathbb{E}\left[\left.\alpha_l^{n, p q} \partial_{p q, u v} g\left(c_i^n\right) \right| \mathcal{F}_{l-1}^n\right] \right| \mathcal{F}_{j-1}^n\right]\right|\\
    &\leq C \sum_{i=1}^n \sum_{j=1}^{l-1} \sum_{l=2}^i \sqrt{\Delta_n}\left|K_b\left(t_{j-1}-t_i\right) K_b\left(t_{l-1}-t_i\right)\right|\mathbb{E}\left|\alpha_j^{n, n v} \Delta_n \eta_{l-1, i-l+1}^n\left(t_i-t_{l-1}\right)\right|\\
    &\leq C \sum_{i=1}^n \sum_{j=1}^{l-1} \sum_{l=2}^i \sqrt{\Delta_n}\left|K_b\left(t_{j-1}-t_i\right) K_b\left(t_{l-1}-t_i\right)\right|\Delta_n\left(t_i-t_{l-1}\right)\sqrt{\mathbb{E}\left[\left\|\alpha_j^n\right\|^2\right] \mathbb{E}\left[\left(\eta_{l-1, i-l+1}^n\right)^2\right]}\\
    &\leq C \sqrt{\Delta_n}\sum_{i=1}^n \Delta_n\sum_{j=1}^{l-1}\left|K_b\left(t_{j-1}-t_i\right)\right| \Delta_n\sum_{l=2}^i\left|K_b\left(t_{l-1}-t_i\right)\right|\left(t_i-t_{l-1}\right)\sqrt{\mathbb{E}\left[\left(\eta_{l-1, i-l+1}^n\right)^2\right]}\\
    &\leq C \sqrt{\Delta_n}\sum_{l=1}^n \Delta_n\left(\sum_{i=l}^{\left(l+\sqrt{b}/\Delta_n-1\right)\wedge n}+\sum_{i=\left(l+\sqrt{b}/\Delta_n\right)\wedge n}^{n}\right)\left|K_b\left(t_{l-1}-t_i\right)\right|\left(t_i-t_{l-1}\right)\sqrt{\mathbb{E}\left[\left(\eta_{l-1, i-l+1}^n\right)^2\right]}\\
    &\leq Cb\sqrt{\Delta_n}\sum_{l=1}^n\left\{\int_{-\frac{1}{\sqrt{b}}}^0\left|K(u)\right|(-u)du\sqrt{\mathbb{E}\left[\left(\eta_{l-1, \frac{\sqrt{b}}{\Delta_n}}^n\right)^2\right]}+\int_{\left(\frac{t_{l-1}-T}{b}\right)\wedge -\frac{1}{\sqrt{b}}}^{-\frac{1}{\sqrt{b}}}\left|K(u)\right|(-u)du\right\},
\end{split}
\end{equation} 
where 
$$
\int_{\left(\frac{t_{l-1}-T}{b}\right)\wedge -\frac{1}{\sqrt{b}}}^{-\frac{1}{\sqrt{b}}}\left|K(u)\right|(-u)du\to 0,\quad \int_{-\frac{1}{\sqrt{b}}}^0\left|K(u)\right|(-u)du<\infty.
$$
Observe that, from the notation in (\ref{eq:eta}), $$
\mathbb{E}\left[\left(\eta_{l-1, \frac{\sqrt{b}}{\Delta_n}}^n\right)^2\right]=\mathbb{E}\left[\left(\eta_{t_{l-1}, \sqrt{b}}\right)^2\right]\to 0,
$$
when $\sqrt{b}\to 0$, because $\eta_{t_{l-1}, \sqrt{b}}\leq C $ and the c\`adl\`ag property of $c, \tilde{\mu}, \tilde{c}$ implies that $\eta_{s,\sqrt{b}}\to 0 $ for all $s$ except for countably many positive values of $s$ (see also \cite{jacod2013quarticity}, proof of Lemma 4.2). {\DG Plugging} back in (\ref{eq:eta_sum}), we obtain 
\begin{equation}\label{eq:L4p3_mean}
\mathbb{E}\left|\sum_{j= 1}^n \mathbb{E}\left[\zeta^n_j\mid \mathcal{F}^n_{j-1}\right] \right|\to 0.
\end{equation}
{\DG Next, note that 
\[
	\mathbb{E}\left(\sum_{j= 1}^n \left( \zeta^n_j- \mathbb{E}\left[\zeta^n_j\mid \mathcal{F}^n_{j-1}\right] \right)  \right)^2= \sum_{j = 1}^n\mathbb{E}\left( \zeta^n_j - \mathbb{E}\left[\zeta^n_j\mid \mathcal{F}^n_{j-1}\right] \right)^2\leq \sum_{j= 1}^n\mathbb{E}\left[\left( \zeta^n_j \right)^2\right],
\]
and, therefore,}
\begin{equation} \label{eq:L4p3_2moment}
\resizebox{1.0\hsize}{!}{$
\begin{aligned}
&\mathbb{E}\left(\sum_{j= 1}^n \left( \zeta^n_j- \mathbb{E}\left[\zeta^n_j\mid \mathcal{F}^n_{j-1}\right] \right)  \right)^2\\
&\leq  \sum_{j = 1}^n\mathbb{E}\left[\left(\alpha^{n,uv}_j\right)^2\left(\sum_{l = j+1}^{n} \sum_{i = l}^{n}\sqrt{\Delta_n} \bar{K}\left(t_i\right)^{-1}  K_{{ b}}\left(t_{j-1} - t_{i} \right)K_{{ b}}\left(t_{l-1} - t_{i} \right)\alpha^{n,pq}_l  \partial^2_{pq,uv}g\left(c^n_i\right) \right)^2\right]\\
&\leq {\sum_{j = 1}^n\mathbb{E}\Bigg[\left( \alpha^{n,uv}_j\right)^2\sum_{l = j+1}^{n}   \left(\sum_{i = l}^{n}\sqrt{\Delta_n}\bar{K}\left(t_i\right)^{-1}  K_{{ b}}\left(t_{j-1} - t_{i} \right)K_{{ b}}\left(t_{l-1} - t_{i} \right)\partial^2_{pq,uv}g\left(c^n_i\right) \right)^2 \left(\alpha^{n,pq}_l\right)^2}\\
 &\quad+ (\alpha^{n,uv}_j)^2\sum_{\substack{l,l' = j+1 \\ l\neq l'}}^{n} \left( \sum_{i = l}^{n}\sqrt{\Delta_n} \bar{K}\left(t_i\right)^{-1}  K_{{ b}}\left(t_{j-1} - t_{i} \right)K_{{ b}}\left(t_{l-1} - t_{i} \right)\partial^2_{pq,uv}g\left(c^n_i\right)\right)   \\
&\quad\quad\quad\quad\quad\times\left( \sum_{i' = l'}^{n}\sqrt{\Delta_n} \bar{K}\left(t_i\right)^{-1}  K_{{ b}}\left(t_{j-1} - t_{i'} \right)K_{{ b}}\left(t_{l'-1} - t_{i'} \right) \partial^2_{pq,uv}g\left(c^n_{i'}\right)  \right)\alpha^{n,pq}_l \alpha^{n,pq}_{l'} \Bigg]\\
&\leq  C\sum_{j= 1}^n\Delta_n^4\sum_{l = j+1}^{n}   \left(\sum_{i = l}^{n}\sqrt{\Delta_n} \left|K_{{ b}}\left(t_{j-1} - t_{i} \right)K_{{ b}}\left(t_{l-1} - t_{i} \right) \right|\right)^2\\
& \quad+ C\sum_{j = 1}^n\Delta_n^5\sum_{\substack{l,l' = j+1 \\ l\neq l'}}^{n} \left( \sum_{i = l}^{n}K_{{ b}}\left(t_{j-1} - t_{i} \right)K_{{ b}}\left(t_{l-1} - t_{i} \right)   \right)\\
&\quad\quad\quad\quad\quad\times\left| \sum_{i' = l'}^{n}K_{{ b}}\left(t_{j-1} - {\DG t_{i'}} \right)K_{{ b}}\left({\DG t_{l'-1} - t_{i'}} \right)\sqrt{\mathbb{E}\left(\eta^n_{l-1,i'-l'+1}\right)^2} (t_{i'} - t_{l'-1})   \right| \\
&\leq C\Delta_n\int_{-\frac{T}{b}}^{0}{\DG \int_{s}^{0}}\left({\DG \int_{s-v}^{0}}\left|K(v+u)K(u)\right|du\right)^2dvds\\
&+Cb^2\int_{-\frac{T}{b}}^{0}\left\{{\DG \int_{s}^{0}}\left|{\DG \int_{s-v}^0}K(v+u)K(u)du\right|dv\right.\\
&\quad\quad\quad\quad\quad\times\left.{\DG \int_{s}^{0}}\left|{\DG \int_{s-v'}^{0}}K(v'+u')K(u')u'\sqrt{\mathbb{E}\left(\eta_{T+(s-v)b,|u'|b}\right)^2}du'\right|dv'\right\}ds\longrightarrow  0,
\end{aligned}$}
\end{equation}
where in the {\DG third} inequality we use the following estimate valid for $i'\geq i\geq l\geq l'\geq j $, 
\begin{equation}\label{eq:sucessive}
\begin{split}
& \left|\mathbb{E}\left[\left(  \alpha^{n,uv}_j\right)^2 \partial^2_{pq,uv}g\left(c^n_i\right) \partial^2_{pq,uv}g\left(c^n_{i'}\right) \alpha^{n,pq}_l \alpha^{n,pq}_{l'} \right]\right| \\
&\quad\leq  \mathbb{E}\left[\left|\left(  \alpha^{n,uv}_j\right)^2\alpha^{n,pq}_{l'}\right|  \left|\mathbb{E} \left[\partial^2_{pq,uv}g\left(c^n_i\right) \partial^2_{pq,uv}g\left(c^n_{i'}\right) \alpha^{n,pq}_l\mid \mathcal{F}^n_{l-1} \right]\right| \right] \\
 &\quad\leq C\mathbb{E}\left[\left|\left(  \alpha^{n,uv}_j\right)^2\alpha^{n,pq}_{l'}\right|  \Delta_n \eta^n_{l-1,i'-l+1} (t_{i'} - t_{l-1})\right] \\
 &\quad\leq C \Delta_n^4 \sqrt{\mathbb{E}\left(\eta^n_{l-1,i'-l+1}\right)^2} (t_{i'} - t_{l-1})\\
&\quad\leq C \Delta_n^4 \sqrt{\mathbb{E}\left(\eta^n_{l-1,i'-l'+1}\right)^2} (t_{i'} - t_{l'-1}).
\end{split}
\end{equation} 
The second inequality above is shown in Appendix \ref{Technical2_1} and the third inequality in (\ref{eq:sucessive}) follows by (\ref{eq:bounds_alpha}) and Cauchy–Schwarz inequality. Finally, the Cross term of (\ref{TrmRLCrs}) can be analyzed similarly to the case of $L$ in (\ref{TrmRLCrs}). We then conclude 
\begin{equation}\label{eq:snrto0}
    S_{n, r}\to_p 0, \quad \text{for} \ r=3 \ \text{and} \ r=4.
\end{equation}

For $ {r= 5}$, {\DR first note that,} by successive conditioning and It\^o's lemma, we {\DR can {prove} that}
\begin{equation}\label{eq:A.84}
\begin{aligned}
    &\left|\mathbb{E}\left[\left.\alpha^{n,uv}_j\left(c_{l-1}^{n, p q}-c_i^{n, p q} \right)\partial^2_{pq,uv}g(c^n_i) \right| \mathcal{F}^n_{j-1} \right]\right| \\
    &\leq
\begin{cases}
        C\Delta_n^{3/2}\left(\sqrt{\Delta_n} + \eta^n_{j-1} \right)& \text{ if } i,l-1< j,\\
    C\Delta_n^{3/2}\left(\sqrt{\Delta_n} + \eta^n_{j-1} \right)+C\Delta_n\eta^{n}_{j-1,l-j}\left(t_{l-1}-t_{j-1}\right) & \text{ if } i < j \leq l-1,\\
    C \Delta_n^{3 / 2}\left(\sqrt{\Delta_n}+\eta_{j-1}^n\right)+C\Delta_n\eta^{n}_{j-1,i-j+1}\left(t_{i}-t_{j-1}\right) & \text{ if } i \geq j>l-1,\\
    C \Delta_n \eta_{j-1, l-j}^n\left(t_{l-1}-t_{j-1}\right)+C\Delta_n\eta^{n}_{j-1,i-j+1}\left(t_{i}-t_{j-1}\right) & \text{ if } i, l-1 \geq j.
    \end{cases}
\end{aligned}
\end{equation}
The proof of (\ref{eq:A.84}) is also included in Appendix \ref{Technical2_1}. With these inequalities, when $r=5$,
\begin{align*}
    S_{n,5}&=\sum_{i=1}^n \frac{\sqrt{\Delta_n}}{2}\partial_{pq,uv}g(c_i^n)\bar{K}\left(t_i\right)^{-1}\sum_{j=1}^{n} \sum_{l=1}^{n} K_{{b}}\left(t_{j-1}-t_{i}\right) K_{{b}}\left(t_{l-1}-t_{i}\right) \alpha_{j}^{n,uv} \left(c_{l-1}^{n,pq}-c_{i}^{n,pq}\right) \Delta_{n}\\
    &=:\sum_{j=1}^{n}\zeta_{j}^{n}.
\end{align*}
By (\ref{eq:A.84}), we can estimate $M:=\mathbb{E}\left|\sum_{j= 1}^n \mathbb{E}\left[\zeta^n_j\mid \mathcal{F}^n_{j-1}\right] \right|$ as follows:
\begin{equation*}
\begin{aligned}
    M
    &\leq C \Delta_n^{3/2}\sum_{j= 1}^n\sum_{i=1}^{n}\sum_{l=1}^{n}\left|K_{{b}}\left(t_{j-1}-t_{i}\right) K_{{b}}\left(t_{l-1}-t_{i}\right)\right|\Big[\Delta_n^{3/2}\left(\sqrt{\Delta_n}+ \mathbb{E}\left[\eta^n_{j-1}\right] \right)\\
    & \qquad\qquad\qquad\qquad\qquad\qquad+\Delta_n \mathbb{E}\left[\eta_{j-1, \left|l-j\right|}^n\right]\left|t_{l-1}-t_{j-1}\right|+\Delta_n\mathbb{E}\left[\eta^{n}_{j-1,\left|i-j+1\right|}\right]\left|t_{i}-t_{j-1}\right|\Big]\\
    &\leq C \Delta_n^{3/2}\sum_{j= 1}^n\sum_{i=1}^{n} \left|K_{{b}}\left(t_{j-1}-t_{i}\right)\right|\int_{-t_{i}/b}^{(T-t_i)/b}\left|K(u)\right|\Big[\Delta_n+ \mathbb{E}\left[\eta^n_{j-1}\right]\sqrt{\Delta_n}\\
    &\qquad\qquad\qquad\qquad\qquad+ \mathbb{E}\left[\eta_{t_{j-1},\left|t_i-t_{j-1}+ub\right|}\right]\left|t_{i}-t_{j-1}+ub\right|+\mathbb{E}\left[\eta_{t_{j-1},\left|t_i-t_{j-1}\right|}\right]\left|t_{i}-t_{j-1}\right|\Big]du\\
    &\leq C\sqrt{\Delta_n}\sum_{j=1}^{n}\int_{(t_{j-1}-T)/b}^{t_{j-1}/b}\left|K(v)\right|\int_{v-\frac{t_{j-1}}{b}}^{v+\frac{T-t_{j-1}}{b}}\left|K(u)\right|\Big[\Delta_n+ \eta^n_{j-1}\sqrt{\Delta_n}\\
    &\qquad\qquad\qquad\qquad\qquad\qquad\qquad\qquad\qquad\qquad+\mathbb{E}\left[\eta_{t_{j-1},\left|u-v\right|b}\right]\left|u-v\right|b+\mathbb{E}\left[\eta_{t_{j-1},\left|v\right|b}\right]\left|v\right|b\Big]dudv\\
    &\leq C\sqrt{\Delta_n}b\sum_{j=1}^{n}\int \left|K(v)\right|\int \left|K(u)\right|\Bigg[\frac{\Delta_n}{b}+\eta^n_{j-1}\frac{\sqrt{\Delta_n}}{b}+\mathbb{E}\left[\eta_{t_{j-1},\left|u-v\right|b}\right]\left|u-v\right|+\mathbb{E}\left[\eta_{t_{j-1},\left|v\right|b}\right]\left|v\right|\Bigg]dudv\to 0.
\end{aligned}
\end{equation*}
Also, by (\ref{eq:X_c_bound}) and (\ref{eq:bounds_alpha}), and Cauchy-Schwartz inequality,
\begin{equation*}
\begin{aligned}
&\mathbb{E}\left(\sum_{j= 1}^n \left( \zeta^n_j- \mathbb{E}\left[\zeta^n_j\mid \mathcal{F}^n_{j-1}\right] \right)  \right)^2\leq \sum_{j= 1}^n\mathbb{E}\left[\left( \zeta^n_j \right)^2\right]\\
&\leq C\Delta_n^3\sum_{j= 1}^n \mathbb{E}\Bigg[\left(\alpha_{j}^{n,uv}\right)^2\Bigg(\sum_{i=1}^{n}\partial_{p q, u v}^2 g(c_i^n)\bar{K}\left(t_{i}\right)^{-1}K_{{b}}\left(t_{j-1}-t_{i}\right)\sum_{l=1}^{n} K_{{b}}\left(t_{l-1}-t_{i}\right)\left(c_{l-1}^{n,pq}-c_{i}^{n,pq}\right)\Bigg)^2\Bigg]\\
&\leq C\Delta_n^3\sum_{j= 1}^n \sum_{i,i^{\prime}}\sum_{l,l^{\prime}}\left|K_{{b}}\left(t_{j-1}-t_{i}\right)\right|\left| K_{{b}}\left(t_{l-1}-t_{i}\right)\right|\left|K_{{b}}\left(t_{j-1}-t_{i^{\prime}}\right)\right|\left| K_{{b}}\left(t_{l^{\prime}-1}-t_{i^{\prime}}\right)\right|\\
&\qquad\qquad\qquad\qquad\qquad\qquad\qquad\qquad\qquad\qquad\qquad\qquad\times
\mathbb{E}\left[\left(\alpha_{j}^{n,uv}\right)^2\left|c_{l-1}^{n,pq}-c_{i}^{n,pq}\right|\left|c_{l^{\prime}-1}^{n,pq}-c_{i^{\prime}}^{n,pq}\right|\right]\\
&\leq C\Delta_n^5\sum_{j= 1}^n \sum_{i,i^{\prime}}\sum_{l,l^{\prime}}\left|K_{{b}}\left(t_{j-1}-t_{i}\right)\right|\left| K_{{b}}\left(t_{l-1}-t_{i}\right)\right|\left|K_{{b}}\left(t_{j-1}-t_{i^{\prime}}\right)\right|\left| K_{{b}}\left(t_{l^{\prime}-1}-t_{i^{\prime}}\right)\right| \left|t_{l-1}-t_{i}\right|^{1/4}\left|t_{l^{\prime}-1}-t_{i^{\prime}}\right|^{1/4}\\
&\leq C\Delta_n\sum_{j= 1}^n\left(\int \left|K(v)\right|\left|v\right|^{1/4}dv\right)^2\left(\int \left|K(u)\right|du\right)^2\sqrt{b}\to 0.
\end{aligned}
\end{equation*}
Thus, ${A}_{\lambda_1,\lambda_2}^{n,3}(5)\to_p 0$.  We handle the case $r=6$ in the same way as $r=5$. Now, we can finally conclude (\ref{eq:snrto0}) for $r=5,6$ and then (\ref{eq:last_bias}).

\subsubsection{The bias term $V^{n}_{5}$}\label{a.2.5}
We now show that 
\begin{equation}\label{eq:v5}
    V^{n}_{5}\to \theta \int_{-\infty}^{0}L(u)du g\left(c_{0}\right)-\theta \int_{0}^{\infty}L(u)du g\left(c_{T}\right).
\end{equation}
By Assumption \ref{kernel}, $K^{\prime}$ and $K^{\prime\prime}$ are defined in {\DR all} $(A,B)\subset \mathbb{R}$, but finite number of points. Without loss of generality, suppose $\operatorname{dom}(K^{\prime})=\operatorname{dom}(K^{\prime\prime})=(A,B)\backslash \{0\}$.
Note that
\begin{align*}
    \bar{K}\left(t_{i}\right)-1&=-\left(\int_{-\infty}^{-t_{i}/{b}}+\int_{\left(T-t_{i}\right)/{b}}^{\infty}\right)K(u)du+\frac{\Delta_{n}}{b}\sum_{j=1}^{n}K\left(\frac{t_{j-1}-t_{i}}{b}\right)-\frac{1}{b}\int_{0}^{T}K\left(\frac{t-t_{i}}{b}\right)dt\\
    &=:-\left(\int_{-\infty}^{-t_{i}/{b}}+\int_{\left(T-t_{i}\right)/{b}}^{\infty}\right)K(u)du+D_{1}^{K,t_i}\left(1\right).
\end{align*}
Then, 
\begin{equation}\label{eq:v5decomp}
    \begin{aligned}
    V_5^n&=\Delta_{n}^{1/2}\sum_{i=1}^{n}\left[\bar{K}\left(t_{i}\right)-1\right]g\left(c_{i}^{n}\right)\\
    &=\Delta_{n}^{1/2}\sum_{i=1}^{n}\left[-\left(\int_{-\infty}^{-t_{i}/{b}}+\int_{\left(T-t_{i}\right)/{b}}^{\infty}\right)K(u)du+D_{1}^{K,t_i}\left(1\right)\right]g\left(c_{i}^{n}\right)\\
    &=:-\uppercase\expandafter{\romannumeral1}-\uppercase\expandafter{\romannumeral2}+\uppercase\expandafter{\romannumeral3},
\end{aligned}
\end{equation}
where note that $D_{1}^{K,t_i}\left(1\right)$ can {\DG be} further decomposed as follows, for some $s_t \in\left(t_{j-1}, t_j\right)$,
\begin{align*}
    D_{1}^{K,t_i}\left(1\right)
    &={\DG \frac{1}{b}\sum_{j=1}^{n}
    \int_{t_{j-1}}^{t_{j}}\left(K\Big(\frac{t_{j-1}-t_{i}}{b}\Big)
    -K\Big(\frac{t-t_i}{b}\Big)\right)dt}\\
    &=\frac{1}{b} \sum_{j=1,j\neq i,i+1}^n \int_{t_{j-1}}^{t_j} K^{\prime}\left(\frac{t_{j-1}-t_i}{b}\right) \frac{t_{j-1}-t}{b} d t\\
    &+\frac{1}{b} \sum_{j=1,j\neq i,i+1}^n \int_{t_{j-1}}^{t_j}\left[K^{\prime}\left(\frac{s_t-t_i}{b}\right)-K^{\prime}\left(\frac{t_{j-1}-t_i}{b}\right)\right] \frac{t_{j-1}-t}{b} d t\\
    &+\frac{\Delta_{n}}{b}K\left(-\frac{\Delta_n}{b}\right)+\frac{\Delta_{n}}{b}K\left(0\right)-\frac{1}{b}\int_{t_{i-1}}^{t_{i+1}}K\left(\frac{t-t_{i}}{b}\right)dt\\
    &=:D_{11}^{K,t_i}\left(1\right)+D_{12}^{K,t_i}\left(1\right)+D_{13}^{K,t_i}\left(1\right).
\end{align*}
With the notation $\tilde{K}(u)=\int_{-\infty}^{-u}K(v)dv$ {\DG and applying Lemma \ref{K_discrete}, we have:}
\begin{align*}
    \uppercase\expandafter{\romannumeral1}&=\Delta_{n}^{1/2}\sum_{i=1}^{n}\int_{-\infty}^{-t_{i}/{b}}K(u)dug\left(c_{i}^{n}\right)
    \\
    &=b\Delta_{n}^{-1/2}\int_{0}^{T}\tilde{K}_b\left(t\right)g\left(c_{t}\right)dt+{\DG
    b\Delta_{n}^{-1/2}O_P\Big(\frac{\Delta_n}{b}\Big)}\\
    &\stackrel{n\to\infty}{\longrightarrow}-\theta\int_\infty^0L(u)dug(c_0){\Blue ,}
\end{align*}
{\DG where in the last limit we used that $b\Delta_{n}^{-1/2}\to \theta$.}
Similarly, let $K^{*}(u)=\int^{\infty}_{-u}K(v)dv=1-\tilde{K}(u)$. {\DR Then,}
\begin{align*}
    \uppercase\expandafter{\romannumeral2}&=\Delta_{n}^{1/2}\sum_{i=1}^{n}\int_{\left(T-t_{i}\right)/{b}}^{\infty}K(u)dug\left(c_{i}^{n}\right)=b\Delta_{n}^{-1/2}\Delta_{n}\sum_{i=1}^{n}\frac{1}{b}K^{*}\left(\frac{t_{i}-T}{b}\right)g\left(c_{i}^{n}\right)\\
    &=b\Delta_{n}^{-1/2}\int_{0}^{T}K^{*}_{b}(t-T)g\left(c_{t}\right)dt+{\DG
    b\Delta_{n}^{-1/2}O_P\Big(\frac{\Delta_n}{b}\Big)}\\
     &\stackrel{n\to\infty}{\longrightarrow}\theta\int_0^\infty L(u)dug(c_T).
\end{align*}
As for $\uppercase\expandafter{\romannumeral3}$, by the same argument {\DR as} in (\ref{lemmaa.1.1}) \& (\ref{lemmaa.1.2}) with $f\equiv 1$, we have $D_{12}^{K,t_i}\left(1\right)=O\left(\frac{\Delta^2}{{b}^{2}}\right)$. For $D_{13}^{K,t_i}\left(1\right)$, by the continuity of $K(x)$ at $x=0$, we have $K\left(-\frac{\Delta_n}{b}\right)+K\left(0\right)-\frac{1}{\Delta_n}\int_{t_{i-1}}^{t_{i+1}}K\left(\frac{t-t_{i}}{b}\right)dt\to 0$ when $n\to\infty$. Thus,
$$
\Delta_{n}^{1/2}\sum_{i=1}^{n}D_{13}^{K,t_i}\left(1\right)g\left(c_{i}^{n}\right)\to_{p} 0.
$$
For $D_{11}^{K,t_i}\left(1\right)$, consider
\begin{equation}\label{v5eq3}
    \begin{aligned}
    &\Delta_{n}^{1/2}\sum_{i=1}^{n}D_{11}^{K,t_i}\left(1\right)g\left(c_{i}^{n}\right)=\frac{\Delta_{n}^{1/2}}{2}\frac{\Delta_{n}}{b}\sum_{i=1}^{n}\frac{\Delta_{n}}{b}\sum_{j=1,j\neq i,i+1}^{n}K^{\prime}\left(\frac{t_{j-1}-t_{i}}{b}\right)g\left(c_{i}^{n}\right)\\
    &=\frac{\Delta_{n}^{1/2}}{2}\frac{\Delta_{n}}{b}\sum_{i=1}^{n}\left[\frac{\Delta_{n}}{b}\sum_{j=1,j\neq i,i+1}^{n}K^{\prime}\left(\frac{t_{j-1}-t_{i}}{b}\right)-\left(\int_{0}^{t_{i-1}}+\int_{t_{i+1}}^{T}\right)\frac{1}{b}K^{\prime}\left(\frac{t-t_{i}}{b}\right)dt\right]g\left(c_{i}^{n}\right)\\
    &\quad+\frac{\Delta_{n}^{1/2}}{2}\frac{\Delta_{n}}{b}\sum_{i=1}^{n}\left(\int_{0}^{t_{i-1}}+\int_{t_{i+1}}^{T}\right)\frac{1}{b}K^{\prime}\left(\frac{t-t_{i}}{b}\right)dtg\left(c_{i}^{n}\right)=:\uppercase\expandafter{\romannumeral4}+\uppercase\expandafter{\romannumeral5}.
\end{aligned}
\end{equation}
Clearly,
\begin{align*}
    \uppercase\expandafter{\romannumeral5}&=\frac{\Delta_{n}^{1/2}}{2}\frac{\Delta_{n}}{b}\sum_{i=1}^{n}\left(\int_{-\frac{t_{i}}{b}}^{-\frac{\Delta_n}{b}}+\int_{\frac{\Delta_n}{b}}^{\frac{T-t_{i}}{b}}\right)K^{\prime}(v)dvg\left(c_{i}^{n}\right)\\
    &=\frac{\Delta_{n}^{1/2}}{2}\frac{\Delta_{n}}{b}\sum_{i=1}^{n}\left[K\left(\frac{T-t_{i}}{b}\right)-K\left(\frac{\Delta_n}{b}\right)+K\left(-\frac{\Delta_n}{b}\right)-K\left(\frac{-t_{i}}{b}\right)\right]g\left(c_{i}^{n}\right).
\end{align*}
Note that with the notation $\tilde{K}^*(u)=K(-u)$,
\begin{equation}\label{v5eq4}
    \begin{aligned}
    &\Delta_{n}^{1/2}\frac{\Delta_{n}}{b}\sum_{i=1}^{n}K\left(\frac{-t_{i}}{b}\right)g\left(c_{i}^{n}\right)=\Delta_{n}^{1/2}\frac{\Delta_{n}}{b}\sum_{i=1}^{n}\tilde{K}^*\left(\frac{t_{i}}{b}\right)g\left(c_{i}^{n}\right)\\
    &=\Delta_{n}^{1/2}\left[\int_{0}^{T}\tilde{K}^*_{{b}}(t)g\left(c_{t}\right)dt+D^{\tilde{K}^*,0}_{1}(g)\right]=\Delta_{n}^{1/2}\left[\int_{0}^{T/{b}}\tilde{K}(u)g\left(c_{ub}\right)du+D^{\tilde{K}^*,0}_{1}(g)\right]\to_{p} 0,
\end{aligned}
\end{equation}
and
\begin{align}\label{v5eq5}
    &\Delta_{n}^{1/2}\frac{\Delta_{n}}{b}\sum_{i=1}^{n}K\left(\frac{T-t_{i}}{b}\right)g\left(c_{i}^{n}\right)=\Delta_{n}^{1/2}\frac{\Delta_{n}}{b}\sum_{i=1}^{n}\tilde{K}^*\left(\frac{t_{i}-T}{b}\right)g\left(c_{i}^{n}\right)\\
    &=\Delta_{n}^{1/2}\left[\int_{0}^{T}\tilde{K}^*_{{b}}(t-T)g\left(c_{t}\right)dt+D^{\tilde{K}^*,T}_{1}(g)\right]=\Delta_{n}^{1/2}\left[\int^{0}_{-T/{b}}\tilde{K}(u)g\left(c_{uh+T}\right)du+D^{\tilde{K}^*,T}_{1}(g)\right]\to_{p} 0.\nonumber
\end{align}
By (\ref{v5eq4}) \& (\ref{v5eq5}), together with the fact that $K\left(-\frac{\Delta_n}{b}\right)-K\left(\frac{\Delta_n}{b}\right)\to 0$ when $n\to\infty$, we get $\uppercase\expandafter{\romannumeral5}\to_{p} 0$. For $\uppercase\expandafter{\romannumeral4}$, note that for $j\neq i,i+1$,
\begin{align*}
    {\DR |A_{j}|}&:=\left|\frac{1}{2}\left[K_{{b}}^{\prime}\left(t_{j-1}-t_{i}\right)+K_{{b}}^{\prime}\left(t_{j}-t_{i}\right)\right]-\frac{1}{\Delta_n}\int_{t_{j-1}}^{t_{j}}K_{{b}}^{\prime}\left(t-t_{i}\right)dt\right|\\
    &=\frac{1}{2\Delta_n}\left|\int_{t_{j-1}}^{t_{j}}\left[K_{{b}}^{\prime}\left(t_{j}-t_{i}\right)-K_{{b}}^{\prime}\left(t-t_{i}\right)\right]-\left[K_{{b}}^{\prime}\left(t-t_{i}\right)-K_{{b}}^{\prime}\left(t_{j-1}-t_{i}\right)\right]dt\right|\\
    &=\frac{1}{2\Delta_n}\left|\int_{t_{j-1}}^{t_{j}}\left[\int_{t}^{t_{j}}\frac{1}{b^{2}}K^{\prime\prime}\left(\frac{s-t_{i}}{b}\right)ds-\int_{t_{j-1}}^{t}\frac{1}{b^{2}}K^{\prime\prime}\left(\frac{s-t_{i}}{b}\right)ds\right]dt\right|\\
    &=\frac{1}{2\Delta_n}\left|\int_{t_{j-1}}^{t_{j}}\left[\int_{t_{j-1}}^{s}\frac{1}{b^{2}}K^{\prime\prime}\left(\frac{s-t_{i}}{b}\right)dt-\int_{s}^{t_{j}}\frac{1}{b^{2}}K^{\prime\prime}\left(\frac{s-t_{i}}{b}\right)dt\right]ds\right|\\
    &=\frac{1}{2\Delta_n}\left|\int_{t_{j-1}}^{t_{j}}\left[\left(s-t_{j-1}\right)\frac{1}{b^{2}}K^{\prime\prime}\left(\frac{s-t_{i}}{b}\right)-\left(t_{j}-s\right)\frac{1}{b^{2}}K^{\prime\prime}\left(\frac{s-t_{i}}{b}\right)\right]ds\right|\\
    &=\frac{1}{\Delta_n}\left|\int_{t_{j-1}}^{t_{j}}\left(s-\frac{t_{j-1}+t_{j}}{2}\right)\frac{1}{b^{2}}K^{\prime\prime}\left(\frac{s-t_{i}}{b}\right)ds\right|\\
    &=\frac{1}{\Delta_n}\left|\int_{t_{j-1}}^{t_{j}}\left(s-\frac{t_{j-1}+t_{j}}{2}\right)\left[\frac{1}{b^{2}}K^{\prime\prime}\left(\frac{s-t_{i}}{b}\right)-\frac{1}{b^{2}}K^{\prime\prime}\left(\frac{\frac{t_{j-1}+t_{j}}{2}-t_{i}}{b}\right)\right]ds\right|\\
    &\leq \frac{1}{b^{2}\Delta_n} \omega_{K^{\prime\prime}}\left(\frac{\Delta_{n}}{2b}\right)\int_{t_{j-1}}^{t_{j}}\left|s-\frac{t_{j-1}+t_{j}}{2}\right|ds=\frac{\Delta_{n}}{4b^{2}}\omega_{K^{\prime\prime}}\left(\frac{\Delta_{n}}{2b}\right),
\end{align*}
where $\omega_{K^{\prime\prime}}\left(\delta\right)=\sup \left\{\left|K^{\prime\prime}\left(s\right)-K^{\prime\prime}\left(t\right)\right|:\left|s-t\right|\leq \delta\right\}$ is the modulus of continuity of $K^{\prime\prime}$. Note that $\lim _{\delta \downarrow 0} \omega_{K^{\prime\prime}}(\delta)=0$ is true {\DG since we are assuming that} $K^{\prime\prime}$ is uniformly continuous on its domain $\operatorname{dom}(K^{\prime\prime})=(A,B)\backslash \{0\}$. Recalling Assumption \ref{kernel} and that $b_n \sim \theta \sqrt{\Delta_n}$, we have $\frac{\Delta_{n}}{2b}\to 0$, and thus $\omega_{K^{\prime\prime}}\left(\frac{\Delta_{n}}{2b}\right)\to 0$. Then, sum it up (except for $j=i,i+1$) and we will obtain
\begin{align*}
    {\DR \Big|\sum_{j=1,j\neq i, i+1}^{n}A_{j}\Big|}&=\Big|\sum_{j=1,j\neq i,i+1}^{n}K_{{b}}^{\prime}\left(t_{j-1}-t_{i}\right)-\frac{K_{{b}}^{\prime}\left(-t_{i}\right)-K_{b}^{\prime}\left(-\Delta_n\right)+K_{b}^{\prime}\left(\Delta_n\right)-K_{{b}}^{\prime}\left(T-t_{i}\right)}{2}\\
    &\qquad -\frac{1}{\Delta_n}\left(\int_{0}^{t_{i-1}}+\int_{t_{i+1}}^{T}\right)K_{{b}}^{\prime}\left(s-t_{i}\right)ds\Big|\\
    &\leq \frac{T}{4b^{2}}\omega_{K^{\prime\prime}}\left(\frac{\Delta_{n}}{2b}\right).
\end{align*}
Thus,
\begin{align*}
    &\left|\Delta_{n}\sum_{j=1,j\neq i,i+1}^{n}K_{{b}}^{\prime}\left(t_{j-1}-t_{i}\right)-\left(\int_{0}^{t_{i-1}}+\int_{t_{i+1}}^{T}\right)K_{{b}}^{\prime}\left(s-t_{i}\right)ds\right|\\
    &\leq \Delta_{n}\left|\frac{K_{{b}}^{\prime}\left(-t_{i}\right)-K_{b}^{\prime}\left(-\Delta_n\right)+K_{b}^{\prime}\left(\Delta_n\right)-K_{{b}}^{\prime}\left(T-t_{i}\right)}{2}\right|+\frac{T\Delta_{n}}{4b^{2}}\omega_{K^{\prime\prime}}\left(\frac{\Delta_{n}}{2b}\right)\to 0.
\end{align*}
Therefore, 
\begin{align*}
	|IV|&=\Big|\frac{\Delta_{n}^{3/2}}{2b}\sum_{i=1}^{n}\left[\Delta_{n}\sum_{j=1,j\neq i,i+1}^{n}K_b^{\prime}\left(\frac{t_{j-1}-t_{i}}{b}\right)-\left(\int_{0}^{t_{i-1}}+\int_{t_{i+1}}^{T}\right)\frac{1}{b}K_b^{\prime}\left(\frac{t-t_{i}}{b}\right)dt\right]g\left(c_{i}^{n}\right)\Big|\\
	&\stackrel{n\to\infty}{\longrightarrow}0.
\end{align*}
Combining all the results and plugging back in (\ref{eq:v5decomp}) \& (\ref{v5eq3}), (\ref{eq:v5}) is proved.

\subsubsection{{Eliminating $\theta \int_0^T \sum_{\ell, m=1}^{d} \partial_{\ell m} g\left(c_s\right) d c^{lm}_s$}} 
{From (\ref{eq:leading}), (\ref{eq:v4}), (\ref{eq:last_bias}), and (\ref{eq:v5}) we deduce that
\begin{equation} \label{eq:convergence_in_lawPre}
\frac{1}{\sqrt{\Delta_{n}}}\Big(V(g)_T^{n}-V(g)_T\Big) \stackrel{st}{\longrightarrow} \widetilde{A}^{1} + \widetilde{A}^{2}+\widetilde{A}^{3}+Z,
\end{equation}
with 
 \begin{equation*}
\begin{split}
\widetilde{A}^{1}&:=\theta {\int_{-\infty}^{0}L(u)du}g\left(c_{0}\right)-\theta {\int_{0}^{\infty}L(u)du}g\left(c_{T}\right),\\
\widetilde{A}^{2}&:=\frac{1}{2\theta} \int_0^T \sum_{p,q,u,v=1}^{d}\partial^{2}_{p q, u v} g\left(c_s\right) \check{c}_s^{p q, u v} d s \int_{-\infty}^{\infty} K^2(u) d u,\\
\widetilde{A}^{3}&:=\theta \int_0^T \sum_{\ell, m=1}^{d} \partial_{\ell m} g\left(c_s\right) d c^{lm}_s \int_{-\infty}^{\infty} L(u) d u \\
    &\quad+\theta \int_0^T \sum_{p,q,u,v=1}^{d}\partial^{2}_{p q, u v} g\left(c_s\right) \tilde{c}^{pq,uv}_{s} d s\left[\int_{-\infty}^0 L(u) d u+\frac{1}{2}\int_{-\infty}^{\infty} L(u)^2 d u\right]{\Blue .}
\end{split}
\end{equation*}
On the other hand, by Itô's formula, we have
\begin{align*}
    g\left(c_T\right)&=g\left(c_0\right)+ \int_0^T \sum_{l, m}\partial_{l m} g\left(c_{s}\right) d c_s^{l m}+\frac{1}{2}\int_{0}^{T}\sum_{p,q,u,v}\partial^{2}_{pq,uv}g\left(c_{s}\right)\tilde{c}_{s}^{pq,uv}ds.
\end{align*}
Hence, solving for the second term and plugging in the expression of $\widetilde{A}^{3}$, we can rewrite the bias $\widetilde{A}^1+\widetilde{A}^2+\widetilde{A}^3$ as ${A}^1+{A}^2+{A}^3$ with the terms $A^{j}$'s given as in 
\eqref{MDfnOfAs}.}
\subsection{Proof of Theorem \ref{thm2.2}}\label{proofthm2.2}
We {\DG first} discuss the elimination of the jumps. Suppose $X$ is discontinuous as in (\ref{eq:X}) and {\DR $\hat{c}^{n}_i$} in (\ref{An3}) is given by (\ref{estivoltrun}). Let
\begin{equation}\label{An3prime}
    \widehat{A}^{\prime n, 3}:=\frac{\sqrt{\Delta_n}}{8} \sum_{i=1}^{n-2 k_n+1} \sum_{p, q, u, v} \partial_{p q, u v}^2 g(\hat{c}_i^{\prime n})\left(\hat{c}_{i+k_n}^{\prime n,pq}-\hat{c}_i^{\prime n,pq}\right)\left(\hat{c}_{i+k_n}^{\prime n,uv}-\hat{c}_i^{\prime n,uv}\right),
\end{equation}
where $\hat{c}_{i}^{\prime n}=\hat{c}_{i}^{n,X^{\prime}}$ is defined in (\ref{MDKNN}) with {\DG the continuous process $X^{\prime}$ defined in \eqref{eq:X_conti}}. We prove here that 
\begin{equation}\label{diffAn3}
    \widehat A^{n,3}-\widehat A^{\prime n,3} \xrightarrow{\mathbb{P}} 0.
\end{equation}
Let
$$
R\left(\hat{c}_i^n, \hat{c}_{i+k_n}^n\right)=\partial_{p q, u v}^2 g(\hat{c}_i^{n})\left(\hat{c}_{i+k_n}^{n,pq}-\hat{c}_i^{n,pq}\right)\left(\hat{c}_{i+k_n}^{n,uv}-\hat{c}_i^{n,uv}\right).
$$
By (\ref{eq:g_prime_bound}) and {\DG the} mean value theorem, for some {\DG $\xi\in(0,1)$},
\begin{equation}\label{rdiffupper}
\begin{split}
        &\left|R\left(\hat{c}_i^n, \hat{c}_{i+k_n}^n\right)-R\left(\hat{c}_i^{\prime n}, \hat{c}_{i+k_n}^{\prime n}\right)\right|\\
    &\leq \left|\partial_{p q, u v}^2 g(\hat{c}_i^{n})\left[\left(\hat{c}_{i+k_n}^{n,p q}-\hat{c}_i^{n,p q}\right)\left(\hat{c}_{i+k_n}^{n,u v}-\hat{c}_i^{n,u v}\right)-\left(\hat{c}_{i+k_n}^{\prime n,p q}-\hat{c}_i^{\prime n,p q}\right)\left(\hat{c}_{i+k_n}^{\prime n,u v}-\hat{c}_i^{\prime n,u v}\right)\right]\right|\\
    &+\left|\left[\partial_{p q, u v}^2 g(\hat{c}_i^{n})-\partial_{p q, u v}^2 g(\hat{c}_i^{\prime n})\right]\left(\hat{c}_{i+k_n}^{n,p q}-\hat{c}_i^{n,p q}\right)\left(\hat{c}_{i+k_n}^{n,u v}-\hat{c}_i^{n,u v}\right)\right|\\
    &\leq C\left(1+\left\|\hat{c}_i^{n}\right\|^{{\DG \ell}-2}\right)\left(\left\|\hat{c}_i^{n}\right\|+\left\|\hat{c}_{i+k_n}^{n}\right\|\right)\left({\DG \left\|\hat{c}_i^n-\hat{c}_i^{\prime n}\right\|+\left\|\hat{c}_{i+k_n}^n-\hat{c}_{i+k_n}^{\prime n}\right\|}\right)\\
    &+C\left(1+\left\|\xi\hat{c}_i^n+\left(1-\xi\right)\hat{c}_i^{\prime n}\right\|^{{\DG \ell}-3}\right)\left\|\hat{c}_i^n-\hat{c}_i^{\prime n}\right\|\left(\left\|\hat{c}_i^{n}\right\|+\left\|\hat{c}_{i+k_n}^{n}\right\|\right)^2.
\end{split} 
\end{equation} 
{\DR By \eqref{boundofdiffc} with $q=1$ and $q'$ large enough, we have
\[
\mathbb{E}\left[\left\|\hat{c}_i^n-\hat{c}_i^{\prime n}\right\|^q\right] \leq C a_n \Delta_n^{(2 -r) \varpi}.
\]
It is easy to see that under (\ref{moreassum}), $(2 -r) \varpi\geq{}1/2$ when $\ell\geq{}4$ and thus, we conclude that}
$$
\mathbb{E}\left[\left|R\left(\hat{c}_i^n, \hat{c}_{i+k_n}^n\right)-R\left(\hat{c}_i^{\prime n}, \hat{c}_{i+k_n}^{\prime n}\right)\right|\right]=o(\sqrt{\Delta_n}),
$$
which implies (\ref{diffAn3}). Now suppose $X$ is continuous, i.e., $X=X^{\prime}$. {\DR Then,} (\ref{diffAn3}) can be obtained again with (\ref{boundofdiffc1}) and (\ref{rdiffupper}). Hence, {\DG in what follows,} we only need to consider $\hat{c}_i^n$ as (\ref{MDKNN}) {\DG under the Assumption} \ref{assum3}.

Set
$$
\gamma_i^n :=\hat{c}_{i+k_n}^{ n}-\hat{c}_i^n=\beta_{i+k_n}^n-\beta_i^n+c_{i+k_n}^n-c_i^n.
$$
The estimator $\widehat A^{n, 3}$ can then be written as
\begin{align*}
    \widehat A^{n, 3}
    &=\frac{\sqrt{\Delta_n}}{8} \sum_{i=1}^{n-2 k_n+1} \sum_{p, q, u, v} \partial_{p q, u v}^2 g(\hat{c}_i^n)\gamma_i^{n, p q} \gamma_i^{n, u v},
\end{align*}
where $\gamma_i^{n, p q} \gamma_i^{n, u v}$ can be decomposed as
\begin{equation}\label{eq:decomgamma}
\begin{aligned}
\gamma_i^{n, p q} \gamma_i^{n, u v}&=
\beta_i^{n, p q} \beta_i^{n, u v}+\beta_{i+k_n}^{n, p q} \beta_{i+k_n}^{n, u v}-\beta_{i+k_n}^{n, p q} \beta_i^{n, u v}-\beta_i^{n, p q} \beta_{i+k_n}^{n, u v}\\
&\quad+\left(c_{i+k_n}^{n, p q}-c_i^{n, p q}\right)\left(c_{i+k_n}^{n, u v}-c_i^{n, u v}\right)\\
&\quad -\beta_i^{n, p q}\left(c_{i+k_n}^{n, u v}-c_i^{n, u v}\right)-\beta_i^{n, u v}\left(c_{i+k_n}^{n, p q}-c_i^{n, p q}\right)\\
&\quad+\beta_{i+k_n}^{n, u v}\left(c_{i+k_n}^{n, p q}-c_i^{n, p q}\right)+\beta_{i+k_n}^{n, pq}\left(c_{i+k_n}^{n, uv}-c_i^{n, uv}\right).
\end{aligned}
\end{equation}
{We can then consider the decomposition:
$$
\widehat{A}^{n,3}=\widehat{A}_{0,0}^{n, 3}+\widehat{A}_{1,1}^{n, 3}-\widehat{A}_{0,1}^{n, 3}-\widehat{A}_{1,0}^{n, 3}+\widehat{A}_1^{n, 3}-\widehat{A}_2^{n, 3}-\widehat{A}_3^{n, 3}+\widehat{A}_4^{n, 3}+\widehat{A}_5^{n, 3},
$$
where 
\begin{align*}
\widehat{A}^{n,3}_{\lambda_1,\lambda_2}&:=\frac{\sqrt{\Delta_n}}{8} \sum_{i=1}^{n-2k_n+1}  \sum_{p, q, u, v} \partial_{p q, u v}^2 g(\hat{c}_i^n)
\beta_{i+\lambda_1 k_n}^{n, p q} \beta_{i+\lambda_2 k_n}^{n, u v},\\
\widehat{A}_{1}^{n,3}&:=\frac{\sqrt{\Delta_n}}{8} \sum_{i=1}^{n-2 k_n+1} \sum_{p, q, u, v} \partial_{p q, u v}^2 g(\hat{c}_i^n)\left(c_{i+k_n}^{n, p q}-c_i^{n, p q}\right)\left(c_{i+k_n}^{n, u v}-c_i^{n, u v}\right),\\
\widehat{A}_{2}^{n,3}&:=\frac{\sqrt{\Delta_n}}{8} \sum_{i=1}^{n-2k_n+1} \sum_{p, q, u, v} \partial_{p q, u v}^2 g(\hat{c}_i^n)\beta_i^{n, p q}\left(c_{i+k_n}^{n, u v}-c_i^{n, u v}\right),\\
\widehat{A}_{3}^{n,3}&:=\frac{\sqrt{\Delta_n}}{8} \sum_{i=1}^{n-2k_n+1} \sum_{p, q, u, v} \partial_{p q, u v}^2 g(\hat{c}_i^n)\beta_i^{n, u v}\left(c_{i+k_n}^{n, p q}-c_i^{n, p q}\right),\\
\widehat{A}_{4}^{n,3}&:=\frac{\sqrt{\Delta_n}}{8} \sum_{i=1}^{n-2 k_n+1} \sum_{p, q, u, v} \partial_{p q, u v}^2 g(\hat{c}_i^n)\beta_{i+k_n}^{n, p q}\left(c_{i+k_n}^{n, u v}-c_i^{n, u v}\right),\\
\widehat{A}_{5}^{n,3}&:=\frac{\sqrt{\Delta_n}}{8} \sum_{i=1}^{n-2 k_n+1} \sum_{p, q, u, v} \partial_{p q, u v}^2 g(\hat{c}_i^n)\beta_{i+k_n}^{n, uv}\left(c_{i+k_n}^{n, p q}-c_i^{n, p q}\right).
\end{align*}
In the {\Red Appendix \ref{Technical2_2}}, we show the following limits:
\begin{equation}\label{FrstStOfLmts}
\begin{aligned}
\widehat{A}_{0,0}^{n,3}, \widehat{A}_{1,1}^{n,3}&\to \frac{1}{8\theta} \int_0^T \sum_{p, q, u, v}\partial^2_{pq, uv} g\left(c_s\right) \check{c}_s^{pq, uv} d s \int^{\infty}_{-\infty} K(z)^2 d z\\
&\quad\quad\quad+\frac{\theta}{8} \int_0^T \sum_{p, q, u, v}\partial^2_{pq, uv} g\left(c_s\right) \tilde{c}_s^{pq, uv} d s\int_{-\infty}^{\infty} L(z)^2 d z,\\
\widehat{A}_{0,1}^{n,3},\widehat{A}_{1,0}^{n,3} &\to \frac{1}{8\theta} \int_0^T\sum_{p, q, u, v} \partial^2_{pq, uv} g\left(c_s\right) \check{c}_s^{pq, uv} d s \int^{\infty}_{-\infty} K(z)K(z-1) d z\\
&\quad\quad\quad+\frac{\theta}{8} \int_0^T\sum_{p, q, u, v} \partial^2_{pq, uv} g\left(c_s\right) \tilde{c}_s^{pq, uv} d s\int_{-\infty}^{\infty} L(z)L(z-1) d z,
\end{aligned}
\end{equation}
\begin{equation}\label{SndStOfLmts}
\begin{aligned}
\widehat{A}_{1}^{n,3}\to \frac{\theta}{8} \int_{0}^{T} \sum_{p, q, u, v} \partial_{p q, u v}^2 g(c_s) \tilde{c}_s^{p q, u v}ds,
\end{aligned}
\end{equation}
\begin{equation}\label{TrdStOfLmts}
\begin{aligned}
\widehat{A}_{2}^{n,3},\widehat{A}_{3}^{n,3}\to \frac{\theta}{8}\int_{0}^{T}\sum_{p, q, u, v} \partial_{p q, u v}^2 g(c_{s})\tilde{c}_{s}^{p q, u v}ds\left(\int_{1}^{\infty}K(z)dz+\int_{0}^{1}K(z)zdz\right),
\end{aligned}
\end{equation}
\begin{equation}\label{FourthStOfLmts}
\begin{aligned}
\widehat{A}_{4}^{n,3},\widehat{A}_{5}^{n,3}\to \frac{\theta}{8}\int_{0}^{T}\sum_{p, q, u, v} \partial_{p q, u v}^2 g(c_{s})\tilde{c}_{s}^{p q, u v}ds\left(\int_{0}^{1}K(z-1)(z-1)dz-\int_{-\infty}^{0}K(z-1)dz\right).
\end{aligned}
\end{equation}
%
We finally conclude \eqref{eq:bias_estimates}.}

\subsection{Proof of Corollary \ref{BiasCorrectedCLT0}}\label{SmplCorUnb}
To show (\ref{UnBsCLTb}), by Theorem \ref{clt}, we only need to eliminate the bias terms $A^{1}, A^{2}, A^{3}$. Note that
$$
k_n \sqrt{\Delta_n} g\left(\hat{c}_1^n\right) {\int_0^{\infty} L(u)d u}-k_n \sqrt{\Delta_n} g\left(\hat{c}_n^n\right){\int_{-\infty}^0 L(u) d u}\to \theta g\left(c_{0}\right)\int_{0}^{\infty}L(u)du-\theta g\left(c_{T}\right) \int_{-\infty}^{0}L(u)du,
$$
which cancels the term $A^{1}$. Then, by (\ref{eq:h}) and Theorem \ref{thm2.2}, we have
\begin{align*}
    &-\frac{\sqrt{\Delta_n}}{4} \sum_{i=1}^{n-2k_n+1}\sum_{p, q, u, v=1}^d \partial_{p q, u v}^2 g\left(\hat{c}_i^n\right)\left(\hat{c}_{i+k_n}^{n, p q}-\hat{c}_i^{n, p q}\right)\left(\hat{c}_{i+k_n}^{n, u v}-\hat{c}_i^{n, u v}\right)C_{K,1}=-2C_{K,1} \widehat{A}^{n, 3}\\
    & \to -\frac{1}{2\theta} \int_0^T \sum_{j, k, l, m}\partial^2_{jk, lm} g\left(c_s\right) \check{c}_s^{jk, lm} d s C_{K,2}\\
    &\quad-\frac{\theta}{2} \int_0^T \sum_{j, k, l, m}\partial^2_{jk, lm} g\left(c_s\right) \tilde{c}_s^{jk, lm} d s \left[\int_{-\infty}^{\infty} L(z)^2 d z+\left(\int_{-\infty}^{0}-\int_{0}^{\infty}\right) L(z) d z\right],
\end{align*}
and
\begin{align*}
    \frac{1}{2k_n\sqrt{\Delta_n}}V\left(h\right)^{n}_{T}\left[C_{K,2}-\int K^2(u) du\right]\to \frac{1}{2\theta}\int_0^T \sum_{p,q,u,v=1}^{d}\partial^{2}_{p q, u v} g\left(c_s\right) \check{c}_s^{p q, u v} d s\left[C_{K,2}-\int K^2(u) du\right].
\end{align*}
Combining both limits, the bias terms $A^{2}$ and $A^{3}$ can be eliminated.

\subsection{Proof of Theorem \ref{clt_suboptimal}}\label{PrfSubs23}
Suppose the $X$ is discontinuous and {$\hat{c}_i^n$ in (\ref{v_bar})} is given by (\ref{estivoltrun}). Let
\begin{equation}\label{v_bar_prime}
    \bar{V}(g)^{\prime n}_T = \Delta_n \sum_{i= 1}^{n} \left( g(\hat{c}^{\prime n}_i) - \frac{1}{2k_n} h(\hat{c}^{\prime n}_i) \int K^2(u) du\right)\bar{K}\left(t_{i}\right),
\end{equation}
where $\hat{c}_{i}^{\prime n}=\hat{c}_{i}^{n,X^{\prime}}$ is defined as in (\ref{MDKNN}) with {the continuous process $X^{\prime}$ defined as in \eqref{eq:X_conti}}. {As shown in \eqref{diffgprime}, we have $\mathbb{E}\left[\left|g\left(\hat{c}_i^n\right)-g\left(\hat{c}_i^{\prime n}\right)\right|\right]=o(\sqrt{\Delta_n})$. Following the same arguments, we can show that $\mathbb{E}\left[\left|h\left(\hat{c}_i^n\right)-h\left(\hat{c}_i^{\prime n}\right)\right|\right]=o(\sqrt{\Delta_n})$. Therefore, we {can conclude that}
$$
\frac{1}{\sqrt{\Delta_n}}\left[\bar{V}(g)^{n}_T-\bar{V}(g)^{\prime n}_T\right]\xrightarrow{\mathbb{P}} 0,
$$
which, in light of b) in Lemma \ref{jump_1}, also holds when $X$ is continuous under the weaker condition \eqref{moreassum1}.} Therefore, in what follows, we only need to consider $\hat{c}_i^n$ as in (\ref{MDKNN}) with Assumption \ref{assum3}. 

Similar to (\ref{eq:decomposition}), we have the decomposition \begin{equation*} 
\frac{1}{\sqrt{\Delta_{n}}}\left(\bar{V}(g)_T^{n}-V(g)_T\right)=\sum_{j=1}^{3} {V^{n}_{j}} + {\bar{V}^{n}_{4}} + {V^{n}_{5}},
\end{equation*} 
where {$V^{n}_{1}, V^{n}_{2}, V^{n}_{3}${,} and $V^{n}_{5}$} are defined as in (\ref{eq:V1234}) and  
\begin{equation*}
{\bar{V}^{n}_{4}}=\sqrt{\Delta_{n}} \sum_{i=1}^{n}\left(g\left(c_{i}^{n}+\beta_{i}^{n}\right)-g\left(c_{i}^{n}\right)-\nabla g\left(c_{i}^{n}\right) \beta_{i}^{n} - \frac{1}{2k_n} h\left(c^{n}_{i}+\beta^{n}_{i}\right) \int K^2(u) du\right).
\end{equation*}
Since $k_n^2\Delta_n \to {0}$ corresponds to (\ref{eq: k_n}) with $\theta = 0$, we can deduce {from \eqref{eq:v4}} that {$V^{n}_{3} \to 0 $}. The {asymptotic behavior of the terms $V^{n}_{1}$ and $V^{n}_{2}$ remain the same}: $V^{n}_{1} \to 0 $ {and} $V^{n}_{2} \stackrel{\mathcal{L}-\mathrm{s}}{\Longrightarrow} Z$. 

{We now consider the updated term $\bar{V}^{n}_{4}$.} We can write for some $\lambda_{i}\in [0, 1]$,
\begin{align*}
    \bar{V}^{n}_{4}&=\sqrt{\Delta_{n}} \sum_{i=1}^{n}\Bigg(\frac{1}{2}\sum_{p, q, u, v=1}^d\partial_{p q, u v}^2 g\left(c^{n}_{i}\right)\beta_i^{n, p q} \beta_i^{n, u v}-\frac{1}{2k_n}h\left(c^{n}_{i}+\beta^{n}_{i}\right)\int K^2(u) du\\
    &+\frac{1}{6}\sum_{p_{1}, q_{1}, p_{2}, q_{2},p_{3}, q_{3}=1}^d\partial_{p_{1} q_{1}, p_{2} q_{2},p_{3} q_{3}}^3 g\left(c^{n}_{i}+\lambda_{i}\beta^{n}_{i}\right)\beta_i^{n, p_{1} q_{1}} \beta_i^{n, p_{2} q_{2}}\beta_i^{n, p_{3} q_{3}}\Bigg)\\
    &=\sqrt{\Delta_{n}} \sum_{i=1}^{n}\Bigg(\frac{1}{2}\sum_{p, q, u, v=1}^d\partial_{p q, u v}^2 g\left(c^{n}_{i}\right)\beta_i^{n, p q} \beta_i^{n, u v}-\frac{1}{2k_n}h\left(c^{n}_{i}\right)\int K^2(u) du\\
    &\quad+\frac{1}{2k_n}\left[h\left(c^{n}_{i}\right)_{i}-h\left(c^{n}_{i}+\beta^{n}_{i}\right)\right]\int K^2(u) du\\
    &\quad+\frac{1}{6}\sum_{p_{1}, q_{1}, p_{2}, q_{2},p_{3}, q_{3}=1}^d\partial_{p_{1} q_{1}, p_{2} q_{2},p_{3} q_{3}}^3 g\left(c^{n}_{i}+\lambda_{i}\beta^{n}_{i}\right)\beta_i^{n, p_{1} q_{1}} \beta_i^{n, p_{2} q_{2}}\beta_i^{n, p_{3} q_{3}}\Bigg)\\
    &=:\sum_{i=1}^{n}{(v(1)_{i}^{n}+v(2)_{i}^{n})},
\end{align*}
where 
\begin{equation}
\begin{array}{l}
v(1)_{i}^{n}=\frac{\sqrt{\Delta_{n}}}{2}\sum_{p, q, u, v=1}^d\partial_{p q, u v}^2 g\left(c^{n}_{i}\right) \left[\beta_i^{n, p q} \beta_i^{n, u v}-\frac{1}{k_n}\check{c}^{n,pq,nv}_{i}\int K^2(u) du\right]\\
\left|v(2)_{i}^{n}\right| \leq C \sqrt{\Delta_{n}}\left(1+\left\|\beta_{i}^{n}\right\|\right)^{{\ell}-3}\left\|\beta_{i}^{n}\right\|^{3} + \frac{C\sqrt{\Delta_n}}{k_n}\left(1+ \left\|\beta^n_i \right\|^{{\ell}-1} \right)\left\|\beta^n_i \right\| .
\end{array}
\end{equation}Here, we used (\ref{eq:g_prime_bound}) and  the boundedness of $c$.  
For $v(2)^n_i $, by (\ref{eq:k_n_suboptimal}), $k^2_n\Delta_n\to 0$, we have $k^{-1}_n \gg k_n\Delta_n$. Then, (\ref{eq:beta_bound}) becomes
$$
\mathbb{E}\left[\left\|\beta_i^n\right\|^q\right]\leq C_qk_n^{-q / 2}.
$$
Applying this inequality and we have
\begin{align*}
\mathbb{E}\left|v(2)_{i}^{n}\right|&\leq C \sqrt{\Delta_n}\mathbb{E}\left[\left(1+\left\|\beta_i^n\right\|\right)^{\ell-3}\left\|\beta_i^n\right\|^3\right]+\frac{C \sqrt{\Delta_n}}{k_n}\mathbb{E}\left[\left(1+\left\|\beta_i^n\right\|^{\ell-1}\right)\left\|\beta_i^n\right\|\right]\\
&\leq C\sqrt{\Delta_n}\left(\mathbb{E}\left[\left\|\beta_i^n\right\|^6\right]\right)^{1/2}+\frac{C \sqrt{\Delta_n}}{k_n}\left(\mathbb{E}\left[\left\|\beta_i^n\right\|^2\right]\right)^{1/2}\leq C\sqrt{\Delta_n}k_n^{-3/2},
\end{align*}
based on which, we have  
\begin{equation*}
\sum_{i=1}^n v(2)^n_i \to 0
\end{equation*}
since by (\ref{eq:k_n_suboptimal}), $k^{-3/2}_n\Delta_n^{-1/2}\to 0$. For $v(1)^n_i$, recall (\ref{eq:beta2_decomp}), { and, since $\theta = 0$, \eqref{eq:A.73}, \eqref{eq:snrto0}, and the paragraph after \eqref{eq:A.84} implies that}
\begin{equation*}
\frac{\sqrt{\Delta_n}}{2}\sum_{i= 1}^n \sum_{p, q, u, v=1}^d\partial_{p q, u v}^2 g\left(c^{n}_{i}\right)\left[\xi^{n}_{2,i}+\xi^{n}_{3,i}+\xi^{n}_{4,i}+\xi^{n}_{5,i}+\xi^{n}_{6,i}\right] \to 0.
\end{equation*}
We are {left} with 
\begin{equation}\label{VLSE}
\begin{split}
&\frac{\sqrt{\Delta_n}}{2}\sum_{i= 1}^n \sum_{p, q, u, v=1}^d\partial_{p q, u v}^2 g\left(c^{n}_{i}\right)  \left[\xi^{n}_{1,i}- \frac{1}{k_n}\check{c}^{n,pq,nv}_{i}\int K^2(u) du\right] \\
&=\frac{\sqrt{\Delta_n}}{2}\sum_{i= 1}^n \sum_{p, q, u, v=1}^d\partial_{p q, u v}^2 g\left(c^{n}_{i}\right) \left[\sum_{j=1}^{n} K_{{ b}}^{2}\left(t_{j-1}-t_{i}\right)\alpha_j^{n, p q} \alpha_j^{n, u v}- \frac{1}{k_n}\check{c}^{n,pq,nv}_{i}\int K^2(u) du\right] \\
&=\frac{\sqrt{\Delta_n}}{2}\sum_{i= 1}^n \sum_{p, q, u, v=1}^d\partial_{p q, u v}^2 g\left(c^{n}_{i}\right) \left[\sum_{j=1}^{n} K_{{ b}}^{2}\left(t_{j-1}-t_{i}\right)\left( \alpha_j^{n, p q} \alpha_j^{n, u v} - \check{c}^{n,pq,nv}_{j-1}\Delta_n^2 \right)\right.\\
& \quad+\sum_{j=1}^{n} K_{{ b}}^{2}\left(t_{j-1}-t_{i}\right) \left(\check{c}^{n,pq,nv}_{j-1}\Delta_n^2 - \check{c}^{n,pq,nv}_{i}\Delta_n^2 \right)\\
&\quad+\left.\sum_{j=1}^{n} K_{{ b}}^{2}\left(t_{j-1}-t_{i}\right)  \check{c}^{n,pq,nv}_{i}\Delta_n^2 - \frac{1}{k_n}\check{c}^{n,pq,nv}_{i}\int K^2(u) du\right] \\
&=: P_1 + P_2 + P_3.
\end{split}
\end{equation}
For $P_1$, note that
\begin{align*}
    P_1&:=\frac{\sqrt{\Delta_n}}{2}\sum_{i= 1}^n \sum_{p, q, u, v=1}^d\partial_{p q, u v}^2 g\left(c^{n}_{i}\right)\sum_{j=1}^{n} K_{{ b}}^{2}\left(t_{j-1}-t_{i}\right)\left( \alpha_j^{n, p q} \alpha_j^{n, u v} - \check{c}^{n,pq,nv}_{j-1}\Delta_n^2 \right)\\
    &=\frac{\sqrt{\Delta_n}}{2}\sum_{i= 1}^n \sum_{j=1}^{n}\sum_{p, q, u, v=1}^d\left[\partial_{p q, u v}^2 g\left(c^{n}_{i}\right)-\partial_{p q, u v}^2 g\left(c^{n}_{j-1}\right)\right] K_{{ b}}^{2}\left(t_{j-1}-t_{i}\right)\left( \alpha_j^{n, p q} \alpha_j^{n, u v} - \check{c}^{n,pq,nv}_{j-1}\Delta_n^2 \right)\\
    &+\frac{\sqrt{\Delta_n}}{2}\sum_{i= 1}^n \sum_{j=1}^{n}\sum_{p, q, u, v=1}^d\partial_{p q, u v}^2 g\left(c^{n}_{j-1}\right) K_{{ b}}^{2}\left(t_{j-1}-t_{i}\right)\left( \alpha_j^{n, p q} \alpha_j^{n, u v} - \check{c}^{n,pq,nv}_{j-1}\Delta_n^2 \right)\\
    &=:P_{11}+P_{12},
\end{align*}
where
\begin{align*}
    &\mathbb{E}\left|P_{11}\right|\leq C\sqrt{\Delta_n}\sum_{i= 1}^n \sum_{j=1}^{n}\sum_{p, q, u, v=1}^d K_{{ b}}^{2}\left(t_{j-1}-t_{i}\right)\\
    &\qquad\qquad\qquad\qquad\sqrt{\mathbb{E}\left[\partial_{p q, u v}^2 g\left(c^{n}_{i}\right)-\partial_{p q, u v}^2 g\left(c^{n}_{j-1}\right)\right]^2\mathbb{E}\left[\left( \alpha_j^{n, p q} \alpha_j^{n, u v} - \check{c}^{n,pq,nv}_{j-1}\Delta_n^2 \right)^2\right]}\\
    &\leq C\sqrt{\Delta_n}\sum_{i= 1}^n \sum_{j=1}^{n}K_{{ b}}^{2}\left(t_{j-1}-t_{i}\right)\sqrt{\left|t_{j-1}-t_{i}\right|}\Delta_n^2\leq C\frac{\sqrt{\Delta_n}}{\sqrt{b}}\int K^2(u)\sqrt{\left|u\right|}du\to 0.
\end{align*}
Write $P_{12}=\sum_{j=1}^{n}\zeta_{j}^{n}$, with
$$
\zeta_{j}^{n}:=\frac{\sqrt{\Delta_n}}{2}\sum_{i= 1}^n\sum_{p, q, u, v=1}^d\partial_{p q, u v}^2 g\left(c^{n}_{j-1}\right) K_{{ b}}^{2}\left(t_{j-1}-t_{i}\right)\left( \alpha_j^{n, p q} \alpha_j^{n, u v} - \check{c}^{n,pq,nv}_{j-1}\Delta_n^2 \right).
$$
Applying  (\ref{eq:better_bound_alpha}),
\begin{align*}
    &\mathbb{E}\left|\sum_{j=1}^n \mathbb{E}\left[\zeta_j^n \mid \mathcal{F}_{j-1}^n\right]\right|\\
    &=\mathbb{E}\left|\frac{\sqrt{\Delta_n}}{2}\sum_{j=1}^n\sum_{i= 1}^n\sum_{p, q, u, v=1}^d\partial_{p q, u v}^2 g\left(c^{n}_{j-1}\right) K_{{ b}}^{2}\left(t_{j-1}-t_{i}\right)\left( \mathbb{E}\left[\left.\alpha_j^{n, p q} \alpha_j^{n, u v}\right|\mathcal{F}_{j-1}^n\right] - \check{c}^{n,pq,nv}_{j-1}\Delta_n^2 \right)\right|\\
    &\leq C\sqrt{\Delta_n}\sum_{j=1}^n\sum_{i= 1}^n K_{{ b}}^{2}\left(t_{j-1}-t_{i}\right)\Delta_n^{5 / 2}\leq C\frac{\Delta_n}{b} \int K^2(u)du\to 0.
\end{align*}
Also,
\begin{align*}
&\mathbb{E}\left(\sum_{j=1}^n\left(\zeta_j^n-\mathbb{E}\left[\zeta_j^n \mid \mathcal{F}_{j-1}^n\right]\right)\right)^2\leq \sum_{j=1}^n \mathbb{E}\left[\left(\zeta_j^n\right)^2\right]\\
&\leq C\Delta_n\sum_{j=1}^n \left(\sum_{i= 1}^n K_{{ b}}^{2}\left(t_{j-1}-t_{i}\right)\right)^2 \mathbb{E}\left[\left(\sum_{p, q, u, v=1}^d\partial_{p q, u v}^2 g\left(c^{n}_{j-1}\right)\right)^2\left( \alpha_j^{n, p q} \alpha_j^{n, u v} - \check{c}^{n,pq,nv}_{j-1}\Delta_n^2 \right)^2\right]\\
&\leq C\Delta_n^3 \sum_{j=1}^n \left(\Delta_n\sum_{i= 1}^n K_{{ b}}^{2}\left(t_{j-1}-t_{i}\right)\right)^2\leq C\frac{\Delta_n^3}{b^2}\sum_{j=1}^n \left(\int K^2(u)du\right)^2\to 0.
\end{align*}
Thus, $P_{12}\to 0$ and hence, we have that $P_1$ converges to 0. Applying Lemma \ref{K_discrete} with $\tilde{K} = K^2 $, then $\tilde{K}_{b}(x) = b K_{{ b}}^2(x) $ and 
\begin{equation*}
\Delta_n \sum_{j=1}^{n} K_{{ b}}^{2}\left(t_{j-1}-t_{i}\right) -\frac{1}{ b} \int K^2(u) du = O\left(\frac{\Delta_n}{{ b}^{2}} \right).
\end{equation*} 
Thus,
\begin{align*}
    P_3&:=\frac{\sqrt{\Delta_n}}{2}\sum_{i= 1}^n \sum_{p, q, u, v=1}^d\partial_{p q, u v}^2 g\left(c^{n}_{i}\right)\check{c}^{n,pq,nv}_{i}\left[\sum_{j=1}^{n} K_{{ b}}^{2}\left(t_{j-1}-t_{i}\right) \Delta_n^2 - \frac{1}{k_n}\int K^2(u) du\right]\\
    &=\frac{\sqrt{\Delta_n}}{2}\sum_{i= 1}^n \sum_{p, q, u, v=1}^d\partial_{p q, u v}^2 g\left(c^{n}_{i}\right)\check{c}^{n,pq,nv}_{i}O\left(\frac{\Delta^2_n}{{ b}^{2}} \right)=O_{p}\left(\frac{\Delta^{3/2}_n}{{b}^{2}} \right).
\end{align*}
{which implies that $P_{3}$ converges to $0$}. {For $P_2$, we proceed as follows:}
\begin{equation*}
\begin{split}
&\mathbb{E} \left|\sum_{j=1}^{n} K_{{ b}}^{2}\left(t_{j-1}-t_{i}\right) \left(\check{c}^{n,pq,nv}_{j-1}\Delta_n^2 - \check{c}^{n,pq,nv}_{i}\Delta_n^2 \right)\right|\\
& \leq C\Delta_n^2\sum_{j=1}^{n} K_{{ b}}^{2}\left(t_{j-1}-t_{i}\right) \sqrt{|t_{j-1}-t_{i}|}\\
&\leq C\frac{\Delta_n}{\sqrt{ b}} \int K^2(u) \sqrt{|u|} du.
\end{split}
\end{equation*}
Therefore, $\mathbb{E}|P_2|\leq C \Delta_n^{-1/2}\frac{\Delta_n}{\sqrt{ b}} \leq C \frac{1}{\sqrt{k_n}} \to 0$. 
{This shows that $ \bar{V}^{n,4}\to 0$ and we conclude \eqref{eq:convergence_in_law_unbiased}.}

\section{Other Technical Proofs}\label{ApOTPrfs}
\subsection{Technical Proofs related to Section \ref{SectionA.1}}\label{TechnicalA_1}
\begin{proof}[Proof of (\ref{eq:beta_bound})]
Suppose $q\geq 1$. Note that
\begin{align*}
    \alpha_j^n&=\left(\Delta_j^n X^{\prime}\right)\left(\Delta_j^n X^{\prime}\right)^*-c_{j-1}^n \Delta_n\\
    &=\int_{t_{j-1}}^{t_j}\left[2 \mu^{\prime}_s\left(X^{\prime}_s-X^{\prime}_{t_{j-1}}\right)+\left(c_s-c_{j-1}^n\right)\right] d s+2 \int_{t_{j-1}}^{t_j}\left(X^{\prime}_s-X^{\prime}_{t_{j-1}}\right) \sigma_s d W_s.
\end{align*}
We check first the following:
\begin{equation}\label{b_1}
\begin{aligned}
    &\mathbb{E}\left[\left\|B_{1}\right\|^q\right]:=\mathbb{E}\left[\left\|\sum_{j=1}^n K_b\left(t_{j-1}-t_i\right) \int_{t_{j-1}}^{t_j}2\mu^{\prime}_s\left(X^{\prime}_s-X^{\prime}_{t_{j-1}}\right)  ds\right\|^q\right]\\
    &\leq \left(\sum_{j=1}^n \left|K_b\left(t_{j-1}-t_i\right)\right|\right)^{q-1}\sum_{j=1}^n \left|K_b\left(t_{j-1}-t_i\right)\right|\mathbb{E}\left[\left\|\int_{t_{j-1}}^{t_j}2\mu^{\prime}_s\left(X^{\prime}_s-X^{\prime}_{t_{j-1}}\right)  ds\right\|^q\right]\\
    &\leq C\left(\sum_{j=1}^n \left|K_b\left(t_{j-1}-t_i\right)\right|\right)^{q-1}\sum_{j=1}^n \left|K_b\left(t_{j-1}-t_i\right)\right|\mathbb{E}\left[\sup _{s \in\left[0,\Delta_n\right]}\left\|X^{\prime}_{t_{j-1}+s}-X^{\prime}_{t_{j-1}}\right\|^q \right]\Delta_n^{q}\\
    &\leq C \left(\Delta_n\sum_{j=1}^n \left|K_b\left(t_{j-1}-t_i\right)\right|\right)^{q-1}\Delta_n\sum_{j=1}^n \left|K_b\left(t_{j-1}-t_i\right)\right|\Delta_n^{q/2}\leq C\left(\int_{-\infty}^{\infty}\left|K(u)\right|du\right)^{q}\Delta_n^{q / 2},
\end{aligned}
\end{equation}
where the third inequality holds with (\ref{eq:X_c_bound}) and the boundedness of $\mu^{\prime}_s$. Similarly, by BDG inequality, (\ref{eq:X_c_bound}), and the boundedness of $\sigma_s$,
\begin{equation} 
\begin{aligned}
&\mathbb{E}\left[\left\|B_{2}\right\|^q\right]:=\mathbb{E}\left[\left\|\sum_{j=1}^n K_b\left(t_{j-1}-t_i\right) \int_{t_{j-1}}^{t_j}2\left(X^{\prime}_s-X^{\prime}_{t_{j-1}}\right) \sigma_s d W_s\right\|^q\right]   \\
&=\mathbb{E}\left[\left\|\int_0^T \sum_{j=1}^n K_b\left(t_{j-1}-t_i\right) \mathbbm{1}_{\left\{t_{j-1} \leq s \leq t_j\right\}}2\left(X^{\prime}_s-X^{\prime}_{t_{j-1}}\right) \sigma_s d W_s\right\|^q\right]\\
&\leq \mathbb{E}\left[\left(\int_0^T\left\|\sum_{j=1}^n K_b\left(t_{j-1}-t_i\right) \mathbbm{1}_{\left\{t_{j-1} \leq s \leq t_j\right\}}2\left(X^{\prime}_s-X^{\prime}_{t_{j-1}}\right) \sigma_s\right\|^2 d s\right)^{q / 2}\right]\\
&=\mathbb{E}\left[\left(\int_0^T \sum_{j=1}^n K_b\left(t_{j-1}-t_i\right)^2 \mathbbm{1}_{\left\{t_{j-1} \leq s \leq t_j\right\}}\left\|2\left(X^{\prime}_s-X^{\prime}_{t_{j-1}}\right) \sigma_s\right\|^2 d s\right)^{q / 2}\right]\\
&\leq C \mathbb{E}\left[\left(\sum_{j=1}^n K_b\left(t_{j-1}-t_i\right)^2 \int_{t_{j-1}}^{t_j}\left\|X^{\prime}_s-X^{\prime}_{t_{j-1}}\right\|^2 d s\right)^{q / 2}\right]\\
&\leq C \left(\sum_{j=1}^n K_b\left(t_{j-1}-t_i\right)^2\right)^{q/2-1}\sum_{j=1}^n K_b\left(t_{j-1}-t_i\right)^2 \mathbb{E}\left[\left(\int_{t_{j-1}}^{t_j}\left\|X^{\prime}_s-X^{\prime}_{t_{j-1}}\right\|^2 d s\right)^{q / 2}\right]\\
&\leq C \left(\sum_{j=1}^n K_b\left(t_{j-1}-t_i\right)^2\right)^{q/2-1}\sum_{j=1}^n K_b\left(t_{j-1}-t_i\right)^2 \mathbb{E}\left[\sup _{s \in\left[0,\Delta_n\right]}\left\|X^{\prime}_{t_{j-1}+s}-X^{\prime}_{t_{j-1}}\right\|^q \right]\Delta_n^{q/2}\\
&\leq C \left(\sum_{j=1}^n K_b\left(t_{j-1}-t_i\right)^2\right)^{q/2}\Delta_n^{q}\leq C\left(\int_{-\infty}^{\infty}K(u)^2du\right)^{q/2}\Delta_n^{q / 2} b^{-q / 2}.
\end{aligned}
\end{equation}
Then, by the boundedness of $\tilde{\mu}_t$,
\begin{equation}
\begin{aligned}
&\mathbb{E}\left[\left\|B_{3}\right\|^q\right]:=\mathbb{E}\left[\left\|\sum_{j=1}^n K_b\left(t_{j-1}-t_i\right) \int_{t_{j-1}}^{t_j}\int_{t_{j-1}}^s \tilde{\mu}_t d t  ds\right\|^q\right]\\
&\leq C\left(\Delta_n^2 \sum_{j=1}^n \left|K_b\left(t_{j-1}-t_i\right)\right|\right)^q\leq C\left(\int \left|K(u)\right|du\right)^q \Delta_n^q.
\end{aligned}
\end{equation}
With the boundedness of $\tilde{\sigma}_s$, for some $s_{j}\in [t_{j-1},t_{j}]$,
\begin{equation}
\begin{aligned}
&\mathbb{E}\left[\left\|B_{4}\right\|^q\right]:=\mathbb{E}\left[\left\|\sum_{j=1}^n K_b\left(t_{j-1}-t_i\right) \int_{t_{j-1}}^{t_j}\int_{t_{j-1}}^s \tilde{\sigma}_t d W_t d s\right\|^q\right]\\
&=\mathbb{E}\left[\left\|\sum_{j=1}^n K_b\left(t_{j-1}-t_i\right) \int_{t_{j-1}}^{s_j} \tilde{\sigma}_s d W_s \Delta_n\right\|^q\right]\\
&=\mathbb{E}\left[\left\|\int_0^T \sum_{j=1}^n K_b\left(t_{j-1}-t_i\right) \mathbbm{1}_{\left\{t_{j-1} \leq s \leq s_j\right\}} \tilde{\sigma}_s \Delta_n d W_s\right\|^q\right]\\
&\leq \mathbb{E}\left[\left(\int_0^T \sum_{j=1}^n K_b\left(t_{j-1}-t_i\right)^2 \mathbbm{1}_{\left\{t_{j-1} \leq s \leq s_j\right\}} \Delta_n^2\left\|\tilde{\sigma}_s\right\|^2 d s\right)^{q / 2}\right]\\
&\leq C\left(\sum_{j=1}^n K_b\left(t_{j-1}-t_i\right)^2 \Delta_n^3\right)^{q / 2}\leq\left(\int_{-\infty}^{\infty} K(u)^2 d u b^{-1} \Delta_n^2\right)^{q / 2}\leq C_q \Delta_n^q b^{-q / 2}.
\end{aligned}
\end{equation}
Finally, since
$$
c_{j-1}^n-c_{i}^n=\int^{t_{j-1}}_{t_{i}} \tilde{\mu}_s d s+\int^{t_{j-1}}_{t_{i}} \tilde{\sigma}_s d W_s,
$$
we have
\begin{equation}
\begin{aligned}
&\mathbb{E}\left[\left\|B_{5}\right\|^q\right]:=\mathbb{E}\left[\left\|\sum_{j=1}^n K_b\left(t_{j-1}-t_i\right)\int^{t_{j-1}}_{t_{i}} \tilde{\mu}_s d s\Delta_n\right\|^q\right]\\
&\leq C\left(\Delta_n\sum_{j=1}^n \left|K_b\left(t_{j-1}-t_i\right)\right|\left|t_{j-1}-t_i\right|\right)^q\leq C \left(\int \left|K(u)u\right|du\right)^q b^q.
\end{aligned}
\end{equation}
and
\begin{equation}\label{b_6}
\begin{aligned}
&\mathbb{E}\left[\left\|B_{6}\right\|^q\right]:=\mathbb{E}\left[\left\|\sum_{j=1}^n K_b\left(t_{j-1}-t_i\right) \int_{t_i}^{t_{j-1}} \tilde{\sigma}_s d W_s \Delta_n\right\|^q\right]\\
&=\mathbb{E}\left[\left\|\int_0^T \sum_{j=1}^n K_b\left(t_{j-1}-t_i\right) \mathbbm{1}_{\left\{t_i \leq s \leq t_{j-1}\right\}} \Delta_n \tilde{\sigma}_s d W_s\right\|^q\right]\\
&\leq \mathbb{E}\left[\left(\int_0^T\left\|\sum_{j=1}^n K_b\left(t_{j-1}-t_i\right) \mathbbm{1}_{\left\{t_i \leq s \leq t_{j-1}\right\}} \Delta_n \tilde{\sigma}_s\right\|^2 d s\right)^{q / 2}\right]\\
&\leq C \left(\int_0^T \sum_{j, l} K_b\left(t_{j-1}-t_i\right) K_b\left(t_{l-1}-t_i\right) \mathbbm{1}_{\left\{t_i \leq s \leq t_{j-1}\right\}} \mathbbm{1}_{\left\{t_i \leq s \leq t_{l-1}\right\}} \Delta_n^2 d s\right)^{q / 2}\\
&\leq C\left(\sum_j K_b\left(t_{j-1}-t_i\right) \sum_{l, l \leq j} K_b\left(t_{l-1}-t_i\right)\left(t_{l-1}-t_i\right) \Delta_n^2\right)^{q/2}\\
&\leq C\left(\int_{-\infty}^{\infty} K(u) d u \int_{-\infty}^{\infty} K(u) u d u b\right)^{q / 2} \leq C_q b^{q / 2}.
\end{aligned}
\end{equation}
Recall the definition of $\beta_i^n$, 
\begin{align*}
    \mathbb{E}\left[\left\|\beta_i^n\right\|^q\right] &= \mathbb{E}\left[\left\|\frac{1}{\bar{K}(t_{i})}\sum_{j=1}^6B_{j}\right\|^q\right]\leq C_q\mathbb{E}\left[\left(\sum_{j=1}^6\left\|B_{j}\right\|\right)^q\right].
\end{align*}
Then, applying Multinomial theorem, the cross terms can be bounded by Cauchy-Schwartz inequality with (\ref{b_1})-(\ref{b_6}). In conclusion, we obtain
$$
\mathbb{E}\left[\left\|\beta_i^n\right\|^q\right] \leq C_q\left(\Delta_n^{q / 2} b^{-q / 2}+b^{q / 2}\right), \quad \text{for}\ q\geq 1.
$$
\end{proof}
\begin{proof}[Proof of (\ref{eq:K_discrete})]
	From Lemma \ref{K_discrete}, we have 
\begin{align*}
   \Delta_n \sum_{i=1}^n K_{{b}}\left(t_{i-1} - \tau \right) &= \int_0^T K_{{b}}(s-\tau) ds +D_1(1)\\
   &= \int_{-\tau/{b}}^{(T-\tau)/{b}} K(u) du+O\left(\frac{\Delta_n}{b}\right)\\
   &=1-\left(\int_{-\infty}^{-\tau/{b}}+\int_{(T-\tau)/{b}}^{\infty}\right) K(u) du+O\left(\frac{\Delta_n}{b}\right).
 \end{align*}
 Therefore, using the Assumption \ref{kernel}-c(ii) and denoting $\underline{\tau}=\inf\{\tau\in K\}>0$ and $\bar{\tau}=\sup\{\tau\in K\}<T$, 
 \begin{align*}
   \sup_{\tau\in K}\left|\Delta_n \sum_{i=1}^n K_{{b}}\left(t_{i-1} - \tau \right)-1\right|&\leq\left(\int_{-\infty}^{-\underline{\tau}/{b}}+\int_{(T-\bar{\tau})/{b}}^{\infty}\right) |K(u)| du+O\left(\frac{\Delta_n}{b}\right)\\
   &\leq C \left(\frac{\underline{\tau}}{b}\right)^{-1-\varepsilon}+C\left(\frac{T-\bar{\tau}}{b}\right)^{-1-\varepsilon}+O\left(\frac{\Delta_n}{b}\right)\\
     &=O\left(\frac{\Delta_n}{b}\right).
\end{align*}
\end{proof}
\begin{proof}[Proof of (\ref{lowerboundofdelta})]
To show $\left|\Bar{K}\left(t\right)\right|=\left|\Delta_{n}\sum_{j=1}^{n}K_{{b}}\left(t_{j-1}-t\right)\right|>\delta$ uniformly on $[0,T]$ for some $\delta>0$, note that by Lemma \ref{K_discrete}, $\forall t$,
$$
\Bar{K}\left(t\right)=1-\left(\int_{-\infty}^{-t / h}+\int_{\left(T-t\right) / h}^{\infty}\right) K(u) d u+D_1(1)
$$
where
$$
D_1(1)=\frac{1}{2} \left[K\left(A^{+}\right)-K\left(B^{-}\right)\right] \frac{\Delta_n}{b}+o\left(\frac{\Delta_n}{b}\right)\to 0
$$
uniformly. For any $\epsilon_{0}\in (0,T/2)$, when $t\in[\epsilon_{0},T-\epsilon_{0}]$,
\begin{align*}
    &\left|\left(\int_{-\infty}^{-t / h}+\int_{\left(T-t\right) / h}^{\infty}\right) K(u) d u\right|\leq \left(\int_{-\infty}^{-t / h}+\int_{\left(T-t\right) / h}^{\infty}\right) \left|K(u)\right| d u\\
    &\leq \left(\int_{-\infty}^{-\epsilon_{0} / h}+\int_{\epsilon_{0} / h}^{\infty}\right) \left|K(u)\right| d u\to 0
\end{align*}
uniformly. When $t\in [0,\epsilon_{0})$, note that
$\left|\int_{\left(T-t\right) / h}^{\infty}K(u)du\right|\to 0 $ uniformly. If $t=0$,
$$
\left|1-\int_{-\infty}^{-t / h}K(u)du\right|=\left|\int_{0}^{\infty}K(u)du\right|>\delta.
$$
If $t\neq 0$, consider the property when $\left|x\right|\to \infty$, $K(x)x^{2}\to 0$. Thus, $\exists x_{0}$, s.t. $\forall \left|x\right|>x_{0}$, we have $\left|K(x)\right|\leq \frac{1}{x^{2}}$. Then, $\forall t\in (0,\epsilon_{0})$, $\exists h$, s.t. $h<\frac{t}{x_{0}}$, and
\begin{align*}
    \left|\int_{-\infty}^{-t / h}K(u)du\right|\leq \int_{-\infty}^{-t / h}\left|K(u)\right|du\leq \int_{-\infty}^{-t / h}\frac{1}{u^{2}}du=\frac{b}{t}\leq \frac{1}{x_{0}}.
\end{align*}
Hence, we have
$$
\left|1-\int_{-\infty}^{-t / h}K(u)du\right|>\delta,
$$
where $\delta$ is not related to $t$. The upper bound can be obtained by (\ref{eq:K_discrete}) and small enough $\delta$.
\end{proof}
\begin{proof}[Proof of Lemma \ref{K_discrete}]
{First, for finite number of undefined points of $K^{\prime}$ in $(A,B)\subset \mathbb{R}$, the same arguments in \ref{a.2.5} can be applied so that we may assume that $K^{\prime}$ is defined in all $(A,B)\subset \mathbb{R}$. For $\tau\in(0,T)$,} the term $D_1(f)$ can be written as
\begin{equation}\label{lemmaa.1.1}
\begin{aligned}
D_{1}(f)
&= \frac{1}{b} \sum_{i=1}^{n} \int_{t_{i-1}}^{t_{i}}\left[K\left(\frac{t_{i-1}-\tau}{b}\right)-K\left(\frac{t-\tau}{b}\right)\right] f(c_t) d t \\
&= \frac{1}{b} \sum_{i=1}^{n} \int_{t_{i-1}}^{t_{i}} K^{\prime}\left(\frac{s_{t}-\tau}{b}\right) \frac{t_{i-1}-t}{b} f(c_t) d t \\
&= \frac{1}{b} \sum_{i=1}^{n} \int_{t_{i-1}}^{t_{i}} K^{\prime}\left(\frac{t_{i-1}-\tau}{b}\right) \frac{t_{i-1}-t}{b} f(c_t) d t \\
&\quad+\frac{1}{b} \sum_{i=1}^{n} \int_{t_{i-1}}^{t_{i}}\left[K^{\prime}\left(\frac{s_{t}-\tau}{b}\right)-K^{\prime}\left(\frac{t_{i-1}-\tau}{b}\right)\right] \frac{t_{i-1}-t}{b} f(c_t) d t, \\
&= D_{11}+D_{12},
\end{aligned}
\end{equation} 
for some $s_t \in (t_{i-1},t_i)$. $D_{12}$ can be controlled as the following: 
\begin{equation}\label{lemmaa.1.2}
\begin{aligned}
\left|D_{12}\right| & \leq \frac{M_{f} \Delta^{2}}{2 b^{2}} \sum_{i=1}^{n} \max _{t \in\left[t_{i-1}, t_{i}\right]}\left|K^{\prime}\left(\frac{t_{i-1}-\tau}{b}\right)-K^{\prime}\left(\frac{t-\tau}{b}\right)\right| \\
& \leq \frac{M_{f} \Delta^{2}}{2 b^{2}} V_{-\infty}^{\infty}\left(K^{\prime}\right)=O\left(\frac{\Delta^{2}}{b^{2}}\right),
\end{aligned}
\end{equation}
where $M_f:= \sup_{t \in[0,T]}|f(c_t)| < \infty$  by (\ref{eq:localization}) and  the assumption on $f$, and $V_{-\infty}^{\infty}\left(K^{\prime}\right)$ is the total variation of $K'$. Next, consider the term $D_{11}$. For any $\delta\in (0,\min(T-\tau,\tau)) $, we have 
\begin{equation*}
\begin{aligned}
D_{11} &=\frac{1}{b}\left(\sum_{\left|t_{i}-\tau\right|<\delta}+\sum_{\delta \leq\left|t_{i}-\tau\right|}
\right) K^{\prime}\left(\frac{t_{i-1}-\tau}{b}\right) \int_{t_{i-1}}^{t_{i}} \frac{t_{i-1}-t}{b} f(c_t) d t \\
& \triangleq D_{111}+D_{112} .
\end{aligned}
\end{equation*}
For an arbitrary, but fixed, $\delta>0$, the term $D_{112}$ is such that 
\begin{equation*}
\begin{aligned}
\left|D_{112}\right| & \leq \frac{1}{b} \sum_{\delta \leq\left|t_{i}-\tau\right|}\left|K^{\prime}\left(\frac{t_{i-1}-\tau}{b}\right)\right| \int_{t_{i-1}}^{t_{i}} \frac{\left|t_{i-1}-t\right|}{b}|f(c_t)| d t \\
& \leq \frac{M_{f} \Delta}{2 b} \sum_{\left|t_{i}-\tau\right| \geq \delta} \frac{\Delta}{b}\left|K^{\prime}\left(\frac{t_{i-1}-\tau}{b}\right)\right|=o\left(\frac{\Delta}{b}\right),
\end{aligned}
\end{equation*}
since, due to {(c) and (d) of Assumption \ref{kernel}}, for some $s_{i-1} \in (t_{i-1},t_{i}) $ and  for large enough $n$, 
\begin{equation}\label{LoOAS}
\begin{aligned}
& \sum_{\left|t_{i}-\tau\right| \geq \delta} \frac{\Delta}{b}\left|K^{\prime}\left(\frac{t_{i-1}-\tau}{b}\right)\right| \\
& \quad\leq \sum_{\left|t_{i}-\tau\right| \geq \delta} \int_{\left(t_{i-1}-\tau\right) / b}^{\left(t_{i}-\tau\right) / b}\left|K^{\prime}(t)\right| d t+\frac{\Delta}{b} \sum_{\left|t_{i}-\tau\right| \geq \delta}\left|K^{\prime}\left(\frac{t_{i-1}-\tau}{b}\right)-K^{\prime}\left(\frac{s_{i-1}-\tau}{b}\right)\right| \\
&\qquad\leq \left(\int_{\delta / 2b}^{+\infty}+\int_{-\infty}^{-\delta / b}\right)\left|K^{\prime}(t)\right| d t+\frac{\Delta}{b} V_{-\infty}^{\infty}\left(K^{\prime}\right)=o(1).
\end{aligned}
\end{equation}
Now consider the first term $D_{111}$. Fix $\varepsilon>0$. Since $c$ and $f$ are continuous,  there exists $\delta=\delta(\varepsilon)>0 $ such that $\left|f(c_t) - f(c_s) \right| \le \varepsilon$ whenever $|t-s|<\delta$. Then,
\begin{equation}
\begin{aligned}
D_{111}=& \frac{1}{b}\left(\sum_{\left|t_{i}-\tau\right|<\delta, K^{\prime} \leq 0}+\sum_{\left|t_{i}-\tau\right|<\delta, K^{\prime}>0}\right) K^{\prime}\left(\frac{t_{i-1}-\tau}{b}\right) \int_{t_{i-1}}^{t_{i}} \frac{t_{i-1}-t}{b} f(c_t) d t \\
\leq & \frac{1}{b} \sum_{\left|t_{i}-\tau\right|<\delta, K^{\prime} \leq 0} K^{\prime}\left(\frac{t_{i-1}-\tau}{b}\right) \int_{t_{i-1}}^{t_{i}} \frac{t_{i-1}-t}{b}(f(c_\tau)+\varepsilon) d t \\
&+\frac{1}{b} \sum_{\left|t_{i}-\tau\right|<\delta, K^{\prime}>0} K^{\prime}\left(\frac{t_{i-1}-\tau}{b}\right) \int_{t_{i-1}}^{t_{i}} \frac{t_{i-1}-t}{b}(f(c_\tau)-\varepsilon) d t \\
=&-\frac{f(c_\tau) \Delta}{2 b} \sum_{\left|t_{i}-\tau\right|<\delta} \frac{\Delta}{b} K^{\prime}\left(\frac{t_{i-1}-\tau}{b}\right)+\frac{\varepsilon \Delta}{2 b} \sum_{\left|t_{i}-\tau\right|<\delta} \frac{\Delta}{b}\left|K^{\prime}\left(\frac{t_{i-1}-\tau}{b}\right)\right|.
\end{aligned}
\end{equation}
Similarly, a lower bound can be written as: 
\begin{equation*}
D_{111} \geq-\frac{f(c_{\tau}) \Delta}{2 b} \sum_{\left|t_{i}-\tau\right|<\delta} \frac{\Delta}{b} K^{\prime}\left(\frac{t_{i-1}-\tau}{b}\right)-\frac{\varepsilon\Delta}{2 b} \sum_{\left|t_{i}-\tau\right|<\delta} \frac{\Delta}{b}\left|K^{\prime}\left(\frac{t_{i-1}-\tau}{b}\right)\right|.
\end{equation*}
Note that, since $\int |K'(x)|dx<\infty$, 
\[
 \sum_{\left|t_{i}-\tau\right|<\delta} \frac{\Delta}{b}\left|K^{\prime}\left(\frac{t_{i-1}-\tau}{b}\right)\right|\leq 
 \sum_{i} \frac{\Delta}{b}\left|K^{\prime}\left(\frac{t_{i-1}-\tau}{b}\right)\right|\longrightarrow\int |K'(x)|dx.
\]
Also, letting $s_{i-1} \in\left(t_{i-1}, t_{i}\right)$ be such that $\int_{\left(t_{i-1}-\tau\right) / b}^{\left(t_{i}-\tau\right) / b} K^{\prime}(x) d x=\frac{\Delta}{b} K^{\prime}\left(\frac{s_{i-1}-\tau}{b}\right)$ and, denoting
$\delta^{+}:=\max \left\{t_{i}-\tau: t_{i}<\tau+\delta\right\}$ and $\delta^{-}:=\min \{t_{i-1}-\tau: t_{i}>\tau-\delta\}$, we have:
\begin{equation} \label{eq:K_B_A}
\begin{aligned}
& \sum_{\left|t_{i}-\tau\right|<\delta} \frac{\Delta}{b} K^{\prime}\left(\frac{t_{i-1}-\tau}{b}\right) \\
&\quad= \sum_{\left|t_{i}-\tau\right|<\delta} \int_{\left(t_{i-1}-\tau\right) / b}^{\left(t_{i}-\tau\right) / b} K^{\prime}(x) d x+\frac{\Delta}{b} \sum_{\left|t_{i}-\tau\right|<\delta}\left[K^{\prime}\left(\frac{t_{i-1}-\tau}{b}\right)-K^{\prime}\left(\frac{s_{i-1}-\tau}{b}\right)\right] \\
&\quad = \int_{\left(\delta^{-} / b, \delta^{+}/ b\right) \cap(A, B)} K^{\prime}(x) d x+\frac{\Delta}{b} \sum_{\left|t_{i}-\tau\right|<\delta}\left[K^{\prime}\left(\frac{t_{i-1}-\tau}{b}\right)-K^{\prime}\left(\frac{s_{i-1}-\tau}{b}\right)\right] \\
&\quad =(K(B-)-K(A+))+o(1), 
\end{aligned}
\end{equation} since 
\begin{align*}
&\frac{\Delta}{b} \sum_{\left|t_{i}-\tau\right|<\delta}\left|K^{\prime}\left(\frac{t_{i-1}-\tau}{b}\right)-K^{\prime}\left(\frac{s_{i-1}-\tau}{b}\right)\right| \leq \frac{\Delta}{b} V_{-\infty}^{\infty}\left(K^{\prime}\right)=O\left(\frac{\Delta}{b}\right)=o(1),\\
&\int_{\left(\delta^{-} / b, \delta^{+} / b\right) \cap(A, B) }K^{\prime}(x) d x \longrightarrow K(B-)-K(A+).
\end{align*} 
Combining the previous inequalities, we have
\begin{align*}
	L-2\int |K'(x)|dx\frac{\varepsilon}{2}\leq\liminf_{n\to\infty} \frac{b}{\Delta}D_{111}\leq 
	\limsup_{n\to\infty} \frac{b}{\Delta}D_{111}
	\leq
	L+\int |K'(x)|dx\frac{\varepsilon}{2}.
\end{align*}
where $L:=(K(A+)-K(B-)) f(c_{\tau})/2$. Since $\varepsilon$ is arbitrary, we conclude that 
\begin{equation*}
D_{111}=\frac{(K(A+)-K(B-)) f(c_{\tau})}{2} \frac{\Delta}{b}+o\left(\frac{\Delta}{b}\right)
\end{equation*}
and thus (\ref{eq:D1f}) is proved. {Note that (\ref{lemmaa.1.1}) and (\ref{lemmaa.1.2}) still hold for $\tau=0$ or $T$. When $\tau=0$, 
\begin{align*}
    \left|D_{11}\right|=\left|\frac{1}{b} \sum_{i=1}^{n} \int_{t_{i-1}}^{t_{i}} K^{\prime}\left(\frac{t_{i-1}}{b}\right) \frac{t_{i-1}-t}{b} f(c_t) d t\right|\leq \frac{M_f\Delta^2}{2b^2}\sum_{i=1}^{n}\left|K^{\prime}\left(\frac{t_{i-1}}{b}\right)\right|,
\end{align*}
where $\frac{\Delta}{b}\sum_{i=1}^{n}\left|K^{\prime}\left(\frac{t_{i-1}}{b}\right)\right|\to \int_{0}^{\infty}\left|K^{\prime}(s)\right|ds$. Thus, $\left|D_{11}\right|=O\left(\frac{\Delta}{b}\right)$. {\Blue The same} result can be proved for $\tau=T$. Then, we have $D_{1}(f)=O\left(\frac{\Delta}{b}\right)$ when $\tau=0$ or $T$.}

%

For (\ref{eq:D2}), first note that by (\ref{eq:localization}) and (\ref{eq:X_c_bound}), {for $\tau\in [0,T]$,}
\begin{equation*}
\begin{split}
    &\mathbb{E}\left| \sum_{i = 1}^n K_{{b}}(t_{i-1}- \tau) f(c_i^n) \Delta_n - \sum_{i = 1}^n K_{{b}}(t_{i-1}- \tau) \int_{t_{i-1}}^{t_i}f(c_s) ds\right| \\
& \quad  \leq \sum_{i = 1}^n \left|K_{{b}}(t_{i-1}- \tau)\right|  \int_{t_{i-1}}^{t_i}\mathbb{E}\left| f(c_i^n)-f(c_s)\right|ds\\
  & \quad \leq C\sum_{i = 1}^n \left|K_{{b}}(t_{i-1}- \tau)\right|  \Delta_n^{3/2}= O(\Delta_n^{1/2}).
\end{split}
\end{equation*}
Therefore, 
\begin{equation*}
\begin{split}
    &D_2(f) = D_1(f) + \sum_{i = 1}^n K_{{b}}(t_{i-1}- \tau) f(c_{t_i}) \Delta_n - \sum_{i = 1}^n K_{{b}}(t_{i-1}- \tau) \int_{t_{i-1}}^{t_i}f(c_s) ds\\
    = & D_1(f) +  O_p\left(\Delta_n^{1/2} \right).
\end{split}
\end{equation*}
\end{proof}
\subsection{Technical Proofs related to Theorem \ref{clt}}\label{Technical2_1}
\begin{proof}[Proof of (\ref{ApprmRSWInt})]
Set
\begin{align*}
	D_n:=
	\Delta_n^{\frac{3}{2}}\sum_{j=u+1}^{n} \sum_{i=1}^{u-1} K_{{b}}\left(t_{j-1}-t_{i}\right) \partial_{\ell m}g\left(c_{i}^{n}\right) - \Delta_n^{-\frac{1}{2}}\int_{t_{u+1}}^{T}\int_0^{t_u}K_{{b}}(s-t) \partial_{\ell m}g(c_t) dt ds=o_p(1).
\end{align*}
To this end, note that, by (\ref{eq:X_c_bound}-iii) {and the local boundedness of the second-order partial derivatives of $g$,} we have
\begin{align*}
\nonumber
\mathbb{E}\left|D_n\right| &\leq\Delta_n^{-1/2}\sum_{j=u+1}^{n-1}\sum_{i=1}^{u}\int_{t_{j}}^{t_{j+1}}\int_{t_{i-1}}^{t_i}\mathbb{E}\left|K_{{b}}(t_{j-1}-t_i )\partial_{\ell m}g(c^n_i) - K_{{b}}(s-t)\partial_{\ell m}g(c_t)\right| dt ds\\
\nonumber
 & \leq \Delta_n^{-1/2}\sum_{j=u+1}^{n-1}\sum_{i=1}^{u}\int_{t_{j}}^{t_{j+1}}\int_{t_{i-1}}^{t_i}\mathbb{E}\left[\left|K_{{b}}(t_{j-1}-t_i )\left(\partial_{\ell m}g(c^n_i)- \partial_{\ell m}g(c_t)\right) \right|\right.\\
 \nonumber
& \qquad \qquad \qquad \qquad \qquad \qquad \qquad \quad + \left.\left| \partial_{\ell m}g(c_t)\left( K_{{b}}(t_{j-1}-t_i )-K_{{b}}(s-t) \right)\right|\right] dt ds\\
\nonumber
 &\leq C  \Delta_n^{-1/2}\sum_{j=u+1}^{n-1}\sum_{i=1}^{u}\int_{t_{j}}^{t_{j+1}}\int_{t_{i-1}}^{t_i}\left[\left|K_{{b}}(t_{j-1}-t_i )\right|\sqrt{\Delta_n} + \left|K_{{b}}(t_{j-1}-t_i )-K_{{b}}(s-t)\right|\right]dt ds\\
 \nonumber
 &\leq C \sum_{j=u+1}^{n-1}\sum_{i=1}^{u}\left[\left|K_{{b}}(t_{j-1}-t_i )\right|\Delta_n^2 + \Delta_n^{-1/2}\int_{t_{j}}^{t_{j+1}}\int_{t_{i-1}}^{t_i}\left|K_{{b}}(t_{j-1}-t_i )-K_{{b}}(s-t)\right|dt ds\right]\\
& =: P_1 + P_2.
\end{align*}
We have 
\begin{equation*}
P_1 \leq C \Delta_n \sum_{j = u+1}^{n-1} \int_{\frac{(t_{j-1}-t_{u})}{b}}^{\frac{t_{j-1}}{b}}|K(v)| dv\leq  C b \int_0^{\frac{(T-t_{u})}{b}} \int_{w}^{w +\frac{t_u}{b}}|K(v)|dv dw\leq C b \int_0^{\infty} \bar{L}(w) dw\to 0,
\end{equation*} 
where $\bar{L}(w) = \int_w^{\infty}|K(v)| dv\mathrm{1}_{w>0}$. 
For $P_2$, note that, for some $ \xi_i \in (t_{i-3}, t_{i})$,  \begin{equation*}
\begin{split}
P_2 \leq& C \Delta_n^{3/2}\sum_{j = u+1}^{n-1} \sum_{i=1}^{u} \left| K'\left(\frac{t_{j-1} - \xi_i }{b} \right)\right|\frac{\Delta_n}{{b}^{2}}\\
\leq & C \Delta_n^{3/2}\sum_{j = u}^n \sum_{i=1}^{u-1} \left| K'\left(\frac{t_{j-1} - t_i }{b} \right)\right|+\Delta_n^{3/2}\sum_{j = u}^n \sum_{i=1}^{u-1} \left| K'\left(\frac{t_{j-1} - t_i }{b} \right)- K'\left(\frac{t_{j-1} - \xi_i }{b} \right)\right| \\
\leq & C \Delta_n^{1/2}b\sum_{j = u}^n \int_{(t_{j-1} - t_u)/{b}}^{t_{j-1}/{b}} \left| K'\left(v\right)\right|dv+\Delta_n^{3/2}\sum_{j = u}^n V_{-\infty}^{\infty}(K') \\
\leq & C \frac{{b}^{2}}{\sqrt{\Delta_n}} \int_{0}^{\infty} \int_{w}^{\infty} \left| K'\left(v\right)\right|dv+\Delta_n^{3/2}\sum_{j = u}^n V_{-\infty}^{\infty}(K') \to  0.
\end{split}
\end{equation*}
\end{proof}

\begin{proof}[Proof of (\ref{eq:K2_discrete})]
{Denote
\begin{equation}
\begin{array}{l}
S_{1}:=\sum_{i=1}^n K_{{b}}^2\left(t_{j-1}-t_i\right) \bar{K}\left(t_i\right)^{-1} \mathbbm{1}_{\left\{t_{j-1}-t_i\geq 0\right\}} \partial g\left(c_i^n\right) \Delta_n, \\
S_{2}:=\sum_{i=1}^n K_{{b}}^2\left(t_{j-1}-t_i\right) \bar{K}\left(t_i\right)^{-1} \mathbbm{1}_{\left\{t_{j-1}-t_i\geq 0\right\}}\int_{t_{i-1}}^{t_i} \partial g\left(c_s\right) d s,\\
S_{3}:=\int_0^T K_{{b}}^2\left(t_{j-1}-s\right) \bar{K}(s)^{-1} \mathbbm{1}_{\left\{t_{j-1}-s \geq 0\right\}} \partial g(s) d s.
\end{array}
\end{equation}
Define
$$
\tilde{K}^{j,n}\left(x\right):=\frac{K(x)^2}{\bar{K}\left(t_{j-1}-bx\right)}\mathbbm{1}_{\left\{x \geq 0\right\}}=\frac{K(x)^2}{\frac{\Delta_{n}}{b} \sum_{l=1}^n K\left(x+\frac{t_{l-1}-t_{j-1}}{b}\right)}\mathbbm{1}_{\left\{x \geq 0\right\}},
$$
and differentiate it as
$$
\partial\tilde{K}^{j,n}\left(x\right)=\frac{2 K(x) K^{\prime}(x) \bar{K}\left(t_{j-1}-b x\right)-K(x)^2 \overline{K^{\prime}}\left(t_{j-1}-b x\right)}{\left[\bar{K}\left(t_{j-1}-b x\right)\right]^2} \mathbbm{1}_{\left\{x \geq 0\right\}}.
$$
Then, similar as the definition of $D_{1}(f)$ in Lemma \ref{K_discrete}, 
\begin{align*}
    &D_1(\partial g)=S_{2}-S_{3}=\frac{1}{{b}^{2}} \sum_{i=1}^n \int_{t_{i-1}}^{t_i}\left[\tilde{K}^{j,n}\left(\frac{t_{j-1}-t_{i}}{b}\right)-\tilde{K}^{j,n}\left(\frac{t_{j-1}-s}{b}\right)\right] \partial g\left(c_s\right) d s\\
    &=\frac{1}{{b}^{2}} \sum_{i=1}^n \int_{t_{i-1}}^{t_i} \partial\tilde{K}^{j,n}\left(\frac{t_{j-1}-s_{t}}{b}\right) \frac{s-t_i}{b} \partial g\left(c_s\right) d s=\frac{1}{{b}^{2}} \sum_{i=1}^n \int_{t_{i-1}}^{t_i} \partial\tilde{K}^{j,n}\left(\frac{t_{j-1}-t_i}{b}\right) \frac{s-t_i}{b} \partial g\left(c_s\right) d s\\
    &+\frac{1}{{b}^{2}} \sum_{i=1}^n \int_{t_{i-1}}^{t_i} \left[\partial\tilde{K}^{j,n}\left(\frac{t_{j-1}-s_{t}}{b}\right)-\partial\tilde{K}^{j,n}\left(\frac{t_{j-1}-t_i}{b}\right)\right] \frac{s-t_i}{b} \partial g\left(c_s\right) d s=:D_{11}+D_{12},
\end{align*}
where $s_t \in\left(s, t_i\right)$. Then,
\begin{equation}\label{d12decomp}
\begin{split}
\left|D_{12}\right|&\leq \frac{M_{\partial g} \Delta_n^2}{2 b^3} \sum_{i=1}^n \max _{s \in\left[t_{i-1}, t_i\right]}\left|\partial\tilde{K}^{j,n}\left(\frac{t_{j-1}-s}{b}\right)-\partial\tilde{K}^{j,n}\left(\frac{t_{j-1}-t_i}{b}\right)\right|\\
&\leq \frac{M_{\partial g} \Delta_n^2}{2 b^3}\left\{\sum_{i=1}^n \max _{s\in\left[t_{i-1}, t_i\right]}\left|\frac{K\left(\frac{t_{j-1}-s}{b}\right) K^{\prime}\left(\frac{t_{j-1}-s}{b}\right) \bar{K}\left(t_i\right)-K\left(\frac{t_{j-1}-t_i}{b}\right) K^{\prime}\left(\frac{t_{j-1}-t_i}{b}\right) \bar{K}(s)}{\bar{K}(s) \bar{K}\left(t_i\right)}\right|\right.\\
&\left.+\sum_{i=1}^n \max _{s \in\left[t_{i-1}, t_i\right]}\left|\frac{K\left(\frac{t_{j-1}-s}{b}\right)^2 \overline{K^{\prime}}(s) \bar{K}\left(t_i\right)^2-K\left(\frac{t_{j-1}-t_{i}}{b}\right)^2 \overline{K^{\prime}}\left(t_i\right) \bar{K}(s)^2}{\bar{K}\left(t_i\right)^2 \bar{K}(s)^2}\right|\right\}.
\end{split}
\end{equation}
Note that 
\begin{equation}\label{d12decomp1}
\begin{split}
    &\sum_{i=1}^n \max _{s\in\left[t_{i-1}, t_i\right]}\left|\frac{K\left(\frac{t_{j-1}-s}{b}\right) K^{\prime}\left(\frac{t_{j-1}-s}{b}\right) \bar{K}\left(t_i\right)-K\left(\frac{t_{j-1}-t_i}{b}\right) K^{\prime}\left(\frac{t_{j-1}-t_i}{b}\right) \bar{K}(s)}{\bar{K}(s) \bar{K}\left(t_i\right)}\right|\\
    &\leq \frac{1}{\delta^2} \sum_{i=1}^n \max _{s \in\left[t_{i-1}, t_i\right]} \left|\left[K\left(\frac{t_{j-1}-s}{b}\right)-K\left(\frac{t_{j-1}-t_i}{b}\right)\right] K^{\prime}\left(\frac{t_{j-1}-s}{b}\right) \bar{K}\left(t_i\right)\right|\\
    &\quad+\left|K\left(\frac{t_{j-1}-t_i}{b}\right)\left[K^{\prime}\left(\frac{t_{j-1}-s}{b}\right)-K^{\prime}\left(\frac{t_{j-1}-t_i}{b}\right)\right] \bar{K}\left(t_i\right)\right|\\
    &\quad+\left|K\left(\frac{t_{j-1}-t_i}{b}\right) K^{\prime}\left(\frac{t_{j-1}-t_i}{b}\right)\left[\bar{K}\left(t_i\right)-\bar{K}(s)\right]\right|\\
    &\leq \frac{C}{\delta^2}\left[V_{-\infty}^{+\infty}(K)+V_{-\infty}^{+\infty}\left(K^{\prime}\right)+\sum_{i=1}^n \max _{s \in\left[t_{i-1}, t_i\right]}\left|\bar{K}\left(t_i\right)-\bar{K}(s)\right|\right].
\end{split}
\end{equation}
Similarly, we have
\begin{equation}\label{d12decomp2}
\begin{split}
    &\sum_{i=1}^n \max _{s \in\left[t_{i-1}, t_i\right]}\left|\frac{K\left(\frac{t_{j-1}-s}{b}\right)^2 \overline{K^{\prime}}(s) \bar{K}\left(t_i\right)^2-K\left(\frac{t_{j-1}-t_{i}}{b}\right)^2 \overline{K^{\prime}}\left(t_i\right) \bar{K}(s)^2}{\bar{K}\left(t_i\right)^2 \bar{K}(s)^2}\right|\\
    &\leq \frac{C}{\delta^4}\left[V_{-\infty}^{+\infty}\left(K^{2}\right)+\sum_{i=1}^n \max _{s \in\left[t_{i-1}, t_i\right]}\left|\overline{K^{\prime}}\left(t_i\right)-\overline{K^{\prime}}(s)\right|+\sum_{i=1}^n \max _{s \in\left[t_{i-1}, t_i\right]}\left|\bar{K}^2\left(t_i\right)-\bar{K}^2(s)\right|\right].
\end{split}
\end{equation}
Consider
$$
\sum_{i=1}^n \max _{s \in\left[t_{i-1}, t_i\right]}\left|\bar{K}\left(t_i\right)-\bar{K}(s)\right|=\sum_{i=1}^n\left|\bar{K}\left(t_i\right)-\bar{K}\left(s_i\right)\right|=\sum_{i=1}^n\left|\bar{K}^{\prime}\left(s_i^*\right)\right|\left|t_i-s_i\right|,
$$
where $s_{i}=\arg\max_{s \in\left[t_{i-1}, t_i\right]}\left|\bar{K}\left(t_i\right)-\bar{K}(s)\right|$ and $s_i^* \in\left[s_i, t_i\right]$. Meanwhile,
\begin{align*}
    \bar{K}^{\prime}(t)&=\frac{\Delta_{n}}{b} \sum_{l=1}^n K^{\prime}\left(\frac{t_{l-1}-t}{b}\right)\left(-\frac{1}{b}\right)\\
    &=\sum_{i=1}^{n}\left[K_{{b}}^{\prime}\left(t_{l-1}-t\right)\Delta_{n}-\int_{t_{i-1}}^{t_{i}}K_{{b}}^{\prime}\left(u-t\right)du\right]\left(-\frac{1}{b}\right)-\frac{1}{{b}^{2}} \int_0^T K^{\prime}\left(\frac{u-t}{b}\right) d u,
\end{align*}
where $K^{\prime}_{b}(x):=\frac{1}{b}K^{\prime}\left(\frac{x}{b}\right)$. By Lemma \ref{K_discrete}, $D_{1}(1)=\sum_{i=1}^{n}\left[K_{{b}}^{\prime}\left(t_{l-1}-t\right)\Delta_{n}-\int_{t_{i-1}}^{t_{i}}K_{{b}}^{\prime}\left(u-t\right)du\right]=O\left(\frac{\Delta_{n}}{b}\right)$. Thus,
$$
\bar{K}^{\prime}(t)=\frac{1}{b}\left[K\left(-\frac{t}{b}\right)-K\left(\frac{T-t}{b}\right)\right]+O\left(\frac{\Delta_{n}}{b^{2}}\right).
$$
$\forall \epsilon^{\prime}>0$, $\exists$ positive constant $C\left(\epsilon^{\prime}\right)$ such that $\forall x>C\left(\epsilon^{\prime}\right)$, $\left|K(x)\right|\left|x\right|^{2+\epsilon}<\epsilon^{\prime}$. Then, for $C k_n+1 \leq i \leq n-C k_n$, since $\left|\frac{s_i^*}{b}\right|,\left|\frac{T-s_i^*}{b}\right|\geq \frac{C k_n \Delta_n}{b}=C$,
\begin{align*}
\sum_{i=1}^n\left|\bar{K}^{\prime}\left(s_i^*\right)\right|\left|t_i-s_i\right|&\leq \left(\sum_{i=C k_n+1}^{n-C k_n}+\sum_{i=1}^{C k_n}+\sum_{i=n-C k_n+1}^n\right) \frac{1}{b}\left[\left|K\left(-\frac{s_i^*}{b}\right)\right|+\left|K\left(\frac{T-s_i^*}{b}\right)\right|\right] \Delta_n+O\left(\frac{\Delta_{n}}{b^{2}}\right)\\
&\leq \sum_{i=C k_n+1}^{n-C k_n} \frac{1}{b}\left[\left|\frac{s_i^*}{b}\right|^{-2}+\left|\frac{T-s_i^*}{b}\right|^{-2}\right] \Delta_n \epsilon^{\prime}+2 C k_n \frac{1}{b} 2 M_K \Delta_n+O\left(\frac{\Delta_{n}}{b^{2}}\right)\\
&\leq 2b \epsilon^{\prime}\Delta_n\sum_{i=C k_n+1}^{n-C k_n}t_{i-1}^{-2} + 4CM_{K}+O\left(\frac{\Delta_{n}}{b^{2}}\right),
\end{align*}
where 
\begin{align*}
    &\Delta_n\sum_{i=C k_n+1}^{n-C k_n}t_{i-1}^{-2}=\sum_{i=C k_n+1}^{n-C k_n}\int_{t_{i-1}}^{t_{i}}\left(t_{i-1}^{-2}-s^{-2}\right)ds+\int_{t_{Ck_n}}^{t_{n-Ck_n}}s^{-2}ds\\
    &\leq \sum_{i=C k_n+1}^{n-C k_n}2\Delta_{n}^{2}t_{i-1}^{-3}+\left(\frac{1}{t_{Ck_{n}}}-\frac{1}{t_{n-Ck_n}}\right)\leq 2C^{-3}\Delta_{n}^{-2}k_{n}^{-3}+C^{-1}\Delta_{n}^{-1}k_{n}^{-1}.
\end{align*}
Hence, $b \epsilon^{\prime}\Delta_n\sum_{i=C k_n+1}^{n-C k_n}t_{i-1}^{-2}=O\left(\Delta_{n}^{-1}k_{n}^{-2}\right)+O(1)$. We conclude that $\sum_{i=1}^n \max _{s \in\left[t_{i-1}, t_i\right]}\left|\bar{K}\left(t_i\right)-\bar{K}(s)\right|<\infty$. Similar proof can be shown for $\sum_{i=1}^n \max _{s \in\left[t_{i-1}, t_i\right]}\left|\overline{K^{\prime}}\left(t_i\right)-\overline{K^{\prime}}(s)\right|$. Also note that 
$$
\sum_{i=1}^n \max _{s \in\left[t_{i-1}, t_i\right]}\left|\bar{K}^2\left(t_i\right)-\bar{K}^2(s)\right|=\sum_{i=1}^n \max _{s \in\left[t_{i-1}, t_i\right]}\left|\bar{K}\left(t_i\right)+\bar{K}(s)\right|\left|\bar{K}\left(t_i\right)-\bar{K}(s)\right|<\infty.
$$
By (\ref{d12decomp}), (\ref{d12decomp1}) and (\ref{d12decomp2}), We can further conclude that $\left|D_{12}\right|=O_{p}\left(\frac{\Delta_{n}^{2}}{b^3}\right)$.\\
For any $\delta^{\prime} \in\left(0, \min \left(T-t_{j-1}, t_{j-1}\right)\right)$,
\begin{align*}
    D_{11}&=\frac{1}{{b}^{2}}\left(\sum_{i:\left|t_i-t_{j-1}\right|<\delta^{\prime}}+\sum_{i:\left|t_i-t_{j-1}\right| \geq \delta^{\prime}}\right)\partial\tilde{K}^{j,n}\left(\frac{t_{j-1}-t_i}{b}\right) \int_{t_{i-1}}^{t_i} \frac{s-t_i}{b} \partial g\left(c_s\right) d s\\
    &=:D_{111}+D_{112},
\end{align*}
where
\begin{align*}
    \left|D_{112}\right|&\leq \frac{1}{{b}^{2}}\sum_{i:\left|t_i-t_{j-1}\right| \geq \delta^{\prime}}\left|\partial\tilde{K}^{j,n}\left(\frac{t_{j-1}-t_i}{b}\right)\right|\int_{t_{i-1}}^{t_i} \frac{\left|s-t_i\right|}{b} \left|\partial g\left(c_s\right)\right| d s\\
    &\leq \frac{M_{\partial g} \Delta_{n}}{2 {b}^{2}} \sum_{i:\left|t_i-t_{j-1}\right| \geq \delta^{\prime}}\frac{\Delta_{n}}{b}\left|\frac{2 K\left(\frac{t_{j-1}-t_i}{b}\right) K^{\prime}\left(\frac{t_{j-1}-t_i}{b}\right)}{\bar{K}\left(t_i\right)}\right|+\left|\frac{K\left(\frac{t_{j-1}-t_i}{b}\right)^2 \overline{K^{\prime}}\left(t_i\right)}{\left[\bar{K}\left(t_i\right)\right]^2}\right|
\end{align*}
Since for some $s_i \in\left[t_{i-1}, t_i\right]$,
\begin{align*}
    &\sum_{i:\left|t_i-t_{j-1}\right| \geq \delta^{\prime}}\frac{\Delta_{n}}{b}\left|\frac{2 K\left(\frac{t_{j-1}-t_i}{b}\right) K^{\prime}\left(\frac{t_{j-1}-t_i}{b}\right)}{\bar{K}\left(t_i\right)}\right|+\left|\frac{K\left(\frac{t_{j-1}-t_i}{b}\right)^2 \overline{K^{\prime}}\left(t_i\right)}{\left[\bar{K}\left(t_i\right)\right]^2}\right|\\
    &\leq \sum_{i:\left|t_i-t_{j-1}\right| \geq \delta^{\prime}}\int_{\left(t_{j-1}-t_i\right) / h}^{\left(t_{j-1}-t_{i-1}\right) / h}\frac{2}{\delta}\left|K(t) K^{\prime}(t)\right|+\frac{C}{\delta^{2}}\left|K(t)^{2}\right| d t\\
    &+\frac{\Delta_{n}}{b}\sum_{i:\left|t_i-t_{j-1}\right| \geq \delta^{\prime}}\frac{2}{\delta}\left|K\left(\frac{t_{j-1}-t_i}{b}\right) K^{\prime}\left(\frac{t_{j-1}-t_i}{b}\right)-K\left(\frac{t_{j-1}-s_i}{b}\right) K^{\prime}\left(\frac{t_{j-1}-s_i}{b}\right) \right|\\
    &+\frac{C}{\delta^{2}}\left|K\left(\frac{t_{j-1}-t_i}{b}\right)^{2}-K\left(\frac{t_{j-1}-s_i}{b}\right)^{2}\right|\\
    &\leq \left(\int_{\delta / h}^{\infty}+\int_{-\infty}^{-\delta / h}\right)\frac{2}{\delta}\left|K(t) K^{\prime}(t)\right|+\frac{C}{\delta^{2}}\left|K(t)^{2}\right| d t+\frac{C}{\delta^{2}} \frac{\Delta_{n}}{b}\left[V_{-\infty}^{+\infty}(K)+V_{-\infty}^{+\infty}\left(K^{\prime}\right)+V_{-\infty}^{+\infty}\left(K^{2}\right)\right]\\
    &=o(1),
\end{align*}
we have $\left|D_{112}\right|=o_{p}\left(\frac{\Delta_{n}}{{b}^{2}}\right)$.\\
Fix $\varepsilon>0$. Since $c$ and $\partial g$ are continuous, $\exists \delta^{\prime}=\delta^{\prime}(\varepsilon)>0$ such that $\left|\partial g\left(c_t\right)-\partial g\left(c_s\right)\right|<\varepsilon$, $ \forall|t-s|<\delta^{\prime}$. Then,
\begin{align*}
    &D_{111}=\frac{1}{{b}^{2}}\left(\sum_{\substack{i:\left|t_i-t_{j-1}\right|<\delta^{\prime} \\ \partial\tilde{K}^{j, n} \leq 0}}+\sum_{\substack{i: \left| t_{i}-t_{j-1}\right|<\delta^{\prime} \\ \partial\tilde{K}^{j, n}>0}}\right)\partial\tilde{K}^{j,n}\left(\frac{t_{j-1}-t_i}{b}\right) \int_{t_{i-1}}^{t_i} \frac{s-t_i}{b} \partial g\left(c_s\right) d s\\
    &\leq \frac{1}{{b}^{2}}\sum_{\substack{i:\left|t_i-t_{j-1}\right|<\delta^{\prime} \\ \partial\tilde{K}^{j, n} \leq 0}}\partial\tilde{K}^{j,n}\left(\frac{t_{j-1}-t_i}{b}\right) \int_{t_{i-1}}^{t_i} \frac{s-t_i}{b} \left[\partial g\left(c_{t_{j-1}}\right)+\varepsilon\right] d s\\
    &+\frac{1}{{b}^{2}}\sum_{\substack{i:\left|t_i-t_{j-1}\right|<\delta^{\prime} \\ \partial\tilde{K}^{j, n} > 0}}\partial\tilde{K}^{j,n}\left(\frac{t_{j-1}-t_i}{b}\right) \int_{t_{i-1}}^{t_i} \frac{s-t_i}{b} \left[\partial g\left(c_{t_{j-1}}\right)-\varepsilon\right] d s\\
    &=-\frac{\partial g\left(c_{t_{j-1}}\right) \Delta_{n}}{2 {b}^{2}} \sum_{i:\left|t_i-t_{j-1}\right|<\delta^{\prime}} \frac{\Delta_{n}}{b} \partial\tilde{K}^{j, n}\left(\frac{t_{j-1}-t_i}{b}\right)+\frac{\varepsilon\Delta_{n}}{2 {b}^{2}} \sum_{i:\left|t_i-t_{j-1}\right|<\delta^{\prime}} \frac{\Delta_{n}}{b} \left|\partial\tilde{K}^{j, n}\left(\frac{t_{j-1}-t_i}{b}\right)\right|.
\end{align*}
Similarly, we have
$$
D_{111}\geq -\frac{\partial g\left(c_{t_{j-1}}\right) \Delta_{n}}{2 {b}^{2}} \sum_{i:\left|t_i-t_{j-1}\right|<\delta^{\prime}} \frac{\Delta_{n}}{b} \partial\tilde{K}^{j, n}\left(\frac{t_{j-1}-t_i}{b}\right)-\frac{\varepsilon\Delta_{n}}{2 {b}^{2}} \sum_{i:\left|t_i-t_{j-1}\right|<\delta^{\prime}} \frac{\Delta_{n}}{b} \left|\partial\tilde{K}^{j, n}\left(\frac{t_{j-1}-t_i}{b}\right)\right|.
$$
Note that
\begin{align*}
    &\sum_{i:\left|t_i-t_{j-1}\right|<\delta^{\prime}} \frac{\Delta_{n}}{b} \left|\partial\tilde{K}^{j, n}\left(\frac{t_{j-1}-t_i}{b}\right)\right|\leq \frac{C}{\delta} \sum_i \frac{\Delta_{n}}{b}\left|K\left(\frac{t_{j-1}-t_i}{b}\right) K^{\prime}\left(\frac{t_{j-1}-t_i}{b}\right)\right|\\
    &+\frac{C}{\delta^2} \sum_i \frac{\Delta_{n}}{b}\left|K\left(\frac{t_{j-1}-t_i}{b}\right)^2\right|\rightarrow \frac{C}{\delta} \int\left|K(x) K^{\prime}(x)\right| d x+\frac{C}{\delta^2} \int K(x)^2 d x.
\end{align*}
We get $D_{111}=O_p\left(\frac{\Delta_{n}}{{b}^{2}}\right)$, and $D_1(\partial g)=S_{2}-S_{3}=O_p\left(\frac{\Delta_{n}}{{b}^{2}}\right)$.\\
For $S_{1}-S_{2}$, note that
\begin{align*}
    &\mathbb{E}\left[\sum_{i=1}^n K_{{b}}^2\left(t_{j-1}-t_i\right) \bar{K}\left(t_i\right)^{-1} \mathbbm{1}_{\left\{t_{j-1}-t_i\geq 0\right\}} \int_{t_{i-1}}^{t_i}\left(\partial g\left(c_i^n\right)- \partial g\left(c_s\right)\right)d s\right]\\
    &\leq \frac{1}{\delta}\sum_{i=1}^n\left|K_{{b}}^2\left(t_{j-1}-t_i\right)\right|\int_{t_{i-1}}^{t_i}\mathbb{E}\left|\partial g\left(c_i^n\right)- \partial g\left(c_s\right)\right| d s\\
    &\leq \frac{C}{\delta}\sum_{i=1}^n\left|K_{{b}}^2\left(t_{j-1}-t_i\right)\right|\Delta_{n}^{3/2}=\frac{C}{\delta} \frac{\Delta_{n}^{1 / 2}}{b} \sum_{i=1}^n \frac{\Delta_{n}}{b}\left|K\left(\frac{t_{j-1}-t_i}{b}\right)^2\right|=O_p\left(\frac{\Delta_{n}^{1/2}}{b}\right).
\end{align*}
Thus, $S_{1}-S_{3}=O_{p}\left(\frac{\Delta_{n}}{{b}^{2}}\right)+O_p\left(\frac{\Delta_{n}^{1/2}}{b}\right)$.}
\end{proof}

\begin{proof}[Proof of (\ref{eq:alphapartialg})]
First note that for $i\geq l$,
\begin{equation}\label{eq:decomalphapartialg}
\left|\mathbb{E}\left[\left.\alpha_l^{n, p q} \partial_{p q, u v}^2 g\left(c_i^n\right) \right| \mathcal{F}_{l-1}^n\right]\right|=\left|\mathbb{E}\left[\left.\alpha_l^{n, p q}\left[\partial_{p q, u v}^2 g\left(c_i^n\right)-\partial_{p q, u v}^2 g\left(c_{l-1}^n\right)\right]\right| \mathcal{F}_{l-1}^n\right]\right|+\left|\partial_{p q, u v}^2 g\left(c_{l-1}^n\right)\mathbb{E}\left[\left.\alpha_l^n\right| \mathcal{F}_{l-1}^n\right]\right|,
\end{equation}
where $\mathbb{E}\left[\left.\alpha_{l}^n \right| \mathcal{F}_{l-1}^n\right] \leq C \Delta_n^{3 / 2}\left(\sqrt{\Delta_n}+\eta_{l-1}^n\right)$. By Itô's formula,
\begin{equation}\label{eq:itoformula}
\begin{split}
&\alpha_l^{n, p q}=\int_{t_{l-1}}^{t_l}\left[2\left(X^{\prime}_{s, p}-X^{\prime}_{t_{t-1, p}}\right) \mu^{\prime}_{s, q}+\left(c_s^{p q}-c_{t_{l-1}}^{p q}\right)\right] d s+2 \int_{t_{l-1}}^{t_l}\left(X^{\prime}_{s, p}-X^{\prime}_{t_{l-1}, p}\right) \sum_{r=1}^d \sigma_s^{q, r} d W_s^r,\\
&\partial_{p q , u v}^2 g\left(c_i^n\right)-\partial_{p q , u v}^2 g\left(c_{l-1}^n\right)=\int_{t_{l-1}}^{t_i}\Big[\sum_{p^{\prime}, q^{\prime}=1}^d \partial_{p q, u v, p^{\prime} q^{\prime}}^3 g\left(c_s\right) \tilde{\mu}_s^{p^{\prime} q^{\prime}}\\
&+\frac{1}{2} \sum_{r^{\prime}=1}^d \sum_{p^{\prime}, q^{\prime}=1}^d \sum_{u^{\prime}, v^{\prime}=1}^d \partial_{p q, u v, p^{\prime} q^{\prime}, u^{\prime} v^{\prime}}^4 g\left(c_s\right) \tilde{\sigma}_s^{p^{\prime} q^{\prime}, r^{\prime}} \tilde{\sigma}_s^{u^{\prime} v^{\prime}, r^{\prime}}\Big]ds+\int_{t_{l-1}}^{t_i} \sum_{r^{\prime}=1}^d \sum_{p^{\prime}, q^{\prime}=1}^d \partial_{p q, u v, p^{\prime} q^{\prime}}^3 g\left(c_s\right) \tilde{\sigma}_s^{p^{\prime} q^{\prime}, r^{\prime}} d W_s^{r^{\prime}}.
\end{split}
\end{equation}
Thus, we have the following decomposition of the term $\mathbb{E}\left[\left.\alpha_l^{n, p q}\left[\partial_{p q, u v}^2 g\left(c_i^n\right)-\partial_{p q, u v}^2 g\left(c_l^n\right)\right]\right| \mathcal{F}_{l-1}^n\right]$ in (\ref{eq:decomalphapartialg}),
\begin{equation}
\begin{aligned}
&\mathbb{E}\left[\left.\alpha_l^{n, p q}\left[\partial_{p q, u v}^2 g\left(c_i^n\right)-\partial_{p q, u v}^2 g\left(c_l^n\right)\right]\right| \mathcal{F}_{l-1}^n\right]\\
&=\mathbb{E}\left[\left.\int_{t_{l-1}}^{t_l}2\left(X^{\prime}_{s, p}-X^{\prime}_{t_{t-1, p}}\right) \mu^{\prime}_{s, q}ds \int_{t_{l-1}}^{t_i}\sum_{p^{\prime}, q^{\prime}=1}^d \partial_{p q, u v, p^{\prime} q^{\prime}}^3 g\left(c_s\right) \tilde{\mu}_s^{p^{\prime} q^{\prime}}ds \right| \mathcal{F}_{l-1}^n\right]\\
&+\mathbb{E}\left[\left. \int_{t_{l-1}}^{t_l}\left(X^{\prime}_{s, p}-X^{\prime}_{t_{l-1, p}}\right) \mu^{\prime}_{s, q} d s \int_{t_{l-1}}^{t_i}\sum_{r^{\prime}=1}^d \sum_{p^{\prime}, q^{\prime}=1}^d \sum_{u^{\prime}, v^{\prime}=1}^d \partial_{p q, u v, p q^{\prime}, u^{\prime} v^{\prime} }^{4}g\left(c_s\right) \tilde{\sigma}_s^{p^{\prime} q^{\prime}, r^{\prime}}\tilde{\sigma}_s^{u^{\prime} v^{\prime}, r^{\prime}} d s \right| \mathcal{F}_{l-1}^n\right]\\
&+\mathbb{E}\left[\left. \int_{t_{l-1}}^{t_l}\left(c_s^{p q}-c_{t_{l-1}}^{p q}\right) d s \int_{t_{l-1}}^{t_i} \sum_{p^{\prime}, q^{\prime}=1}^d\partial_{p q, u v, p^{\prime} q^{\prime}}^3 g\left(c_s\right) \tilde{\mu}_s^{p^{\prime} q^{\prime}} d s \right| \mathcal{F}_{l-1}^n\right]\\
&+\mathbb{E}\left[\left. \int_{t_{l-1}}^{t_l}\left(c_s^{p q}-c_{t_{l-1}}^{p q}\right) d s\int_{t_{l-1}}^{t_i}\frac{1}{2} \sum_{r^{\prime}=1}^d \sum_{p^{\prime}, q^{\prime}=1}^d \sum_{u^{\prime}, v^{\prime}=1}^d \partial_{p q, u v, p q^{\prime}, u^{\prime} v^{\prime} }^{4}g\left(c_s\right) \tilde{\sigma}_s^{p^{\prime} q^{\prime}, r^{\prime}}\tilde{\sigma}_s^{u^{\prime} v^{\prime}, r^{\prime}} d s \right| \mathcal{F}_{l-1}^n\right]\\
&+\mathbb{E}\left[\left. \int_{t_{l-1}}^{t_l}2\left(X^{\prime}_{s, p}-X^{\prime}_{t_{l-1, p}}\right) \mu^{\prime}_{s, q} d s\int_{t_{l-1}}^{t_i}\sum_{r^{\prime}=1}^d \sum_{p^{\prime}, q^{\prime}=1}^d \partial_{p q, u v, p^{\prime} q^{\prime}}^3 g\left(c_s\right) \tilde{\sigma}_s^{p^{\prime} q^{\prime}, r^{\prime}} d W_s^{r^{\prime}} \right| \mathcal{F}_{l-1}^n\right]\\
&+\mathbb{E}\left[\left. \int_{t_{l-1}}^{t_l}\left(c_s^{p q}-c_{t_{l-1}}^{p q}\right) d s \int_{t_{l-1}}^{t_i} \sum_{r^{\prime}=1}^d \sum_{p^{\prime}, q^{\prime}=1}^d\partial_{p q, u v, p^{\prime} q^{\prime}}^3 g\left(c_s\right) \tilde{\sigma}_s^{p^{\prime} q^{\prime}, r^{\prime}} d W_s^{r^{\prime}} \right| \mathcal{F}_{l-1}^n\right]\\
&+\mathbb{E}\left[\left. \int_{t_{l-1}}^{t_l}2\left(X^{\prime}_{s, p}-X^{\prime}_{t_{l-1, p}}\right)\sum_{r=1}^{d} \sigma_s^{q, r} d W_s^r \int_{t_{l-1}}^{t_i} \sum_{p^{\prime}, q^{\prime}=1}^d\partial_{p q, u v, p^{\prime} q^{\prime}}^3 g\left(c_s\right) \tilde{\mu}_s^{p^{\prime} q^{\prime}} d s \right| \mathcal{F}_{l-1}^n\right]\\
&+\mathbb{E}\left[\left. \int_{t_{l-1}}^{t_l}\left(X^{\prime}_{s, p}-X^{\prime}_{t_{l-1, p}}\right) \sum_{r=1}^{d}\sigma_s^{q, r} d W_s^r \int_{t_{l-1}}^{t_i}\sum_{r^{\prime}=1}^d \sum_{p^{\prime}, q^{\prime}=1}^d \sum_{u^{\prime}, v^{\prime}=1}^d \partial_{p q, u v, p^{\prime} q^{\prime}, u^{\prime} v^{\prime}}^4 g\left(c_s\right) \tilde{\sigma}_s^{p^{\prime} q^{\prime}, r^{\prime}} \tilde{\sigma}_s^{u^{\prime} v^{\prime}, r^{\prime}} d s \right| \mathcal{F}_{l-1}^n\right]\\
&+\mathbb{E}\left[\left. \int_{t_{l-1}}^{t_l}2\left(X^{\prime}_{s, p}-X^{\prime}_{t_{l-1, p}}\right)\sum_{r=1}^{d} \sigma_s^{q, r} d W_s^r \int_{t_{l-1}}^{t_i}\sum_{r^{\prime}=1}^d \sum_{p^{\prime}, q^{\prime}=1}^d \partial_{p q, u v, p^{\prime} q^{\prime}}^3 g\left(c_s\right) \tilde{\sigma}_s^{p^{\prime} q^{\prime}, r^{\prime}} d W_s^{r^{\prime}} \right| \mathcal{F}_{l-1}^n\right]\\
&=: \sum_{p^{\prime}, q^{\prime}=1}^d H_{1,1}(p^{\prime}, q^{\prime}) + \sum_{r^{\prime}=1}^d \sum_{p^{\prime}, q^{\prime}=1}^d \sum_{u^{\prime}, v^{\prime}=1}^d H_{1,2}(r^{\prime},p^{\prime}, q^{\prime},u^{\prime}, v^{\prime})+\sum_{p^{\prime}, q^{\prime}=1}^d H_{1,3}(p^{\prime}, q^{\prime})\\
&+\sum_{r^{\prime}=1}^d \sum_{p^{\prime}, q^{\prime}=1}^d \sum_{u^{\prime}, v^{\prime}=1}^d H_{1,4}(r^{\prime},p^{\prime}, q^{\prime},u^{\prime}, v^{\prime})+\sum_{r^{\prime}=1}^d \sum_{p^{\prime}, q^{\prime}=1}^d H_{2,1}(r^{\prime},p^{\prime}, q^{\prime})+\sum_{r^{\prime}=1}^d \sum_{p^{\prime}, q^{\prime}=1}^d H_{2,2}(r^{\prime},p^{\prime}, q^{\prime})\\
&+\sum_{r=1}^d \sum_{p^{\prime}, q^{\prime}=1}^d H_{3,1}(r,p^{\prime}, q^{\prime})+\sum_{r=1}^d \sum_{r^{\prime}=1}^d \sum_{p^{\prime}, q^{\prime}=1}^d \sum_{u^{\prime}, v^{\prime}=1}^d H_{3,2}(r,r^{\prime},p^{\prime}, q^{\prime},u^{\prime}, v^{\prime})+\sum_{r=1}^{d}\sum_{r^{\prime}=1}^d \sum_{p^{\prime}, q^{\prime}=1}^dH_{4,1}(r,r^{\prime},p^{\prime}, q^{\prime}),
\end{aligned}
\end{equation}
which can be divided into four different groups of product of stochastic integrals, namely "$ds\cdot ds$" ($\left\{H_{1,j}\right\}_{j=1}^{4}$); "$ds\cdot dW_s$" ($\left\{H_{2,j}\right\}_{j=1}^{2}$); "$dW_s\cdot ds$" ($\left\{H_{3,j}\right\}_{j=1}^{2}$); and "$dW_s\cdot dW_s$" ($H_{4,1}$). For simplicity, we only show the upper bound of one term for each group.\\
Group 1 ("$ds\cdot ds$"): By Cauchy-Schwartz inequality, the boundedness of $\mu_{s}^{\prime}$, $\tilde{\mu}_{s}$, $c_s$, $\tilde{c}_s$, and the local boundedness of third-order and fourth-order partial derivatives of $g$, together with (\ref{eq:X_c_bound}),
\begin{equation}\label{eq:group1}
\begin{aligned}
    &\left|H_{11}\right|=\left|\mathbb{E}\left[\left.\int_{t_{l-1}}^{t_l}\left(X^{\prime}_{s, p}-X^{\prime}_{t_{l-1, p}}\right) \mu^{\prime}_{s, q} d s \int_{t_{l-1}}^{t_i} \partial_{p q, u v, p^{\prime} q^{\prime}}^3 g\left(c_s\right) \tilde{\mu}_s^{p^{\prime} q^{\prime}} d s\right|\mathcal{F}_{l-1}^n\right]\right|\\
    &\leq C\left\{\mathbb{E}\left[\left.\sup _{t \in\left[0, \Delta_n\right]}\left\|X^{\prime}_{t_{l-1}+t}-X^{\prime}_{t_{l-1}}\right\|^2 \Delta_n^2 \right| \mathcal{F}_{l-1}^n\right]\right\}^{1 / 2}\left(t_i-t_{t-1}\right)\leq C \Delta_n^{3 / 2}\left(t_i-t_{l-1}\right).
\end{aligned}
\end{equation}
Similarly, we have
$$
\left.\begin{array}{l}
\left|\mathbb{E}\left[\left.\int_{t_{l-1}}^{t_l}\left(X^{\prime}_{s, p}-X^{\prime}_{t_{l-1, p}}\right) \mu^{\prime}_{s, q} d s \int_{t_{l-1}}^{t_i} \partial_{p q, u v, p q^{\prime}, u^{\prime} v^{\prime} }^{4}g\left(c_s\right) \tilde{\sigma}_s^{p^{\prime} q^{\prime}, r^{\prime}}\tilde{\sigma}_s^{u^{\prime} v^{\prime}, r^{\prime}} d s\right|\mathcal{F}_{l-1}^n\right]\right| \\
\left|\mathbb{E}\left[\left.\int_{t_{l-1}}^{t_l}\left(c_s^{p q}-c_{t_{l-1}}^{p q}\right) d s \int_{t_{l-1}}^{t_i} \partial_{p q, u v, p^{\prime} q^{\prime}}^3 g\left(c_s\right) \tilde{\mu}_s^{p^{\prime} q^{\prime}} d s\right|\mathcal{F}_{l-1}^n\right]\right| \\
\left|\mathbb{E}\left[\left.\int_{t_{l-1}}^{t_l}\left(c_s^{p q}-c_{t_{l-1}}^{p q}\right) d s\int_{t_{l-1}}^{t_i} \partial_{p q, u v, p q^{\prime}, u^{\prime} v^{\prime} }^{4}g\left(c_s\right) \tilde{\sigma}_s^{p^{\prime} q^{\prime}, r^{\prime}}\tilde{\sigma}_s^{u^{\prime} v^{\prime}, r^{\prime}} d s\right|\mathcal{F}_{l-1}^n\right]\right|
\end{array}\right\}\leq C \Delta_n^{3 / 2}\left(t_i-t_{l-1}\right).
$$
Group 2 ("$ds\cdot dW_s$"): By Cauchy-Schwartz inequality, the boundedness of $\mu_{s}^{\prime}$, $c_s$, $\tilde{c}_s$, and the local boundedness of third-order partial derivatives of $g$, together with (\ref{eq:X_c_bound}),
\begin{equation}\label{eq:group2}
\begin{aligned}
    &\left|H_{21}\right|=\left|\mathbb{E}\left[\left.\int_{t_{l-1}}^{t_l}\left(X^{\prime}_{s, p}-X^{\prime}_{t_{l-1, p}}\right) \mu^{\prime}_{s, q} d s\int_{t_{l-1}}^{t_i} \partial_{p q, u v, p^{\prime} q^{\prime}}^3 g\left(c_s\right) \tilde{\sigma}_s^{p^{\prime} q^{\prime}, r^{\prime}} d W_s^{r^{\prime}}\right|\mathcal{F}_{l-1}^n\right]\right|\\
    &\leq C\left\{\mathbb{E}\left[\left.\sup _{t \in\left[0, \Delta_n\right]}\left\|X^{\prime}_{t_{l-1}+t}-X^{\prime}_{t_{l-1}}\right\|^2 \Delta_n^2 \right| \mathcal{F}_{l-1}^n\right]\right\}^{1 / 2}\left\{\mathbb{E}\left[\left.\int_{t_{t-1}}^{t_i}\left[\partial_{p q, u v, p^{\prime} q^{\prime}}^3 g\left(c_s\right) \tilde{\sigma}_s^{p^{\prime} q^{\prime}, r^{\prime}}\right]^2 d s\right| \mathcal{F}_{l-1}^n\right]\right\}^{1 / 2}\\
    &\leq C \Delta_n^{3 / 2} \sqrt{t_i-t_{l-1}},
\end{aligned}
\end{equation}
where the BDG inequality is applied in the first inequality. Similarly, 
$$
\left|H_{22}\right|=\left|\mathbb{E}\left[\left.\int_{t_{l-1}}^{t_l}\left(c_s^{p q}-c_{t_{l-1}}^{p q}\right) d s \int_{t_{l-1}}^{t_i} \partial_{p q, u v, p^{\prime} q^{\prime}}^3 g\left(c_s\right) \tilde{\sigma}_s^{p^{\prime} q^{\prime}, r^{\prime}} d W_s^{r^{\prime}}\right|\mathcal{F}_{l-1}^n\right]\right|\leq C \Delta_n^{3 / 2} \sqrt{t_i-t_{l-1}}.
$$
Group 3 ("$dW_s\cdot ds$"): 
\begin{equation}\label{eq:decomgroup3}
\begin{aligned}
&\left|H_{31}\right|=\left|\mathbb{E}\left[\left.\int_{t_{l-1}}^{t_l}\left(X^{\prime}_{s, p}-X^{\prime}_{t_{l-1, p}}\right) \sigma_s^{q, r} d W_s^r \int_{t_{l-1}}^{t_i} \partial_{p q, u v, p^{\prime} q^{\prime}}^3 g\left(c_s\right) \tilde{\mu}_s^{p^{\prime} q^{\prime}} d s\right|\mathcal{F}_{l-1}^n\right]\right|\\
    &\leq \left|\mathbb{E}\left[\left. \int_{t_{l-1}}^{t_l}\left(X^{\prime}_{s, p}-X^{\prime}_{t_{l-1, p}}\right) \sigma_s^{q, r} d W_s^r \int_{t_{l-1}}^{t_i}\partial_{p q, u v, p^{\prime} q^{\prime}}^3 g\left(c_{t_{l-1}}\right) \tilde{\mu}_{t_{l-1}}^{p^{\prime} q^{\prime}}ds\right|\mathcal{F}_{l-1}^n\right]\right|\\
    &+\left|\mathbb{E}\left[\left. \int_{t_{l-1}}^{t_l}\left(X^{\prime}_{s, p}-X^{\prime}_{t_{l-1, p}}\right) \sigma_s^{q, r} d W_s^r \int_{t_{l-1}}^{t_i}\left[\partial_{p q, u v, p^{\prime} q^{\prime}}^3 g\left(c_s\right)-\partial_{p q, u v, p^{\prime} q^{\prime}}^3 g\left(c_{t_{l-1}}\right)\right] \tilde{\mu}_{t_{l-1}}^{p^{\prime} q^{\prime}} ds\right|\mathcal{F}_{l-1}^n\right]\right|\\
    &+\left|\mathbb{E}\left[\left. \int_{t_{l-1}}^{t_l}\left(X^{\prime}_{s, p}-X^{\prime}_{t_{l-1, p}}\right) \sigma_s^{q, r} d W_s^r \int_{t_{l-1}}^{t_i}\partial_{p q, u v, p^{\prime} q^{\prime}}^3 g\left(c_{t_{l-1}}\right)\left(\tilde{\mu}_s^{p^{\prime} q^{\prime}}-\tilde{\mu}_{t_{l-1}}^{p^{\prime} q^{\prime}}\right) ds\right|\mathcal{F}_{l-1}^n\right]\right|\\
    &+\left|\mathbb{E}\left[\left. \int_{t_{l-1}}^{t_l}\left(X^{\prime}_{s, p}-X^{\prime}_{t_{l-1, p}}\right) \sigma_s^{q, r} d W_s^r \int_{t_{l-1}}^{t_i} \left[\partial_{p q, u v, p^{\prime} q^{\prime}}^3 g\left(c_s\right)-\partial_{p q, u v, p^{\prime} q^{\prime}}^3 g\left(c_{t_{l-1}}\right)\right]\left(\tilde{\mu}_s^{p^{\prime} q^{\prime}}-\tilde{\mu}_{t_{l-1}}^{p^{\prime} q^{\prime}}\right)ds\right|\mathcal{F}_{l-1}^n\right]\right|.
\end{aligned}
\end{equation}
Note that for the first term, 
\begin{align*}
    &\left|\mathbb{E}\left[\left.\int_{t_{l-1}}^{t_l}\left(X^{\prime}_{s, p}-X^{\prime}_{t_{l-1}, p}\right) \sigma_s^{q, r} d W_s^r \int_{t_{l-1}}^{t_i} \partial_{p q, u v, p^{\prime} q^{\prime}}^3 g\left(c_{t_{l-1}}\right) \tilde{\mu}_{t_{l-1}}^{p^{\prime}q^{\prime}} d s\right|\mathcal{F}_{l-1}^n\right]\right|\\
    &=\left|\int_{t_{l-1}}^{t_i} \partial_{p q, u v, p^{\prime} q^{\prime}}^3 g\left(c_{t_{l-1}}\right) \tilde{\mu}_{t_{l-1}}^{p^{\prime}q^{\prime}} d s\mathbb{E}\left[\left.\int_{t_{l-1}}^{t_l}\left(X^{\prime}_{s, p}-X^{\prime}_{t_{l-1}, p}\right) \sigma_s^{q, r} d W_s^r \right|\mathcal{F}_{l-1}^n\right]\right|=0,
\end{align*}
and by the mean value theorem, for some $\lambda\in [0,1]$, the second term
\begin{align*}
    &\left|\mathbb{E}\left[\left.\int_{t_{l-1}}^{t_l}\left(X^{\prime}_{s, p}-X^{\prime}_{t_{l-1, p}}\right) \sigma_s^{q, r} d W_s^r \int_{t_{l-1}}^{t_i} \left[\partial_{p q, u v, p^{\prime} q^{\prime}}^3 g\left(c_s\right)-\partial_{p q, u v, p^{\prime} q^{\prime}}^3 g\left(c_{t_{l-1}}\right)\right] \tilde{\mu}_{t_{l-1}}^{p^{\prime} q^{\prime}} d s\right|\mathcal{F}_{l-1}^n\right]\right|\\
    &\leq\left\{\mathbb{E}\left[\left.\left(\int_{t_{l-1}}^{t_l}\left(X^{\prime}_{s, p}-X^{\prime}_{t_{l-1, p}}\right) \sigma_s^{q, r} d W_s^r\right)^2 \right| \mathcal{F}_{l-1}^n\right]\right\}^{1 / 2}\\
    &\quad\times\left\{\mathbb{E}\left[\left.\left(\int_{t_{l-1}}^{t_i} \sum_{u^{\prime}, v^{\prime}=1}^d \partial_{p q, u v, p^{\prime} q^{\prime}, u^{\prime} v^{\prime}}^4 g\left(c_s+\lambda\left(c_{t_{l-1}}-c_s\right)\right)\left(c_s^{u^{\prime} v^{\prime}}-c_{t_{l-1}}^{u^{\prime} v^{\prime}}\right) \tilde{\mu}_{t_{l-1}}^{p^{\prime} q^{\prime}} d s\right)^2\right| \mathcal{F}_{l-1}^n\right]\right\}^{1 / 2}\\
    &\leq C\left\{\mathbb{E}\left[\left.\sup _{t \in\left[0, \Delta_n\right]}\left\|X^{\prime}_{t_{l-1}+t}-X^{\prime}_{t_{l-1}}\right\|^2 \Delta_n \right| \mathcal{F}_{l-1}^n\right]\right\}^{1 / 2}\left\{\mathbb{E}\left[\left.\sup _{t \in\left[0,(i-l+1) \Delta_n\right]}\left\|c_{t_{l-1}+t}-c_{t_{l-1}}\right\|^2\left(t_i-t_{l-1}\right)^2 \right| \mathcal{F}_{l-1}^a\right]\right\}^{1 / 2}\\
    &\leq C \Delta_n \eta_{l-1, i-l+1}^n\left(t_i-t_{l-1}\right),
\end{align*}
where the last inequality holds with (\ref{eq:X_c_bound}), (\ref{eq:eta}), and (\ref{etadef}). Meanwhile, with similar argument, we can show the third and fourth terms in the decomposition of (\ref{eq:decomgroup3}) have the same upper bound. Thus,
\begin{equation}\label{eq:group3}
    \left|\mathbb{E}\left[\left.\int_{t_{l-1}}^{t_l}\left(X^{\prime}_{s, p}-X^{\prime}_{t_{l-1, p}}\right) \sigma_s^{q, r} d W_s^r \int_{t_{l-1}}^{t_i} \partial_{p q, u v, p^{\prime} q^{\prime}}^3 g\left(c_s\right) \tilde{\mu}_s^{p^{\prime} q^{\prime}} d s\right|\mathcal{F}_{l-1}^n\right]\right|\leq C \Delta_n \eta_{l-1, i-l+1}^n\left(t_i-t_{l-1}\right).
\end{equation}
Similarly, $H_{32}$ in Group 3 also satisfies
$$
\left|\mathbb{E}\left[\left.\int_{t_{l-1}}^{t_l}\left(X^{\prime}_{s, p}-X^{\prime}_{t_{l-1, p}}\right) \sigma_s^{q, r} d W_s^r\int_{t_{l-1}}^{t_i} \partial^{4}_{p q, u v, p^{\prime} q^{\prime}, u^{\prime} v^{\prime}} g\left(c_s\right) \tilde{\sigma}_s^{p^{\prime} q^{\prime}, r^{\prime}}\tilde{\sigma}_s^{u^{\prime} v^{\prime}, r^{\prime}} d s\right|\mathcal{F}_{l-1}^n\right]\right|\leq C \Delta_n \eta_{l-1, i-l+1}^n\left(t_i-t_{l-1}\right).
$$
Group 4 ("$dW_s\cdot dW_s$"): Since $H_{41}=0$ when $r\neq r^{\prime}$, we only need to check for $r=r^{\prime}$,
\begin{equation}\label{eq:decomgroup4}
\begin{aligned}
&\left|H_{41}\right|=\left|\mathbb{E}\left[\left.\int_{t_{l-1}}^{t_l}\left(X^{\prime}_{s, p}-X^{\prime}_{t_{l-1, p}}\right) \sigma_s^{q, r} d W_s^r \int_{t_{l-1}}^{t_i} \partial_{p q, u v, p^{\prime} q^{\prime}}^3 g\left(c_s\right) \tilde{\sigma}_s^{p^{\prime} q^{\prime}, r} d W_s^r\right|\mathcal{F}_{l-1}^n\right]\right|\\
&=\left|\mathbb{E}\left[\left.\int_{t_{l-1}}^{t_l}\left(X^{\prime}_{s, p}-X^{\prime}_{t_{l-1, p}}\right) \sigma_s^{q, r}\partial_{p q, u v, p^{\prime} q^{\prime}}^3 g\left(c_s\right) \tilde{\sigma}_s^{p^{\prime} q^{\prime}, r}ds\right|\mathcal{F}_{l-1}^n\right]\right|\\
&=\Bigg|\mathbb{E}\Bigg[\int_{t_{l-1}}^{t_l}\left(X^{\prime}_{s, p}-X^{\prime}_{t_{l-1, p}}\right) \left(\sigma_s^{q, r}-\sigma_{t_{l-1}}^{q, r}+\sigma_{t_{l-1}}^{q, r}\right)\\
&\qquad\qquad\times\left[\partial_{p q, u v, p^{\prime} q^{\prime}}^3 g\left(c_s\right)-\partial_{p q, u v, p^{\prime} q^{\prime}}^3 g\left(c_{t_{l-1}}\right)+\partial_{p q, u v, p^{\prime} q^{\prime}}^3 g\left(c_{t_{l-1}}\right)\right]\left(\tilde{\sigma}_s^{p^{\prime} q^{\prime}, r}-\tilde{\sigma}_{t_{l-1}}^{p^{\prime} q^{\prime}, r}+\tilde{\sigma}_{t_{l-1}}^{p^{\prime} q^{\prime}, r}\right)ds\Bigg|\mathcal{F}_{l-1}^n\Bigg]\Bigg|\\
&\leq \left|\mathbb{E}\left[\left.\int_{t_{l-1}}^{t_l}\left(X^{\prime}_{s, p}-X^{\prime}_{t_{l-1, p}}\right) \sigma_{t_{l-1}}^{q, r} \partial_{p q, u v, p^{\prime} q^{\prime}}^{3} g\left(c_{t_{l-1}}\right) \tilde{\sigma}_{t_{l-1}}^{p^{\prime} q^{\prime}, r} d s\right|\mathcal{F}_{l-1}^n\right]\right|\\
&+\left|\mathbb{E}\left[\left.\int_{t_{l-1}}^{t_l}\left(X^{\prime}_{s, p}-X^{\prime}_{t_{l-1, p}}\right) \left(\sigma_s^{q, r}-\sigma_{t_{l-1}}^{q, r}\right) \partial_{p q, u v, p^{\prime} q^{\prime}}^{3} g\left(c_{t_{l-1}}\right) \tilde{\sigma}_{t_{l-1}}^{p^{\prime} q^{\prime}, r} d s\right|\mathcal{F}_{l-1}^n\right]\right|\\
&+\left|\mathbb{E}\left[\left.\int_{t_{l-1}}^{t_l}\left(X^{\prime}_{s, p}-X^{\prime}_{t_{l-1, p}}\right) \sigma_{t_{l-1}}^{q, r} \left[\partial_{p q, u v, p^{\prime} q^{\prime}}^3 g\left(c_s\right)-\partial_{p q, u v, p^{\prime} q^{\prime}}^3 g\left(c_{t_{l-1}}\right)\right] \tilde{\sigma}_{t_{l-1}}^{p^{\prime} q^{\prime}, r} d s\right|\mathcal{F}_{l-1}^n\right]\right|\\
&+\left|\mathbb{E}\left[\left.\int_{t_{l-1}}^{t_l}\left(X^{\prime}_{s, p}-X^{\prime}_{t_{l-1, p}}\right) \sigma_{t_{l-1}}^{q, r} \partial_{p q, u v, p^{\prime} q^{\prime}}^{3} g\left(c_{t_{l-1}}\right) \left(\tilde{\sigma}_s^{p^{\prime} q^{\prime}, r}-\tilde{\sigma}_{t_{l-1}}^{p^{\prime} q^{\prime}, r}\right)
 d s\right|\mathcal{F}_{l-1}^n\right]\right|\\
&+\left|\mathbb{E}\left[\left.\int_{t_{l-1}}^{t_l}\left(X^{\prime}_{s, p}-X^{\prime}_{t_{l-1, p}}\right) \left(\sigma_s^{q, r}-\sigma_{t_{l-1}}^{q, r}\right)\left[\partial_{p q, u v, p^{\prime} q^{\prime}}^3 g\left(c_s\right)-\partial_{p q, u v, p^{\prime} q^{\prime}}^3 g\left(c_{t_{l-1}}\right)\right]\tilde{\sigma}_{t_{l-1}}^{p^{\prime} q^{\prime}, r} d s\right|\mathcal{F}_{l-1}^n\right]\right|\\
&+\left|\mathbb{E}\left[\left.\int_{t_{l-1}}^{t_l}\left(X^{\prime}_{s, p}-X^{\prime}_{t_{l-1, p}}\right) \left(\sigma_s^{q, r}-\sigma_{t_{l-1}}^{q, r}\right) \partial_{p q, u v, p^{\prime} q^{\prime}}^{3} g\left(c_{t_{l-1}}\right) \left(\tilde{\sigma}_s^{p^{\prime} q^{\prime}, r}-\tilde{\sigma}_{t_{l-1}}^{p^{\prime} q^{\prime}, r}\right)
 d s\right|\mathcal{F}_{l-1}^n\right]\right|\\
&+\left|\mathbb{E}\left[\left.\int_{t_{l-1}}^{t_l}\left(X^{\prime}_{s, p}-X^{\prime}_{t_{l-1, p}}\right) \sigma_{t_{l-1}}^{q, r} \left[\partial_{p q, u v, p^{\prime} q^{\prime}}^3 g\left(c_s\right)-\partial_{p q, u v, p^{\prime} q^{\prime}}^3 g\left(c_{t_{l-1}}\right)\right] \left(\tilde{\sigma}_s^{p^{\prime} q^{\prime}, r}-\tilde{\sigma}_{t_{l-1}}^{p^{\prime} q^{\prime}, r}\right) d s\right|\mathcal{F}_{l-1}^n\right]\right|\\
&+\left|\mathbb{E}\left[\left.\int_{t_{l-1}}^{t_l}\left(X^{\prime}_{s, p}-X^{\prime}_{t_{l-1, p}}\right) \left(\sigma_s^{q, r}-\sigma_{t_{l-1}}^{q, r}\right)\left[\partial_{p q, u v, p^{\prime} q^{\prime}}^3 g\left(c_s\right)-\partial_{p q, u v, p^{\prime} q^{\prime}}^3 g\left(c_{t_{l-1}}\right)\right]\left(\tilde{\sigma}_s^{p^{\prime} q^{\prime}, r}-\tilde{\sigma}_{t_{l-1}}^{p^{\prime} q^{\prime}, r}\right) d s\right|\mathcal{F}_{l-1}^n\right]\right|.
\end{aligned}
\end{equation}
The first term of the decomposition in (\ref{eq:decomgroup4}) can be bounded by
$$
\left|\mathbb{E}\left[\left.\int_{t_{l-1}}^{t_l}\left(X^{\prime}_{s, p}-X^{\prime}_{t_{l-1, p}}\right) \sigma_{t_{l-1}}^{q, r} \partial_{p q, u v, p^{\prime} q^{\prime}}^{3} g\left(c_{t_{l-1}}\right) \tilde{\sigma}_{t_{l-1}}^{p^{\prime} q^{\prime}, r} d s\right|\mathcal{F}_{l-1}^n\right]\right|\leq C\left|\int_{t_{l-1}}^{t_l} \mathbb{E}\left[\left.\left(X^{\prime}_{s, p}-X^{\prime}_{t_{l-1, p}}\right) \right| \mathcal{F}_{l-1}^n\right] d s\right|\leq C\Delta_n^2,
$$
with (\ref{eq:X_c_bound}). By the mean value theorem, (\ref{eq:X_c_bound}), (\ref{eq:eta}), and (\ref{etadef}), for the third term,
\begin{align*}
    &\left|\mathbb{E}\left[\left.\int_{t_{l-1}}^{t_l}\left(X^{\prime}_{s, p}-X^{\prime}_{t_{l-1, p}}\right) \sigma_{t_{l-1}}^{q,r}\left(\partial_{p q, u v, p^{\prime} q^{\prime}}^{3} g\left(c_{s}\right)-\partial_{p q, u v, p^{\prime} q^{\prime}}^{3} g\left(c_{t_{l-1}}\right)\right) \tilde{\sigma}_{t_{l-1}}^{p^{\prime} q^{\prime}, r}ds\right|\mathcal{F}_{l-1}^n\right]\right|\\
    &\leq C\int_{t_{l-1}}^{t_l}\left\{\mathbb{E}\left[\left.\left(X^{\prime}_{s, p}-X^{\prime}_{t_{l-1, p}}\right)^2\right|\mathcal{F}_{l-1}^n\right]\right\}^{1/2}\left\{\mathbb{E}\left[\left.\left\|c_s-c_{t_{l-1}}\right\|^2 \right| \mathcal{F}_{l-1}^n\right]\right\}^{1 / 2}ds\leq C \Delta_n^{3/2}\eta^n_{l-1}.
\end{align*}
Other terms in (\ref{eq:decomgroup4}) can be controlled by the same upper bound similarly.
Thus,
\begin{equation}\label{eq:group4}
    \left|\mathbb{E}\left[\left.\int_{t_{l-1}}^{t_l}\left(X^{\prime}_{s, p}-X^{\prime}_{t_{l-1, p}}\right) \sigma_s^{q, r} d W_s^r \int_{t_{l-1}}^{t_i} \partial_{p q, u v, p^{\prime} q^{\prime}}^3 g\left(c_s\right) \tilde{\sigma}_s^{p^{\prime} q^{\prime}, r} d W_s^r\right|\mathcal{F}_{l-1}^n\right]\right|\leq C\Delta_n^{3/2}\eta^n_{l-1}.
\end{equation}
Combining (\ref{eq:decomalphapartialg}), (\ref{eq:group1}), (\ref{eq:group2}), (\ref{eq:group3}), and (\ref{eq:group4}), we conclude that
$$
\left|\mathbb{E}\left[\left.\alpha_l^{n, p q} \partial_{p q, u v}^2 g\left(c_i^n\right) \right| \mathcal{F}_{l-1}^n\right]\right|\leq C\Delta_n \eta_{l-1, i-l+1}^n\left(t_i-t_{l-1}\right).
$$
\end{proof}

\begin{proof}[Proof of (\ref{eq:sucessive})]
We  show that for $i^{\prime} \geq i \geq l$,
\begin{equation}\label{eq:pfsucess}
    \left|\mathbb{E}\left[\left.\partial_{p q, u v}^2 g\left(c_i^n\right) \partial_{p q, u v}^2 g\left(c_{i^{\prime}}^n\right) \alpha_l^{n, p q} \right| \mathcal{F}_{l-1}^n\right]\right|\leq C\Delta_n \eta_{l-1, i^{\prime}-l+1}^n\left(t_{i^{\prime}}-t_{l-1}\right).
\end{equation}
Note that
\begin{align*}
    &\left|\mathbb{E}\left[\left.\partial_{p q, u v}^2 g\left(c_i^n\right) \partial_{p q, u v}^2 g\left(c_{i^{\prime}}^n\right) \alpha_l^{n, p q} \right| \mathcal{F}_{l-1}^n\right]\right|\\
    &=\left|\mathbb{E}\left[\left.\left[\partial_{p q, u v}^2 g\left(c_i^n\right)-\partial_{p q, u v}^2 g\left(c_{l-1}^n\right)+\partial_{p q, u v}^2 g\left(c_{l-1}^n\right)\right] \left[\partial_{p q, u v}^2 g\left(c_{i^{\prime}}^n\right)-\partial_{p q, u v}^2 g\left(c_{l-1}^n\right)+\partial_{p q, u v}^2 g\left(c_{l-1}^n\right)\right] \alpha_l^{n, p q} \right| \mathcal{F}_{l-1}^n\right]\right|\\
    &\leq \left|\mathbb{E}\left[\left.\left[\partial_{p q, u v}^2 g\left(c_i^n\right)-\partial_{p q, u v}^2 g\left(c_{l-1}^n\right)\right] \left[\partial_{p q, u v}^2 g\left(c_{i^{\prime}}^n\right)-\partial_{p q, u v}^2 g\left(c_{l-1}^n\right)\right] \alpha_l^{n, p q} \right| \mathcal{F}_{l-1}^n\right]\right|\\
    &\quad+\left|\partial_{p q, u v}^2 g\left(c_{l-1}^n\right)\mathbb{E}\left[\left.\left[\partial_{p q, u v}^2 g\left(c_i^n\right)-\partial_{p q, u v}^2 g\left(c_{l-1}^n\right)\right] \alpha_l^{n, p q} \right| \mathcal{F}_{l-1}^n\right]\right|\\
    &\quad+\left|\partial_{p q, u v}^2 g\left(c_{l-1}^n\right)\mathbb{E}\left[\left.\left[\partial_{p q, u v}^2 g\left(c_{i^{\prime}}^n\right)-\partial_{p q, u v}^2 g\left(c_{l-1}^n\right)\right] \alpha_l^{n, p q} \right| \mathcal{F}_{l-1}^n\right]\right|\\
    &\quad+\left|\left[\partial_{p q, u v}^2 g\left(c_{l-1}^n\right)\right]^2\mathbb{E}\left[\left.\alpha_l^{n, p q} \right| \mathcal{F}_{l-1}^n\right]\right|\\
    &\leq \left|\mathbb{E}\left[\left.\left[\partial_{p q, u v}^2 g\left(c_i^n\right)-\partial_{p q, u v}^2 g\left(c_{l-1}^n\right)\right] \left[\partial_{p q, u v}^2 g\left(c_{i^{\prime}}^n\right)-\partial_{p q, u v}^2 g\left(c_{l-1}^n\right)\right] \alpha_l^{n, p q} \right| \mathcal{F}_{l-1}^n\right]\right|\\
    &\quad+C\left|\mathbb{E}\left[\left.\left[\partial_{p q, u v}^2 g\left(c_i^n\right)-\partial_{p q, u v}^2 g\left(c_{l-1}^n\right)\right] \alpha_l^{n, p q} \right| \mathcal{F}_{l-1}^n\right]\right|+C\left|\mathbb{E}\left[\left.\left[\partial_{p q, u v}^2 g\left(c_{i^{\prime}}^n\right)-\partial_{p q, u v}^2 g\left(c_{l-1}^n\right)\right] \alpha_l^{n, p q} \right| \mathcal{F}_{l-1}^n\right]\right|\\
    &\quad+\left|\mathbb{E}\left[\left.\alpha_l^{n, p q} \right| \mathcal{F}_{l-1}^n\right]\right|\\
    &\leq C\Delta_n \eta_{l-1, i-l+1}^n\left(t_i-t_{l-1}\right)+C\Delta_n \eta_{l-1, i^{\prime}-l+1}^n\left(t_{i^{\prime}}-t_{l-1}\right)+C \Delta_{n}^{3 / 2}\left(\sqrt{\Delta_{n}}+\eta_{l-1}^{n}\right),
\end{align*}
where in the third inequality we apply (\ref{eq:better_bound_alpha}), (\ref{eq:alphapartialg}), Itô's lemma and the arguments in the proof of (\ref{eq:alphapartialg}).
\end{proof}
\begin{proof}[Proof of (\ref{eq:A.84})]
{When $i, l-1 < j$, by (\ref{eq:better_bound_alpha}),
\begin{align*}
    &\left|\mathbb{E}\left[\left.\alpha_j^{n, u v}\left(c_{l-1}^{n, p q}-c_i^{n, p q}\right) \partial_{p q, u v}^2 g\left(c_i^n\right) \right| \mathcal{F}_{j-1}^n\right]\right|=\left|\left(c_{l-1}^{n, p q}-c_i^{n, p q}\right) \partial_{p q, u v}^2 g\left(c_i^n\right)\mathbb{E}\left[\left.\alpha_j^{n, u v} \right| \mathcal{F}_{j-1}^n\right]\right|\\
    &\leq \left|\left(c_{l-1}^{n, p q}-c_i^{n, p q}\right) \partial_{p q, u v}^2 g\left(c_i^n\right)C\Delta_n^{3 / 2}\left(\sqrt{\Delta_n}+\eta_{j-1}^n\right)\right|\leq C\Delta_n^{3 / 2}\left(\sqrt{\Delta_n}+\eta_{j-1}^n\right).
\end{align*}
When $i < j \leq l-1$, by (\ref{eq:better_bound_alpha}) \& (\ref{eq:alphapartialg}),
\begin{align*}
    &\left|\mathbb{E}\left[\left.\alpha_j^{n, u v}\left(c_{l-1}^{n, p q}-c_i^{n, p q}\right) \partial_{p q, u v}^2 g\left(c_i^n\right) \right| \mathcal{F}_{j-1}^n\right]\right|\leq C\left|\mathbb{E}\left[\left.\alpha_j^{n, u v}c_{l-1}^{n, p q}\right| \mathcal{F}_{j-1}^n\right]\right|+C\left|\mathbb{E}\left[\left.\alpha_j^{n, u v} \right| \mathcal{F}_{j-1}^n\right]\right|\\
    &\leq C \Delta_n \eta_{j-1, l-j}^n\left(t_{l-1}-t_{j-1}\right)+C \Delta_n^{3 / 2}\left(\sqrt{\Delta_n}+\eta_{j-1}^n\right).
\end{align*}
Finally, when $i\geq j$, 
\begin{align*}
    &\left|\mathbb{E}\left[\left.\alpha_j^{n, u v}\left(c_{l-1}^{n, p q}-c_i^{n, p q}\right) \partial_{p q, u v}^2 g\left(c_i^n\right) \right| \mathcal{F}_{j-1}^n\right]\right|\leq \left|\mathbb{E}\left[\left.\alpha_j^{n, u v}\left(c_{l-1}^{n, p q}-c_i^{n, p q}\right) \partial_{p q, u v}^2 g\left(c_{j-1}^n\right) \right| \mathcal{F}_{j-1}^n\right]\right|\\
    &+\left|\mathbb{E}\left[\left.\alpha_j^{n, u v}\left(c_{l-1}^{n, p q}-c_i^{n, p q}\right) \left(\partial_{p q, u v}^2 g\left(c_i^n\right)-\partial_{p q, u v}^2 g\left(c_{j-1}^n\right)\right) \right| \mathcal{F}_{j-1}^n\right]\right|,
\end{align*}
where the second term can be bounded by $C \Delta_n \eta_{j-1, i-j+1}^n\left(t_i-t_{j-1}\right)$ following the proof of (\ref{eq:alphapartialg}) with the boundedness of $c_{l-1}^{n, p q}-c_i^{n, p q}$. The first term
\begin{align*}
    &\left|\mathbb{E}\left[\left.\alpha_j^{n, u v}\left(c_{l-1}^{n, p q}-c_i^{n, p q}\right) \partial_{p q, u v}^2 g\left(c_{j-1}^n\right) \right| \mathcal{F}_{j-1}^n\right]\right|=\left|\partial_{p q, u v}^2 g\left(c_{j-1}^n\right)\mathbb{E}\left[\left.\alpha_j^{n, u v}\left(c_{l-1}^{n, p q}-c_i^{n, p q}\right)  \right| \mathcal{F}_{j-1}^n\right]\right|\\
    &\leq C\left|\mathbb{E}\left[\left.\alpha_j^{n, u v}c_i^{n, p q}  \right| \mathcal{F}_{j-1}^n\right]\right|+C\left|\mathbb{E}\left[\left.\alpha_j^{n, u v}c_{l-1}^{n, p q}  \right| \mathcal{F}_{j-1}^n\right]\right|,
\end{align*}
where $\left|\mathbb{E}\left[\left.\alpha_j^{n, u v}c_i^{n, p q}  \right| \mathcal{F}_{j-1}^n\right]\right|\leq C \Delta_n \eta_{j-1, i-j+1}^n\left(t_i-t_{j-1}\right)$ by (\ref{eq:alphapartialg}). When $l\leq j$,
\begin{align*}
    \left|\mathbb{E}\left[\left.\alpha_j^{n, u v}c_{l-1}^{n, p q}  \right| \mathcal{F}_{j-1}^n\right]\right|=\left|c_{l-1}^{n, p q}\mathbb{E}\left[\left.\alpha_j^{n, u v}  \right| \mathcal{F}_{j-1}^n\right]\right|\leq C \Delta_n^{3 / 2}\left(\sqrt{\Delta_n}+\eta_{j-1}^n\right).
\end{align*}
When $l>j$, by (\ref{eq:alphapartialg}),
$$
\left|\mathbb{E}\left[\left.\alpha_j^{n, u v}c_{l-1}^{n, p q}  \right| \mathcal{F}_{j-1}^n\right]\right|\leq C \Delta_n \eta_{j-1, l-j}^n\left(t_{l-1}-t_{j-1}\right).
$$}
\end{proof}

\subsection{Technical Proofs related to Theorem \ref{thm2.2}}\label{Technical2_2}
\subsection{{\Blue Proof of \eqref{FourthStOfLmts}.}}
{\DR Recall
\begin{align*}
\widehat{A}^{n,3}_{\lambda_1,\lambda_2}&=\frac{\sqrt{\Delta_n}}{8} \sum_{i=1}^{n-2k_n+1}  \sum_{p, q, u, v} \partial_{p q, u v}^2 g(\hat{c}_i^n)
\beta_{i+\lambda_1 k_n}^{n, p q} \beta_{i+\lambda_2 k_n}^{n, u v},
\end{align*}
and define
\begin{align*}
{A}^{n,3}_{\lambda_1,\lambda_2}&:=\frac{\sqrt{\Delta_n}}{8} \sum_{i=1}^{n-2k_n+1}  \sum_{p, q, u, v} \partial_{p q, u v}^2 g(c_i^n)
\beta_{i+\lambda_1 k_n}^{n, p q} \beta_{i+\lambda_2 k_n}^{n, u v},
\end{align*}
where $\lambda_1, \lambda_2 = 0\ \text{or}\ 1$. Note} that for some $\xi=\hat{c}_i^n+\phi\left(c_i^n-\hat{c}_i^n\right)$, where $\phi\in [0,1]$,
\begin{align*}
    \mathbb{E}\left|\widehat{A}_{0,0}^{n,3}-A_{0,0}^{n,3}\right|
    &\leq \frac{\sqrt{\Delta_n}}{8} \sum_{i=1}^{n-2k_n+1}  \sum_{p, q, u, v}
    \sqrt{\mathbb{E}\left[\left(\partial_{p q, u v}^2 g(\hat{c}_i^n)-\partial_{p q, u v}^2 g(c_i^n)\right)^2\right] \mathbb{E}\left[\left\|\beta_i^n\right\|^4\right]}\\
    &\leq \frac{\sqrt{\Delta_n}}{8} \sum_{i=1}^{n-2k_n+1}  \sum_{p, q, u, v}
    \sqrt{\mathbb{E}\left[\left\|\nabla \partial_{p q, u v}^2 g(\xi)\right\|^2\left\|\hat{c}_i^n-c_i^n\right\|^2\right] \Delta_n}\\
    &\leq C\sqrt{\Delta_n}\sum_{i=1}^{n-2k_n+1}  \sum_{p, q, u, v}
    \Delta_{n}^{3/4}\to 0.
\end{align*}
which also holds with $\widehat{A}_{1,0}^{n,3}$, $\widehat{A}_{0,1}^{n,3}$, and $\widehat{A}_{1,1}^{n,3}$ as well. {\DR Then, in order to analyze the asymptotic behavior of $\widehat{A}^{n,3}_{\lambda_1, \lambda_2}$, we only need to consider ${A}_{\lambda_1, \lambda_2}^{n,3}$. To this end, we follow {closely the analysis following \eqref{eq:beta2_decomp}}. Indeed, one may check that}
\begin{equation}
\begin{aligned}
    {A}_{\lambda_1, \lambda_2}^{n,3}&=\frac{\sqrt{\Delta_n}}{8} \sum_{i=1}^{n-2k_n+1}  \sum_{p, q, u, v} \partial_{p q, u v}^2 g(c_i^n)\bar{K}\left(t_{i+\lambda_1 k_n}\right)^{-1}\bar{K}\left(t_{i+\lambda_2 k_n}\right)^{-1} \sum_{r=1}^6 \xi_{r, i}^{n,\lambda_1,\lambda_2}\\
    &=:\sum_{p, q, u, v}\sum_{r=1}^6{A}_{\lambda_1 ,\lambda_2 }^{n,3}(r,p, q, u, v)=:\sum_{p, q, u, v}\sum_{r=1}^6{A}_{\lambda_1 ,\lambda_2 }^{n,3}(r),
\end{aligned}
\end{equation}
where
\begin{align*}
\xi_{1,i}^{n,\lambda_1, \lambda_2}&:=\sum_{j=1}^{n} K_{{b}}\left(t_{j-1}-t_{i+\lambda_1 k_n}\right)K_{{b}}\left(t_{j-1}-t_{i+\lambda_2 k_n}\right)\alpha_{j}^{n,pq}\alpha_{j}^{n,uv},\\
\xi^{n,\lambda_1, \lambda_2}_{2,i} & = \Delta_{n}^2\left(\sum_{j=1}^{n} K_{{b}}\left(t_{j-1}-t_{i+\lambda_1 k_n}\right) \left(c_{j-1}^{n,pq}-c_{i+\lambda_1 k_n}^{n,pq}\right)\right)
\left(\sum_{j=1}^{n} K_{{b}}\left(t_{j-1}-t_{i+\lambda_2 k_n}\right) \left(c_{j-1}^{n,uv}-c_{i+\lambda_2 k_n}^{n,uv}\right)\right)\\
\xi^{n,\lambda_1, \lambda_2}_{3,i}  &=\sum_{j=1}^n\sum_{l=j+1}^{n} 
K_{{b}}\left(t_{j-1}-t_{i+\lambda_1 k_n}\right) K_{{b}}\left(t_{l-1}-t_{i+\lambda_2 k_n}\right) \alpha_{j}^{n,uv} \alpha_{l}^{n,pq},\\
\xi^{n,\lambda_1, \lambda_2}_{4,i}  &=\sum_{l=1}^{n-1}\sum_{j=l+1}^{n} 
K_{{b}}\left(t_{j-1}-t_{i+\lambda_1 k_n}\right) K_{{b}}\left(t_{l-1}-t_{i+\lambda_2 k_n}\right) \alpha_{j}^{n,uv} \alpha_{l}^{n,pq},\\
\xi^{n,\lambda_1, \lambda_2}_{5,i}  & =\sum_{j=1}^{n} \sum_{l=1}^{n} K_{{b}}\left(t_{j-1}-t_{i+\lambda_1 k_n}\right) K_{{b}}\left(t_{l-1}-t_{i+\lambda_2 k_n}\right) \alpha_{j}^{n,uv} \left(c_{l-1}^{n,pq}-c_{i+\lambda_1 k_n}^{n,pq}\right) \Delta_{n},\\
\xi^{n,\lambda_1, \lambda_2}_{6,i}  & =\sum_{j=1}^{n} \sum_{l=1}^{n} K_{{b}}\left(t_{j-1}-t_{i+\lambda_1 k_n}\right) K_{{b}}\left(t_{l-1}-t_{i+\lambda_2 k_n}\right) \alpha_{j}^{n,pq} \left(c_{l-1}^{n,uv}-c_{i+\lambda_2 k_n}^{n,uv}\right) \Delta_{n}.
\end{align*}
When $r=1$, 
\begin{equation*}
\begin{aligned}
    {A}_{\lambda_1,\lambda_2}^{n,3}(1)&=\sum_{j=1}^n \sum_{i=1}^{j-1}\frac{\sqrt{\Delta_n}}{8} \partial_{p q, u v}^2 g(c_i^n)\prod_{\ell=1}^2\frac{K_{{b}}\left(t_{j-1}-t_{i+\lambda_\ell k_n}\right)}{\bar{K}\left(t_{i+\lambda_\ell k_n}\right)}\mathbbm{1}_{\left\{t_{i}\leq t_{n-2 k_n+1}\right\}}\alpha_j^{n, pq} \alpha_j^{n, uv}\\
    &\quad+\sum_{j=1}^n \sum_{i=j}^{n}
    \frac{\sqrt{\Delta_n}}{8} \partial_{p q, u v}^2 g(c_i^n)\prod_{\ell=1}^2\frac{K_{{b}}\left(t_{j-1}-t_{i+\lambda_\ell k_n}\right)}{\bar{K}\left(t_{i+\lambda_\ell k_n}\right)}\mathbbm{1}_{\left\{t_{i}\leq t_{n-2 k_n+1}\right\}}\alpha_j^{n, pq} \alpha_j^{n, uv}\\
    &=:R+L.
\end{aligned}
\end{equation*}
Similar to \eqref{eq:A.50}, define
\begin{align*}
    M&=\sum_{j=1}^n \sum_{i=j}^{n}\frac{\sqrt{\Delta_n}}{8} \partial_{p q, u v}^2 g(c_{j-1}^n)\prod_{\ell=1}^2\frac{K_{{b}}\left(t_{j-1}-t_{i+\lambda_\ell k_n}\right)}{\bar{K}\left(t_{i+\lambda_\ell k_n}\right)}\mathbbm{1}_{\left\{t_{i}\leq t_{n-2 k_n+1}\right\}}\alpha_j^{n, pq} \alpha_j^{n, uv}    
\end{align*}
and note that, {similar to \eqref{eq:L_M_v4}},
\begin{equation}
\begin{aligned}
\mathbb{E}|L-M|
&\leq C \sqrt{\Delta_n} \sum_{j=1}^n \sum_{i=j}^{n-2k_n+1}K_{{b}}\left(t_{j-1}-t_{i+\lambda_1 k_n}\right)K_{{b}}\left(t_{j-1}-t_{i+\lambda_2 k_n}\right) \sqrt{\left|t_{j-1}-t_i\right| \Delta_n^4}\\
&\leq C \sqrt{\Delta_n} \sum_{j=1}^n \Delta_n \int_{-\infty}^0 K(u-\lambda_1)K(u-\lambda_2) \sqrt{-u b} d u \frac{1}{b}\leq C \sqrt{\frac{\Delta_n}{b}} \rightarrow 0.
\end{aligned}
\end{equation}
Then, we aim to find the limit of
\begin{equation}
R+M=: \frac{\sqrt{\Delta_n}}{8} \sum_{j=1}^n \zeta_j^n \alpha_j^{n, p q} \alpha_j^{n, u v},
\end{equation}
where $R+M$ is an adapted triangular array with
\begin{align*}
    \zeta_j^n&:=\sum_{i=1}^{j-1} \partial_{p q, u v}^2 g(c_i^n)\prod_{\ell=1}^2\frac{K_{{b}}\left(t_{j-1}-t_{i+\lambda_\ell k_n}\right)}{\bar{K}\left(t_{i+\lambda_\ell k_n}\right)}\mathbbm{1}_{\left\{t_{i}\leq t_{n-2 k_n+1}\right\}}\\
&\quad+\sum_{i=j}^n \partial_{p q, u v}^2 g(c_{j-1}^n)
\prod_{\ell=1}^2\frac{K_{{b}}\left(t_{j-1}-t_{i+\lambda_\ell k_n}\right)}{\bar{K}\left(t_{i+\lambda_\ell k_n}\right)}\mathbbm{1}_{\left\{t_{i}\leq t_{n-2 k_n+1}\right\}}\\
&=: \zeta_j^{n, 1}+\zeta_j^{n, 2}.
\end{align*}
By (\ref{eq:better_bound_alpha}), 
\begin{equation}\label{TRTIN}
\sum_{j=1}^n \mathbb{E}\left[\zeta_j^n \alpha_j^{n, p q} \alpha_j^{n, u v} \mid {\mathcal{F}_{j-1}^n}\right]=\sum_{j=1}^n \zeta_j^n\left(\check{c}_{j-1}^{n, p q, u v} \Delta_n^2+O_p\left(\Delta_n^{5 / 2}\right)\right),
\end{equation}
{which suggests to consider}
\begin{equation}
	A_{n}:=\sqrt{\Delta_n}\sum_{j=1}^{n}\zeta_{j}^{n,1}\check{c}_{j-1}^{n,pq,uv}\Delta_{n}^{2},
	\quad 
	B_{n}:=\sqrt{\Delta_n}\sum_{j=1}^{n}\zeta_{j}^{n,2}\check{c}_{j-1}^{n,pq,uv}\Delta_{n}^{2}.
\end{equation}
First note that
\begin{equation*}
\begin{aligned}
\zeta_j^{n, 1} \Delta_n&=\sum_{i=1}^{j-1} \partial_{p q, u v} g(c_i^n)\prod_{\ell=1}^2\frac{K_{{b}}\left(t_{j-1}-t_{i+\lambda_\ell k_n}\right)}{\bar{K}\left(t_{i+\lambda_\ell k_n}\right)}\mathbbm{1}_{\left\{t_{i}\leq t_{n-2 k_n+1}\right\}} \Delta_n\\
&=\int_0^{t_{j-1}\wedge t_{n-2 k_n+1}} \partial_{p q, u v} g\left(c_s\right)
\prod_{\ell=1}^{2}\frac{K_b\left(t_{j-1}-s-\lambda_\ell b\right)}{\bar{K}(s+\lambda_\ell b)}d s+O\left(\frac{\Delta_n}{b_n^2}\right)+O_p\left(\frac{\sqrt{\Delta_n}}{b_n}\right),
\end{aligned}
\end{equation*}
which can be proved similarly as (\ref{eq:K2_discrete}) in Appendix \ref{ApOTPrfs}. Therefore, 
\begin{equation*}
\begin{aligned}
A_n&=\sqrt{\Delta_n} \sum_{j=1}^n \check{c}_{j-1}^{n, p q, u v} \int_0^{t_{j-1}\wedge t_{n-2 k_n+1}} \partial_{p q, u v} g\left(c_s\right)
\prod_{\ell=1}^{2}\frac{K_b\left(t_{j-1}-s-\lambda_\ell b\right)}{\bar{K}(s+\lambda_\ell b)}d s\Delta_n+O_p\left(\sqrt{\Delta_n}\right)\\
&=\sqrt{\Delta_n} \sum_{j=1}^n \check{c}_{j-1}^{n, p q, u v} \partial_{p q, u v} g\left(c_{j-1}^n\right)  \Delta_n\\
&\qquad\qquad\quad\times
\int_0^{\frac{t_{j-1}}{b}} 
\prod_{\ell=1}^{2}\left[\frac{\Delta_n}{b} \sum_{l=1}^n K\left(u+\frac{t_{l-1}-t_{j-1}}{b}-\lambda_\ell\right)\right]^{-1}K(u-\lambda_\ell)du\frac{1}{b}\\
&\quad+O_p\left(\sqrt{\Delta_n}\right)+O_p\left(\frac{\sqrt{\Delta_n}}{\sqrt{b}}\right)\\
&=\frac{\sqrt{\Delta_n}}{b} \int_0^{T} \partial^2_{pq, uv} g\left(c_s\right) \check{c}_s^{pq, uv}\\
&\quad\times \int_0^{s / b}\prod_{\ell=1}^{2} \left[\int_{u-\frac{s}{b}}^{u+\frac{T-s}{b}} K(v-\lambda_\ell) d v\right]^{-1}K(u-\lambda_\ell)\mathbbm{1}_{\left\{u\geq \frac{s-t_{n-2k_n+1}}{b_n}\right\}} d u d s+o_p(1)\\
    &\stackrel{n \rightarrow \infty}{\longrightarrow} \frac{1}{\theta} \int_0^T \partial^2_{pq, uv} g\left(c_s\right) \check{c}_s^{pq, uv} d s \int_0^{\infty} K(u-\lambda_1)K(u-\lambda_2) d u,
\end{aligned}
\end{equation*}
where the second equality above can be shown similarly to \eqref{secndEq}.
For $B_n$, we have
\begin{equation*}
\begin{aligned}
B_n&=\Delta_n^{\frac{3}{2}} \sum_{j=1}^n \partial_{p q, u v} g\left(c_{j-1}^n\right) \check{c}_{j-1}^{n, p q, u v} \int_{t_{j-1}}^{T\wedge t_{n-2 k_n+1}} \prod_{\ell=1}^{2}\bar{K}(s+\lambda_\ell b)^{-1}K_b\left(t_{j-1}-s-\lambda_\ell b\right)d s+o_p(1)\\
&=\frac{\sqrt{\Delta_n}}{b} \int_0^{T} \partial^2_{pq, uv} g\left(c_s\right) \check{c}_s^{pq, uv} \int^0_{\frac{s-t_{n-2k_n+1}}{b}} \prod_{\ell=1}^{2}\left[\int_{u-\frac{s}{b}}^{u+\frac{T-s}{b_n}} K(v-\lambda_\ell) d v\right]^{-1}K(u-\lambda_\ell)d u d s+o_p(1)\\
    &\stackrel{n \rightarrow \infty}{\longrightarrow} \frac{1}{\theta} \int_0^T \partial^2_{pq, uv} g\left(c_s\right) \check{c}_s^{pq, uv} d s \int^0_{-\infty} K(u-\lambda_1)K(u-\lambda_2)
 d u.
\end{aligned}
\end{equation*}
Note that 
$$
\left|\zeta_j^n\right| \leq C \sum_{i=1}^{n-2k_n+1} K_{{b}}\left(t_{j-1}-t_{i+\lambda_1 k_n}\right)K_{{b}}\left(t_{j-1}-t_{i+\lambda_2 k_n}\right)
 \leq C \frac{1}{b_n \Delta_n} \int_{-\infty}^{\infty} K(u-\lambda_1)K(u-\lambda_2)
 d u,
$$
and, thus, the remainder term produced by $O(\Delta_n^{5/2})$ in \eqref{TRTIN} will vanish because
\begin{equation*}
\sqrt{\Delta_n} O\left(\Delta_n^{5 / 2}\right) \sum_{j=1}^n\left|\zeta_j^n\right| \leq \sqrt{\Delta_n} O\left(\Delta_n^{5 / 2}\right) \frac{1}{b_n \Delta_n^2} \leq O_p\left(\frac{\Delta_n}{b_n}\right)=o_p(1).
\end{equation*}
Also, due to \eqref{eq:bounds_alpha}, 
\begin{equation*}
\Delta_n \sum_{j=1}^n \mathbb{E}\left[\left(\zeta_j^n \alpha_j^{n, p q} \alpha_j^{n, u v}\right)^2 \mid \mathcal{F}^n_{j-1}\right] \leq C \Delta_n\left|\zeta_j^n\right|^2 \sum_{j=1}^n \mathbb{E}\left[\left\|\alpha_j^n\right\|^4 \mid \mathcal{F}^n_{j-1}\right]=O_p\left(\frac{\Delta_n^5}{b^2 \Delta_n^3}\right)=o_p(1).
\end{equation*}
Then, we conclude that
$$
{A}_{\lambda_1,\lambda_2}^{n,3}(1)\to\frac{1}{8\theta} \int_0^T \partial^2_{pq, uv} g\left(c_s\right) \check{c}_s^{pq, uv} d s \int^{\infty}_{-\infty} K(u-\lambda_1)K(u-\lambda_2)
 d u.
$$
For $r=2$, {we proceed as in \eqref{RFEANa}-\eqref{RFEAN} when dealing with $S_{n,2}$. Specifically, with the notation $c_{\cdot}^{n,1}:=c_{\cdot}^{n,pq}$, $c_{\cdot}^{n,2}:=c_{\cdot}^{n,uv}$, $c_{\cdot}^{(1)}=c_{\cdot}^{pq}$, and $c_{\cdot}^{(2)}=c_{\cdot}^{uv}$,}
\begin{align}
\nonumber
&{A}_{\lambda_1,\lambda_2}^{n,3}(2)=\frac{\Delta_n^{5/2}}{8} \sum_{i=1}^{n-2k_n+1}\partial_{p q, u v}^2 g(c_i^n)
\prod_{\ell=1}^{2}
\bar{K}\left(t_{i+\lambda_\ell k_n}\right)^{-1}\sum_{j=1}^{n} K_{{b}}\left(t_{j-1}-t_{i+\lambda_\ell k_n}\right) \left(c_{j-1}^{n,\ell}-c_{i+\lambda_\ell k_n}^{n,\ell}\right)\\
\nonumber
&=\frac{\sqrt{\Delta_n}}{8} \sum_{i=1}^{n-2k_n+1}\partial_{p q, u v}^2 g(c_i^n)\bar{K}\left(t_{i+\lambda_1 k_n}\right)^{-1}\bar{K}\left(t_{i+\lambda_2 k_n}\right)^{-1}\\
\nonumber
&\qquad\qquad\times\prod_{\ell=1}^{2}\left[\int_0^T K_b\left(s-t_{i+\lambda_\ell k_n}\right)\left(c_s^{(\ell)}-c_{i+\lambda_\ell k_n}^{n, \ell}\right) d s+\frac{\Delta_n\Delta K}{2b}\left(c_{i+\lambda_\ell k_n-1}^{n, \ell}-c_{i+\lambda_\ell k_n}^{n, \ell}\right)\right]+o_p(1)\\
\nonumber
&=\frac{\sqrt{\Delta_n}}{8} \sum_{i=1}^{n-2k_n+1}\partial_{p q, u v}^2 g(c_i^n)
\prod_{\ell=1}^{2}\bar{K}\left(t_{i+\lambda_\ell k_n}\right)^{-1}\left(\int_0^T K_b\left(s-t_{i+\lambda_\ell k_n}\right)\left(c_s^{(\ell)}-c_{i+\lambda_\ell k_n}^{n, \ell}\right) d s\right)\\
\nonumber
&\quad+ \sum_{i=1}^{n-2k_n+1}\frac{\Delta_n^{\frac{3}{2}}}{16b} \partial_{p q, u v}^2 g\left(c_i^n\right)\prod_{\ell=1}^2 \bar{K}\left(t_{i+\lambda_\ell k_n}\right)^{-1}\\
\nonumber
&\qquad\qquad\times
\int_0^T K_b\left(s-t_{i+\lambda_1 k_n}\right)\left(c_s^{ p q}-c_{i+\lambda_1 k_n}^{n, p q}\right) d s\left(c_{i+\lambda_2 k_n-1}^{n, u v}-c_{i+\lambda_2 k_n}^{n, u v}\right) \Delta K\\
\nonumber
&\quad+ \sum_{i=1}^{n-2k_n+1}\frac{\Delta_n^{\frac{3}{2}}}{16b} \partial_{p q, u v}^2 g\left(c_i^n\right)\prod_{\ell=1}^2 \bar{K}\left(t_{i+\lambda_\ell k_n}\right)^{-1}\\
\nonumber
&\qquad\qquad\times
\int_0^T K_b\left(s-t_{i+\lambda_2 k_n}\right)\left(c_s^{uv}-c_{i+\lambda_2 k_n}^{n, uv}\right) d s\left(c_{i+\lambda_1 k_n-1}^{n, pq}-c_{i+\lambda_1 k_n}^{n, pq}\right) \Delta K+o_{p}(1)\\
&=: Q+P_1+P_2+o_p(1).\label{DcmrTL}
\end{align}
As in \eqref{p2p3}, we can show that
\begin{equation}\label{DcmrTLab}
\begin{split}
    P_\ell&=O\left(\left|\sum_{i=1}^{n-2k_n+1}\Delta_n \int_0^T K_b\left(s-t_{i+\lambda_\ell k_n}\right)\left(c_s^{(\ell)}-c_{i+\lambda_\ell k_n}^{n, \ell}\right) d s  \right|\right).
\end{split}
\end{equation}
Now consider $Q$. First, by integration by parts and denoting $\tilde{\sigma}^{(1)}=\tilde{\sigma}^{pq}$ and $\tilde{\sigma}^{(2)}=\tilde{\sigma}^{uv}$, note that:
\begin{equation}\label{DcmrTLb}
\begin{aligned}
& \int_0^T K_b\left(s-t_{i+\lambda_\ell k_n}\right)\left(c_s^{(\ell)}-c_{i+\lambda_\ell k_n}^{n, \ell}\right) d s \\
&\quad= -L\left(\frac{T-t_{i+\lambda_\ell k_n}}{{b}}\right)\left(c_T^{(\ell)} - c_{t_{i+\lambda_\ell k_n}}^{(\ell)} \right) + L\left(\frac{-t_{i+\lambda_\ell k_n}}{{b}}\right)\left(c_0^{(\ell)} - c_{t_{i+\lambda_\ell k_n}}^{(\ell)} \right)\\
&\quad\quad + \int_0^T L\left(\frac{s-t_{i+\lambda_\ell k_n}}{{b}} \right)\tilde{\sigma}^{(\ell)}_s dW_s + o_p({b}^{\frac{1}{2}})\\
&\quad=:  R^{(\ell)}_{1,i} + R^{(\ell)}_{2,i} +\int_0^T L\left(\frac{s-t_{i+\lambda_\ell k_n}}{b} \right)\tilde{\sigma}^{(\ell)}_s dW_s+ o_p({b}^{\frac{1}{2}}).
\end{aligned}
\end{equation}
Then, as in Eq.~\eqref{eq:R1},
\begin{equation*}
\begin{aligned}
&\mathbb{E}\left|\frac{\sqrt{\Delta_n}}{8} \sum_{i=1}^{n-2k_n+1}\partial_{p q, u v}^2 g(c_i^n)\prod_{\ell=1}^{2}\bar{K}\left(t_{i+\lambda_\ell k_n}\right)^{-1}
 R_{1, i}^{(\ell)} \right|\\
&\leq \frac{C}{\delta^2} \sum_{i=1}^{n-2k_n+1} \sqrt{\Delta_n} \prod_{\ell=1}^{2}|L\left(\frac{T-t_{i+\lambda_\ell k_n}}{b}\right)|\left(T-t_{i+\lambda_\ell k_n}\right)\\
& \leq \frac{C}{\delta^2} \frac{b^2}{\sqrt{\Delta_n}} \int_{-\infty}^{\infty} \prod_{\ell=1}^2|L(u-\lambda_\ell)| |u-\lambda_\ell| d u \xrightarrow{n \rightarrow \infty} 0.
\end{aligned}
\end{equation*}
In the same way, $$\mathbb{E}\left|\frac{\sqrt{\Delta_n}}{8} \sum_{i=1}^{n-2k_n+1}\partial_{p q, u v}^2 g(c_i^n)\prod_{\ell=1}^2\bar{K}\left(t_{i+\lambda_\ell k_n}\right)^{-1}
 R_{2, i}^{\ell}\right|\xrightarrow{n \rightarrow \infty} 0.$$
To analyze the term coming from multiplying the $dW$ terms in \eqref{DcmrTLb}, we consider two cases. For $t_i<\sqrt{b}$, by Cauchy's and BDG's inequalities,
\begin{equation*}
\begin{aligned}
&\mathbb{E}\Bigg|\sum_{i=1,t_i<\sqrt{b}}^{n-2k_n+1} \frac{\sqrt{\Delta_n}}{8} \partial_{p q, u v}^2 g(c_i^n)\prod_{\ell=1}^{2}\bar{K}\left(t_{i+\lambda_\ell k_n}\right)^{-1}\int_0^T L\left(\frac{s-t_{i+\lambda_\ell k_n}}{{b}} \right)\tilde{\sigma}^{(\ell)}_s dW_s\Bigg|\\
&\quad \leq \frac{C}{\delta^2} b\sum_{t_i<\sqrt{b}} \sqrt{\Delta_n} \int L^2(u)d u \leq \frac{C}{\delta} \sqrt{b} \rightarrow 0 .
\end{aligned}
\end{equation*}
For $t_i>\sqrt{b}$, by the proof of Theorem 6.2 in \cite{FigLi}, $\int_0^T L\left(\frac{t-t_{i+\lambda_\ell k_n}}{b} \right)\tilde{\sigma}^{(\ell)}_t dW_t=\sum_{r=1}^{d}\tilde{\sigma}^{(\ell),r}_{t_{i+\lambda_\ell k_n} - \sqrt{b}}\int_{t_{i+\lambda_\ell k_n} - \sqrt{b}}^T L\left(\frac{t-t_{i+\lambda_\ell k_n}}{b} \right) dW^{r}_t + o_p({b}^{\frac{1}{2}})$. Since $b\ll\sqrt{b}$, $t_{i+\lambda_\ell k_n}=t_{i}+\lambda_\ell b$, and $b\to{}0$, we can replace $\tilde{\sigma}^{(\ell),r}_{t_{i+\lambda_\ell k_n} - \sqrt{b}}$ and $\partial_{p q, u v}^2 g(c_i^n)$ with $\tilde{\sigma}^{(\ell),r}_{t_{i} - \sqrt{b}}$ and $\partial_{p q, u v}^2 g(c_{t_i-\sqrt{b}})$, respectively. Thus, for any $r_1,r_2\in\{1,\dots,d\}$, it suffices to consider the asymptotic behavior of
\begin{equation*}
\begin{aligned}
&\sum_{i=1,t_i>\sqrt{b}}^{n-2k_n+1} \frac{\sqrt{\Delta_n}}{8} \partial_{p q, u v}^2 g(c_{t_i-\sqrt{b}})
\prod_{\ell=1}^2\bar{K}\left(t_{i+\lambda_\ell k_n}\right)^{-1}\tilde{\sigma}_{t_{i} - \sqrt{b}}^{(\ell), r_\ell} \int_{t_{i+\lambda_\ell k_n} - \sqrt{b}}^T L\left(\frac{t-t_{i+\lambda_\ell k_n}}{b} \right) dW^{r_\ell}_t=:\sum_{i=1,t_i>\sqrt{b}}^{n-2k_n+1}\zeta_i^{n}.
\end{aligned}
\end{equation*}
When $r_1= r_2=:r$,
\begin{equation*}
\begin{aligned}
    &\sum_{i=1,t_i>\sqrt{b}}^{n-2k_n+1}\mathbb{E}\left[\zeta_i^n \mid \mathcal{F}_{t_i-\sqrt{b}}\right]=\sum_{i=1,t_i>\sqrt{b}}^{n-2k_n+1} \frac{\sqrt{\Delta_n}}{8} \partial^2_{pq, uv} g\left(c_{t_i-\sqrt{b}}\right) \prod_{\ell=1}^2\bar{K}\left(t_{i+\lambda_\ell k_n}\right)^{-1}\tilde{\sigma}_{t_i-\sqrt{b}}^{(\ell), r}\\
    &\qquad\qquad\qquad\qquad\qquad \qquad\qquad\times \int_{t_i+(\lambda_1\vee \lambda_2)k_n-\sqrt{b}}^T L\left(\frac{t-t_{i+\lambda_1 k_n}}{b} \right)L\left(\frac{t-t_{i+\lambda_2 k_n}}{b} \right) d t\\
    &=\int_0^{t_{n-2k_n+1}-\sqrt{b}} \frac{\theta}{8} \partial^2_{pq, uv} g\left(c_s\right)\prod_{\ell=1}^2\left(\int_{-\frac{s+\sqrt{b}}{b}}^{\frac{T-s-\sqrt{b}}{b}} K(v-\lambda_\ell) d v\right)^{-1}\tilde{\sigma}_s^{(\ell), r}\int_{(\lambda_1\vee\lambda_2)-\frac{1}{\sqrt{b}}}^{\frac{T-s-\sqrt{b}}{b}} \prod_{\ell=1}^2 L(u-\lambda_\ell)d ud s+o_p(1)\\
    &\quad\stackrel{n \rightarrow \infty}{\longrightarrow}\frac{\theta}{8} \int_0^T \partial^2_{pq, uv} g\left(c_s\right) \tilde{\sigma}_s^{pq, r} \tilde{\sigma}_s^{uv, r} d s\left(\int_{-\infty}^{\infty} L(u-\lambda_1)L(u-\lambda_2) d u\right).
\end{aligned}
\end{equation*}
When $r_1\neq r_2$, $\mathbb{E}\left[\zeta_i^n \mid \mathcal{F}_{t_i-\sqrt{b}}\right]=0$. 
By Cauchy's and BDG's inequalities, we also have:
\begin{align*}
&\mathbb{E}\left(\sum_{i=1,t_i>\sqrt{b}}^{n-2k_n+1}\left(\zeta_i^n-\mathbb{E}\left[\zeta_i^n \mid \mathcal{F}_{t_i-\sqrt{b}}\right]\right)\right)^2\leq C \sum_{t_i>\sqrt{b}} \mathbb{E}\left[\mathbb{E}\left[\left(\zeta_i^n\right)^2 \mid \mathcal{F}_{t_i-\sqrt{b}}\right]\right]\\
&\leq C \sum_{t_i>\sqrt{b}} \Delta_n
\prod_{\ell=1}^2\left(\int_{-\infty}^{\infty} L^2\left(\frac{t-t_{i+\lambda_{\ell}k_n}}{b}\right)d u\right)\xrightarrow{n \rightarrow \infty} 0
\\
&\leq C \sum_{t_i>\sqrt{b}} \Delta_n\prod_{\ell=1}^2\int_{-\infty}^{\infty} L^2(u-\lambda_\ell)du b^2 \xrightarrow{n \rightarrow \infty} 0.
\end{align*}
We then conclude the following asymptotics for the term $Q$ in \eqref{DcmrTL}:
\begin{equation}
Q \xrightarrow{\mathbb{P}} \frac{\theta}{8} \sum_{r=1}^d \int_0^T \partial_{p q, u v} g\left(c_s\right) \tilde{\sigma}_s^{p q, r} \tilde{\sigma}_s^{u v, r} d s\left(\int_{-\infty}^{\infty} L(u-\lambda_1)L(u-\lambda_2) d u\right) .
\end{equation}
As for $P_\ell$, note that 
\begin{align*}
    &\mathbb{E}\left|\sum_{i=1}^{n-2k_n+1} \Delta_n R_{1, i}^{(\ell)}\right| \leq \sum_{i=1}^{n-2k_n+1} \Delta_n \mathbb{E}\left|L\left(\frac{T-t_{i+\lambda_\ell k_n}}{b}\right)\left(c_T^{(\ell)}-c_{t_{i+\lambda_1 k_n}}^{(\ell)}\right)\right|\\ 
    &\leq \sum_{i=1}^{n-2k_n+1}\Delta_n\left|L\left(\frac{T-t_{i+\lambda_\ell k_n}}{b}\right)\right|\left(T-t_{i+\lambda_\ell k_n}\right)^{1/2}=\int_{-\infty}^{\infty}|L(u)| b \sqrt{b u} d u\\
    &=b^{3 / 2} \int_{-\infty}^{\infty}|L (u)| \sqrt{u} d u \rightarrow 0,
\end{align*}
and
\begin{align*}
    &\mathbb{E}\left|\sum_{i=1}^{n-2k_n+1} \Delta_n \int_0^T L\left(\frac{t-t_{i+\lambda_\ell k_n}}{b}\right) \tilde{\sigma}_t^{(\ell)} d W_t\right| \leq \sum_{i=1}^{n-2k_n+1} \Delta_n \sqrt{\mathbb{E}\left(\int_0^T L\left(\frac{t-t_{i+\lambda_\ell k_n}}{b}\right) \tilde{\sigma}_t^{(\ell)} d W_t\right)^2} \\
    &\leq \sum_{i=1}^{n-2k_n+1}\Delta_n \sqrt{\int_0^T L\left(\frac{t-t_{i+\lambda_\ell k_n}}{b}\right)^2 d t}\leq \sum_{i=1}^n \Delta_n \sqrt{\int_{-\infty}^{\infty} L(u)^2 d u b} \rightarrow 0.
\end{align*}
Then, using \eqref{DcmrTLab} and \eqref{DcmrTLb}, we conclude that $P_\ell \to 0$. Thus, as $n \rightarrow \infty$,
\begin{equation*}
{A}_{\lambda_1,\lambda_2}^{n,3}(2)\xrightarrow{\mathbb{P}} \frac{\theta}{8} \sum_{r=1}^d \int_0^T \partial_{p q, u v} g\left(c_s\right) \tilde{\sigma}_s^{p q, r} \tilde{\sigma}_s^{u v, r} d s\left(\int_{-\infty}^{\infty} L(u-\lambda_1)L(u-\lambda_2) d u\right).
\end{equation*}
For $r=3$, we split the inner summation into three parts:
\begin{equation*}
\begin{split}
{A}_{\lambda_1,\lambda_2}^{n,3}(3)
   & = \sum_{i=1}^{n-2k_n+1}\frac{\sqrt{\Delta_n}}{8}\partial_{p q, u v}^2 g(c_i^n)\prod_{\ell=1}^{2}\bar{K}\left(t_{i+\lambda_\ell k_n}\right)^{-1}\\
   &\qquad
\times\left(\sum_{j=i+1}^n\sum_{l=j+1}^n +\sum_{j=1}^{i}\sum_{l=j+1}^{i} +\sum_{j=1}^{i}\sum_{l=i+1}^{n}\right)
K_{{b}}\left(t_{j-1}-t_{i+\lambda_1 k_n}\right) 
K_{{b}}\left(t_{l-1}-t_{i+\lambda_2 k_n}\right)\alpha^{n,uv}_j\alpha^{n,pq}_l\\
&  =:  R + L + Cross.
\end{split}
\end{equation*}
Rewrite
\begin{equation*}
\begin{split}
    R  &=  \sum_{l=3}^n\alpha^{n,pq}_l\sum_{j = 2}^{l-1}\sum_{i=1}^{j-1} \frac{\sqrt{\Delta_n}}{8}\partial_{p q, u v}^2 g(c_i^n)\prod_{\ell=1}^2 \bar{K}\left(t_{i+\lambda_\ell k_n}\right)^{-1}\\
    &\qquad\qquad\times K_{{b}}\left(t_{j-1}-t_{i+\lambda_1 k_n}\right) 
K_{{b}}\left(t_{l-1}-t_{i+\lambda_2 k_n}\right)\mathbbm{1}_{\left\{t_i\leq t_{n-2k_n+1}\right\}} \alpha^{n,uv}_j\\
    &=:  \sum_{l= 3}^n \alpha^{n,pq}_l \zeta^n_l.
\end{split}
\end{equation*}
Then,
\begin{equation*}
\begin{split}
& \mathbb{E}\left|   \sum_{l=3}^n \mathbb{E}\left[\alpha^{n,pq}_l \zeta^{n}_l\mid \mathcal{F}^n_{l-1} \right] \right| \leq C\sum_{l=3}^n  \mathbb{E}\left[|\zeta^n_l|  \Delta_n^{3/2}\left(\sqrt{\Delta_n} + \eta^n_{l-1}\right)\right]\\
  &   \leq \frac{C}{\delta^2} \sum_{l=3}^n \Delta_n^{3/2}\sum_{j = 2}^{l-1}\sum_{i=1}^{(j-1)\wedge(n-2k_n+1)}\sqrt{\Delta_n}\left|
  K_{{b}}\left(t_{j-1}-t_{i+\lambda_1 k_n}\right) 
K_{{b}}\left(t_{l-1}-t_{i+\lambda_2 k_n}\right) \right|\sqrt{\mathbb{E}\left(\alpha^{n,uv}_j\right)^2 \mathbb{E}\left(\eta^n_{l-1}\right)^2 }\\
 &    \leq \frac{C}{\delta^2} \Delta_n \sum_{l=3}^n\sqrt{\mathbb{E}\left(\eta^n_{l-1}\right)^2} \left(\int_{-\infty}^{\infty}|K(x)|dx\right)^2\stackrel{n\to\infty}{\longrightarrow}0,
\end{split}
\end{equation*}
and 
\begin{equation*} 
\begin{split}
& \mathbb{E}\left|   \sum_{l=3}^n \mathbb{E}\left[\left(\alpha^{n,pq}_l \zeta^n_l\right)^2\mid \mathcal{F}^n_{l-1} \right] \right|
\leq   C\Delta_n^2  \sum_{l=3}^n  \mathbb{E}[\left(\zeta^n_l\right)^2]\\
&\leq  C\Delta_n^2  \sum_{l=3}^n \sum_{j=2}^{l-1}\left (\sum_{i=1}^{(j-1)\wedge(n-2k_n+1)}\sqrt{\Delta_n}  |K_{{b}}\left(t_{j-1}-t_{i+\lambda_1 k_n}\right) 
K_{{b}}\left(t_{l-1}-t_{i+\lambda_2 k_n}\right)|\right)^2 \mathbb{E}\left[\left(\alpha^{n,uv}_j\right)^2\right] \\
&\quad + C\Delta_n^2  \sum_{l=3}^n \left (\sum_{j=2}^{l-1}\sum_{i=1}^{(j-1)\wedge(n-2k_n+1)}\sqrt{\Delta_n} |K_{{b}}\left(t_{j-1}-t_{i+\lambda_1 k_n}\right) 
K_{{b}}\left(t_{l-1}-t_{i+\lambda_2 k_n}\right)|\right)^2\max_{j,j': j\neq j'} \left| \mathbb{E}\left(\alpha^{n,uv}_j\alpha^{n,uv}_{j'}\right)\right|\\
 & \leq C\Delta_n b\sum_{l=3}^n \int_{-\infty}^{\infty}\left( \int_{-\infty}^{\infty}\left|K(u) K(v+u)  \right|du\right)^2 dv + C \sum_{l=3}^n\Delta_n^{3/2}\left( \int_{-\infty}^{\infty} \int_{-\infty}^{\infty}\left|K(u) K(v+u)\right|dudv  \right)^2\stackrel{n\to\infty}{\longrightarrow}0.
\end{split}
\end{equation*}
Hence, $R\xrightarrow{\mathbb{P}} 0$. 
Similarly, $L$ can be written as
\begin{equation*}
\begin{split}
L &=  \sum_{j = 1}^{n}\sum_{l=j+1}^{n} \sum_{i = l}^{n-2k_n+1}\frac{\sqrt{\Delta_n}}{8}\partial_{p q, u v}^2 g(c_i^n)\bar{K}\left(t_{i+\lambda_1 k_n}\right)^{-1}\bar{K}\left(t_{i+\lambda_2 k_n}\right)^{-1}\\
&\qquad\qquad\qquad\qquad\times
K_{{b}}\left(t_{j-1}-t_{i+\lambda_1 k_n}\right) K_{{b}}\left(t_{l-1}-t_{i+\lambda_2 k_n}\right) \alpha^{n,uv}_j \alpha^{n,pq}_l =: \sum_{j=1}^{n} \zeta^n_j.
\end{split}
\end{equation*} 
Then, similarly to (\ref{eq:eta_sum}), (\ref{eq:L4p3_mean}) and (\ref{eq:L4p3_2moment}), we can apply (\ref{eq:alphapartialg}) and (\ref{eq:sucessive}) to show that
\begin{equation}\label{TBTrUsd}
\mathbb{E}\left|\sum_{j= 1}^n \mathbb{E}\left[\zeta^n_j\mid \mathcal{F}^n_{j-1}\right] \right|\to 0,\quad \mathbb{E}\left(\sum_{j= 1}^n \left( \zeta^n_j- \mathbb{E}\left[\zeta^n_j\mid \mathcal{F}^n_{j-1}\right] \right)  \right)^2\to 0.
\end{equation}
The Cross term can be analyzed similarly to the case of $L$. We then conclude 
\begin{equation}         
    {A}_{\lambda_1,\lambda_2}^{n,3}(r)\to_p 0, \quad \text{for} \ r=3 \ \text{and} \ r=4.
\end{equation}
When $r=5$, first note that by (\ref{eq:A.84}), we have
\begin{equation}\label{eq:ikn}
\begin{aligned}
    &\left|\mathbb{E}\left[\left.\alpha^{n,uv}_j\left(c_{i+\lambda_1k_n}^{n, p q}-c_i^{n, p q} \right)\partial^2_{pq,uv}g(c^n_i) \right| \mathcal{F}^n_{j-1} \right]\right| \\
    &\leq
\begin{cases}
        C\Delta_n^{3/2}\left(\sqrt{\Delta_n} + \eta^n_{j-1} \right)& \text{ if } i+\lambda_1k_n<j,\\
    C\Delta_n^{3/2}\left(\sqrt{\Delta_n} + \eta^n_{j-1} \right)+C\Delta_n\eta^{n}_{j-1,i+\lambda_1k_n-j+1}\left(t_{i+\lambda_1k_n}-t_{j-1}\right) & \text{ if } i < j \leq i+\lambda_1k_n,\\
    C \Delta_n \eta_{j-1, i+\lambda_1k_n-j+1}^n\left(t_{i+\lambda_1k_n}-t_{j-1}\right)+C\Delta_n\eta^{n}_{j-1,i-j+1}\left(t_{i}-t_{j-1}\right) & \text{ if } i\geq j.
    \end{cases}
\end{aligned}
\end{equation}
Then,
\begin{align*}
    {A}_{\lambda_1,\lambda_2}^{n,3}(5)&=\frac{\sqrt{\Delta_n}}{8} \sum_{i=1}^{n-2k_n+1}\partial_{p q, u v}^2 g(c_i^n)\bar{K}\left(t_{i+\lambda_1 k_n}\right)^{-1}\bar{K}\left(t_{i+\lambda_2 k_n}\right)^{-1}\\
    &\qquad\qquad\qquad\sum_{j=1}^{n} \sum_{l=1}^{n} K_{{b}}\left(t_{j-1}-t_{i+\lambda_1 k_n}\right) K_{{b}}\left(t_{l-1}-t_{i+\lambda_2 k_n}\right) \alpha_{j}^{n,uv} \left(c_{l-1}^{n,pq}-c_{i+\lambda_1 k_n}^{n,pq}\right) \Delta_{n}\\
    &=:\sum_{j=1}^{n}\zeta_{j}^{n}.
\end{align*}
By (\ref{eq:A.84}), (\ref{eq:ikn}), and the following decomposition:
\begin{align*}
    &\left|\mathbb{E}\left[\left.\alpha^{n,uv}_j\left(c_{l-1}^{n, p q}-c_{i+\lambda_1k_n}^{n, p q} \right)\partial^2_{pq,uv}g(c^n_i) \right| \mathcal{F}^n_{j-1} \right]\right|\\
    &\leq \left|\mathbb{E}\left[\left.\alpha^{n,uv}_j\left(c_{l-1}^{n, p q}-c_i^{n, p q} \right)\partial^2_{pq,uv}g(c^n_i) \right| \mathcal{F}^n_{j-1} \right]\right|+\left|\mathbb{E}\left[\left.\alpha^{n,uv}_j\left(c_{i+\lambda_1k_n}^{n, p q}-c_i^{n, p q} \right)\partial^2_{pq,uv}g(c^n_i) \right| \mathcal{F}^n_{j-1} \right]\right|,
\end{align*}
we have 
\begin{equation*}
\begin{aligned}
    &\mathbb{E}\left|\sum_{j= 1}^n \mathbb{E}\left[\zeta^n_j\mid \mathcal{F}^n_{j-1}\right] \right|\\
    &\leq C \Delta_n^{3/2}\sum_{j= 1}^n\sum_{i=1}^{n-2k_n+1}\sum_{l=1}^{n}\left|K_{{b}}\left(t_{j-1}-t_{i+\lambda_1 k_n}\right) K_{{b}}\left(t_{l-1}-t_{i+\lambda_2 k_n}\right)\right|\mathbb{E}\left|\mathbb{E}\left[\left.\alpha^{n,uv}_j\left(c_{l-1}^{n, p q}-c_{i+\lambda_1k_n}^{n, p q} \right)\partial^2_{pq,uv}g(c^n_i) \right| \mathcal{F}^n_{j-1} \right]\right|\\
    &\leq C \Delta_n^{3/2}\sum_{j= 1}^n\sum_{i=1}^{n-2k_n+1}\sum_{l=1}^{n}\left|K_{{b}}\left(t_{j-1}-t_{i+\lambda_1 k_n}\right) K_{{b}}\left(t_{l-1}-t_{i+\lambda_2 k_n}\right)\right|\Big[\Delta_n^{3/2}\left(\sqrt{\Delta_n}+ \mathbb{E}\left[\eta^n_{j-1}\right] \right)\\
    & +\Delta_n \mathbb{E}\left[\eta_{j-1, \left|l-j\right|}^n\right]\left|t_{l-1}-t_{j-1}\right|+\Delta_n\mathbb{E}\left[\eta^{n}_{j-1,\left|i-j+1\right|}\right]\left|t_{i}-t_{j-1}\right|+\Delta_n \mathbb{E}\left[\eta_{j-1, \left|i+\lambda_1k_n-j+1\right|}^n\right]\left|t_{i+\lambda_1k_n}-t_{j-1}\right|\Big]\\
    &\leq C \Delta_n^{3/2}\sum_{j= 1}^n\sum_{i=1}^{n-2k_n+1} \left|K_{{b}}\left(t_{j-1}-t_{i+\lambda_1 k_n}\right)\right|\int_{-t_{i}/b}^{(T-t_i)/b}\left|K(u-\lambda_2)\right|\Big[\Delta_n+ \mathbb{E}\left[\eta^n_{j-1}\right]\sqrt{\Delta_n}\\
    &+ \mathbb{E}\left[\eta_{t_{j-1},\left|t_i-t_{j-1}+ub\right|}\right]\left|t_{i}-t_{j-1}+ub\right|+\mathbb{E}\left[\eta_{t_{j-1},\left|t_i-t_{j-1}\right|}\right]\left|t_{i}-t_{j-1}\right|+\mathbb{E}\left[\eta_{t_{j-1}, \left|t_{i+\lambda_1k_n}-t_{j-1}\right|}\right]\left|t_{i+\lambda_1k_n}-t_{j-1}\right|\Big]du\\
    &\leq C\sqrt{\Delta_n}\sum_{j=1}^{n}\int_{(t_{j-1}-T)/b}^{t_{j-1}/b}\left|K(v-\lambda_1)\right|\int_{v-\frac{t_{j-1}}{b}}^{v+\frac{T-t_{j-1}}{b}}\left|K(u-\lambda_2)\right|\Big[\Delta_n+ \eta^n_{j-1}\sqrt{\Delta_n}\\
    &\qquad\qquad\qquad\qquad+\mathbb{E}\left[\eta_{t_{j-1},\left|u-v\right|b}\right]\left|u-v\right|b+\mathbb{E}\left[\eta_{t_{j-1},\left|v\right|b}\right]\left|v\right|b+\mathbb{E}\left[\eta_{t_{j-1}, \left|v-\lambda_1\right|b}\right]\left|v-\lambda_1\right|b\Big]dudv\\
    &\leq C\sqrt{\Delta_n}b\sum_{j=1}^{n}\int \left|K(v-\lambda_1)\right|\int \left|K(u-\lambda_2)\right|\Bigg[\frac{\Delta_n}{b}+\eta^n_{j-1}\frac{\sqrt{\Delta_n}}{b}\\
    &\qquad\qquad\qquad\qquad\qquad+\mathbb{E}\left[\eta_{t_{j-1},\left|u-v\right|b}\right]\left|u-v\right|+\mathbb{E}\left[\eta_{t_{j-1},\left|v\right|b}\right]\left|v\right|+\mathbb{E}\left[\eta_{t_{j-1}, \left|v-\lambda_1\right|b}\right]\left|v-\lambda_1\right|\Bigg]dudv\to 0.
\end{aligned}
\end{equation*}
Meanwhile, by (\ref{eq:X_c_bound}), (\ref{eq:bounds_alpha}), and Cauchy-Schwartz inequality,
\begin{equation*}
\begin{aligned}
&\mathbb{E}\left(\sum_{j= 1}^n \left( \zeta^n_j- \mathbb{E}\left[\zeta^n_j\mid \mathcal{F}^n_{j-1}\right] \right)  \right)^2\leq \sum_{j= 1}^n\mathbb{E}\left[\left( \zeta^n_j \right)^2\right]\\
&\leq C\Delta_n^3\sum_{j= 1}^n \mathbb{E}\Bigg[\left(\alpha_{j}^{n,uv}\right)^2\Bigg(\sum_{i=1}^{n-2k_n+1}\partial_{p q, u v}^2 g(c_i^n)\bar{K}\left(t_{i+\lambda_1 k_n}\right)^{-1}\bar{K}\left(t_{i+\lambda_2 k_n}\right)^{-1}\\
&\qquad\qquad\qquad\qquad\qquad\qquad\qquad\qquad\times
\sum_{l=1}^{n}K_{{b}}\left(t_{j-1}-t_{i+\lambda_1 k_n}\right) K_{{b}}\left(t_{l-1}-t_{i+\lambda_2 k_n}\right) \left(c_{l-1}^{n,pq}-c_{i+\lambda_1 k_n}^{n,pq}\right)\Bigg)^2\Bigg]\\
&\leq C\Delta_n^3\sum_{j= 1}^n \sum_{i,i^{\prime}}\sum_{l,l^{\prime}}\left|K_{{b}}\left(t_{j-1}-t_{i+\lambda_1 k_n}\right)\right|\left| K_{{b}}\left(t_{l-1}-t_{i+\lambda_2 k_n}\right)\right|\left|K_{{b}}\left(t_{j-1}-t_{i^{\prime}+\lambda_1 k_n}\right)\right|\left| K_{{b}}\left(t_{l^{\prime}-1}-t_{i^{\prime}+\lambda_2 k_n}\right)\right|\\
&\qquad\qquad\qquad\qquad\qquad\qquad\qquad\qquad\qquad\qquad\qquad\qquad\times
\mathbb{E}\left[\left(\alpha_{j}^{n,uv}\right)^2\left|c_{l-1}^{n,pq}-c_{i+\lambda_1 k_n}^{n,pq}\right|\left|c_{l^{\prime}-1}^{n,pq}-c_{i^{\prime}+\lambda_1 k_n}^{n,pq}\right|\right]\\
&\leq C\Delta_n^5\sum_{j= 1}^n \sum_{i,i^{\prime}}\sum_{l,l^{\prime}}\left|K_{{b}}\left(t_{j-1}-t_{i+\lambda_1 k_n}\right)\right|\left| K_{{b}}\left(t_{l-1}-t_{i+\lambda_2 k_n}\right)\right|\left|K_{{b}}\left(t_{j-1}-t_{i^{\prime}+\lambda_1 k_n}\right)\right|\left| K_{{b}}\left(t_{l^{\prime}-1}-t_{i^{\prime}+\lambda_2 k_n}\right)\right|\\
&\qquad\qquad\qquad\qquad\qquad\qquad\qquad\qquad\qquad\qquad\qquad\qquad\qquad\qquad\times \left|t_{l-1}-t_{i+\lambda_1k_n}\right|^{1/4}\left|t_{l^{\prime}-1}-t_{i^{\prime}+\lambda_1k_n}\right|^{1/4}\\
&\leq C\Delta_n\sum_{j= 1}^n\left(\int \left|K(v-\lambda_1)\right|\left|v-\lambda_1\right|^{1/4}dv\right)^2\left(\int \left|K(u-\lambda_2)\right|du\right)^2\sqrt{b}\to 0.
\end{aligned}
\end{equation*}
Thus, ${A}_{\lambda_1,\lambda_2}^{n,3}(5)\to_p 0$. Following the same steps, we can show ${A}_{\lambda_1,\lambda_2}^{n,3}(r)\to_p 0$ when $r=6$. Thus,
\begin{align*}
    {A}_{\lambda_1,\lambda_2}^{n,3}\to &\frac{1}{8\theta} \int_0^T \sum_{p, q, u, v}\partial^2_{pq, uv} g\left(c_s\right) \check{c}_s^{pq, uv} d s \int^{\infty}_{-\infty} K(u-\lambda_1)K(u-\lambda_2) d u\\
    &+\frac{\theta}{8} \int_0^T \sum_{p, q, u, v} \partial^2_{pq, uv} g\left(c_s\right) \tilde{c}_s^{pq, uv} d s\left(\int_{-\infty}^{\infty} L(u-\lambda_1)L(u-\lambda_2) d u\right).
\end{align*}

\subsection{{\Blue Proof of \eqref{SndStOfLmts}.}}\label{Term1Of30}
{In what follows, we shall use the following estimate from \cite{jacod2015estimation} (see Eq.~(3.23) therein):
\begin{equation}\label{EstJR0}
\left|\mathbb{E}\left[\left.\left(c_{i \Delta_n+t}^{p q}-c_{i \Delta_n}^{p q}\right)\left(c_{i \Delta_n+t}^{u v}-c_{i\Delta_n}^{u v}\right) \right| \mathcal{F}_i^n\right]-t \tilde{c}_i^{n, p q, u v}\right| \leq C t \eta_{i, k_n}^n,
\end{equation}
 valid for any $t>0$, where
$
\eta_{i,k_n}^n=\max \left(\eta(Y)_{i,k_n}^n: Y=\mu, \tilde{\mu}, c, \tilde{c}, \hat{c}\right).
$
Recall that}
\begin{equation*}
\begin{aligned}
    &\widehat{A}_{1}^{n,3}:=\frac{\sqrt{\Delta_n}}{8} \sum_{i=1}^{n-2 k_n+1} \sum_{p, q, u, v} \partial_{p q, u v}^2 g(\hat{c}_i^n)\left(c_{i+k_n}^{n, p q}-c_i^{n, p q}\right)\left(c_{i+k_n}^{n, u v}-c_i^{n, u v}\right),
   \end{aligned}
\end{equation*} 
and define
\begin{equation*}
\begin{aligned}
    &A_{1}^{n,3}:=\frac{\sqrt{\Delta_n}}{8} \sum_{i=1}^{n-2 k_n+1} \sum_{p, q, u, v} \partial_{p q, u v}^2 g(c_i^n)\left(c_{i+k_n}^{n, p q}-c_i^{n, p q}\right)\left(c_{i+k_n}^{n, u v}-c_i^{n, u v}\right).
\end{aligned}
\end{equation*}
Note that, by \eqref{eq:X_c_bound} and \eqref{eq:beta_bound},
\begin{align*}
    &\mathbb{E}\left|\widehat{A}_{1}^{n,3}-A_{1}^{n,3}\right|\\
    &=\mathbb{E}\Bigg|\frac{\sqrt{\Delta_n}}{8} \sum_{i=1}^{n-2 k_n+1} \sum_{p, q, u, v} \left(\partial_{p q, u v}^2 g(\hat{c}_i^n)-\partial_{p q, u v}^2 g(c_i^n)\right)\left(c_{i+k_n}^{n, p q}-c_i^{n, p q}\right)\left(c_{i+k_n}^{n, u v}-c_i^{n, u v}\right)\Bigg|\\
    &\quad\leq \frac{\sqrt{\Delta_n}}{8} \sum_{i=1}^{n-2 k_n+1} \sum_{p, q, u, v}\sqrt{\mathbb{E}\left[\left(\partial_{p q, u v}^2 g(\hat{c}_i^n)-\partial_{p q, u v}^2 g(c_i^n)\right)^2\right] \mathbb{E}\left[\left\|c_{i+k_n}^{n}-c_i^{n}\right\|^4\right]}\\
    &\quad\leq \frac{\sqrt{\Delta_n}}{8} \sum_{i=1}^{n-2 k_n+1} \sum_{p, q, u, v}\sqrt{\mathbb{E}\left[\left\|\nabla \partial_{p q, u v}^2 g(\xi)\right\|^2\left\|\hat{c}_i^n-c_i^n\right\|^2\right] \left(k_n\Delta_n\right)^2}\\
    &\quad\leq C\sqrt{\Delta_n}\sum_{i=1}^{n-2 k_n+1} \sum_{p, q, u, v}\left(k_n\Delta_n\right)^{3/2}\to 0.
\end{align*}
Then, we only need to consider $A_{1}^{n,3}$, which can be written as
$$
A_{1}^{n,3}=\sum_{i=1}^{n-2 k_n+1}\zeta^{n}_{i}:=\sum_{i=1}^{n-2 k_n+1}\frac{\sqrt{\Delta_n}}{8}\sum_{p, q, u, v} \partial_{p q, u v}^2 g(c_i^n)\left(c_{i+k_n}^{n, p q}-c_i^{n, p q}\right)\left(c_{i+k_n}^{n, u v}-c_i^{n, u v}\right).
$$
By \eqref{EstJR0}, 
\begin{equation*}
\begin{aligned}
    &\sum_{i=1}^{n-2 k_n+1}\mathbb{E}\left[\zeta^{n}_i\left.\right|\mathcal{F}^n_i\right]\\
    &=\frac{\sqrt{\Delta_n}}{8} \sum_{i=1}^{n} \sum_{p, q, u, v} \partial_{p q, u v}^2 g(c_i^n)\mathbbm{1}_{\left\{t_{i}\leq t_{n-2 k_n+1}\right\}}\mathbb{E}\left[\left.\left(c_{i+k_n}^{n, p q}-c_i^{n, p q}\right)\left(c_{i+k_n}^{n, u v}-c_i^{n, u v}\right)\right| \mathcal{F}_i^n\right]\\
    &=\frac{\sqrt{\Delta_n}}{8} \sum_{i=1}^{n} \sum_{p, q, u, v} \partial_{p q, u v}^2 g(c_i^n)\mathbbm{1}_{\left\{t_{i}\leq t_{n-2 k_n+1}\right\}}k_n\Delta_n \tilde{c}_i^{n, p q, u v}+o_{p}(1)\\
    &=\frac{k_n\sqrt{\Delta_n}}{8} \int_{0}^{t_{n-2k_n+1}} \sum_{p, q, u, v} \partial_{p q, u v}^2 g(c_s) \tilde{c}_s^{p q, u v}ds+o_{p}(1)\stackrel{n \rightarrow \infty}{\longrightarrow} \frac{\theta}{8} \int_{0}^{T} \sum_{p, q, u, v} \partial_{p q, u v}^2 g(c_s) \tilde{c}_s^{p q, u v}ds,
\end{aligned}
\end{equation*}
and
\begin{equation*}
\begin{aligned}
&\mathbb{E}\left(\sum_{i=1}^{n-2 k_n+1}\left(\zeta_i^n-\mathbb{E}\left[\left.\zeta_i^n \right| \mathcal{F}_{i}^n\right]\right)\right)^2\leq \sum_{i=1}^{n-2 k_n+1}\mathbb{E}\left[\left(\zeta_i^n\right)^2\right]\\
&\leq C \sum_{i=1}^{n-2 k_n+1}\Delta_n\mathbb{E}\left[\left(c_{i+k_n}^{n, p q}-c_i^{n, p q}\right)^2\left(c_{i+k_n}^{n, u v}-c_i^{n, u v}\right)^2\right]\\
&\leq C \sum_{i=1}^{n-2 k_n+1}\Delta_n\sqrt{\mathbb{E}\left[\left(c_{i+k_n}^{n, p q}-c_i^{n, p q}\right)^4\right]\mathbb{E}\left[\left(c_{i+k_n}^{n, u v}-c_i^{n, u v}\right)^4\right]}\leq C \sum_{i=1}^{n-2 k_n+1}k_n\Delta_n^2\to 0.
\end{aligned}
\end{equation*}
Thus,
\begin{equation*}
    A_{1}^{n,3}\to \frac{\theta}{8} \int_{0}^{T} \sum_{p, q, u, v} \partial_{p q, u v}^2 g(c_s) \tilde{c}_s^{p q, u v}ds.
\end{equation*}
\subsection{{\Blue Proof of \eqref{TrdStOfLmts}.}}
For simplicity, we only consider the term involving $\beta_i^{n, p q}\left(c_{i+k_n}^{n, u v}-c_i^{n, u v}\right)$. Consider
\begin{equation*}
\begin{array}{l}
\widehat{A}_{2}^{n,3}=\frac{\sqrt{\Delta_n}}{8} \sum_{i=1}^{n-2k_n+1} \sum_{p, q, u, v} \partial_{p q, u v}^2 g(\hat{c}_i^n)\beta_i^{n, p q}\left(c_{i+k_n}^{n, u v}-c_i^{n, u v}\right),\\
A_{2}^{n,3}:=\frac{\sqrt{\Delta_n}}{8} \sum_{i=1}^{n-2k_n+1} \sum_{p, q, u, v} \partial_{p q, u v}^2 g(c_i^n)\beta_i^{n, p q}\left(c_{i+k_n}^{n, u v}-c_i^{n, u v}\right),
\end{array}
\end{equation*}
Again, $\mathbb{E}\left|\widehat{A}_{2}^{n,3}-A_{2}^{n,3}\right|\to 0$, and {using \eqref{Dfnalphabeta0}, we can write $A_{2}^{n,3}$ as}
\begin{align*}
   A_{2}^{n,3}&=\frac{\sqrt{\Delta_n}}{8} \sum_{i=1}^{n-2 k_n+1} \sum_{j=1}^n\sum_{p, q, u, v} \partial_{p q, u v}^2 g(c_i^n)K_{b}\left(t_{j-1}-t_i\right) \bar{K}\left(t_i\right)^{-1}\\
   &\quad\quad\quad\quad\quad\quad\quad\quad\left[\alpha_j^{n, p q}\left(c_{i+k_n}^{n, u v}-c_i^{n, u v}\right)+\left(c_j^{n, p q}-c_i^{n, p q}\right) \left(c_{i+k_n}^{n, u v}-c_i^{n, u v}\right)\Delta_n\right]\\
   &=:\sum_{j=1}^n\zeta^{n,1}_{j}+\sum_{i=1}^{n-2 k_n+1}\zeta^{n,2}_{i},
\end{align*}
where
\begin{equation*}
\begin{aligned}
    \zeta^{n,1}_{j}&:=\frac{\sqrt{\Delta_n}}{8}\sum_{i=1}^{n-2 k_n+1}\sum_{p, q, u, v}\partial_{p q, u v}^2 g(c_i^n)\bar{K}\left(t_i\right)^{-1} K_{b}\left(t_{j-1}-t_i\right) \alpha_j^{n, p q}\left(c_{i+k_n}^{n, u v}-c_i^{n, u v}\right),\\
    \zeta^{n,2}_{i}&:=\frac{\sqrt{\Delta_n}}{8}\sum_{j=1}^n\sum_{p, q, u, v} \partial_{p q, u v}^2 g(c_i^n)K_{b}\left(t_{j-1}-t_i\right) \bar{K}\left(t_i\right)^{-1}\left(c_j^{n, p q}-c_i^{n, p q}\right) \left(c_{i+k_n}^{n, u v}-c_i^{n, u v}\right)\Delta_n.\\
    &=\frac{\sqrt{\Delta_n}}{8}\sum_{j=1}^i\sum_{p, q, u, v} \partial_{p q, u v}^2 g(c_i^n)K_{b}\left(t_{j-1}-t_i\right) \bar{K}\left(t_i\right)^{-1}\left(c_j^{n, p q}-c_i^{n, p q}\right) \left(c_{i+k_n}^{n, u v}-c_i^{n, u v}\right)\Delta_n.\\
    &\quad+\frac{\sqrt{\Delta_n}}{8}\sum_{j=i+1}^n\sum_{p, q, u, v} \partial_{p q, u v}^2 g(c_i^n)K_{b}\left(t_{j-1}-t_i\right) \bar{K}\left(t_i\right)^{-1}\left(c_j^{n, p q}-c_i^{n, p q}\right) \left(c_{i+k_n}^{n, u v}-c_i^{n, u v}\right)\Delta_n.\\
    &=:\zeta^{n,21}_{i}+\zeta^{n,22}_{i}.
\end{aligned}
\end{equation*}
By (\ref{eq:A.84}), 
\begin{equation}\label{zeta1.1}
\begin{aligned}
&\mathbb{E}\left|\sum_{j=1}^n\mathbb{E}\left[\zeta^{n,1}_j\left.\right|\mathcal{F}_{j-1}\right]\right|\\
&\leq C\sqrt{\Delta_n}\sum_{j=1}^n\mathbb{E}\left|\sum_{i=1}^{n-2 k_n+1}\sum_{p, q, u, v} K_{b}\left(t_{j-1}-t_i\right) \mathbb{E}\left[\left.\alpha_j^{n, p q}\left(c_{i+k_n}^{n, u v}-c_i^{n, u v}\right)\partial_{p q, u v}^2 g(c_i^n)\right|\mathcal{F}_{j-1}\right]\right|\\
&\leq C\sqrt{\Delta_n}\sum_{j=1}^n\Bigg(\sum_{i=1}^{j-1}\left|K_{b}\left(t_{j-1}-t_i\right)\right| \mathbb{E}\left[\Delta_n^{3 / 2}\left(\sqrt{\Delta_n}+\eta_{j-1}^n\right)+\Delta_n \eta_{j-1, \left|i+k_n-j+1\right|}^n\left|t_{i+k_n}-t_{j-1}\right|\right]\\
&+\sum_{i=j}^{n} \left|K_{b}\left(t_{j-1}-t_i\right)\right| \mathbb{E}\left[\Delta_n \eta_{j-1, i-j+1}^n\left(t_i-t_{j-1}\right)+\Delta_n \eta_{j-1, i+k_n-j+1}^n\left(t_{i+k_n}-t_{j-1}\right)\right]\Bigg)\\
&\leq C\sqrt{\Delta_n}\sum_{j=1}^n\Bigg(\sqrt{\Delta_n}\left(\sqrt{\Delta_n}+\mathbb{E}\left[\eta_{t_{j-1}}\right]\right)\int^{t_{j-1}/b}_{0}\left|K(u)\right|du+b\int^{t_{j-1}/b}_{(t_{j-1}-T)/b)}\left|K(u)\right|\left|1-u\right|\mathbb{E}\left[\eta_{t_{j-1},\left|1-u\right|b}\right]du\\
&+b\int_{(t_{j-1}-T)/b}^{0}\left|K(u)u\right|\mathbb{E}\left[\eta_{t_{j-1},-ub}\right]du\Bigg)\to 0,
\end{aligned}
\end{equation}
where in the second inequality we apply (\ref{eq:ikn}) with $\lambda_1=1$. By Cauchy-Schwartz inequality, (\ref{eq:X_c_bound}), and (\ref{eq:bounds_alpha}),
\begin{equation}\label{zeta1.2}
\begin{aligned}
&\mathbb{E}\left(\sum_{j=1}^n\left(\zeta_i^{n,1}-\mathbb{E}\left[\left.\zeta_{j}^{n,1} \right| \mathcal{F}_{j-1}\right]\right)\right)^2\leq \sum_{j=1}^n\mathbb{E}\left[\left(\zeta_j^{n,1}\right)^2\right]\\
&\leq C\Delta_n\sum_{j=1}^n\sum_{p, q, u, v}\sum_{i=1}^{n}K^2_{b}\left(t_{j-1}-t_i\right) \mathbb{E}\left[\left(\alpha_j^{n, p q}\right)^2\left(c_{i+k_n}^{n, u v}-c_i^{n, u v}\right)^2\right]\\
&+C\Delta_n\sum_{j=1}^n\sum_{p, q, u, v}\sum_{i,i^{\prime},i\neq i^{\prime}}K_{b}\left(t_{j-1}-t_i\right)K_{b}\left(t_{j-1}-t_{i^{\prime}}\right)\\
&\qquad\qquad\qquad\qquad
\mathbb{E}\left[\left(\alpha_j^{n, p q}\right)^2\partial_{p q, u v}^2 g(c_i^n)\partial_{p q, u v}^2 g(c_{i^{\prime}}^n)\left(c_{i+k_n}^{n, u v}-c_i^{n, u v}\right)\left(c_{{i^{\prime}}+k_n}^{n, u v}-c_{i^{\prime}}^{n, u v}\right)\right]\\
&\leq C\Delta_n\sum_{j=1}^n\sum_{i=1}^{n}K^2_{b}\left(t_{j-1}-t_i\right)\Delta_n^2\sqrt{k_n\Delta_n}\\
&+C\Delta_n\sum_{j=1}^n\sum_{i,i^{\prime},i\neq i^{\prime}}K_{b}\left(t_{j-1}-t_i\right)K_{b}\left(t_{j-1}-t_{i^{\prime}}\right)\Delta_n^2\sqrt{k_n\Delta_n}\\
&\leq C \Delta^2_n\sum_{j=1}^n \int K^2(u)du \frac{\sqrt{k_n\Delta_n}}{b}+C \Delta_n\sum_{j=1}^n\left(\int K(u)du\right)^2\sqrt{k_n\Delta_n} \to 0.
\end{aligned}
\end{equation}
Thus, $\sum_{j=1}^n\zeta^{n,1}_{j}\to 0$. Meanwhile, by (\ref{eq:X_c_bound}),
\begin{equation}\label{zeta21.1}
\begin{aligned}
    &\mathbb{E}\left|\sum_{i=1}^{n-2 k_n+1}\mathbb{E}\left[\zeta^{n,21}_i\left.\right|\mathcal{F}_i\right]\right|\\
    &=\frac{\sqrt{\Delta_n}}{8} \mathbb{E}\left|\sum_{i=1}^{n-2 k_n+1}\sum_{p, q, u, v} \partial_{p q, u v}^2 g(c_i^n)\sum_{j=1}^i K_{b}\left(t_{j-1}-t_i\right) \bar{K}\left(t_i\right)^{-1}\left(c_j^{n, p q}-c_i^{n, p q}\right)\Delta_n\mathbb{E}\left[\left. \left(c_{i+k_n}^{n, u v}-c_i^{n, u v}\right)\right|\mathcal{F}_{i}\right]\right|\\
    &\leq C\sqrt{\Delta_n}\sum_{i=1}^{n-2 k_n+1}\sum_{j=1}^i \left|K_{b}\left(t_{j-1}-t_i\right)\right|\Delta_n\mathbb{E}\left[\left|c_j^{n, p q}-c_i^{n, p q}\right|\left|\mathbb{E}\left[\left. \left(c_{i+k_n}^{n, u v}-c_i^{n, u v}\right)\right|\mathcal{F}_{i}\right]\right|\right]\\
    &\leq C\sqrt{\Delta_n}\sum_{i=1}^{n-2 k_n+1}\sum_{j=1}^i \left|K_{b}\left(t_{j-1}-t_i\right)\right|\sqrt{t_{j-1}-t_{i}}k_n\Delta_n^2\\
    &\leq Ck_n\sqrt{\Delta_n b}\Delta_n\sum_{i=1}^{n-2 k_n+1}\int \left|K(u)\right|\sqrt{\left|u\right|}du\to 0,
\end{aligned}
\end{equation}
and
\begin{equation}\label{zeta21.2}
\begin{aligned}
&\mathbb{E}\left(\sum_{i=1}^{n-2 k_n+1}\left(\zeta_i^{n,21}-\mathbb{E}\left[\left.\zeta_i^{n,21} \right| \mathcal{F}_{i}\right]\right)\right)^2\leq \sum_{i=1}^{n-2 k_n+1}\mathbb{E}\left[\left(\zeta_i^{n,21}\right)^2\right]\\
&\leq C\Delta_n\sum_{i=1}^{n-2 k_n+1}\sum_{j=1}^{i}K^2_{b}\left(t_{j-1}-t_i\right)\mathbb{E}\left[\left(c_j^{n, p q}-c_i^{n, p q}\right)^2 \left(c_{i+k_n}^{n, u v}-c_i^{n, u v}\right)^2\right]\Delta_n^2\\
&+C\Delta_n\sum_{i=1}^{n-2 k_n+1}\sum_{j,j^{\prime}\leq i,j\neq j^{\prime}}K_{b}\left(t_{j-1}-t_i\right)K_{b}\left(t_{j^{\prime}-1}-t_i\right)\mathbb{E}\left[\left(c_j^{n, p q}-c_i^{n, p q}\right)\left(c_{j^{\prime}}^{n, p q}-c_i^{n, p q}\right) \left(c_{i+k_n}^{n, u v}-c_i^{n, u v}\right)^2\right]\Delta_n^2\\
&\leq C\Delta^3_n\sum_{i=1}^{n-2 k_n+1}\sum_{j=1}^{i}K^2_{b}\left(t_{j-1}-t_i\right)\sqrt{\left|t_{j}-t_{i}\right|}\sqrt{k_n\Delta_n}\\
&+C\Delta^3_n\sum_{i=1}^{n-2 k_n+1}\sum_{j,j^{\prime}\leq i,j\neq j^{\prime}}K_{b}\left(t_{j-1}-t_i\right)K_{b}\left(t_{j^{\prime}-1}-t_i\right)\left|t_{j}-t_{i}\right|^{\frac{1}{4}}\left|t_{j^{\prime}}-t_{i}\right|^{\frac{1}{4}}\sqrt{k_n\Delta_n}\\
&\leq C\Delta^2_n\sum_{i=1}^{n-2 k_n+1}\int K^2\left(u\right)\sqrt{\left|u\right|}du\frac{\sqrt{k_n\Delta_n}}{\sqrt{b}}+C\Delta_n\sum_{i=1}^{n-2 k_n+1}\left(\int K(u)\left|u\right|^{\frac{1}{4}}du\right)^2\sqrt{k_n\Delta_n b}\to 0,
\end{aligned}
\end{equation}
which implies $\sum_{i=1}^{n-2 k_n+1}\zeta^{n,21}_{i}\to 0$. Now consider
\begin{equation}\label{zeta2.1}
\begin{aligned}
    &\sum_{i=1}^{n-2 k_n+1}\mathbb{E}\left[\zeta^{n,22}_i\left.\right|\mathcal{F}_i\right]=\frac{\sqrt{\Delta_n}}{8} \sum_{i=1}^{n-2 k_n+1}\sum_{p, q, u, v} \partial_{p q, u v}^2 g(c_i^n)\sum_{j=i+1}^n K_{b}\left(t_{j-1}-t_i\right) \bar{K}\left(t_i\right)^{-1}\Delta_n\\
   &\qquad\qquad\qquad\qquad\qquad\qquad\qquad
\qquad\qquad\qquad
\mathbb{E}\left[\left.\left(c_j^{n, p q}-c_i^{n, p q}\right) \left(c_{i+k_n}^{n, u v}-c_i^{n, u v}\right)\right|\mathcal{F}_{i}\right].
\end{aligned}
\end{equation}
Note that
\begin{equation}\label{cdiff_1}
\begin{aligned}
    &\mathbb{E}\left[\left.\left(c_j^{n, p q}-c_i^{n, p q}\right) \left(c_{i+k_n}^{n, u v}-c_i^{n, u v}\right)\right|\mathcal{F}_{i}\right]\\
    &=\begin{cases}\mathbb{E}\left[\left.\left(c_j^{n, p q}-c_i^{n, p q}\right) \left(c_{j}^{n, u v}-c_i^{n, u v}\right)\right|\mathcal{F}_{i}\right]+\epsilon_{11} & \text{if}\ i<j\leq i+k_n, \\ \mathbb{E}\left[\left.\left(c_{i+k_n}^{n, p q}-c_i^{n, p q}\right) \left(c_{i+k_n}^{n, u v}-c_i^{n, u v}\right)\right|\mathcal{F}_{i}\right]+\epsilon_{12} &\text{if}\ j>i+k_n,\end{cases}
\end{aligned}
\end{equation}
where $\left|\epsilon_{11}\right|\leq C\sqrt{t_{j}-t_{i}}(t_{i+k_n}-t_{j})$, $\left|\epsilon_{12}\right|\leq C\sqrt{k_n\Delta_n}(t_{j}-t_{i+k_n})$. The proof of (\ref{cdiff_1}) can be found in Section \ref{SectionB.3.5}. Also, by (3.23) in \cite{jacod2015estimation},
$$
\begin{array}{l}
\left|\mathbb{E}\left[\left.\left(c_j^{n, p q}-c_i^{n, p q}\right) \left(c_{j}^{n, u v}-c_i^{n, u v}\right)\right|\mathcal{F}_{i}\right]-\tilde{c}_i^{n, p q, u v}(j-i)\Delta_n\right|\leq C (j-i)\Delta_n \eta_{i, k_n}^n,\\
\left|\mathbb{E}\left[\left.\left(c_{i+k_n}^{n, p q}-c_i^{n, p q}\right) \left(c_{i+k_n}^{n, u v}-c_i^{n, u v}\right)\right|\mathcal{F}_{i}\right]-\tilde{c}_i^{n, p q, u v}k_n\Delta_n\right|\leq C k_n\Delta_n \eta_{i, k_n}^n.
\end{array}
$$
Thus,
\begin{equation}\label{zeta2.2}
\begin{aligned}
    &\sum_{i=1}^{n-2 k_n+1}\mathbb{E}\left[\zeta^{n,22}_i\left.\right|\mathcal{F}_i\right]\\
   &=\frac{\sqrt{\Delta_n}}{8} \sum_{i=1}^{n-2 k_n+1}\sum_{p, q, u, v} \partial_{p q, u v}^2 g(c_i^n)\bar{K}\left(t_i\right)^{-1}\Delta_n\sum_{j=i+1}^{i+k_n} K_{b}\left(t_{j-1}-t_i\right)\left(\tilde{c}_i^{n, p q, u v}(t_{j}-t_{i})+\epsilon_{11}\right)\\
   &\quad + \frac{\sqrt{\Delta_n}}{8} \sum_{i=1}^{n-2 k_n+1}\sum_{p, q, u, v} \partial_{p q, u v}^2 g(c_i^n)\bar{K}\left(t_i\right)^{-1}\Delta_n\sum_{j=i+k_n+1}^{n} K_{b}\left(t_{j-1}-t_i\right)\left(\tilde{c}_i^{n, p q, u v}k_n\Delta_n+\epsilon_{12}\right)+o_{p}(1)\\
   &=\frac{\sqrt{\Delta_n}}{8} \sum_{i=1}^{n-2 k_n+1}\sum_{p, q, u, v} \partial_{p q, u v}^2 g(c_i^n)\bar{K}\left(t_i\right)^{-1}\tilde{c}_i^{n, p q, u v}\int_{0}^{1}K(u)ubdu\\
   &\quad + \frac{\sqrt{\Delta_n}}{8} \sum_{i=1}^{n-2 k_n+1}\sum_{p, q, u, v} \partial_{p q, u v}^2 g(c_i^n)\bar{K}\left(t_i\right)^{-1}\tilde{c}_i^{n, p q, u v}b\int_{1}^{\infty}K(u)du+o_{p}(1)\\
   &\to \frac{\theta}{8}\int_{0}^{T}\sum_{p, q, u, v} \partial_{p q, u v}^2 g(c_{s})\tilde{c}_{s}^{p q, u v}ds\left(\int_{1}^{\infty}K(u)du+\int_{0}^{1}K(u)udu\right),
\end{aligned}
\end{equation}
where $\epsilon_{11}, \epsilon_{12}$ are negligible as $o_p(1)$ since
\begin{equation}\label{epsilon}
\begin{aligned}
&\mathbb{E}\left|\frac{\sqrt{\Delta_n}}{8} \sum_{i=1}^{n-2 k_n+1}\sum_{p, q, u, v} \partial_{p q, u v}^2 g(c_i^n)\bar{K}\left(t_i\right)^{-1}\Delta_n\right.\\
&\qquad\qquad\qquad\qquad\left.\left(\sum_{j=i+1}^{i+k_n} K_{b}\left(t_{j-1}-t_i\right)
\epsilon_{11}+\sum_{j=i+k_n+1}^{n} K_{b}\left(t_{j-1}-t_i\right) 
\epsilon_{12}\right)\right|\\
&\leq C \Delta_n^{3/2}\sum_{i=1}^{n-2 k_n+1}\Bigg(\sum_{j=i+1}^{i+k_n} \left|K_{b}\left(t_{j-1}-t_i\right)\right|\sqrt{t_{j}-t_{i}}(t_{i+k_n}-t_{j})\\
&\qquad\qquad\qquad\qquad\qquad+\sum_{j=i+k_n+1}^{n} \left|K_{b}\left(t_{j-1}-t_i\right)\right|\sqrt{k_n\Delta_n}(t_{j}-t_{i+k_n})\Bigg)\\
&\leq C\Delta_n^{1/2}b^{3/2}\sum_{i=1}^{n-2 k_n+1} \left(\int_{0}^{1}\left|K(u)\right|\sqrt{u}(1-u)du+\int_{1}^{\infty}\left|K(u)\right|(u-1)du\right)\to 0.
\end{aligned}
\end{equation}
Meanwhile,
\begin{equation}\label{zeta2.3}
\begin{aligned}
&\mathbb{E}\left(\sum_{i=1}^{n-2 k_n+1}\left(\zeta_i^{n,22}-\mathbb{E}\left[\left.\zeta_i^{n,22} \right| \mathcal{F}_{i}\right]\right)\right)^2\leq \sum_{i=1}^{n-2 k_n+1}\mathbb{E}\left[\left(\zeta_i^{n,22}\right)^2\right]\\
&\leq C\Delta_n\sum_{i=1}^{n-2 k_n+1}\sum_{j=i+1}^{n}K^2_{b}\left(t_{j-1}-t_i\right)\mathbb{E}\left[\left(c_j^{n, p q}-c_i^{n, p q}\right)^2 \left(c_{i+k_n}^{n, u v}-c_i^{n, u v}\right)^2\right]\Delta_n^2\\
&+C\Delta_n\sum_{i=1}^{n-2 k_n+1}\sum_{j,j^{\prime}>i,j\neq j^{\prime}}K_{b}\left(t_{j-1}-t_i\right)K_{b}\left(t_{j^{\prime}-1}-t_i\right)\mathbb{E}\left[\left(c_j^{n, p q}-c_i^{n, p q}\right)\left(c_{j^{\prime}}^{n, p q}-c_i^{n, p q}\right) \left(c_{i+k_n}^{n, u v}-c_i^{n, u v}\right)^2\right]\Delta_n^2\\
&\leq C\Delta^3_n\sum_{i=1}^{n-2 k_n+1}\sum_{j=i+1}^{n}K^2_{b}\left(t_{j-1}-t_i\right)\sqrt{t_{j}-t_{i}}\sqrt{k_n\Delta_n}\\
&+C\Delta^3_n\sum_{i=1}^{n-2 k_n+1}\sum_{j,j^{\prime}>i,j\neq j^{\prime}}K_{b}\left(t_{j-1}-t_i\right)K_{b}\left(t_{j^{\prime}-1}-t_i\right)\left(t_{j}-t_{i}\right)^{\frac{1}{4}}\left(t_{j^{\prime}}-t_{i}\right)^{\frac{1}{4}}\sqrt{k_n\Delta_n}\\
&\leq C\Delta^2_n\sum_{i=1}^{n-2 k_n+1}\int K^2\left(u\right)\sqrt{\left|u\right|}du\frac{\sqrt{k_n\Delta_n}}{\sqrt{b}}+C\Delta_n\sum_{i=1}^{n-2 k_n+1}\left(\int K(u)\left|u\right|^{\frac{1}{4}}du\right)^2\sqrt{k_n\Delta_n b}\to 0.
\end{aligned}
\end{equation}
Thus, we conclude that
\begin{equation}
    A_{2}^{n,3}\stackrel{n \rightarrow \infty}{\longrightarrow} \frac{\theta}{8}\int_{0}^{T}\sum_{p, q, u, v} \partial_{p q, u v}^2 g(c_{s})\tilde{c}_{s}^{p q, u v}ds\left(\int_{1}^{\infty}K(u)du+\int_{0}^{1}K(u)udu\right).
\end{equation}
The convergence of $\widehat{A}_{3}^{n,3}:=\frac{\sqrt{\Delta_n}}{8} \sum_{i=1}^{n-2k_n+1} \sum_{p, q, u, v} \partial_{p q, u v}^2 g(\hat{c}_i^n)\beta_i^{n, u v}\left(c_{i+k_n}^{n, p q}-c_i^{n, p q}\right)$ can be shown in the same way.

\subsection{{\Blue Proof of \eqref{FourthStOfLmts}.}}
We only need to consider
\begin{equation*}
\begin{array}{l}
\widehat{A}_{4}^{n,3}=\frac{\sqrt{\Delta_n}}{8} \sum_{i=1}^{n-2 k_n+1} \sum_{p, q, u, v} \partial_{p q, u v}^2 g(\hat{c}_i^n)\beta_{i+k_n}^{n, p q}\left(c_{i+k_n}^{n, u v}-c_i^{n, u v}\right),\\
A_{4}^{n,3}:=\frac{\sqrt{\Delta_n}}{8} \sum_{i=1}^{n-2 k_n+1} \sum_{p, q, u, v} \partial_{p q, u v}^2 g(c_i^n)\beta_{i+k_n}^{n, p q}\left(c_{i+k_n}^{n, u v}-c_i^{n, u v}\right),\\
\end{array}
\end{equation*}
Again, $\mathbb{E}\left|\widehat{A}_{4}^{n,3}-A_{4}^{n,3}\right|\to 0$, and, by \eqref{Dfnalphabeta0},
\begin{align*}
   A_{4}^{n,3}&=\frac{\sqrt{\Delta_n}}{8} \sum_{i=1}^{n-2 k_n+1} \sum_{j=1}^n\sum_{p, q, u, v} \partial_{p q, u v}^2 g(c_i^n)K_{b}\left(t_{j-1}-t_{i+k_n}\right) \bar{K}\left(t_{i+k_n}\right)^{-1}\\
   &\quad\quad\quad\quad\quad\quad\quad\quad\left[\alpha_j^{n, p q}\left(c_{i+k_n}^{n, u v}-c_i^{n, u v}\right)+\left(c_j^{n, p q}-c_{i+k_n}^{n, p q}\right) \left(c_{i+k_n}^{n, u v}-c_i^{n, u v}\right)\Delta_n\right]\\
   &=:\sum_{j=1}^n\zeta^{n,1}_{j}+\sum_{i=1}^{n-2 k_n+1}\zeta^{n,2}_{i},
\end{align*}
where $\sum_{j=1}^n\zeta^{n,1}_{j}\to 0$ can be shown similarly as in (\ref{zeta1.1}) and (\ref{zeta1.2}). We decompose $\zeta^{n,2}_{i}$ as $\zeta^{n,2}_{i}
    =\zeta^{n,21}_{i}+\zeta^{n,22}_{i}$,
where
\begin{align*}
    \zeta^{n,21}_{i}&:=\frac{\sqrt{\Delta_n}}{8}\sum_{j=1}^i\sum_{p, q, u, v} \partial_{p q, u v}^2 g(c_i^n)K_{b}\left(t_{j-1}-t_{i+k_n}\right) \bar{K}\left(t_{i+k_n}\right)^{-1}\left(c_j^{n, p q}-c_{i}^{n, p q}\right) \left(c_{i+k_n}^{n, u v}-c_i^{n, u v}\right)\Delta_n,\\
    \zeta^{n,22}_{i}&:=\frac{\sqrt{\Delta_n}}{8}\sum_{j=1}^i\sum_{p, q, u, v} \partial_{p q, u v}^2 g(c_i^n)K_{b}\left(t_{j-1}-t_{i+k_n}\right) \bar{K}\left(t_{i+k_n}\right)^{-1}\left(c_i^{n, p q}-c_{i+k_n}^{n, p q}\right) \left(c_{i+k_n}^{n, u v}-c_i^{n, u v}\right)\Delta_n\\
    &+\frac{\sqrt{\Delta_n}}{8}\sum_{j=i+1}^n\sum_{p, q, u, v} \partial_{p q, u v}^2 g(c_i^n)K_{b}\left(t_{j-1}-t_{i+k_n}\right) \bar{K}\left(t_{i+k_n}\right)^{-1}\left(c_j^{n, p q}-c_{i+k_n}^{n, p q}\right) \left(c_{i+k_n}^{n, u v}-c_i^{n, u v}\right)\Delta_n.
\end{align*}
We can show that 
$\sum_{i=1}^{n-2 k_n+1}\zeta^{n,21}_{i}\to 0$ in a similarly as in (\ref{zeta21.1}) and (\ref{zeta21.2}). Consider
\begin{equation}
\begin{aligned}
    &\sum_{i=1}^{n-2 k_n+1}\mathbb{E}\left[\zeta^{n,22}_i\left.\right|\mathcal{F}_i\right]\\
    &=\frac{\sqrt{\Delta_n}}{8} \sum_{i=1}^{n-2 k_n+1} \sum_{p, q, u, v} \partial_{p q, u v}^2 g(c_i^n)\sum_{j=1}^iK_{b}\left(t_{j-1}-t_{i+k_n}\right) \bar{K}\left(t_{i+k_n}\right)^{-1}\Delta_n\mathbb{E}\left[\left.\left(c_i^{n, p q}-c_{i+k_n}^{n, p q}\right) \left(c_{i+k_n}^{n, u v}-c_i^{n, u v}\right)\right|\mathcal{F}_{i}\right]\\
    &+\frac{\sqrt{\Delta_n}}{8} \sum_{i=1}^{n-2 k_n+1} \sum_{p, q, u, v} \partial_{p q, u v}^2 g(c_i^n)\sum_{j=i+1}^nK_{b}\left(t_{j-1}-t_{i+k_n}\right) \bar{K}\left(t_{i+k_n}\right)^{-1}\Delta_n\mathbb{E}\left[\left.\left(c_j^{n, p q}-c_{i+k_n}^{n, p q}\right) \left(c_{i+k_n}^{n, u v}-c_i^{n, u v}\right)\right|\mathcal{F}_{i}\right].
\end{aligned}
\end{equation}
Note that
\begin{equation}\label{cdiff_2}
\begin{aligned}
     &\mathbb{E}\left[\left.\left(c_j^{n, p q}-c_{i+k_n}^{n, p q}\right) \left(c_{i+k_n}^{n, u v}-c_i^{n, u v}\right)\right|\mathcal{F}_{i}\right]\\
     &=\begin{cases}\mathbb{E}\left[\left.\left(c_j^{n, p q}-c_{i+k_n}^{n, p q}\right) \left(c_{i+k_n}^{n, u v}-c_j^{n, u v}\right)\right|\mathcal{F}_{i}\right]+\epsilon_{21} & \text{if}\ i<j< i+k_n, \\ \epsilon_{22} &\text{if}\ j\geq i+k_n, \end{cases}
\end{aligned}
\end{equation}
where $\left|\epsilon_{21}\right|\leq C\sqrt{t_{j}-t_{i}}(t_{i+k_n}-t_{j})$, $\left|\epsilon_{22}\right|\leq C\sqrt{k_n\Delta_n}(t_{j}-t_{i+k_n})$. The proof of (\ref{cdiff_2}) can be found in Section \ref{SectionB.3.5}. By (3.23) in \cite{jacod2015estimation},
$$
\begin{array}{l}
\left|\mathbb{E}\left[\left.\left(c_i^{n, p q}-c_{i+k_n}^{n, p q}\right) \left(c_{i+k_n}^{n, u v}-c_i^{n, u v}\right)\right|\mathcal{F}_{i}\right]+\tilde{c}_i^{n, p q, u v}k_n\Delta_n\right|\leq C k_n\Delta_n \eta_{i, k_n}^n,\\
\left|\mathbb{E}\left[\left.\left(c_j^{n, p q}-c_{i+k_n}^{n, p q}\right) \left(c_{i+k_n}^{n, u v}-c_j^{n, u v}\right)\right|\mathcal{F}_{j}\right]+\tilde{c}_j^{n, p q, u v}(i+k_n-j)\Delta_n\right|\leq C (i+k_n-j)\Delta_n \eta_{j, k_n}^n.
\end{array}
$$
Thus,
\begin{align*}
    &\sum_{i=1}^{n-2 k_n+1}\mathbb{E}\left[\zeta^{n,22}_i\left.\right|\mathcal{F}_i\right]\\
   &=\frac{\sqrt{\Delta_n}}{8} \sum_{i=1}^{n-2 k_n+1} \sum_{p, q, u, v} \partial_{p q, u v}^2 g(c_i^n)\bar{K}\left(t_{i+k_n}\right)^{-1}\Delta_n\sum_{j=1}^iK_{b}\left(t_{j-1}-t_{i+k_n}\right) \left(-\tilde{c}_i^{n, p q, u v}k_n\Delta_n\right)\\
    &\quad+\frac{\sqrt{\Delta_n}}{8} \sum_{i=1}^{n-2 k_n+1} \sum_{p, q, u, v} \partial_{p q, u v}^2 g(c_i^n)\bar{K}\left(t_{i+k_n}\right)^{-1}\Delta_n\sum_{j=i+1}^{i+k_n-1}K_{b}\left(t_{j-1}-t_{i+k_n}\right) \left(-\tilde{c}_j^{n, p q, u v}(t_{i+k_n}-t_{j})+\epsilon_{21}\right)\\
    &\quad+\frac{\sqrt{\Delta_n}}{8} \sum_{i=1}^{n-2 k_n+1} \sum_{p, q, u, v} \partial_{p q, u v}^2 g(c_i^n)\bar{K}\left(t_{i+k_n}\right)^{-1}\Delta_n\sum_{j=i+k_n}^{n}K_{b}\left(t_{j-1}-t_{i+k_n}\right)\epsilon_{22}+o_{p}(1)\\
    &=\frac{\sqrt{\Delta_n}}{8} \sum_{i=1}^{n-2 k_n+1} \sum_{p, q, u, v} \partial_{p q, u v}^2 g(c_i^n)\bar{K}\left(t_{i+k_n}\right)^{-1}\left(-\tilde{c}_i^{n, p q, u v}b\right)\int_{-t_{i}/b}^{0}K(u-1)du\\
    &\quad+\frac{\sqrt{\Delta_n}}{8} \sum_{i=1}^{n-2 k_n+1} \sum_{p, q, u, v} \partial_{p q, u v}^2 g(c_i^n)\bar{K}\left(t_{i+k_n}\right)^{-1}\int_{0}^{1}K(u-1)\tilde{c}_{t_{i}+ub}^{p q, u v}(u-1)bdu+o_{p}(1)\\
    &\to \frac{\theta}{8}\int_{0}^{T}\sum_{p, q, u, v} \partial_{p q, u v}^2 g(c_{s})\tilde{c}_{s}^{p q, u v}ds\left(\int_{0}^{1}K(u-1)(u-1)du-\int_{-\infty}^{0}K(u-1)du\right),
\end{align*}
where $\epsilon_{21}, \epsilon_{22}$ are negligible similar as in (\ref{epsilon}). Also, $\mathbb{E}\left(\sum_{i=1}^{n-2 k_n+1}\left(\zeta_i^{n,22}-\mathbb{E}\left[\left.\zeta_i^{n,22} \right| \mathcal{F}_{i}\right]\right)\right)^2\to 0$ can be shown similarly as in (\ref{zeta2.3}). Thus, we conclude that
\begin{equation}
    A_{4}^{n,3}\stackrel{n \rightarrow \infty}{\longrightarrow} \frac{\theta}{8}\int_{0}^{T}\sum_{p, q, u, v} \partial_{p q, u v}^2 g(c_{s})\tilde{c}_{s}^{p q, u v}ds\left(\int_{0}^{1}K(u-1)(u-1)du-\int_{-\infty}^{0}K(u-1)du\right).
\end{equation}
The convergence of $\widehat{A}_{5}^{n,3}$ can be shown in the same way.

\subsubsection{Proof of (\ref{cdiff_1}) \& (\ref{cdiff_2})}\label{SectionB.3.5}
We proceed to show that when $i<j\leq i+k_n$,
\begin{align*}
    \mathbb{E}\left[\left.\left(c_j^{n, p q}-c_i^{n, p q}\right) \left(c_{i+k_n}^{n, u v}-c_i^{n, u v}\right)\right|\mathcal{F}_{i}\right]&=\mathbb{E}\left[\left.\left(c_j^{n, p q}-c_i^{n, p q}\right) \left(c_{j}^{n, u v}-c_i^{n, u v}\right)\right|\mathcal{F}_{i}\right]+\epsilon_{11},
\end{align*}
where $\left|\epsilon_{11}\right|\leq C\sqrt{t_{j}-t_{i}}(t_{i+k_n}-t_{j})$. Other cases can be proved similarly. Note that
\begin{align*}
    \mathbb{E}\left[\left.\left(c_j^{n, p q}-c_i^{n, p q}\right) \left(c_{i+k_n}^{n, u v}-c_i^{n, u v}\right)\right|\mathcal{F}_{i}\right]&=\mathbb{E}\left[\left.\left(c_j^{n, p q}-c_i^{n, p q}\right) \left(c_{j}^{n, u v}-c_i^{n, u v}\right)\right|\mathcal{F}_{i}\right]\\
    &+\mathbb{E}\left[\left.\left(c_j^{n, p q}-c_i^{n, p q}\right) \left(c_{i+k_n}^{n, u v}-c_j^{n, u v}\right)\right|\mathcal{F}_{i}\right],
\end{align*}
where by (\ref{eq:X_c_bound}),
\begin{align*}
    &\left|\mathbb{E}\left[\left.\left(c_j^{n, p q}-c_i^{n, p q}\right) \left(c_{i+k_n}^{n, u v}-c_j^{n, u v}\right)\right|\mathcal{F}_{i}\right]\right|=\left|\mathbb{E}\left[\left.\left(c_j^{n, p q}-c_i^{n, p q}\right) \mathbb{E}\left[\left.\left(c_{i+k_n}^{n, u v}-c_j^{n, u v}\right)\right|\mathcal{F}_{j}\right]\right|\mathcal{F}_{i}\right]\right|\\
    &\leq \mathbb{E}\left[\left.\left|c_j^{n, p q}-c_i^{n, p q}\right| \left|\mathbb{E}\left[\left.\left(c_{i+k_n}^{n, u v}-c_j^{n, u v}\right)\right|\mathcal{F}_{j}\right]\right|\right|\mathcal{F}_{i}\right]\leq C\mathbb{E}\left[\left.\left|c_j^{n, p q}-c_i^{n, p q}\right|\right|\mathcal{F}_{i}\right](t_{i+k_n}-t_{j})\\
    &\leq C \sqrt{t_{j}-t_{i}}(t_{i+k_n}-t_{j}).
\end{align*}

\bibliographystyle{jtbnew}

\end{document}